\documentclass[12pt]{article}
\usepackage{graphicx} 

\usepackage[]{natbib}
\usepackage{bibunits}
\defaultbibliography{KL_references.bib}
\defaultbibliographystyle{apa}

\usepackage{hyperref}
\usepackage{amsmath}
\usepackage{amssymb}
\usepackage{amsthm} 
\usepackage{arydshln} 
\usepackage{setspace}
\usepackage{caption}
\usepackage{verbatim}
\usepackage{bm}
\usepackage{relsize}
\usepackage{tikz}
\usetikzlibrary{arrows.meta}
\usepackage{float} 

\newcommand*{\Scale}[2][4]{\scalebox{#1}{$#2$}}%

\newtheorem{theorem}{Theorem}

\newtheorem{lemma}{Lemma}

\newtheorem{apptheorem}{Theorem}
\newtheorem{appsubtheorem}{Theorem} 
\newtheorem{applemma}{Lemma}
\newtheorem{appremark}{Remark}
\newtheorem{appcorollary}{Corollary}

\usepackage{enumitem}

\usepackage{iftex}
\ifPDFTeX
  \usepackage[T1]{fontenc}
  \usepackage[utf8]{inputenc}
  \usepackage{textcomp} 
\else 
  \usepackage{unicode-math}
  \defaultfontfeatures{Scale=MatchLowercase}
  \defaultfontfeatures[\rmfamily]{Ligatures=TeX,Scale=1}
\fi
\usepackage{lmodern}
\ifPDFTeX\else  
\fi
\IfFileExists{upquote.sty}{\usepackage{upquote}}{}
\IfFileExists{microtype.sty}{
  \usepackage[]{microtype}
  \UseMicrotypeSet[protrusion]{basicmath} 
}{}
\makeatletter
\@ifundefined{KOMAClassName}{
  \IfFileExists{parskip.sty}{%
    \usepackage{parskip}
  }{
    \setlength{\parindent}{0pt}
    \setlength{\parskip}{6pt plus 2pt minus 1pt}}
}{
  \KOMAoptions{parskip=half}}
\makeatother
\usepackage{xcolor}
\makeatletter
\ifx\paragraph\undefined\else
  \let\oldparagraph\paragraph
  \renewcommand{\paragraph}{
    \@ifstar
      \xxxParagraphStar
      \xxxParagraphNoStar
  }
  \newcommand{\xxxParagraphStar}[1]{\oldparagraph*{#1}\mbox{}}
  \newcommand{\xxxParagraphNoStar}[1]{\oldparagraph{#1}\mbox{}}
\fi
\ifx\subparagraph\undefined\else
  \let\oldsubparagraph\subparagraph
  \renewcommand{\subparagraph}{
    \@ifstar
      \xxxSubParagraphStar
      \xxxSubParagraphNoStar
  }
  \newcommand{\xxxSubParagraphStar}[1]{\oldsubparagraph*{#1}\mbox{}}
  \newcommand{\xxxSubParagraphNoStar}[1]{\oldsubparagraph{#1}\mbox{}}
\fi
\makeatother

\usepackage{longtable,booktabs,array}
\usepackage{calc} 
\usepackage{etoolbox}
\makeatletter
\patchcmd\longtable{\par}{\if@noskipsec\mbox{}\fi\par}{}{}
\makeatother
\IfFileExists{footnotehyper.sty}{\usepackage{footnotehyper}}{\usepackage{footnote}}
\makesavenoteenv{longtable}
\usepackage{graphicx}
\makeatletter
\def\maxwidth{\ifdim\Gin@nat@width>\linewidth\linewidth\else\Gin@nat@width\fi}
\def\maxheight{\ifdim\Gin@nat@height>\textheight\textheight\else\Gin@nat@height\fi}
\makeatother
\setkeys{Gin}{width=\maxwidth,height=\maxheight,keepaspectratio}
\makeatletter
\def\fps@figure{htbp}
\makeatother

\makeatletter
\@ifpackageloaded{caption}{}{\usepackage{caption}}
\AtBeginDocument{%
\ifdefined\contentsname
  \renewcommand*\contentsname{Table of contents}
\else
  \newcommand\contentsname{Table of contents}
\fi
\ifdefined\listfigurename
  \renewcommand*\listfigurename{List of Figures}
\else
  \newcommand\listfigurename{List of Figures}
\fi
\ifdefined\listtablename
  \renewcommand*\listtablename{List of Tables}
\else
  \newcommand\listtablename{List of Tables}
\fi
\ifdefined\figurename
  \renewcommand*\figurename{Figure}
\else
  \newcommand\figurename{Figure}
\fi
\ifdefined\tablename
  \renewcommand*\tablename{Table}
\else
  \newcommand\tablename{Table}
\fi
}
\@ifpackageloaded{float}{}{\usepackage{float}}
\floatstyle{ruled}
\@ifundefined{c@chapter}{\newfloat{codelisting}{h}{lop}}{\newfloat{codelisting}{h}{lop}[chapter]}
\floatname{codelisting}{Listing}

\makeatother
\makeatletter
\@ifpackageloaded{caption}{}{\usepackage{caption}}
\@ifpackageloaded{subcaption}{}{\usepackage{subcaption}}
\makeatother

\ifLuaTeX
  \usepackage{selnolig}  
\fi
\usepackage{bookmark}

\newcommand{\anon}{1}

\graphicspath{Images/}

\begin{document}

\def\spacingset#1{\renewcommand{\baselinestretch}%
{#1}\small\normalsize} \spacingset{1}

\if1\anon
{
  \title{\bf Fixed-Effects Models for Causal Inference in Longitudinal Cluster Randomized and Quasi-Experimental Trials}
  \author{Kenneth M. Lee$^{1}$\thanks{Corresponding author. Center for Clinical Trials Innovation, University of Pennsylvania School of Medicine, 3600 Civic Center Boulevard, Philadelphia, PA 19104}\hspace{.2cm} and Fan Li$^{2}$\\
    $^1$Department of Biostatistics, Epidemiology and Informatics,\\University of Pennsylvania\\
    $^2$Department of Biostatistics, Yale School of Public Health
    }
  \maketitle
} \fi

\if0\anon
{
  \title{\bf Fixed-Effects Models for Causal Inference in Longitudinal Cluster Randomized and Quasi-Experimental Trials}
  \maketitle
} \fi

\bigskip
\begin{abstract}
This article investigates the model-robustness of fixed-effects models for analyzing a broad class of longitudinal cluster trials (CTs) such as stepped-wedge, parallel-with-baseline and crossover designs, encompassing both randomized (CRTs) and quasi-experimental (CQTs) designs. 
We clarify a longstanding misconception in biostatistics, demonstrating that fixed-effects models, traditionally perceived as targeting only finite-sample conditional estimands, can effectively target super-population marginal estimands through an M-estimation framework. 
We comprehensively prove that linear and log-link fixed-effects models with correctly specified treatment effect structures can broadly yield consistent and asymptotically normal estimators for nonparametrically defined treatment effect estimands in longitudinal CRTs, even under arbitrary misspecification of other model components.
We identify that the constant treatment effect estimator generally targets the period-average treatment effect for the overlap population (P-ATO);
accordingly, some CRT designs don't even require correct specification of the treatment effect structure for model-robustness.
We further characterize conditions where fixed-effects models can maintain consistency by adjusting for both cluster-level and individual-level time-invariant confounding in longitudinal CQTs. 
Altogether, supported by simulation and a case study re-analysis, we establish fixed-effects models as a robust and potentially preferable alternative to mixed-effects models for longitudinal CT analysis.
\end{abstract}

\noindent%
{\it Keywords:} clinical trials; conditional Poisson regression; incidental parameters; model robustness; period-average treatment effect for the overlap population; stepped wedge designs
\vfill

\newpage
\spacingset{1.8} 

\begin{bibunit}
\section{Introduction}

\subsection{Longitudinal cluster trial designs}
Cluster randomized trials (CRTs), where clusters of individuals are randomized to receive the intervention, are useful trial designs for evaluating causal relationships between interventions and outcomes in settings where individual randomization may not be feasible \citep{hayes_cluster_2017}.
However, even cluster randomization may not always be feasible in practice, resulting in cluster \emph{quasi-experimental} trial (CQT) designs.
Among this broad category of cluster trials (CTs), some CT designs are inherently \emph{longitudinal}: having both between and within-cluster variation in treatment status.
Popular examples of longitudinal CT designs include the stepped-wedge (SW), parallel-with-baseline (PB), and crossover (XO) designs.
Altogether, this article will cover analysis methods for a broad class of longitudinal CT designs, including both CRTs and CQTs, focusing specifically on the SW-CT (SW-CRT, SW-CQT), PB-CT (PB-CRT, PB-CQT), and CXO (CRXO, CQXO) designs.
For simplicity of presentation, we only refer to the complete and standard versions of these designs with an equal number of clusters assigned to each treatment sequence.

\subsection{Fixed-effects versus Mixed-effects regression}

In health science applications, the mixed-effects model has traditionally been the primarily adopted analysis method for the analysis of longitudinal CRTs and fits cluster intercepts as random effects; for example, see a review of mixed-effects models for SW-CRTs in \citet{li_mixed-effects_2021}.
Practically, the mixed-effects model relies on randomization to control for known and unknown confounders; however, the benefits of randomization may be lost when the number of clusters is limited \citep{taljaard_substantial_2016}, potentially yielding inflated Type I error rates and under-coverage \citep{barker_minimum_2017,lee_fixed-effects_2024}.
In contrast to the mixed-effects model, the fixed-effects model is perceived as automatically adjusting for all known or unknown time-invariant covariates and confounding, and can yield valid type I error rates and coverage probabilities in scenarios with a limited number of clusters \citep{lee_fixed-effects_2024}.
Previous SW-CQT studies have implemented the fixed-effects model, citing difficulties that arise from a small number of clusters \citep{kelly_study_2015}, practical and logistical issues that prevented randomization \citep{groshaus_use_2012}, and concerns over confounding \citep{craine_stepped_2015}.
However, in contrast to their mixed-effects counterpart, the fixed-effects model
remains a relatively underutilized, underdeveloped, and largely misunderstood analysis method in the analysis of longitudinal CTs and the broader statistical literature.

Historically, the statistical literature has misinterpreted mixed-effects models as innately estimating coefficients over the entire superpopulation of all possible clusters, and fixed-effects models, over the given finite population of sampled clusters \citep{donner_pitfalls_2004}. 
This may stem from the dominant model-based perspective in the analysis of clustered data, in which mixed-effects models are viewed as directly representing the assumed underlying data-generating process with clusters treated as random samples.
In contrast to the model-based perspective, we focus on estimand-aligned inference where estimands are defined separately from the regression model parameters. Previous efforts have noted that both the linear mixed-effects and fixed-effects models can be applied under either a superpopulation or finite-sample framework \citep{lee_fixed-effects_2024,lee_how_2025,lee_what_2025}.
As we will rigorously demonstrate in this article, standard linear and log-link fixed-effects models do not have an innately pre-determined inference space and can be implemented within a superpopulation framework using estimating equations and M-estimation.

This general misunderstanding in the applied statistics literature is contrasted by the econometrics literature which has widely adopted fixed-effects models due to the typically quasi-experimental and observational nature of their study designs \citep{wooldridge_econometric_2010}.
Nearly 50 years ago, \citet{mundlak_pooling_1978} and \citet{hausman_specification_1978} characterized fixed-effects models as more generalized forms of mixed-effects models with fewer model assumptions governing the associations between unit intercepts and other model covariates.
Alongside its regular application, fixed-effects methods have also been advanced within the econometrics literature in recent years, particularly in scenarios with time-varying treatment effects under causal dynamic difference-in-differences \citep{goodman-bacon_difference--differences_2021,callaway_difference--differences_2021}.

Despite these efforts, the model-robust properties of fixed-effects estimators for targeting nonparametric treatment effect estimands in longitudinal CTs remains unclear. Here, model-robustness refers to valid statistical inference in large samples when the working models may be misspecified.
Crucially, \citet{wang_how_2024} demonstrated that commonly used marginal generalized estimating equations (GEEs) and corresponding mixed-effects model estimators are consistent in SW-CRTs regardless of arbitrary model-misspecification, provided the potentially time-varying treatment effect structure is correctly specified. However, the merit of fixed-effects regression in such contexts has remained unclear and even misrepresented. 
In light of the aforementioned existing literature and standard practice, the contributions of this article are several fold.
First, we demonstrate that linear and log-link fixed-effects models can avoid the incidental parameters problem and prove these fixed-effects models are model-robust across a broad class of longitudinal CRTs under standard assumptions, yielding consistent M-estimators in a super-population framework for nonparametrically defined treatment effect estimands.
The demonstrated model-robustness may require the potentially time-varying treatment effect structure to be correctly specified, but generally does not require correct specification of the remaining aspects of the working model (functional form of covariates and error distribution).
Second, we identify that the fixed-effects constant treatment effect estimator generally targets the period-average treatment effect for the overlap population (P-ATO) estimand in longitudinal CRTs. Accordingly, CRT designs where the P-ATO coincides with general time-averaged treatment effect estimands do not even require correct specification of the treatment effect structure for model-robustness.
Third, we use potential outcomes to rigorously characterize the conditions under which fixed-effects model estimators, unlike mixed-effects model estimators, are consistent in longitudinal CQTs.
We highlight that the fixed-effects model doesn't just automatically control for all cluster-level time-invariant confounding in CTs, but all individual-level as well.
Altogether, these new results provide crucial statistical insights into the underutilized fixed-effects model approach and highlight the conditions under which it is not only comparable to the mixed-effects model, but even preferable for the analysis of longitudinal CTs.

\section{Notation and Assumptions for Longitudinal CTs}
\label{sect:Notation}

Consider a longitudinal CT with $m$ clusters, where each cluster $i \in \{1,...,m\}$ contains $N_i$ individuals in its source population.
Data from each cluster are collected across calendar time as $J$ discrete, equally-spaced periods indexed by $j=1,...,J$.
We accordingly define $Z_i=j$ if cluster $i$ starts receiving treatment beginning in period $j \in \{1,...,J\}$, with $Z_i=0$ indicating that cluster $i$ never receives the treatment.
For each individual $k \in \{1,...,N_i\}$ in cluster $i$, we define $Y_{ijk}$ as their outcome in period $j$ and $\bm{X}_{ik}$ as their vector of baseline covariates. We assume $N_i$ takes values in a bounded subset of positive integers and can vary by clusters, but remains constant across calendar time $j$ in the study.
In practice, not all individuals in the cluster source population are included in the study. 
We define the enrollment indicator $S_{ijk}=1$ if individual $k$ from cluster $i$ is enrolled during period $j$. An observed cluster-period size is then the sum of the enrollment indicators, $N_{ij}=\sum_{k=1}^{N_i} S_{ijk}$.
Under this framework, each individual can appear in one, multiple, or even no periods, thus accommodating cross-sectional, closed-cohort, and open-cohort designs.

Treatment effect estimands are then nonparametrically defined with the potential outcomes framework.
With potential outcomes $Y_{ijk}(\tilde{z})$ where $\tilde{z} \in \{0,z\}$, let $Y_{ijk}(z)$ denote the potential outcome of individual $k$ in period $j$ of cluster $i$ had the cluster been assigned to a sequence where treatment first begins in period $z$ for $1 \leq z \leq J$.
$Y_{ijk}(0)$ denotes the untreated potential outcome, and we assume no anticipation.
In some CT designs, such as standard SW-CTs and PB-CTs, $j=1$ is typically a baseline period where $Y_{i1k} = Y_{i1k}(0) \, \forall \, i, k$. 
Additionally, in SW-CTs, $j=J$ is typically an all-treated period where $Y_{iJk} \neq Y_{iJk}(0) \,\forall\, i,k$. Across different CT designs, the complete, but not fully observed, data vector for each cluster $i$ is denoted as $\bm{W}_i = \{(Y_{ijk}(0), Y_{ijk}(z), S_{ijk}, \bm{X}_{ik}, Z_i, N_i) : k=1,...,N_i, j=1,...,J, 1 \leq z \leq J\}$, and the observed data, $\bm{O}_i=\{Y_{ijk}, \bm{X}_{ik}, Z_i : S_{ijk}=1, j=1,...,J, k=1,...,N_i\}$. 
We outline the following assumptions on $\bm{W}_i$ to enable the subsequent estimand definitions and proofs of the key results.

\textbf{A1.} (\textit{Super-population sampling}) Data vector $\{\bm{W}_i, i=1,...,m\}$  are independent and identically distributed draws from a population distribution $\mathcal{P}$ with finite second moments.
Within each cluster $i$, the data vectors $\{(Y_{i1k}(0), ..., Y_{iJk}(J)), \bm{X}_{ik}\}$ for $k=1,...,N_i$ are identically distributed given the source population size $N_i$.

\textbf{A2.} (\textit{Non-informative enrollment}) The enrollment size $\{N_{ij}: i=1,...,m, j=1,...,J\}$ are independent of $N_i$. Furthermore, the enrollment indicator $\{ S_{ijk} : j=1,...,J, k=1,...,N_i \}$ is independent of all other random variables in $\bm{W}_i$ given $N_i$.

\textbf{A3.} (\textit{Randomization}) Initial treatment period $Z_i$ is independent of all other random variables in $\bm{W}_i$.

\sloppy \noindent \textbf{A4.} \textit{(Mean independence)} i. (\textit{Parallel Trends}) The average within-cluster deviations of the untreated potential outcomes from the enrollment average are mean independent of $Z_i$ and $N_i$, given $S_i$. That is, defining 
\[
    \ddot{Y}_{ijk}(0)=Y_{ijk}(0)- \frac{\sum_{l=1}^{J} \sum_{k'=1}^{N_{i}} S_{ilk'} Y_{ilk'}(0)}{\sum_{l=1}^{J} \sum_{k'=1}^{N_{i}} S_{ilk'}} \,,
\]
then for each $j=1,...,J$ and $k=1,...,N_i$, we have $E\left[\ddot{Y}_{ijk}(0) | Z_i,N_i, S_i\right] = E\left[\ddot{Y}_{ijk}(0) | S_i\right]$. ii. (\textit{Exchangeable treatment effects}) Average treatment effects are mean independent of $Z_i$ and $N_i$, $E\left[Y_{ijk}(z)-Y_{ijk}(0)|Z_i,N_i\right] = E\left[Y_{ijk}(z)-Y_{ijk}(0)\right]$.

For notation convenience, we omit the subscript $i$ when taking expectation with respect to population distribution $\mathcal{P}$ (Assumption A1). For example, $E[f(\bm{O_i})]$, where $f$ is an arbitrary measurable function, is simplified as $E[f(\bm{O})]$, where $\bm{O}$ represents the random variable sampled from $\mathcal{P}$. Assumption A1 describes a superpopulation framework for causal inference in longitudinal CTs. We consider this assumption as a technical device to study the large-sample results and will separately examine the finite-sample performance of the model-robust estimators via simulations in Section \ref{sec:simulation}. 
Assumption A2 describes a random enrollment scheme and assumes away selection bias. 
Furthermore, Assumption A2 serves as a ``restricted informative sizes'' assumption and, alongside Assumption A1, allows the fixed-effects model to target the cluster-average treatment effect estimand.
Assumption A3 holds by design for longitudinal CRTs.
Assumption A4 is a supplementary assumption to establish the consistency of the linear fixed-effects model in the absence of Assumption A3, and will be invoked in longitudinal CQTs (Appendix \ref{app.sect:assumption4}).

More specifically, in the absence of randomization, Assumptions A1, A2, and A4 allow for individual and cluster-level time-invariant confounding in a longitudinal CQT, but imply that there is no additional within-cluster time-varying confounding.
As has rarely been pointed out, ``individual-level time-invariant confounding” in longitudinal CQTs under the stated assumptions operate entirely at the cluster-level due to the clustered implementation of the treatment arm, and any association between individual characteristics and treatment administration arise only through their aggregation into the cluster-level (Appendix \ref{app.sect:assumptions}).
Therefore, fixed-effects methods that adjust for all cluster-level time-invariant confounding will also automatically adjust for all individual-level time-invariant confounding in longitudinal CQT designs.

\section{Treatment effect estimands in longitudinal CTs}
\label{sect:estimands}

We define treatment effect estimands in longitudinal CTs using the potential outcomes framework.
With potential outcomes $Y_{ijk}(\tilde{z}) \in \{Y_{ijk}(0), Y_{ijk}(z)\}$, the duration of the treatment exposure time $d$ can be defined as a function of calendar period $j$ and initial treatment period $z$, where $z=j-d+1$, $z=(2d)/(j-1)$, and $z=j-2d+2$ in a SW-CT, PB-CT, and CXO, respectively (Appendix \ref{app.sect:Theorem_SW_Proof}).
We can then define the marginal cluster-average treatment effect estimand $\Delta_j(d) = E[Y_{.jk}(z)-Y_{.jk}(0)]$ as functions of potential outcomes in longitudinal CT designs (Figure \ref{fig:ct_designs}) while assuming conditional exchangeability within clusters (Assumption A1). 
Here, the expectations in the estimands are taken over the distribution of clusters, with $j$ and $d$ being fixed quantities. That is, $E[f(\bm{W}_i)] = \int f(\bm{w}) d\mathcal{P}(\bm{w})$ for any integrable function $f$. This corresponds to the average causal effect had all clusters been treated for a duration of $d$ exposed periods ($0<d<j$), at calendar period $j$.
The longitudinal CT definitions of $\Delta_j(d)$ are then model-free and fully accommodate time-varying treatment effect heterogeneity.
The following estimands can then be defined for a given longitudinal CT, either randomized or quasi-experimental.

\begin{figure}[ht!]
\setlength{\unitlength}{0.16in} 
\centering 
\begin{picture}(26,34)(0,11) 
\setlength\fboxsep{0pt}
\put(2,11.5){$j=1$}\put(8,11.5){$j=2$}
\put(14,11.5){$j=3$}\put(20,11.5){$j=4$}\put(26,11.5){$j=5$}
\put(1,19){\colorbox{gray!40}{\framebox(4,1.5){$\Delta_1(1)$}}}
\put(7,19){{\framebox(4,1.5){}}}
\put(13,19){\colorbox{gray!40}{\framebox(4,1.5){$\Delta_3(2)$}}}
\put(19,19){{\framebox(4,1.5){}}}
\put(25,19){\colorbox{gray!40}{\framebox(4,1.5){$\Delta_5(3)$}}}
\put(1,17){\colorbox{gray!40}{\framebox(4,1.5){$\Delta_1(1)$}}}
\put(7,17){\framebox(4,1.5){}}
\put(13,17){\colorbox{gray!40}{\framebox(4,1.5){$\Delta_3(2)$}}}
\put(19,17){{\framebox(4,1.5){}}}
\put(25,17){\colorbox{gray!40}{\framebox(4,1.5){$\Delta_5(3)$}}}
\put(1,15){\framebox(4,1.5){}}
\put(7,15){\colorbox{gray!40}{\framebox(4,1.5){$\Delta_2(1)$}}}
\put(13,15){\framebox(4,1.5){}}
\put(19,15){\colorbox{gray!40}{\framebox(4,1.5){$\Delta_4(2)$}}}
\put(25,15){{\framebox(4,1.5){}}}
\put(1,13){\framebox(4,1.5){}}
\put(7,13){\colorbox{gray!40}{\framebox(4,1.5){$\Delta_2(1)$}}}
\put(13,13){\framebox(4,1.5){}}
\put(19,13){\colorbox{gray!40}{\framebox(4,1.5){$\Delta_4(2)$}}}
\put(25,13){{\framebox(4,1.5){}}}
\put(-3,13.5){$\tilde{z}=2$}\put(-3,15.5){$\tilde{z}=2$}\put(-3,17.5){$\tilde{z}=1$}
\put(-3,19.5){$\tilde{z}=1$}
\put(1,22){$(iii)~\textit{CXO design}$}
\put(1,30){\framebox(4,1.5){}}
\put(7,30){\colorbox{gray!40}{\framebox(4,1.5){$\Delta_2(1)$}}}
\put(13,30){\colorbox{gray!40}{\framebox(4,1.5){$\Delta_3(2)$}}}
\put(19,30){\colorbox{gray!40}{\framebox(4,1.5){$\Delta_4(3)$}}}
\put(25,30){\colorbox{gray!40}{\framebox(4,1.5){$\Delta_5(4)$}}}
\put(1,28){\framebox(4,1.5){}}
\put(7,28){\colorbox{gray!40}{\framebox(4,1.5){$\Delta_2(1)$}}}
\put(13,28){\colorbox{gray!40}{\framebox(4,1.5){$\Delta_3(2)$}}}
\put(19,28){\colorbox{gray!40}{\framebox(4,1.5){$\Delta_4(3)$}}}
\put(25,28){\colorbox{gray!40}{\framebox(4,1.5){$\Delta_5(4)$}}}
\put(1,26){\framebox(4,1.5){}}
\put(7,26){\framebox(4,1.5){}}
\put(13,26){\framebox(4,1.5){}}
\put(19,26){{\framebox(4,1.5){}}}
\put(25,26){{\framebox(4,1.5){}}}
\put(1,24){\framebox(4,1.5){}}
\put(7,24){\framebox(4,1.5){}}
\put(13,24){\framebox(4,1.5){}}
\put(19,24){\framebox(4,1.5){}}
\put(25,24){{\framebox(4,1.5){}}}
\put(-3,24.5){$\tilde{z}=0$}\put(-3,26.5){$\tilde{z}=0$}\put(-3,28.5){$\tilde{z}=2$}
\put(-3,30.5){$\tilde{z}=2$}
\put(1,33){$(ii)~\textit{PB-CT design}$}
\put(1,41){\framebox(4,1.5){}}
\put(7,41){\colorbox{gray!40}{\framebox(4,1.5){$\Delta_2(1)$}}}
\put(13,41){\colorbox{gray!40}{\framebox(4,1.5){$\Delta_3(2)$}}}
\put(19,41){\colorbox{gray!40}{\framebox(4,1.5){$\Delta_4(3)$}}}
\put(25,41){\colorbox{gray!40}{\framebox(4,1.5){$\Delta_5(4)$}}}
\put(1,39){\framebox(4,1.5){}}
\put(7,39){\framebox(4,1.5){}}
\put(13,39){\colorbox{gray!40}{\framebox(4,1.5){$\Delta_3(1)$}}}
\put(19,39){\colorbox{gray!40}{\framebox(4,1.5){$\Delta_4(2)$}}}
\put(25,39){\colorbox{gray!40}{\framebox(4,1.5){$\Delta_5(3)$}}}
\put(1,37){\framebox(4,1.5){}}
\put(7,37){\framebox(4,1.5){}}
\put(13,37){\framebox(4,1.5){}}
\put(19,37){\colorbox{gray!40}{\framebox(4,1.5){$\Delta_4(1)$}}}
\put(25,37){\colorbox{gray!40}{\framebox(4,1.5){$\Delta_5(2)$}}}
\put(1,35){\framebox(4,1.5){}}
\put(7,35){\framebox(4,1.5){}}
\put(13,35){\framebox(4,1.5){}}
\put(19,35){\framebox(4,1.5){}}
\put(25,35){\colorbox{gray!40}{\framebox(4,1.5){$\Delta_5(1)$}}}
\put(-3,35.5){$\tilde{z}=5$}\put(-3,37.5){$\tilde{z}=4$}\put(-3,39.5){$\tilde{z}=3$}
\put(-3,41.5){$\tilde{z}=2$}
\put(1,44){$(i)~\textit{SW-CT design}$}
\end{picture}
\caption{
An example of three longitudinal CT designs: i) SW-CT, ii) PB-CT, iii) CXO, with corresponding causal estimands $\Delta_j(d)$ across different treatment sequences $\tilde{z} \in \{0,z\}$ (where $z>0$) and periods $j=1,...,5$.
In this example i) SW-CT, individual $\Delta_5(d)$ is not identifiable by design without assuming an absence of period-specific treatment effects.
}
\label{fig:ct_designs}
\end{figure}
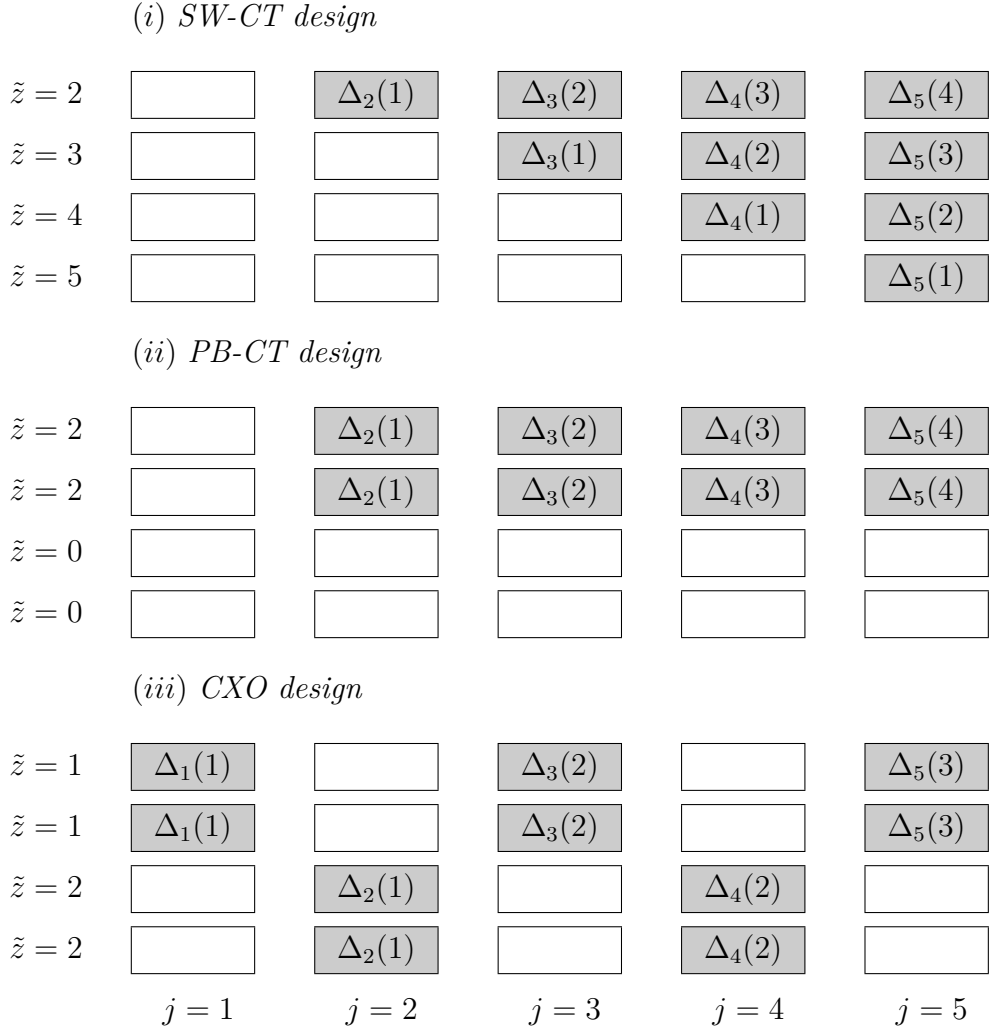

The \textit{constant treatment effect structure} assumes that $\Delta_j(d)$ is invariant across all periods $j$ and duration $d$. Accordingly, $\Delta_j(d) = \Delta$, and leads to a univariate target estimand $\Delta \in \mathbb{R}^1$.
This is a reasonable assumption when the treatment effect is expected to be immediate and sustained, and is typically assumed by default for longitudinal cluster trials \citep{li_mixed-effects_2021}.

The \textit{duration-specific treatment effect structure} considers $\Delta_j(d)$ to be constant across $j$, but vary primarily by $d$, such that $\Delta_j(d)=\Delta(d)$.
The target estimand in this case can be any function of $\bm{\Delta}^D=(\Delta(1),...,\Delta(J-1))^\top \in \mathbb{R}^{J-1}$ in a SW-CT or PB-CT, and $\bm{\Delta}^D=(\Delta(1),...,\Delta([2J+1-(-1)^J]/4))^\top \in \mathbb{R}^{[2J+1-(-1)^J]/4}$ in a CXO.
We can accordingly define the duration-time average treatment effect $\Delta^{D-avg} = {(J-1)^{-1}} \sum_{d=1}^{J-1}\Delta(d)$ in a SW-CT or PB-CT, and $\Delta^{D-avg} = {([2J+1-(-1)^J]/4)^{-1}} \sum_{d=1}^{[2J+1-(-1)^J]/4} \Delta(d)$ in a CXO.
Such a duration-specific treatment effect structure has also been referred to as ``exposure'' time-varying \citep{kenny_analysis_2022,maleyeff_assessing_2023,lee_analysis_2025} or ``time-on-treatment'' \citep{hughes_current_2015,lee_cluster_2024} effects, and has been the subject of recent interest due to potentially severe estimation bias if misspecified with a working constant treatment effect structure \citep{kenny_analysis_2022,lee_analysis_2025}.

The \textit{period-specific treatment effect structure} allows the treatment effect to vary by the calendar period $j$, separate of the treatment duration $d$, such that $\Delta_j(d)=\Delta_j$.
In PB-CTs, period-specific and duration-specific treatment effects coincide by definition $\bm{\Delta}^P = (\Delta_2,...,\Delta_{J})^\top = \bm{\Delta}^D \in \mathbb{R}^{J-1}$, with period-time average treatment effect $\Delta^{P-avg} = (J-1)^{-1}\sum_{j=2}^{J}\Delta_j = \Delta^{D-avg}$.
In CXO's, period-specific treatment effects $\bm{\Delta}^P = (\Delta_1,...,\Delta_{J})^\top \in \mathbb{R}^{J}$ are a more general form of duration-specific treatment effects, with $\Delta^{P-avg} = (J)^{-1}\sum_{j=1}^{J}\Delta_j$.
In a SW-CT, although we can conceptualize $\Delta_J$ as the treatment effect in period $J$, it is not identifiable because $Y_{iJk}(0)$ is never observable by design \citep{chen_model-assisted_2025,wang_how_2024,lee_analysis_2025}. 
Therefore, we will exclude $\Delta_J$ in the period-specific treatment effect setting and the target estimand for SW-CTs can be any function of $\bm{\Delta}^P = (\Delta_{2},...,\Delta_{J-1})^\top \in \mathbb{R}^{J-2}$, such as $\Delta^{P-avg} = (J-2)^{-1} \sum_{j=2}^{J-1} \Delta_j$. Under the period-specific treatment effect structure, the period-average treatment effect for the overlap population (P-ATO) estimand $\Delta^{P-ATO}  = (\sum_{j=1}^J \lambda_{j} \Delta_j)/(\sum_{j=1}^J \lambda_{j})$ is also worth broadly highlighting.
In the P-ATO, weights $\lambda_{j}=\pi_j^s(1-\pi_j^s)$ are the tilting function that generates the overlap propensity weights \citep{li_balancing_2018}, where $\pi_j^s = \sum_{l=1}^j \pi_l$ is the proportion of clusters receiving treatment in period $j$, and $\pi_l = P(Z_i=l)$.
This tilting function $\lambda_j$ helps define the target population for which treatment effects are estimated, emphasizing periods with better treatment overlap. Therefore, any periods lacking treatment positivity, such as baseline and all-exposed periods, do not contribute any information to the $\Delta^{P-ATO}$ estimand.
Notably, $\Delta^{P-ATO} = \Delta^{P-avg}$ in PB-CT and CXO designs.
However, $\Delta^{P-ATO}$ can differ from $\Delta^{P-avg}$ in staggered designs, such as the SW-CT, where the probability of treatment assignment varies depending on the given period.

Finally, the \textit{saturated treatment effect structure} serves as the most general treatment effect structure when the treatment effects $\Delta_j(d)$ are expected to vary by both the calendar time period $j$ and exposure time $d$.
Such a saturated treatment effect structure coincides with duration and period-specific treatment effects in a PB-CT, $\bm{\Delta}^S = (\Delta_2(1),...,\Delta_J(J-1))^\top = \bm{\Delta}^D = \bm{\Delta}^P \in \mathbb{R}^{J-1}$, with saturated-time average treatment effect $\Delta^{S-avg} = (J-1)^{-1}\sum_{j=2}^{J}\Delta_j(j-1) = \Delta^{D-avg} = \Delta^{P-avg}$.
Similarly, the saturated treatment effect structure coincides with period-specific treatment effects in a CXO design, $\bm{\Delta}^S = (\Delta_1(1), \Delta_2(1), \Delta_3(2), \Delta_4(2),..., \Delta_J([2J+1-(-1)^J]/4))^\top = \bm{\Delta}^P \in \mathbb{R}^{J}$, with saturated-time average treatment effect $\Delta^{S-avg} = (J)^{-1}\sum_{j=1}^{J}\Delta_j\left([2j+1-(-1)^j]/4
\right) = \Delta^{P-avg}$.
In contrast, the saturated treatment effect structure is distinct from both duration and period-specific treatment effect structures in a SW-CT.
As with the period-specific treatment effect structure, the satruated treatment effects in period $J$, $(\Delta_J(1),...,\Delta_J(J))$, are not identifiable in a SW-CT and thus excluded.
In this assumption-lean set up, saturated treatment effects in SW-CTs are written as $\bm{\Delta}^S = (\Delta_{2}(1), \Delta_{3}(1), \Delta_{3}(2), ..., \Delta_{J-1}(1), ..., \Delta_{J-1}(J-2))^\top \in \mathbb{R}^{(J-2)(J-1)/2}$ alongside a saturated-time average treatment effect $\Delta^{S-avg} = ((J-2)(J-1)/2)^{-1} \sum_{j=2}^{J-1} \sum_{d=1}^{j-1} \Delta_j(d)$.

\section{Asymptotic properties of the linear fixed-effects regression estimator}

We will broadly refer to models with cluster fixed intercepts as fixed-effects models, even if such models can have within-cluster nested random effects structures. Consider the following linear fixed-effects model
\begin{equation}
\label{eq:FE_model}
    Y_{ijk} = \beta_{0j} + TE_{ij} + \bm{\beta}_X^\top \bm{X}_{ik} + \alpha_i + RE_{ij} + \epsilon_{ijk} \,.
\end{equation}
Here, $\beta_{0j}$ is the secular trend parameter (also referred to as the period effect), $TE_{ij}$ is the treatment effect structure, $\bm{\beta}_X$ is the vector of coefficients for baseline variables, $\alpha_i$ represents the fixed intercept for cluster $i$, $RE_{ij}$ is the within-cluster nested random effects structure, and $\epsilon_{ijk} \sim N(0, \sigma^2)$ is the residual random error.
A covariate unadjusted analysis is a special case of Equation (\ref{eq:FE_model}) where $\bm{X}_{ik}  = \emptyset$. 
Such a linear fixed-effects model with cluster fixed intercepts and cluster-period random interaction terms (Equation \ref{eq:FE_model}) was previously described by \citet{turner_analysis_2007}, and is a general form of the more typical fixed-effects model that excludes $RE_{ij}$, with estimation proceeding via ordinary least squares. 
Importantly, Equation (\ref{eq:FE_model}) is only a working model and can accommodate continuous, binary, or count outcomes.

For estimating the constant treatment effect $\Delta$ in a SW-CT, PB-CT, and CXO, we specify $TE_{ij,SW} = I\{Z_i \leq j\}\beta_Z$, $TE_{ij,PB} = I\{Z_i=2, j>1\}\beta_Z$, and $TE_{ij,XO} = (I\{Z_i = 1, j\in 2n-1\} + I\{Z_i = 2, j\in 2n\})\beta_Z$ where $n \in \{1,...,[2J+1-(-1)^J]/4\}$, with $\beta_Z$ targeting $\Delta$.
For the duration-specific treatment effects $\bm{\Delta}^D$, we set $TE_{ij,SW} = \sum_{d=1}^{j-1} I\{Z_i = j-d+1\}\beta_{Zd}$, $TE_{ij,PB} = \sum_{d=1}^{j-1} I\left\{Z_i=2d/(j-1) \right\}\beta_{Zd}$, and $TE_{ij,XO} = \sum_{d=1}^{[2j+1-(-1)^j]/4} I\{Z_i = j-2d+2\}\beta_{Zd}$,  with $\beta_{Zd}$ targeting $\Delta(d)$. 
For the period-specific treatment effects $\bm{\Delta}^P$, we broadly set $TE_{ij} = I\{Z_i \leq j\} \beta_{jZ}$, with $\beta_{jZ}$ targeting $\Delta_j$.
For the saturated treatment effects $\bm{\Delta}^S$ in SW-CTs, we set $TE_{ij,SW} = \sum_{d=1}^{j-1} I\{Z_i = j-d+1\} \beta_{jZd}$, with $\beta_{jZd}$ targeting $\Delta_j(d)$.

We consider maximum likelihood estimators for Equation (\ref{eq:FE_model}) with different treatment effect structures $TE_{ij}$, based on the observed data $\{\bm{O}_1,...,\bm{O}_m\}$ from clusters $i=1,...,m$.
The linear fixed-effects model can then be framed as estimating equations in an M-estimation framework, with
Lemma \ref{lemma:variance} establishing the consistency of the cluster-robust sandwich variance estimator under certain types of model misspecification \citep{van_der_vaart_asymptotic_1998}.
\begin{lemma}
\label{lemma:variance}
Let $\bm{O}_1,...,\bm{O}_{m}$ be i.i.d. samples from a common distribution on $O$.
Let $\bm{\psi}(\bm{O},\bm{\theta})$ be a known estimating equation with parameters $\bm{\theta} \in \Theta$, a compact set of Euclidean space.
Let $\hat{\bm{\theta}}$ be the solution to $\sum_{i=1}^{m} \bm{\psi}(\bm{O}_i, \bm{\theta})=0$.
We assume that $\bm{\psi}$ satisfies the following regularity conditions:
(1.) there exists a unique solution in the interior of $\Theta$, denoted as $\underline{\bm{\theta}}$, to the equation $E[\bm{\psi}(\bm{O},\bm{\theta})] = 0$;
(2.) the function $\bm{\theta} \mapsto \bm{\psi}(o, \bm{\theta})$, together with its first and second derivatives, is dominated by a square-integrable function for every $o$ in the support of $\bm{O}$;
and (3.) $E\left[\frac{d\bm{\psi}(\bm{O},\bm{\theta})}{d\bm{\theta}^{\top}} \mid_{\bm{\theta}=\underline{\bm{\theta}}}\right]$ is invertible.
\newline We then have
$\hat{\bm{\theta}}\xrightarrow{P} \underline{\bm{\theta}}$
and
$m^{1/2}(\hat{\bm{\theta}} - \underline{\bm{\theta}}) \xrightarrow{d} N(0,\textbf{V})$,
where $\textbf{V}=E[\text{IF}(\bm{O},\underline{\bm{\theta}})\text{IF}(\bm{O},\underline{\bm{\theta}})^\top]$ and $\text{IF}(\bm{O},\underline{\bm{\theta}}) = -\left( E\left[\frac{d\bm{\psi}(\bm{O},\bm{\theta})}{d\bm{\theta}^{\top}} \mid_{\bm{\theta}=\underline{\bm{\theta}}}\right]^{-1} \bm{\psi}(\bm{O},\underline{\bm{\theta}}) \right)$ is the influence function for $\hat{\bm{\theta}}$.
Furthermore, the sandwich variance estimator
is $\hat{\bm{V}} = m^{-1}\sum_{i=1}^{m} \widehat{\text{IF}}(\bm{O}_i,\hat{\bm{\theta}}) \widehat{\text{IF}}(\bm{O}_i,\hat{\bm{\theta}})^\top \xrightarrow{P} \textbf{V}$,
where $\widehat{\text{IF}}(\bm{O}_i,\hat{\bm{\theta}}) = \left( 
\left[
    m^{-1}\sum_{i=1}^{m} \frac{d\bm{\psi}(\bm{O}_i,\bm{\theta})}{d\bm{\theta}^{\top}} \mid_{\bm{\theta}=\hat{\bm{\theta}}}
\right]^{-1}
\bm{\psi}(\bm{O}_i, \hat{\bm{\theta}})
\right)$.
\end{lemma}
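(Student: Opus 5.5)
The plan is the classical Z-estimation argument, in the spirit of \citet{van_der_vaart_asymptotic_1998}, in three steps: consistency, asymptotic normality, and consistency of the sandwich variance. Throughout, write $\Psi_m(\bm{\theta}) = m^{-1}\sum_{i=1}^m \bm{\psi}(\bm{O}_i,\bm{\theta})$ and $\Psi(\bm{\theta}) = E[\bm{\psi}(\bm{O},\bm{\theta})]$.

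\textbf{Consistency.} First I will show a uniform law of large numbers, $\sup_{\bm{\theta}\in\Theta}\|\Psi_m(\bm{\theta})-\Psi(\bm{\theta})\| \xrightarrow{P} 0$.
\begin{itemize}
\item By condition (2.), $\bm{\theta}\mapsto\bm{\psi}(o,\bm{\theta})$ is differentiable with derivative bounded by an integrable envelope. Hence it is Lipschitz in $\bm{\theta}$ with an integrable Lipschitz constant.
\item Since $\Theta$ is compact, this class is Glivenko--Cantelli, for example by a bracketing argument.
\item $\Psi$ is continuous by dominated convergence, and by condition (1.) it has a unique zero $\underline{\bm{\theta}}$. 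On the compact set $\Theta$ this gives $\inf_{\|\bm{\theta}-\underline{\bm{\theta}}\|\ge\epsilon}\|\Psi(\bm{\theta})\|>0$ for every $\epsilon>0$, i.e. the zero is well separated.
\item Since $\Psi_m(\hat{\bm{\theta}})=0$, the standard Z-estimator consistency theorem then gives $\hat{\bm{\theta}}\xrightarrow{P}\underline{\bm{\theta}}$.
\end{itemize}

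\textbf{Asymptotic normality.} Next, expand $0=\Psi_m(\hat{\bm{\theta}})$ around $\underline{\bm{\theta}}$ by a second-order Taylor expansion, componentwise:
\[
0=\Psi_m(\underline{\bm{\theta}})+\dot\Psi_m(\underline{\bm{\theta}})(\hat{\bm{\theta}}-\underline{\bm{\theta}})+\tfrac12(\hat{\bm{\theta}}-\underline{\bm{\theta}})^\top\ddot\Psi_m(\tilde{\bm{\theta}})(\hat{\bm{\theta}}-\underline{\bm{\theta}}).
\]
\begin{itemize}
\item The second-derivative envelope in condition (2.) makes $\ddot\Psi_m(\tilde{\bm{\theta}})$ bounded in probability, since it is bounded by $m^{-1}\sum_i$ of the envelope.
\item $\dot\Psi_m(\underline{\bm{\theta}})\to E[d\bm{\psi}/d\bm{\theta}^\top]$ by the law of large numbers, and this limit is invertible by condition (3.).
\item Combined with consistency, these give $m^{1/2}(\hat{\bm{\theta}}-\underline{\bm{\theta}}) = -E[d\bm{\psi}/d\bm{\theta}^\top]^{-1}m^{-1/2}\sum_i\bm{\psi}(\bm{O}_i,\underline{\bm{\theta}})+o_P(1)$.
\item $\Psi(\underline{\bm{\theta}})=0$ and $\bm{\psi}(\bm{O},\underline{\bm{\theta}})$ is square integrable by (2.), so the multivariate CLT and Slutsky's lemma give the limit $N(0,\textbf{V})$ with the stated influence function.
\end{itemize}

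\textbf{Sandwich variance.} For $\hat{\bm{V}}$ I will use the same uniform convergence device.
\begin{itemize}
\item The class $\{d\bm{\psi}(o,\bm{\theta})/d\bm{\theta}^\top\}$ has a dominating envelope and is Lipschitz via the second-derivative bound. So $m^{-1}\sum_i d\bm{\psi}(\bm{O}_i,\hat{\bm{\theta}})/d\bm{\theta}^\top \xrightarrow{P} E[d\bm{\psi}/d\bm{\theta}^\top|_{\underline{\bm{\theta}}}]$. The inverse of the empirical matrix then converges by the continuous mapping theorem, since the limit is invertible.
\item For the middle term, $\bm{\psi}\bm{\psi}^\top$ is dominated by the square of the square-integrable envelope, which is integrable. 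Its $\bm{\theta}$-derivative is dominated by a product of envelopes, integrable by Cauchy--Schwarz. So a uniform LLN plus consistency give $m^{-1}\sum_i\bm{\psi}(\bm{O}_i,\hat{\bm{\theta}})\bm{\psi}(\bm{O}_i,\hat{\bm{\theta}})^\top\xrightarrow{P}E[\bm{\psi}\bm{\psi}^\top|_{\underline{\bm{\theta}}}]$.
\item Assembling the bread and meat yields $\hat{\bm{V}}\xrightarrow{P}\textbf{V}$. The sign in $\widehat{\text{IF}}$ is irrelevant because it enters as an outer product.
\end{itemize}

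\textbf{Main obstacle.} I expect the hardest part to be the uniform convergence steps, especially the well-separatedness of $\underline{\bm{\theta}}$ and the uniform LLN under only the stated envelope conditions. My plan is to handle both by the Lipschitz/bracketing argument on the compact $\Theta$; if that proves delicate, I will instead cite Theorems 5.9 and 5.41 of \citet{van_der_vaart_asymptotic_1998} directly.
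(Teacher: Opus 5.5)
Your proposal is correct and follows essentially the same route as the paper. The paper gets consistency from the Glivenko--Cantelli property (van der Vaart's Example 19.8) plus Theorem 5.9, asymptotic normality from the Taylor-expansion argument of Theorem 5.41 with the CLT and Slutsky, and sandwich consistency by showing the bread and meat converge at $\hat{\bm{\theta}}$ via the second-derivative envelope and then applying the continuous mapping theorem. The differences are cosmetic: your Lipschitz/bracketing route to the uniform LLN implicitly needs line segments to stay in $\Theta$ (harmless near the interior point $\underline{\bm{\theta}}$, and avoidable via the continuity-plus-envelope bracketing of Example 19.8), and you phrase the sandwich step as a uniform LLN where the paper uses a pointwise mean-value expansion, which gives the same bound.
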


Importantly, as the number of clusters $m \rightarrow \infty$, the number of parameters $\alpha_i$ in the linear fixed-effects model (Equation \ref{eq:FE_model}) will also $\rightarrow \infty$. 
Due to this \emph{incidental parameters problem} \citep{neyman_consistent_1948}, the linear fixed-effects model violates standard regularity condition 3 of Lemma \ref{lemma:variance}.
This is solved by first re-framing cluster fixed intercepts as \emph{nuisance parameters} and the remaining fixed covariates as \emph{structural parameters} \citep{neyman_consistent_1948}. We can then demonstrate that Equation (\ref{eq:FE_model}) yields equivalent structural parameter estimators to the linear within-transformed estimator \citep{kiefer_estimation_1980,wooldridge_econometric_2010}, based on the following model
\begin{equation}
\label{eq:FE_model_within}
    \ddot{Y}_{ijk} = \ddot{\beta}_{0j} + \ddot{TE}_{ij} + \bm{\beta}_X^\top \ddot{\bm{X}}_{ik} + \ddot{\alpha}_i + \ddot{RE}_{ij} + \ddot{\epsilon}_{ijk}
\end{equation}
where the observed outcome and model covariate indicators are subtracted by the corresponding observed within-cluster averages (Appendix \ref{app.sect:constant}).
That is,
$\ddot{Y}_{ijk} = Y_{ijk} - (\sum_{l=1}^{J}\sum_{k'=1}^{N_i} S_{ilk'} Y_{ilk'})/(\sum_{l=1}^{J} N_{il})$, 
$\ddot{\beta}_{0j} = \beta_{0j} - (\sum_{l=1}^J \sum_{k'=1}^{N_i} S_{ilk'} I\{l=j\}\beta_{0l})/(\sum_{l=1}^{J} N_{il})$,
$\ddot{TE}_{ij} = TE_{ij} - (\sum_{l=1}^{J}\sum_{k'=1}^{N_i} S_{ilk'} I\{l=j\} TE_{il})/(\sum_{l=1}^{J} N_{il})$,
$\ddot{\bm{X}}_{ik} = \bm{X}_{ik} - (\sum_{l=1}^{J}\sum_{k'=1}^{N_i} S_{ilk'} \bm{X}_{ik'})/(\sum_{l=1}^{J} N_{il})$,
$\ddot{\alpha}_i = \alpha_i - (\sum_{l=1}^{J}\sum_{k'=1}^{N_i} S_{ilk'} \alpha_i)/(\sum_{l=1}^{J} N_{il})$,
$\ddot{RE}_{ij} = RE_{ij} - (\sum_{l=1}^{J}\sum_{k'=1}^{N_i} S_{ilk'} RE_{il})/(\sum_{l=1}^{J} N_{il})$,
and $\ddot{\epsilon}_{ijk} = \epsilon_{ijk} - (\sum_{l=1}^{J}\sum_{k'=1}^{N_i} S_{ilk'} \epsilon_{ilk'})/(\sum_{l=1}^{J} N_{il})$.
Crucially, $\ddot{\alpha}_i = 0$, thus the linear fixed-effects model avoids the incidental parameters problem and satisfies 
standard regularity condition 3 (Lemma \ref{lemma:variance}) by treating cluster intercepts as nuisance parameters that are removed via within-transformation.

For the constant $\Delta$, duration-specific $\bm{\Delta}^D$, period-specific $\bm{\Delta}^P$, and saturated $\bm{\Delta}^S$ treatment effect structures, we can generally define corresponding point estimators $\hat{\beta}_Z$, $\hat{\bm{\beta}}_Z^D$ (being a vector of $\hat{\beta}_{Zd}$), $\hat{\bm{\beta}}_Z^P$ (being a vector of $\hat{\beta}_{jZ}$), and $\hat{\bm{\beta}}_Z^S$ (being a vector of $\hat{\beta}_{jZd}$), respectively.
We then define the finite-sample sandwich variance estimators
$\widehat{Var}\left(\hat{\beta}_Z\right)=\frac{1}{m}\hat{V}_Z$, 
$\widehat{Var}\left(\hat{\bm{\beta}}_Z^D\right) = \frac{1}{m}\hat{\bm{V}}_Z^D$, 
$\widehat{Var}\left(\hat{\bm{\beta}}_Z^P\right) =\frac{1}{m}\hat{\bm{V}}_Z^P$, 
and $\widehat{Var}\left(\hat{\bm{\beta}}_Z^S\right) =\frac{1}{m}\hat{\bm{V}}_Z^S$, corresponding to the constant, duration-specific, period-specific, and saturated treatment effects, respectively.

\begin{theorem}
\label{Theorem:SW_PB_XO}
    Under the standard regularity conditions outlined in Lemma \ref{lemma:variance}, assume either (i.) Assumptions A1-A3 in CRTs, (ii.) Assumptions A1, A2, A4 in CQTs, or (iii.) the mean model (Equation \ref{eq:FE_model}) is correctly specified.
    The following Central Limit Theorems hold for a linear fixed-effects model analysis of a longitudinal cluster trial:
    (a) under a true constant treatment effect structure, $\hat{V}_Z^{-1/2} m^{1/2} (\hat{\beta}_Z - \Delta) \xrightarrow{d} N(0,1)$;
    (b) under a true duration-specific treatment effect structure, $(\hat{\bm{V}}_Z^D)^{-1/2} m^{1/2} (\hat{\bm{\beta}}^D_Z - \bm{\Delta}^D) \xrightarrow{d} N(0,\textbf{I}_{\text{dim}(\bm{\Delta}^D)})$;
    (c) under a true period-specific treatment effect structure, $(\hat{\bm{V}}_Z^P)^{-1/2} m^{1/2} (\hat{\bm{\beta}}^P_Z - \bm{\Delta}^P) \xrightarrow{d} N(0,\textbf{I}_{\text{dim}(\bm{\Delta}^P)})$
    and $\hat{V}_Z^{-1/2} m^{1/2} (\hat{\beta}_Z - \Delta^{P-ATO}) \xrightarrow{d} N(0,1)$;
    and (d) under a true saturated treatment effect structure, $(\hat{\bm{V}}_Z^S)^{-1/2} m^{1/2} (\hat{\bm{\beta}}^S_Z - \bm{\Delta}^S) \xrightarrow{d} N(0,\textbf{I}_{\text{dim}(\bm{\Delta}^S)})$.
\end{theorem}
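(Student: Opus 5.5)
Our plan is to reduce every part of Theorem \ref{Theorem:SW_PB_XO} to Lemma \ref{lemma:variance} applied to the within-transformed model (Equation \ref{eq:FE_model_within}). The cluster intercepts are eliminated as nuisance parameters, so the structural parameter vector $\bm{\theta} = (\bm{\beta}_0, \bm{\beta}_{TE}, \bm{\beta}_X)$ has fixed dimension. Here $\bm{\beta}_{TE}$ stands for $\beta_Z$, $\bm{\beta}^D_Z$, $\bm{\beta}^P_Z$ or $\bm{\beta}^S_Z$.

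\textbf{Step 1: the estimating equation.} We first show, via a Frisch--Waugh--Lovell argument, that the least-squares (or GLS, with the nested $RE_{ij}$ treated through a working covariance) score for Equation (\ref{eq:FE_model}) yields the same structural estimator as the within-transformed score. With working covariance the transformation is a projection orthogonal to the cluster indicator in the working metric; the argument is identical. For OLS the score is
\[
\bm{\psi}(\bm{O}_i,\bm{\theta}) = \sum_{j=1}^J\sum_{k=1}^{N_i} S_{ijk}\, \ddot{\bm{D}}_{ijk}\bigl(\ddot{Y}_{ijk} - \ddot{\bm{D}}_{ijk}^\top \bm{\theta}\bigr),
\]
where $\bm{D}_{ijk}$ collects the period indicators, the treatment-structure indicators and $\bm{X}_{ik}$. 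This score depends only on $\bm{O}_i$, so by A1 the terms are i.i.d. across clusters. The score is linear in $\bm{\theta}$, so regularity conditions (1)--(3) reduce to finite second moments (A1, bounded $N_i$) and invertibility of $E[\sum S_{ijk}\ddot{\bm{D}}_{ijk}\ddot{\bm{D}}_{ijk}^\top]$. Invertibility holds by the design positivity, with one period effect dropped for identifiability. Lemma \ref{lemma:variance} then gives asymptotic normality around $\underline{\bm{\theta}}$ and consistency of the sandwich $\hat{\bm{V}}$. It therefore suffices to identify the treatment components of $\underline{\bm{\theta}}$.

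\textbf{Step 2: identify $\underline{\bm{\theta}}$.} We verify that the candidate $\underline{\bm{\beta}}_{TE} = $ the true estimand solves $E[\bm{\psi}]=0$ for suitable $\underline{\bm{\beta}}_0$ and $\underline{\bm{\beta}}_X$. Uniqueness then follows from invertibility.
\begin{itemize}
\item Write $Y_{ijk} = Y_{ijk}(0) + \sum_z I\{Z_i=z\}(Y_{ijk}(z)-Y_{ijk}(0))$.
\item Choose $\underline{\bm{\beta}}_X,\underline{\bm{\beta}}_0$ as the population projection of $\ddot{Y}_{ijk}(0)$ on $(\ddot{\bm X}_{ik},\text{period dummies})$. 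A2 makes the within-cluster weights $S_{ijk}/\sum N_{il}$ independent of outcomes given $N_i$, and A1 makes individuals exchangeable. Together these imply that these projection coefficients do not depend on cluster identity.
\item Under A3, $Z_i$ is independent of $(Y(0),\bm{X},S,N)$. Hence the residual $\ddot{Y}_{ijk}(0) - \ddot{\bm D}^\top\underline{\bm\theta}$ is uncorrelated with the within-transformed treatment indicators, which are functions of $Z_i$ and $S_i$. Conditioning on $Z_i$ gives this directly.
\item The treatment part contributes $E[\ddot{TE}\cdot(\text{effect} - \beta)]$. When the working $TE_{ij}$ structure matches the truth, each indicator multiplies exactly $Y(z)-Y(0)$ whose conditional mean given $Z_i,N_i$ is the corresponding $\Delta$, by A3 or A4(ii). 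That term therefore vanishes.
\item Under (ii), A3 is replaced by A4(i). Since only within-deviations $\ddot{Y}(0)$ enter, mean independence of $\ddot{Y}_{ijk}(0)$ from $(Z_i,N_i)$ given $S_i$ suffices for the same orthogonality.
\item Under (iii), the result is standard because the residual has conditional mean zero.
\end{itemize}

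\textbf{Step 3: the P-ATO claim in (c).} For (c), fitting the constant-effect model when the truth is period-specific, we solve the $\beta_Z$ score explicitly. By FWL, $\underline{\beta}_Z$ equals a ratio of $E[\sum S\,\tilde{T}_{ij}\,\text{effect}_{ij}]$ to $E[\sum S\,\tilde{T}_{ij}^2]$. Here $\tilde T_{ij}$ is the treatment indicator residualized on cluster and period effects, and for the randomized design it equals $I\{Z_i\le j\} - \pi_j^s - (\text{cluster mean} - \text{average})$. We expect the main obstacle here: showing that the weight on $\Delta_j$ is proportional to $\pi_j^s(1-\pi_j^s)$. We will first work with equal expected cluster-period sizes, which follows from A2 since $N_{ij}$ is independent of $N_i$, and compute $E[\tilde{T}_{ij} I\{Z_i\le j\}]$ directly. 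The cluster-mean correction must be shown to cancel after summing over $j$, using the staggered structure of $\{I\{Z_i\le j\}\}$ across periods. If exact cancellation fails in general, we will restrict to designs with stationary enrollment expectation, where the two-way demeaning makes the weights reduce to $\mathrm{Var}(I\{Z_i\le j\})=\lambda_j$.

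Finally, Slutsky combined with $\hat{\bm V}\to\bm V$ gives each studentized statement.
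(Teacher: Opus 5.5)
\textbf{Step 2: your candidate is not a root of $E[\bm\psi]=0$.} You set $(\underline{\bm\beta}_0,\underline{\bm\beta}_X)$ equal to the projection of $\ddot Y_{ijk}(0)$ and then check only the treatment block. At that candidate, the $\bm\beta_X$ block equals $E[\sum_{j,k}S_{ijk}\ddot{\bm X}_{ik}\tau_{ijk}]$, where $\tau_{ijk}$ is the realized individual effect minus $TE_{ij}$ evaluated at the estimand. A1--A3 allow individual effects to covary with $\bm X_{ik}$. For example, take full enrollment, fixed $N$, scalar i.i.d.\ $X_{ik}$ with variance $\sigma_X^2$, and $Y_{ijk}(z)-Y_{ijk}(0)=\Delta+\gamma(X_{ik}-E[X_{ik}])$; this is still a true constant structure. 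Then that block equals $\gamma(N-1)\sigma_X^2\sum_j\pi_j^s\neq 0$, so appealing to uniqueness proves nothing. You can repair this in either of two ways. First, you can project $\ddot Y-\ddot{TE}$ (at the estimand) rather than $\ddot Y(0)$. The treatment block still vanishes: write $I\{Z_i\le j\}=(I\{Z_i\le j\}-\pi_j^s)+\pi_j^s$. The $\pi_j^s$ part is a combination of period columns, so the period block kills it. The centered part has mean zero given $(\bm S_i,N_i,\bm X_i,\bm Y_i(0))$, and its product with $\tau$ factorizes by A2--A3. Second, you can do what the paper does. Subtract $E[\bm\Delta_Z\bm 1_J]^\top$ times the period block from the treatment block, and show that the $\bm Q_0$, $\bm Q_X$ and $\bm Y(0)$ contributions vanish in expectation. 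Then solve the remaining linear equation for $\underline\beta_{TE}$, without ever specifying $(\underline{\bm\beta}_0,\underline{\bm\beta}_X)$.

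\textbf{Step 3: the P-ATO mechanism is misidentified.} The within-cluster correction does not cancel, and your fallback is false. With equal cluster-period sizes and working independence, the two-way-demeaned weight on $\Delta_j$ is
\[
w_j=\lambda_j-\frac{1}{J}\sum_{l=1}^{J}\mathrm{Cov}\bigl(I\{Z_i\le j\},I\{Z_i\le l\}\bigr),
\]
not $\mathrm{Var}(I\{Z_i\le j\})$, even under stationary enrollment. The missing idea is that $\sum_l I\{Z_i\le l\}=J+1-Z_i$ is affine in $Z_i$. Hence the correction equals $J^{-1}\lambda_j\{E[Z_i\mid Z_i>j]-E[Z_i\mid Z_i\le j]\}$. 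Under the equal allocation the paper assumes ($\pi_j^s=(j-1)/(J-1)$, which its appendix uses through $E[\bm\Delta_Z]$), the bracket equals $(J-1)/2$ for every $j$. So $w_j=\lambda_j(J+1)/(2J)\propto\lambda_j$, and the ratio is $\Delta^{P-ATO}$. You never invoke equal allocation, and without it the claim fails. For $J=4$ and $P(Z_i=2,3,4)=(1/2,1/4,1/4)$, the weights on $(\Delta_2,\Delta_3)$ are in ratio $10:7$, not $\lambda_2:\lambda_3=4:3$. The same identity covers the nested-exchangeable working covariance. Write $(\mathbf I_J\otimes\bm 1_N)^\top\bm{\mathcal M}\underline{\ddot{\bm V}}\bm{\mathcal M}=\{(\mathcal H_A-\mathcal H_B)\mathbf I_J+\mathcal H_B\bm 1_J\bm 1_J^\top\}\otimes\bm 1_N^\top$ as in the paper. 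The diagonal part contributes $\lambda_j$, and the all-ones part contributes $\mathrm{Cov}(I\{Z_i\le j\},\sum_l I\{Z_i\le l\})$, which is again proportional to $\lambda_j$ under equal allocation.
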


In Theorem \ref{Theorem:SW_PB_XO}, we prove that despite being typically understood to target a finite-sample conditional treatment effect estimand, the linear fixed-effects model with a correct treatment effect structure in a longitudinal CT is consistent for a super-population marginal treatment effect estimand. This holds in longitudinal CRTs even if other model components (time-invariant baseline covariates, random effects, residual distribution) are arbitrarily misspecified (Appendix \ref{app.sect:Theorem_SW_Proof}-\ref{app.sect:Theorem_PB_XO_Proof}). 
Furthermore, without relying on randomization, the linear fixed-effects model estimator can still provide consistent and asymptotically normal estimators for the average treatment effect in longitudinal CQTs, even in the presence of time-invariant confounding. We can generally extend these conclusions beyond SW-CTs, PB-CTs, and CXOs, to other longitudinal CT designs.

Interestingly, with the stated assumptions, the linear fixed-effects model constant treatment effect estimator is generally consistent for the P-ATO estimand across longitudinal CT designs (Theorem \ref{Theorem:SW_PB_XO}.c).
This resembles similar observations for linear mixed-effects models in SW-CRTs by \citet{lee_analysis_2025} and \citet{wang_how_2024}; although neither of those earlier efforts explicitly made this connection.
Notably, this P-ATO consistency result in the second half of Theorem \ref{Theorem:SW_PB_XO}.c holds for CXO designs, even though the first half of Theorem \ref{Theorem:SW_PB_XO}.c alongside Theorem \ref{Theorem:SW_PB_XO}.d do not apply in this design due to the fixed-effects model collinearity violating standard regularity condition 3 (Lemma \ref{lemma:variance}).
Unlike in staggered designs, the P-ATO is equivalent to the saturated/duration/period-average treatment effect estimands in PB-CT and CXO designs.
Importantly then, model-robust inference of PB-CT and CXO designs with linear fixed-effects models does not even require a correctly specified treatment effect structure due to the structure of these designs.

\section{Beyond the linear fixed-effects regression: the log-link}
\label{sect:GEE-g}

Despite its wide applicability and demonstrated consistency (Theorem \ref{Theorem:SW_PB_XO}), the linear fixed-effects model may not be a natural model choice when the outcome is not continuous. In this section, we study robust inference for marginal causal estimands via the log fixed-effects model (natural for count outcomes and analogous to the modified Poisson model for binary outcomes) and an additional g-computation step, both fit within an M-estimation framework.
Consider a general mean model for the individual-level data as
\begin{equation}
\label{eq:GEE_mean_model}
    \mu_{ijk} = E[Y_{ijk}| Z_i, \bm{X}_{ik}] = g^{-1}(\beta_{0j} + TE_{ij} + \bm{\beta}_X^\top \bm{X}_{ik} + \alpha_i)
\end{equation}
where model coefficients are specified as in the linear model (Equation \ref{eq:FE_model}).
As is typical, we assume that $g$ is a canonical link function. The vector of structural regression coefficients $\bm{\beta}$ (excluding the cluster fixed-effects coefficients) is estimated by $\hat{\bm{\beta}}$, a solution to the following GEE:
\begin{equation}
\label{eq:GEE}
    \sum_{i=1}^{m} \bm{U}_i^\top \bm{\mathcal{Z}}_i^{-1/2} \bm{R}_i^{-1} \bm{Z}_i^{-1/2} (\bm{Y}_i^o - \bm{\mu}_i^o) = 0 \,.
\end{equation}
where $\bm{Y}_i^o= (Y_{ijk})_{(j,k):S_{ijk}=1} \in \mathbb{R}^{\sum_{j=1}^{J}N_{ij}}$
is the observed outcome vector across periods for cluster $i$,
$\bm{\mu}_i^o = \left(g^{-1}(\bm{Q}_{ijk} \bm{\beta} + \alpha_i)\right)_{(j,k):S_{ijk}=1} = (\mu_{ijk})_{(j,k):S_{ijk}=1} \in \mathbb{R}^{\sum_{j=1}^{J}N_{ij}}$ is the mean function vector for all observed individuals in cluster $i$, $\bm{U}_i = \frac{d\bm{\mu}_i^o}{d\bm{\beta}}$ is the derivative matrix,
$\bm{\mathcal{Z}}_i = \text{diag}\{v(Y_{ijk}) : j=1,...,J, k=1,...,N_i, S_{ijk}=1\}$
is the diagonal matrix of the natural variance functions $v(Y_{ijk})$, and $\bm{R}_i$ encodes the working correlation structure for the observed outcomes in cluster $i$.

For Lemma \ref{lemma:variance} to apply, the incidental parameters problem needs to again be avoided in this generalized setting.
Assuming $g$ to be an identity-link with constant working variance $v(Y_{ijk}) = \sigma^2$ leads to the same result as Theorem \ref{Theorem:SW_PB_XO}, where the within-transformation removes the infinite cluster fixed intercepts $\alpha_i$.
Alternatively, assuming $g$ to be the log-link with working variance $v(Y_{ijk}) = \mu_{ijk}$ and working within-cluster independence in the GEE (Equation \ref{eq:GEE}) allows us to condition out the cluster fixed intercepts (Appendix \ref{app.sect:Theorem_SW_G_Proof}). This results from the equivalence between the resulting estimating equation and the score equation of the following full likelihood conditional Poisson model
\[
    L_i(\bm{\beta},\alpha_i) = \prod_{j=1}^J \prod_{k:S_{ijk}=1} \frac{\mu_{ijk}^{Y_{ijk}}}{Y_{ijk}!}e^{-\mu_{ijk}} \,,
\]
assuming $Y_{ijk} \sim Poisson(\mu_{ijk})$ and $\mu_{ijk} = e^{\bm{Q}_{ijk}\bm{\beta} + \alpha_i}$, where $\bm{Q}_{ijk}$ and $\bm{\beta}$ are vectors for the structural parameter indicators and coefficients.
The Poisson likelihood can then be rewritten as
\[
\begin{split}
    L_i(\bm{\beta},\alpha_i) &= \left( \prod_{j=1}^J \prod_{k: S_{ijk}=1} \frac{e^{\bm{Q}_{ijk}\bm{\beta}}}{Y_{ijk}!} \right) 
    \left(e^{\alpha_i \left(\sum_{l=1}^J \sum_{k': S_{ilk'}=1} Y_{ilk'}\right)} e^{-e^{\alpha_i} \left(\sum_{l=1}^J \sum_{k: S_{ilk'}=1} e^{\bm{Q}_{ilk'}\bm{\beta}}\right)} \right) \\
    &= h\left(Y_{i11} ,...,Y_{iJN} ; \bm{\beta}\right) f\left(\sum_{l=1}^{J} \sum_{k': S_{ilk'}=1} Y_{ilk'}; \alpha_i \right)
\end{split}
\]
where $f\left(\sum_{l=1}^J \sum_{k': S_{ilk'}=1} Y_{ilk'}; \alpha_i \right)$ is a function only depending on $Y_{ijk}$ through the cluster-specific summed outcome over enrolled individuals $\sum_{l=1}^J \sum_{k': S_{ilk'}=1} Y_{ilk'}$, and $h(Y_{i11} ,...,Y_{iJN} ; \bm{\beta})$ is a function that does not depend on $\alpha_i$.
By the factorization theorem, $\sum_{l=1}^J \sum_{k': S_{ilk'}=1} Y_{ilk'}$ is a sufficient statistic for cluster intercepts $\alpha_i$.
Then, conditioning on this sufficient statistic removes $\alpha_i$ from the resulting conditional Poisson likelihood
\[
    L_i^{cond}(\bm{\beta}) = \frac{\prod_{j=1}^J \prod_{k:S_{ijk}=1} (e^{\bm{Q}_{ijk} \bm{\beta}})^{Y_{ijk}}/Y_{ijk}!}{\sum_{\{Y'_{ilk'} : \sum_{l=1}^{J}\sum_{k': S_{ilk'}=1} Y'_{ilk'} = \sum_{l=1}^{J}\sum_{k': S_{ilk'}=1} Y_{ilk'} \}}\prod_{l=1}^J \prod_{k': S_{ilk'}=1} (e^{\bm{Q}_{ilk'} \bm{\beta}})^{Y'_{ilk'}}/Y'_{ilk'}!} \,,
\]
invoking a multinomial probability where the denominator represents the sum of all possible combinations of non-negative integers $Y'_{ilk'}$ over given individual $k'$ and period $l$, such that $\sum_{l=1}^{J} \sum_{k': S_{ilk'}=1} Y'_{ilk'} = \sum_{l=1}^{J} \sum_{k': S_{ilk'}=1} Y_{ilk'}$.
Estimators for the cluster intercepts can then be derived with the structural parameter estimators and its sufficient statistic, $\hat{\alpha}_i = ln\left( (\sum_{j=1}^J \sum_{k=1}^{N_i} S_{ijk}Y_{ijk})/(\sum_{j=1}^J \sum_{k=1}^{N_i} S_{ijk}e^{\bm{Q}_{ijk}\hat{\bm{\beta}}}) \right)$.
Altogether, by relying on the equivalence between the working within-cluster independence log-link fixed-effects GEE and the full likelihood conditional Poisson model, cluster fixed intercepts can be treated as nuisance parameters and conditioned out of the estimating equations to avoid the incidental parameters problem (Appendix \ref{app.sect:Theorem_SW_G_Proof}).

We can then construct the g-computation formula using the described log-link fixed-effects GEE with working within-cluster independence
\begin{equation}
\label{eq:g-comp}
    \hat{\mu}_j(b) = 
    \frac{
        \sum_{i=1}^{m} \sum_{l=1}^{J} \sum_{k:S_{ilk}=1} g^{-1}(\hat{\beta}_{0j} + b + \hat{\bm{\beta}}_X^\top \bm{X}_{ik} + \hat{\alpha}_i)
    }
    {
        \sum_{i=1}^{m} \sum_{l=1}^J N_{il}
    } \in \mathbb{R}^1
\end{equation}
where $b$ can be $\hat{\beta}_Z, \hat{\beta}_{Zd}, \hat{\beta}_{jZ}, \hat{\beta}_{jZd}$ or 0 depending on the assumed treatment effect structure.
This g-computation estimator for treatment $b$ and fixed period $j$ is taken as the empirical mean over all enrolled individuals ($S_{ilk}=1$) across all periods $l \in \{1,...,J\}$.
For example, under a constant treatment effect structure, $\hat{\mu}_j(\hat{\beta}_Z)$ and $\hat{\mu}_j(0)$ target the expected treated $E[Y_{ijk}(z)]$ and untreated $E[Y_{ijk}(0)]$ potential outcome, respectively.
Altogether, g-computation can target model-free marginal estimands by standardizing across the covariate distribution in the target population. In contrast to \citet{wang_how_2024} who focused on marginal models and GEE, the prediction and g-computation steps here explicitly incorporate the estimated fixed-effects in constructing the standardized potential outcomes. 

Focusing primarily on difference estimands,
the constant and duration-specific treatment effect estimands can be targeted with a weighted average of  $\hat{\mu}_j(b)-\hat{\mu}_j(0)$ across periods $j$.
The constant treatment effect estimand $\Delta$ can then be estimated with the following g-computation risk difference estimator based on a log-link fixed-effects GEE with a working constant treatment effect structure
\[
\hat{\Delta}_{GEE-g} = \frac{\sum_{i=1}^{m} \sum_{j=1}^J \lambda_{j}(\hat{\mu}_j(\hat{\beta}_Z)-\mu_j(0))}{\sum_{i=1}^{m} \sum_{j=1}^J \lambda_{j}} = \frac{ \sum_{j=1}^J \lambda_{j}(\hat{\mu}_j(\hat{\beta}_Z)-\mu_j(0))}{\sum_{j=1}^J \lambda_{j}} \,.
\]
This estimator notably resembles the P-ATO estimand, with the tilting function weights $\lambda_{j}$ intentionally included in this constant treatment effect estimator to balance the contribution of data from each period, therefore preventing the default precision weighting in GEE to target an ambiguous estimand under model misspecification. 
The duration-specific treatment effect estimands $\bm{\Delta}^D$ can be estimated with
\[
    \hat{\bm{\Delta}}_{GEE-g}^D = \left(\sum_{i=1}^{m} \sum_{d=1}^{dim(\bm{\Delta}^D)} \bm{\lambda}_i(d) \textbf{A}_d \right)^{-1} \left(\sum_{i=1}^{m} \sum_{d=1}^{dim(\bm{\Delta}^D)} \bm{\lambda}_i(d)  \left(\hat{\mu}_j(\hat{\beta}_{Zd}) - \hat{\mu}_j(0) \right)_{j=1,...,J} \right)
\]
As in the constant treatment effect structure, $\bm{\lambda}_i(d)$ is a weighting matrix included to balance the contribution of data from each period under different treatment durations, therefore combating bias due to model misspecification.
$\textbf{A}_d \in \mathbb{R}^{J \times dim(\bm{\Delta}^D)}$ is a matrix with the $d$th column equal to a $J$-length vector of 1's and all other elements 0.
An example of this estimator is illustrated in Appendix (\ref{app.sect:Theorem_SW_G_Proof_ds}).
Provided that any potential collinearity issues in the fixed-effects models are avoided for the given cluster trial design, the robust period-specific $\hat{\bm{\Delta}}^P_{GEE-g}$ and saturated $\hat{\bm{\Delta}}^S_{GEE-g}$ treatment effect estimators targeting their corresponding estimands, $\bm{\Delta}^P$ and $\bm{\Delta}^S$, are considerably simpler to define as the appropriate differences between $\hat{\mu}_j(b)-\hat{\mu}_j(0)$, with no weighting across periods required.

Altogether, these proposed estimators for marginal estimands are solutions to the described log-link fixed-effects models, implemented as GEEs alongside g-computation under an M-estimation framework (Appendix \ref{app.sect:Theorem_SW_G_Proof}).
Accordingly, we can derive consistent sandwich variance estimators for these g-computation difference estimators with Lemma \ref{lemma:variance}, where $\frac{1}{m}\hat{V}_Z$, 
$\frac{1}{m}\hat{\bm{V}}_Z^D$, 
$\frac{1}{m}\hat{\bm{V}}_Z^P$, 
and $\frac{1}{m}\hat{\bm{V}}_Z^S$ correspond to the constant, duration-specific, period-specific, and saturated treatment effect structures, respectively.

\begin{theorem}
\label{Theorem:G}
    Under standard regularity conditions in Lemma \ref{lemma:variance} and given the log-link with working variance $v(Y_{ijk}) = \mu_{ijk}$ and working independence $\rho = 0$, assume Assumptions A1-A3, and in the absence of Assumption A3, assume the mean model (Equation \ref{eq:GEE_mean_model}) is correctly specified.
    Then the following Central Limit Theorems hold for a longitudinal cluster trial:
    (a) under a true constant treatment effect structure, $\hat{V}_{GEE-g}^{-1/2} m^{1/2} (\hat{\Delta}_{GEE-g} - \Delta) \xrightarrow{d} N(0,1)$;
    (b) under a true duration-specific treatment effect structure, $(\hat{\bm{V}}_{GEE-g}^{D})^{-1/2} m^{1/2} (\hat{\bm{\Delta}}_{GEE-g}^D - \bm{\Delta}^D) \xrightarrow{d} N(0,\textbf{I}_{dim(\bm{\Delta}^D)})$;
    (c) under a true period-specific treatment effect structure, $(\hat{\bm{V}}_{GEE-g}^{P})^{-1/2} m^{1/2} (\hat{\bm{\Delta}}_{GEE-g}^P - \bm{\Delta}^P) \xrightarrow{d} N(0,\textbf{I}_{dim(\bm{\Delta}^P)})$ and $\hat{V}_{GEE-g}^{-1/2} m^{1/2} (\hat{\Delta}_{GEE-g} - \Delta^{P-ATO}) \xrightarrow{d} N(0,1)$;
    and (d) under a true saturated treatment effect structure, $(\hat{\bm{V}}_{GEE-g}^{S})^{-1/2} m^{1/2} (\hat{\bm{\Delta}}_{GEE-g}^S - \bm{\Delta}^S) \xrightarrow{d} N(0,\textbf{I}_{dim(\bm{\Delta}^S)})$.
\end{theorem}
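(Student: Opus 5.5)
The plan is to write the whole procedure (log-link fixed-effects GEE under working independence, followed by g-computation) as a single stacked M-estimator and then apply Lemma \ref{lemma:variance}. The main work is to show that the probability limit of each g-computation contrast equals the stated causal estimand.

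\textbf{Step 1: remove the incidental parameters.} Under working independence with $v(Y_{ijk})=\mu_{ijk}$ and the canonical log link, the score for $\alpha_i$ is $\sum_{j,k}S_{ijk}(Y_{ijk}-e^{\bm{Q}_{ijk}\bm{\beta}+\alpha_i})$. Solving it gives the closed form $\hat\alpha_i(\bm\beta)$ displayed before the theorem. Substituting this back yields the profiled cluster-level estimating function
\[
\bm\psi_{\beta}(\bm O_i,\bm\beta)=\sum_{j,k}S_{ijk}\bm Q_{ijk}^\top\Big(Y_{ijk}-\frac{\sum_{l,k'}S_{ilk'}Y_{ilk'}}{\sum_{l,k'}S_{ilk'}e^{\bm Q_{ilk'}\bm\beta}}e^{\bm Q_{ijk}\bm\beta}\Big),
\]
which is exactly the conditional Poisson score. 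It depends on cluster $i$ only through $\bm O_i$ and has fixed dimension, so it is an i.i.d. estimating function.

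I then append the g-computation equations. For each $j$ and each $b$, define $\mu_j(b)$ through
\[
\sum_{l,k}S_{ilk}\Big(e^{\beta_{0j}+b+\bm\beta_X^\top\bm X_{ik}+\hat\alpha_i(\bm\beta)}-\mu_j(b)\Big)=0 .
\]
Finally I append linear equations defining $\Delta_{GEE-g}$ as the $\lambda_j$-weighted contrast, or $\bm\Delta^D_{GEE-g}$ through the $\bm\lambda_i(d)$, $\textbf A_d$ system. The $\lambda_j$ are design constants, or are estimated from $\hat\pi_l$ with one more stacked moment. Conditions (2) and (3) of Lemma \ref{lemma:variance} then follow from A1 (finite second moments, bounded $N_i$) and from non-collinearity of the design. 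Once the limits are identified, Lemma \ref{lemma:variance} delivers consistency, asymptotic normality and consistency of the sandwich variance. The delta method is not needed because everything is stacked.

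\textbf{Step 2: identify the limit $\underline{\bm\beta}$.} This is the main obstacle. I would proceed in the following order.
\begin{itemize}
\item Condition on $(Z_i,N_i,S_i)$ and use A2, A3 to replace observed outcomes by the potential outcomes consistent with $Z_i$.
\item Use A1 to reduce the population equation $E[\bm\psi_\beta]=0$ to moments of the form $E[Y_{ijk}(\tilde z)]$ weighted by the enrollment fractions.
\item For the components of $\bm Q$ indexing treatment at $(j,d)$, show the equation forces $e^{\underline\beta_{0j}+\underline b}\,E[\text{cluster factor}]$ to match $E[Y_{ijk}(z)]$, and similarly for the untreated components.
\end{itemize}
The difficulty is that the ratio factor $\hat\alpha_i(\bm\beta)$ couples all periods nonlinearly. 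Unlike the linear case of Theorem \ref{Theorem:SW_PB_XO}, the misspecified log model does not make $\underline\beta_{jZ}$ itself a causal contrast.

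My first attempt is to avoid identifying $\bm\beta$ directly and instead exploit a standardization property of the canonical-link score. Because the period indicators, the treatment indicators and the cluster intercepts are all in the model, the score equations imply
\[
\sum_i\sum_k S_{ijk}\big(Y_{ijk}-\hat\mu_{ijk}\big)=0 \text{ for each period cell},\qquad \sum_i\sum_k S_{ijk}I\{Z_i\text{ in cell}\}\big(Y_{ijk}-\hat\mu_{ijk}\big)=0 \text{ for each treatment cell}.
\]
Randomization (A3) makes $Z_i$ independent of $(\bm X,\alpha\text{-factor},S,N)$. Hence the average of fitted values within a treated (or untreated) cell has the same limit as the average over all enrolled individuals of the prediction with that arm's coefficient. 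That whole-sample average is precisely $\hat\mu_j(b)$ in (\ref{eq:g-comp}), and its limit is $E[Y_{\cdot jk}(z)]$ (respectively $E[Y_{\cdot jk}(0)]$). The argument mirrors the g-computation robustness argument of \citet{wang_how_2024}, with $\hat\alpha_i$ treated as just another cluster covariate. I expect care is needed to check that the cell-level residual balance holds per period, so that the saturated and period-specific parametrizations are sufficiently rich. This is why a correctly specified treatment-effect structure is required in (a), (b) and (d).

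\textbf{Step 3: assemble each part.}
\begin{itemize}
\item Parts (c) and (d), and the first half of (c), follow directly from per-$(j,d)$ differences $\mu_j(\underline b)-\mu_j(0)=\Delta_j(d)$.
\item For (a) and (b), the true structure collapses the per-period contrasts to $\Delta$ or $\Delta(d)$, and the weighted averages return them, since the weights are normalized (via $\sum_j\lambda_j$ and the inverse of $\sum\bm\lambda_i(d)\textbf A_d$).
\item For the P-ATO half of (c), the working constant-effect score reduces to a $\lambda_j=\pi^s_j(1-\pi^s_j)$-weighted within-period treated-versus-control balance, as in the linear proof. Combined with the explicit $\lambda_j$ weighting in $\hat\Delta_{GEE-g}$, this makes the limit $\sum_j\lambda_j\Delta_j/\sum_j\lambda_j=\Delta^{P-ATO}$.
\end{itemize}
In the non-randomized case, correct specification of (\ref{eq:GEE_mean_model}) makes $E[\bm\psi_\beta]=0$ at the true parameter, because the conditional Poisson score is unbiased under a correct log mean model irrespective of the outcome distribution. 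The g-computation then identifies the estimands directly.
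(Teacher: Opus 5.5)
\textbf{Gap 1: parts (a) and (b) in Step 3.} With a single pooled coefficient $\beta_Z$ (or $\beta_{Zd}$ pooled over calendar periods), the g-computation limit factors as $\underline{\mu}_j(b)=e^{\underline{\beta}_{0j}+b}\kappa$. Here $\kappa$, the enrollment-weighted mean of $e^{\underline{\bm{\beta}}_X^\top\bm{X}+\underline{\alpha}}$, does not depend on $j$. The per-period contrasts $(e^{\underline{\beta}_Z}-1)e^{\underline{\beta}_{0j}}\kappa$ therefore move with $e^{\underline{\beta}_{0j}}$. They cannot all equal an additive constant $\Delta$ once control means change over time.

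So ``the weights are normalized'' proves nothing, and your argument would equally show that any period weights work, which is false. The per-period treated-cell balance you use in Step 2 also exists only for the period-specific and saturated parametrizations; a pooled column gives a single pooled treated equation.

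The paper instead subtracts $E[\bm{\Delta}_Z\bm{1}_J]^\top$ times the period scores from the treatment score, giving $E[\{(\bm{\Delta}_Z\bm{1}_J-E[\bm{\Delta}_Z\bm{1}_J])\otimes\bm{1}_N\}^\top\bm{D}\bm{D}^\top(\bm{Y}-\underline{\bm{\mu}})]=0$. The period-$j$ weight in this equation is $\lambda_j=\pi_j^s(1-\pi_j^s)$, and $\bm{\lambda}_i(d)$ for durations. Granting the decoupling discussed below, this pins down only the $\lambda$-weighted average of the contrasts. That is why $\hat{\Delta}_{GEE-g}$ must carry exactly those weights. You already use this mechanism for the P-ATO half of (c). Part (a) should be read off from it with $\Delta_j\equiv\Delta$, and (b) needs the analogous $\bm{\lambda}_i(d)$, $\textbf{A}_d$ identity.

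\textbf{Gap 2: randomization does not decouple $Z_i$ from the fitted intercept.} Your Step 2 claim that A3 makes $Z_i$ independent of the $\alpha$-factor is false. The fitted intercept $\hat{\alpha}_i(\bm{\beta})=\ln\{\sum_{l,k}S_{ilk}Y_{ilk}/\sum_{l,k}S_{ilk}e^{\bm{Q}_{ilk}\bm{\beta}}\}$ depends on $Z_i$ through the observed outcomes and through $\bm{Q}_i$. The paper's proof makes the same move implicitly: it drops the $\bm{Y}(0)-\underline{\bm{\mu}}(0)$ term ``by A3'' and treats $\underline{\bm{\mu}}(b)$ as a function of $\bm{X}$ alone. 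So you are no worse than the paper here, but the step needs an argument of its own.

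Without covariates, a cell-saturated fit (period-specific or saturated) reproduces every arm-by-period mean in the population limit. This forces $E[e^{\underline{\alpha}}\mid Z]$ to be constant and rescues your Step 2. The same holds for the pooled fit in a PB design.

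For pooled coefficients in a staggered design it genuinely fails. Take a SW-CRT with $J=4$, $P(Z=2)=P(Z=3)=P(Z=4)=1/3$, full enrollment, no covariates, $E[Y_{\cdot jk}(0)]=j$ and a constant additive effect $1$. The population conditional-Poisson equations give $e^{\underline{\beta}_{02}}\approx2.053$, $e^{\underline{\beta}_{03}}\approx2.885$, $e^{\underline{\beta}_Z}\approx1.4076$ and $E[e^{\underline{\alpha}}]=1$. They also give $E[e^{\underline{\alpha}}\mid Z=2,3,4]\approx(1.004,0.991,1.006)$, so the intercept is not independent of $Z$. The $\lambda$-weighted contrast is then $\tfrac12(e^{\underline{\beta}_{02}}+e^{\underline{\beta}_{03}})(e^{\underline{\beta}_Z}-1)\approx1.0065\neq1$.

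Under a misspecified mean model, then, (a) and the P-ATO half of (c) need more than A1--A3. Part (b) relies on the same pooling and the same decoupling, so it is equally unsupported by either argument. You need either correct specification or an explicit proof that the relevant conditional means of $e^{\underline{\alpha}}$ do not depend on $Z$.
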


As in the linear setting, we prove that despite being typically understood to target a conditional treatment effect estimand, the log-link fixed-effects model in a longitudinal CRT can asymptotically target a marginal treatment effect estimand via m-estimation and g-computation, even if other model components are arbitrarily misspecified.
In longitudinal CQTs without randomization (Assumption A3), consistency of the described log-link fixed effect model g-computation estimator can still be demonstrated by
assuming the mean model is correctly specified.
While assuming the mean model to be correctly specified is seemingly contradictory to our ``model-robustness'' objectives, such a log-link fixed-effects model can still robustly account for all time-invariant multiplicative confounding (Appendix \ref{app.sect:G_CQTs}), in contrast to corresponding marginal log-link GEEs that were previously studied for SW-CRTs \citep{wang_how_2024}.

We primarily target difference estimands by g-computation. However, alternative effect measures, such as ratio estimands, are also common. As we show in Appendix \ref{app.sect:nonlinear_estimands}, the described log-link fixed-effects model g-computation estimators can be rearranged to target these alternative estimands.

\section{Simulation Studies}
\label{sec:simulation}

We describe simulation studies that illustrate our theoretical findings.
We target marginal difference-in-means average treatment effect estimands in 
(1.) a SW-CRT with binary outcomes and period-specific treatment effects and (2.) a PB-CQT with continuous outcomes and saturated/duration/period-specific treatment effects. In the Appendix (\ref{app.sect:additional_sim_scenarios}), we include additional simulation scenarios, exploring (3.) a CRXO with continuous outcomes, saturated/period-specific treatment effects, and a complex correlation structure, covariate structure, plus interactions and (4.) a SW-CRT with count outcomes and period-specific treatment effects. In the described simulation scenarios, we simulate 1000 replicates with $m=6$ clusters and a mean enrolled sample size of $E[N_{ij}]=100$, followed across $J=4$ time-periods in each CT design.
Altogether, our simulations demonstrate the model-robustness of the fixed-effects model in different longitudinal CRT and CQT designs in challenging but common scenarios with a small number of clusters.
Simulation results when $m=100$ clusters are also included in the Appendix (\ref{app.sect:additional_sim_scenarios}).

\subsection{Simulation Scenario 1}

In scenario 1, we simulate a SW-CRT with binary outcomes and a true constant treatment effect structure on the logit-scale to demonstrate the robustness of the linear fixed-effects model and g-computation log-link fixed-effects model estimators under model misspecification.
Notably, the constant treatment effect on a logit scale, results in a period-specific risk difference treatment effect estimand.
We generate potential outcomes for $b=0,1$ (to denote receiving control or treatment) as
\[
    logit(E[Y_{ijk}(b) | \delta_i, X_{1,ijk}, \alpha_i, \gamma_{ij}]) = \mu_0 + \beta_{0j} + \beta_{Z}(1+\delta_i)b 
    + \beta_{X1} X_{1,ijk} 
    + \alpha_i + \gamma_{ij}
\]
with $Y_{ijk}(b) \sim Bernoulli\left(E[Y_{ijk}(b) | \delta_i, X_{1,ijk}, \alpha_i, \gamma_{ij}]\right)$.
We specify $\mu_0 = 1.5$, 
$(\beta_{0j})_{j=1,...,J} = (0.2j)^\top_{j=1,...,J}$.
and $\beta_Z = 0.7$.
Furthermore, the treatment effect randomly varies between clusters $\delta_i \sim N(0, \beta_Z^2/100)$.
An individual-level binary covariate $X_{1,ijk} \sim Bernoulli(X_{1,i})$ is generated where $X_{1,i} \sim Beta(6,4)$ on the cluster-level, and $\beta_{X1} = 1.5$.
Finally, random intercepts are simulated with $\alpha_i \sim N(0, \tau^2)$ where $\tau^2 = 0.176$ and $\gamma_{ij} \sim N(0,\tau^2/4)$.

With such a data-generating process and Assumption A1 satisfied, we can define the marginal period-specific expected potential outcome $E[Y_{ijk}(b)]=v_j(b) \, \forall \, k$ over the joint distribution of the random effects and covariate distributions
\[
\begin{split}
    v_j(b)
    & = E\left[\frac{e^{\mu_0 + \beta_{0j}} e^{\beta_{Z}(1+\delta_i)b} e^{\beta_{X1} X_{1,ijk}} e^{\alpha_i} e^{\gamma_{ij}}}{1+e^{\mu_0 + \beta_{0j}} e^{\beta_{Z}(1+\delta_i)b} e^{\beta_{X1} X_{1,ijk}} e^{\alpha_i} e^{\gamma_{ij}}}\right] \,.
\end{split}
\]
The period-specific treatment effect estimands for periods $j=2,3$ are then computed via numerical integration with $(v_j(1)-v_j(0))_{j=2,3} \approx (0.0729, 0.0630)^\top$ and $\Delta^{P-avg} = \sum_{j=2}^{3} \left(v_j(1)-v_j(0)\right)/2 \approx 0.0679$, which can be targeted by specifying analyses with a period-specific treatment effect structure.
We can additionally define the P-ATO estimand
$\Delta^{P-ATO}  = ( \sum_{j=1}^J \lambda_{j}[v_j(1)-v_j(0)])/(\sum_{j=1}^J \lambda_{j}) \approx 0.0679$,
which can then be targeted by specifying analyses with a constant treatment effect structure.
In this specific $J=4$ period SW-CRT example, $\Delta^{P-avg} = \Delta^{P-ATO}$.
The Appendix (\ref{app.sect:sim_scenario4}) includes an additional simulation scenario 4 with a negative-binomial count data-generating process that yields a more dramatic difference between values of $\Delta^{P-avg}$ and $\Delta^{P-ATO}$.

\subsection{Simulation Scenario 2}
In scenario 2, we simulate a PB-CQT with continuous outcomes and a true time-varying treatment effect structure (where saturated, duration-specific, and period-specific treatment effects coincide by design) with $m=6$ clusters to demonstrate the robustness of linear fixed-effects models under misspecification, including that of the treatment effect structure, and time-invariant confounding.
We generate potential outcomes
\[
    Y_{ijk}(z) = \mu_0 + \beta_{0j} + \beta_{jZ}(1+\delta_i)I\{z=2\} + \beta_{X1} X_{1,ijk} + \beta_{X2}\sqrt{X_{2,ijk}} + \alpha_i + \gamma_{ij} + \epsilon_{ijk} \,. 
\]
The time-varying treatment effects in the PB-CQT are written in terms of
period-time and are simulated as $\beta_{jZ}= \beta_Z\left(1+0.6\left(j-\frac{\sum_{l=2}^{J} l}{J-1}\right)\right) \, \forall \, j > 1$.
Then, $\mu_0$, $\beta_{0j}$, $\beta_Z$, $\delta_i$, $\beta_{X1}$, $X_{1,ijk}$ are generated as in simulation scenario 1.
Additionally, an individual-level count covariate $X_{2,ijk} \sim Poisson(X_{2,i})$ is generated where on the cluster-level $X_{2,i} \sim Gamma(0.5, 200)$, and $\beta_{X2} = 0.02$.
Random intercepts are simulated with $\alpha_i \sim N(0, \tau^2)$ where $\tau^2=0.05/(1-0.05)$.
Within-cluster correlation that decays over time are simulated with $(\gamma_{ij})_{j=1,...,6} \sim MVN(0,\Sigma)$ where the correlation between periods $j$ and $l$ is $\Sigma_{jl} = \kappa^2 e^{-\lambda|j-l|}$ given scale parameter $\kappa=\tau/10$ and decay rate parameter $\lambda=0.5$.
Finally, $\epsilon_{ijk} \sim N(0,1)$.
Importantly, this PB-CQT scenario has cluster-level confounding, where clusters in the treatment sequence ($z=2$) have higher values of $\alpha_i$, $X_1$, and $X_2$ than the control sequence ($z=0$).

\subsection{Simulation results}

All simulation scenarios are fit with a linear fixed-effects model with a working independence within-cluster correlation.
Additionally, a g-computation log-link fixed-effects model estimator is fit in scenarios with binary or count outcomes.
Corresponding linear-mixed effects models are fit for all simulation scenarios, with results reported in the Appendix (\ref{app.sect:sim_table}).
Constant and/or period-specific treatment effect structures are specified accordingly.
With these models, we report the percent relative bias, empirical variance, average sandwich and jackknife variance estimates, and determine the coverage probability of the 95\% confidence intervals using (i.) the sandwich variance with normal approximation and (ii.) the jackknife variance with the $m-2$ degrees of freedom t-distribution.

Despite Lemma \ref{lemma:variance} demonstrating the consistency of the sandwich variance estimator, this variance estimator can be biased with small sample sizes and yield under-coverage of the 95\% confidence intervals \citep{bell_bias_2002}. This is true when using either a normal approximation or t-distribution, even with Satterthwaite degrees of freedom adjustment, and has lead to development for the jackknife and bias-reduced linearization variance estimators for better coverage with a t-distribution \citep{bell_bias_2002}.
Furthermore, the jackknife variance estimator benefits from easily generalizable implementation across different statistical software, 
especially with a $m-2$ degrees of freedom adjustment \citep{ford_maintaining_2020}.

\begin{figure}[htb]
    \centering
    \includegraphics[width=\linewidth]{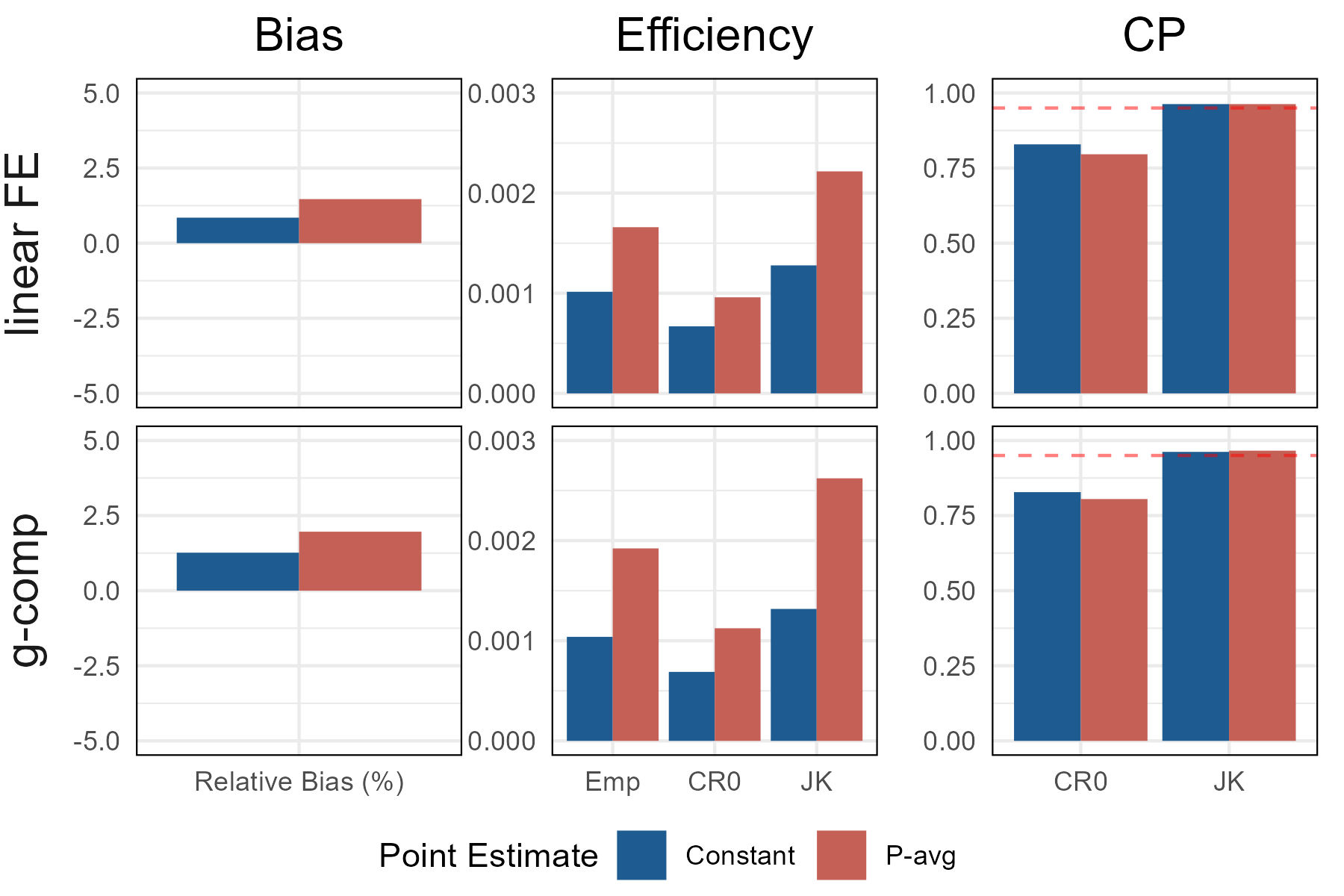}
    \caption{
        Analysis results from simulation scenario 1 of a SW-CRT with binary outcomes and a true period-specific treatment effect structure, using the linear fixed-effects model (``linear FE'') and g-computation log-link fixed-effects model (``g-comp'') with constant and period-average (``P-avg'') treatment effect estimators. 
        Results are reported for (i.) relative bias (\%), (ii.) efficiency in terms of the empirical variance (``Emp'') in comparison to average sandwich (``CR0'') and jackknife (``JK'') variance estimates, and (iii.) coverage probabilities using the sandwich variance with normal approximation (``CR0'') and the jackknife variance with $t(m-2)$
        (``JK'').
        The dashed red line denotes a nominal coverage probability of 95\%.
    }
    \label{fig:scenario_SWCRT_bin}
\end{figure}

The results for the simulation scenario 1 SW-CRT ($m=6$) are illustrated in Figure \ref{fig:scenario_SWCRT_bin}. 
The linear fixed-effects model constant $\hat{\beta}_Z$ and period-average $\hat{\beta}_Z^{P-avg}$ treatment effect estimators are unbiased for their corresponding estimands, $\Delta^{P-ATO}$ and $\Delta^{P-avg}$.
Generally, the constant treatment effect estimators were more efficient than the period-average treatment effect estimators, with smaller empirical variances. This discrepancy is even larger in the g-computation log-link fixed-effects model.
Overall, the sandwich variance estimators slightly underestimate, whereas the jackknife variance estimators slightly overestimate the empirical variances. 
The jackknife variance estimator with $t(m-2)$ had nominal coverage of the 95\% confidence intervals.

The results for simulation scenario 2 PB-CQT ($m=6$) are illustrated in Figure \ref{fig:scenario_PBCQT}.  
In contrast to the substantially biased linear mixed-effects model, the linear fixed-effects model constant $\hat{\beta}_Z$ and period-average $\hat{\beta}_Z^{P-avg}$ treatment effect estimators are unbiased for the time-average treatment effect estimand (referred to here as $\Delta^{P-avg}$; recall $\Delta^{P-avg} = \Delta^{D-avg} = \Delta^{S-avg} = \Delta^{P-ATO}$ in a PB-CT design).
The jackknife variance estimator again yields close to nominal coverage of the 95\% confidence intervals, despite slight overestimation of the empirical variance.

\begin{figure}[htb]
    \centering
    \includegraphics[width=\linewidth]{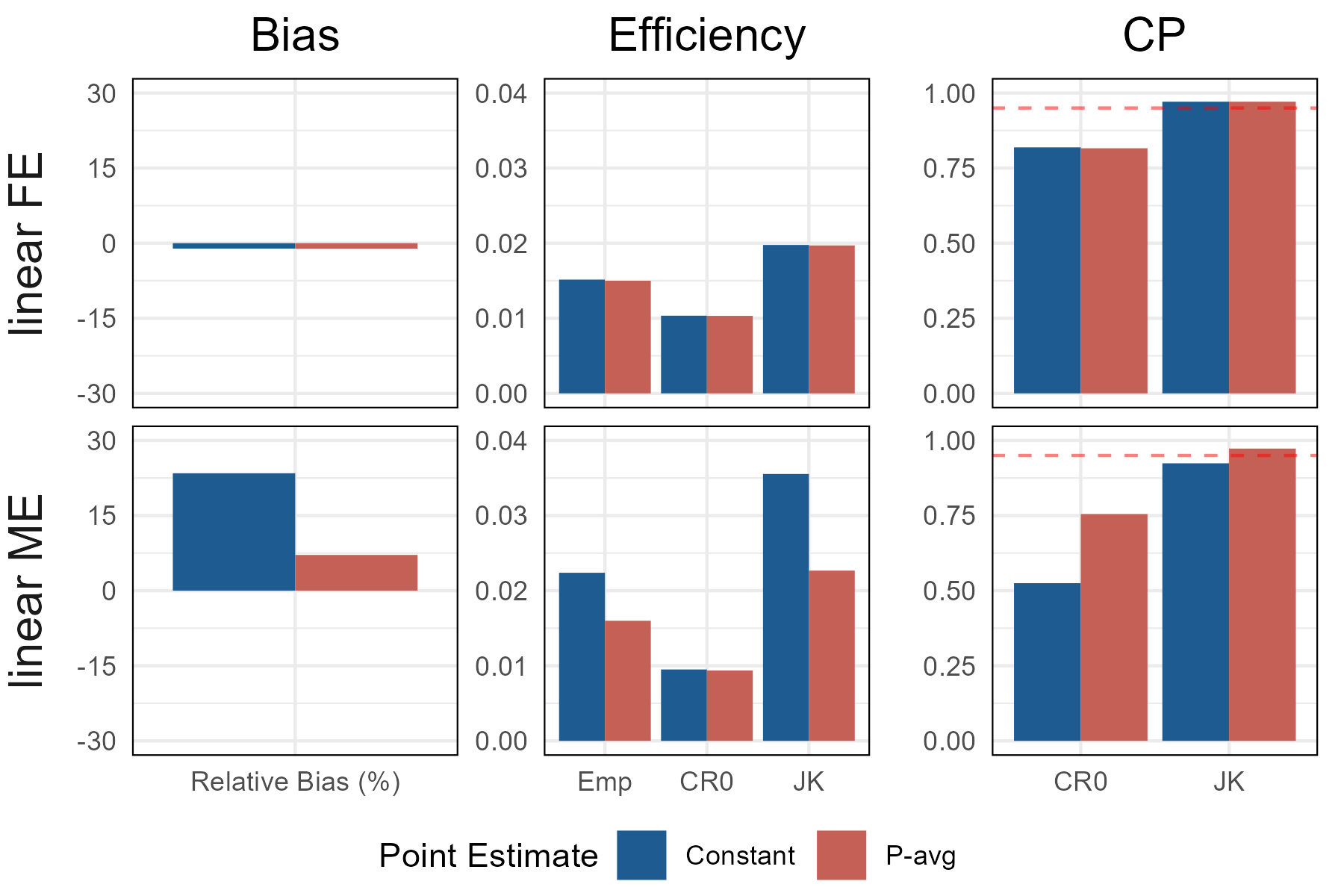}
    \caption{
        Analysis results from simulation scenario 2 of a PB-CQT with continuous outcomes and a true time-varying treatment effect structure, using the linear fixed-effects model (``linear FE'') and nested-exchangeable mixed-effects model (``linear ME'') with constant and period-average (``P-avg'') treatment effect estimators. 
        Results are reported for (i.) relative bias (\%), (ii.) efficiency in terms of the empirical variance (``Emp'') in comparison to average sandwich (``CR0'') and jackknife (``JK'') variance estimates, and (iii.) coverage probabilities using the sandwich variance with normal approximation (``CR0'') and the jackknife variance with $t(m-2)$
        (``JK'').
        The dashed red line denotes a nominal coverage probability of 95\%.
    }
    \label{fig:scenario_PBCQT}
\end{figure}

The additional simulation results are included in the Appendix (\ref{app.sect:additional_sim_scenarios}).
In the simulation scenario 3 CRXO ($m=6$), the linear fixed-effects model constant treatment effect estimator $\hat{\beta}_Z$ is minimally biased for the saturated/period-average treatment effect $\Delta^{S-avg}=\Delta^{P-avg}$, despite model misspecification for the extremely complex covariate structure in the data generating process.
In the simulation scenario 4 SW-CRT ($m=100$), the linear fixed-effects model constant and period-average treatment effects remain unbiased for their corresponding estimands, despite the dramatic differences in the magnitude of $\Delta^{P-ATO}$ and $\Delta^{P-avg}$.
Finally, we additionally simulated scenarios 1-3 with $m=100$ clusters and demonstrate the consistency of and proper coverage with both the sandwich and jackknife variance estimators with a larger number clusters (Appendix \ref{app.sect:sim_scenarios100}).

Across these simulation scenarios with misspecified covariate and correlation structures, the linear fixed-effects models are generally similarly efficient to the linear mixed-effects models with either an exchangeable or nested-exchangeable correlation structure (Appendix \ref{app.sect:sim_table}).
In the scenario 1 SW-CRT ($m=6$), the linear fixed-effects model was slightly less efficient than the linear mixed-effects models, with the linear mixed-effects model with nested-exchangeable correlation being the most efficient.
In the scenario 2 PB-CQT ($m=6$), the linear fixed-effects model was slightly more efficient than the linear mixed-effects models, with the linear mixed-effects model with nested-exchangeable correlation being the least efficient.
In the scenario 3 CRXO ($m=6$), the linear fixed-effects model yields equivalent estimates to the linear mixed-effects model with exchangeable correlation, with both being slightly less efficient than the linear mixed-effects model with nested-exchangeable correlation.
These efficiency results generally extend for the described simulation scenarios with $m=100$ clusters (Appendix \ref{app.sect:sim_table}).

\section{Case study re-analysis}
\label{sect:case_study}

We re-evaluated a SW-CRT with a small number of clusters, evaluating a crowd-sourced HIV testing intervention among men who have sex with men (MSM) in China \citep{tang_crowdsourcing_2018}. 
The trial had a closed-cohort design across 8 cities (clusters) in two Chinese provinces. Cities were paired by province and randomized to receive the intervention sequentially at 3-month intervals, yielding 4 steps over a 12-month period. A total of 1,381 participants were enrolled (averaging approximately 173 participants per cluster), with 1,219 completing at least one follow-up survey. 
The binary primary outcome was self-reported HIV testing in the past 3 months. 

We re-analyzed the binary primary outcome using the described linear fixed-effects model, g-computation log-link fixed-effects model estimator, and a linear mixed-effects model with cluster and individual random intercepts (Figure \ref{fig:case_study}).
The latter mixed-effects model aligns with the closed-cohort nature of the design and is also model-robust in such a SW-CRT \citep{wang_how_2024}.
Models were fit with constant, duration-specific, period-specific, and saturated treatment effect structures to target the corresponding time-average difference in percentage estimands $(\Delta, \Delta^{D-avg}, \Delta^{P-avg}, \Delta^{S-avg})$. 
Confidence intervals were formed with the jackknife variance estimator and a $t(m-2)$ distribution.

\begin{figure}[htb]
    \centering
    \includegraphics[width=\linewidth]{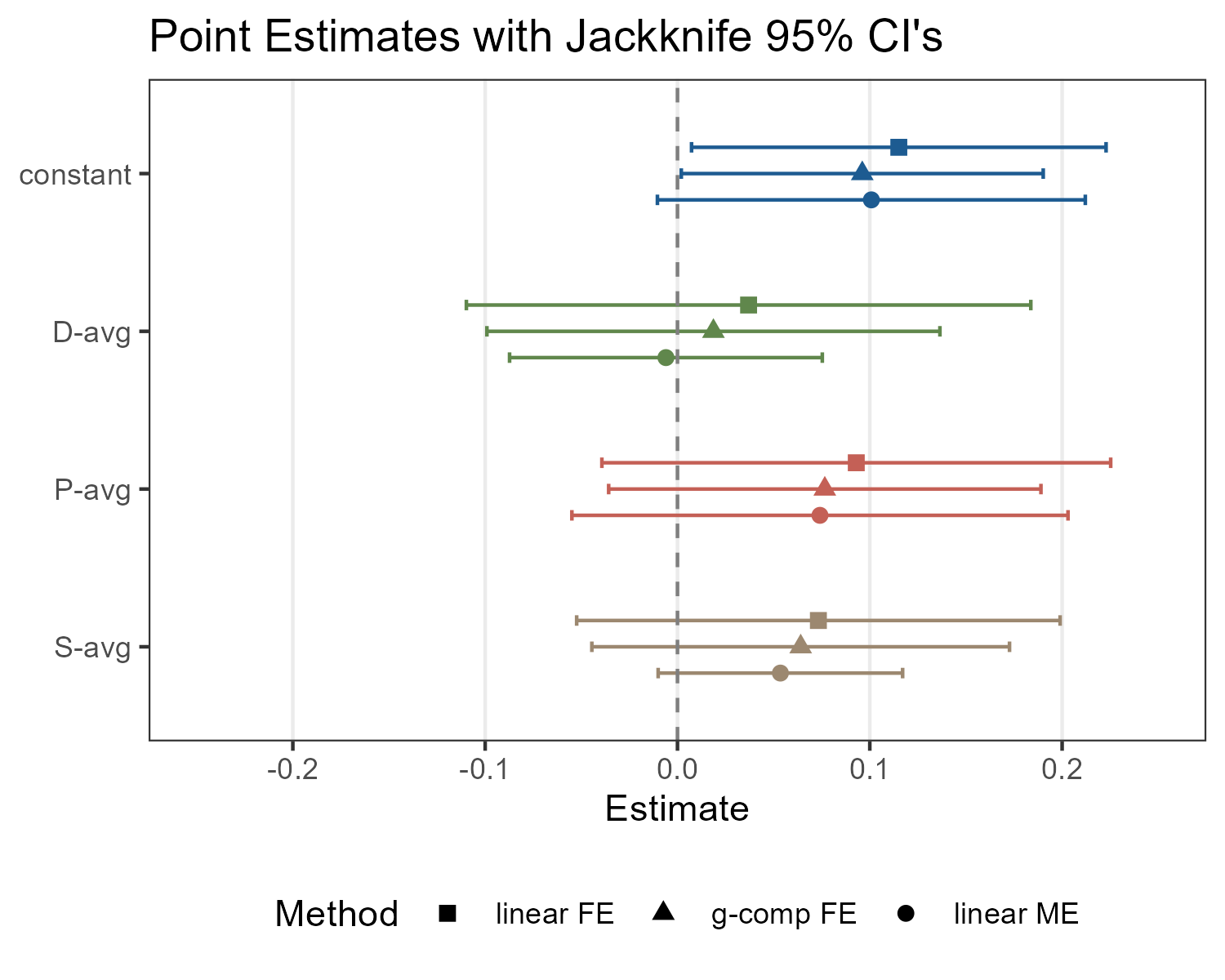}
    \caption{
        Results from the re-analysis of binary outcomes using the linear fixed-effects model (``linear FE''), g-computation log-link fixed-effects model (``g-comp''), and linear mixed-effects model (``linear ME'') with constant, duration-average (``D-avg''), period-average (``P-avg'') and saturated-average (``S-avg'') treatment effect estimators. 
        Confidence intervals are formed using the jackknife variance with 
        $t(m-2)$.
    }
    \label{fig:case_study}
\end{figure}

We generally observe a positive effect of the intervention on self-reported HIV testing, despite the robust confidence intervals generally including 0 (Figure \ref{fig:case_study}).
Overall, the observed magnitude of percentage increase in Figure \ref{fig:case_study} largely corresponds to the results reported in the original trial \citep{tang_crowdsourcing_2018}.
Within the different treatment effect structures, the different employed models all yielded comparable results.
As expected, analyses targeting $\Delta=\Delta^{P-ATO}$ and $\Delta^{P-avg}$ mostly aligned.
In contrast, analyses targeting $\Delta^{D-avg}$ generally produced smaller effects, which predictably led to slightly reduced $\Delta^{S-avg}$ estimates in contrast to $\Delta$ and $\Delta^{P-avg}$.

These re-analysis results may imply the presence of duration-specific treatment effects.
However, as in \citet{lee_analysis_2025}, we recommend that determination of treatment effect structure should be determined \textit{a priori}, with trial domain knowledge.
In practice, the intervention had components that could plausibly strengthen over time \citep{tang_crowdsourcing_2018}. The local story contests were ongoing and designed to build community engagement progressively. Social diffusion effects within the participant networks in each city might amplify as more individuals were exposed to the campaign and shared testing experiences. Furthermore, the self-testing platform could see increased uptake as awareness spread through word-of-mouth.

Importantly, this trial had monotonic participant drop-out, with those patients outcomes assumed to be missing at random.
We include additional analyses following multiple imputation of the missing binary outcomes in the Appendix (\ref{app.sect:case_study}).
The re-analysis results with multiple imputation did not dramatically differ from the results reported in Figure \ref{fig:case_study}.

\section{Discussion}

This article establishes the model-robustness of fixed-effects estimators for nonparametrically defined treatment effect estimands in longitudinal cluster trials (CTs), encompassing both cluster randomized trials (CRTs) and cluster quasi-experimental trials (CQTs) with different trial designs.
Simply, consistency of the fixed-effects model estimators are robustly maintained in longitudinal CRTs via randomization, with robust inference resulting from the use of the sandwich variance estimator (and the jackknife variance estimator in scenarios with a small number of clusters).
Furthermore, consistency of the fixed-effects model estimators can be maintained in longitudinal CQTs via the innate properties of the fixed-effects models.

Several key contributions emerge from this work. We clarify a longstanding misconception in the biostatistics literature regarding the inferential scope of fixed-effects models. 
Contrary to the prevailing view that fixed-effects models inherently target only conditional estimands in the finite population of sampled clusters, we comprehensively demonstrate that these estimators can be implemented within an M-estimation framework and target superpopulation marginal estimands. 

Our theoretical results demonstrate that fixed-effects models, when correctly specifying the treatment effect structure, provide consistent and asymptotically normal estimators for marginal cluster-average treatment effects across stepped-wedge CRTs (SW-CRTs) regardless of misspecification in other model components such as covariate functional forms, random effects structures, or residual distributions.
We further show that the fixed-effects model with a constant treatment effect targets the period-averaged treatment effect for the overlap population (P-ATO) estimand, thereby providing additional robustness when the true treatment effect varies across calendar periods.
This P-ATO estimand corresponds with the saturated-average treatment effect estimand in parallel designs, such as the parallel-with-baseline cluster randomized trial (PB-CRT) and cluster randomized crossover (CRXO) designs. As a result, fixed-effects model analyses do not even require correct specification of the treatment effect structure to yield robust estimators in these CRT designs.

We further establish that fixed-effects models possess a distinct advantage over mixed-effects models in quasi-experimental settings by automatically adjusting for all time-invariant covariates and confounding at both the cluster and individual levels, without requiring explicit covariate specification.
We highlight that such individual-level confounding in longitudinal CQTs operates entirely at the cluster-period level due to the clustered treatment assignment mechanism. 
Consequently, a mean independence assumption (Assumption A4) invoking parallel trends and exchangeable treatment effects is sufficient for consistent estimation in CQTs where randomization (Assumption A3) does not apply.

Several limitations warrant discussion.
First, our consistency results critically rely on the assumption of no time-varying confounding (embedded within Assumptions A2-A4). When unmeasured time-varying confounders exist, fixed-effects methods can yield biased estimates. Sensitivity analyses for violations of this assumption merit future methodological development. 
Second, with the recent interest in informative sizes in clustered settings \citep{lee_what_2025,fang_model-robust_2025}, additional work is necessary to better understand the consistency of fixed-effects model-based estimators for different weighted estimands in longitudinal CT scenarios where the non-informative enrollment assumption (Assumption A2) may not apply.

Fixed-effects models have long been neglected and misinterpreted in the biostatistics literature in favor of mixed-effects models. This is despite fixed-effects models long being understood in the econometrics literature to be more general analogs to corresponding mixed-effects models \citep{hausman_specification_1978,mundlak_pooling_1978}.
In this article, we position fixed-effects models as a viable and potentially preferable alternative to mixed-effects models for the robust analysis of longitudinal cluster trials in both randomized and quasi-experimental contexts.

\if1\anon
{
\section*{Acknowledgement}
Research in this article was supported by the Patient-Centered Outcomes Research Institute\textsuperscript{\textregistered} (PCORI\textsuperscript{\textregistered} Award ME-2022C2-27676). The statements presented in this article are solely the responsibility of the authors and do not necessarily represent the views of PCORI\textsuperscript{\textregistered}, its Board of Governors or Methodology Committee.
} \fi

\section*{Disclosure Statement}
The authors declare no competing interests.

\section*{Data availability statement}
Data sharing is not applicable to this article as no new data were created or analyzed in this study. R code for the simulations are included in the online supplement.

\putbib
\end{bibunit}

\newpage
\appendix
\fontsize{10}{12}\selectfont 

\begin{bibunit}

{\Large \bf{Supplementary Materials for ``Fixed-Effects Models for Causal Inference in Longitudinal Cluster Randomized and Quasi-Experimental Trials''}}

\section{Assumptions}
\label{app.sect:assumptions}

\noindent
We generally follow the convention in the mathematical notation that vectors and matrices are bold and scalars aren't.

Consider a longitudinal cluster trial (CT) with $m$ clusters, and each cluster $i \in \{1,...,m\}$ contains $N_i$ individuals in its source population.
We assume $N_i$ can vary by clusters, and takes values in a bounded subset of positive integers, but remains constant across calendar time in the study.
We assume data from each cluster are collected in $J$ discrete, equally-spaced periods indexed by $j=1,...,J$.
Furthermore we define $Z_i=j$ if cluster $i$ starts receiving treatment in the beginning of period $j \in \{1,...,J\}$, with $Z_i=0$ indicating that cluster $i$ never receives the treatment.
For each individual $k \in \{1,...,N_i\}$ in cluster $i$, we define $Y_{ijk}$ as their outcome in period $j$ and $\bm{X}_{ik}$ as their vector of baseline covariates.

In practice, not all individuals in the cluster source population will be included in the study. We define $S_{ijk}$ as the enrollment indicator such that $S_{ijk}=1$ if individual $k$ from cluster $i$ is included in period $j$; and $N_{ij}=\sum_{k=1}^{N_i} S_{ijk}$ is the observed cluster-period size. Under this framework, we allow each individual to appear in one, multiple, or even no periods, thereby accommodating cross-sectional, closed-cohort, and open-cohort designs \citep{li_mixed-effects_2021}.

We pursue the potential outcomes framework to define treatment effect estimands.
With potential outcomes $Y_{ijk}(\tilde{z})$ where $\tilde{z} \in \{0, z\}$, let $Y_{ijk}(z)$ denote the potential outcome of individual $k$ in cluster $i$ during period $j$ had the cluster been assigned to a sequence where treatment first begins in period $z$ for $1 \leq z \leq j$.
If untreated, $Y_{ijk}(0)$ denotes the untreated potential outcome and we assume no anticipation.
In some CT designs, including a stepped-wedge CT (SW-CT) and parallel-with-baseline CT (PB-CT), $j=1$ denotes a baseline period where $Y_{i1k} = Y_{i1k}(0) \, \forall \, i, k$. 
Additionally, in a SW-CT, $j=J$ denotes an all-treated period in a SW-CT where $Y_{iJk} \neq Y_{iJk}(0) \,\forall\, i,k$.

The observed data for each cluster are denoted as $\bm{O}_i=\{Y_{ijk}, \bm{X}_{ik}, Z_i : S_{ijk}=1, j=1,...,J, k=1,...,N_i\}$; the source population size $N_i$, however, needs not be known.
Across different CT designs, the complete, but not fully observed, data vector for each cluster $i$ is denoted as $\bm{W}_i = \{(Y_{ijk}(0), Y_{ijk}(z), S_{ijk}, \bm{X}_{ik}, Z_i, N_i) : k=1,...,N_i, j=1,...,J, 1 \leq z \leq J\}$. 
To proceed, we outline the following assumptions on $\bm{W}_i$ to enable the subsequent estimand definitions and theorem proofs:

\noindent \textbf{A1.} (\textit{Super-population sampling}) Data vector $\{\bm{W}_i, i=1,...,m\}$  are independent and identically distributed draws from a population distribution $\mathcal{P}$ with finite second moments.
Furthermore, within each cluster $i$, the data vectors $\{(Y_{i1k}(0), ..., Y_{iJk}(J)), \bm{X}_{ik}\}$ for $k=1,...,N_i$ are identically distributed given the source population size $N_i$.

\noindent \textbf{A2.} (\textit{Non-informative enrollment}) 
The enrollment size $\{N_{ij}: i=1,...,m, j=1,...,J\}$ are independent of $N_i$. Furthermore, the enrollment indicator $\{ S_{ijk} : j=1,...,J, k=1,...,N_i \}$ is independent of all other random variables in $\bm{W}_i$ given $N_i$.

\noindent \textbf{A3.} (\textit{Randomization}) Initial treatment period $Z_i$ is independent of all other random variables in $\bm{W}_i$.

\sloppy \noindent \textbf{A4.} \textit{(Mean independence)} i. (\textit{Parallel Trends}) The average within-cluster deviations of the untreated potential outcomes from the enrollment average are mean independent of $Z_i$ and $N_i$, given $S_i$. That is, defining 
$\ddot{Y}_{ijk}(0)=Y_{ijk}(0)- (\sum_{l=1}^{J} \sum_{k'=1}^{N_{i}} S_{ilk'} Y_{ilk'}(0))/(\sum_{l=1}^{J} \sum_{k'=1}^{N_{i}} S_{ilk'})$,
then for each $j=1,...,J$ and $k=1,...,N_i$, we have $E\left[\ddot{Y}_{ijk}(0) | Z_i,N_i, S_i\right] = E\left[\ddot{Y}_{ijk}(0) | S_i\right]$. ii. (\textit{Exchangeable treatment effects}) Average treatment effects are mean independent of $Z_i$ and $N_i$, $E\left[Y_{ijk}(z)-Y_{ijk}(0)|Z_i,N_i\right] = E\left[Y_{ijk}(z)-Y_{ijk}(0)\right]$.

Assumption A1 is typical for causal inference under a super-population framework. Although the dimension of $\bm{W}_i$ varies by $N_i$, the data can be generated via the mixture model $\mathcal{P} = \mathcal{P}^{\bm{W}|N} \times \mathcal{P}^N$, where we first draw cluster size $N_i \sim \mathcal{P}^N$, then draw within-cluster data $\bm{W}_i \sim \mathcal{P}^{\bm{W}|N}$ given $N_i$.
In addition, this assumption requires that the individual-level complete data vector has the same expectation conditional on $N_i$; this allows us to construct estimands based on marginal expectations of individual-level potential outcomes.
Importantly, although A1 assumes between-cluster independence, it allows for arbitrary within-cluster correlation structures among potential outcomes and covariates.
For notation convenience, we omit the subscript $i$ when taking expectation with respect to population distribution $\mathcal{P}$ (Assumption A1). For example, $E[f(\bm{O_i})]$, where $\bm{O}_i$ is the observed data of cluster $i$ and $f$ is an arbitrary measurable function, is simplified as $E[f(\bm{O})]$, where $\bm{O}$ represents the random variable sampled from $\mathcal{P}$.
Assumption A1 justifies this simplification, which assumes the complete data $\bm{W}_i$ of each cluster are independent and identically distributed. Likewise, $(Y_{ijk}, S_{ijk}, N_{ij})$ are denoted as $(Y_{.jk}, S_{.jk}, N_{.j})$ when taking expectations. This simplification is used throughout the manuscript for all notation.

Assumption A2 describes a random enrollment scheme and assumes away selection bias and the resulting time-varying confounding. Under this assumption, the enrollment indicator $S_{ijk}$ in a period is assumed to be irrelevant to the potential outcomes, treatment, or baseline variables, whereas arbitrary correlations among $S_{ijk}$'s are allowed.
For example, a closed-cohort design can thus be accommodated by simply setting $S_{i1k}=S_{i2k}=...=S_{iJk}$ for all $i$ and $k$.
Assumption A2 also allows $N_{ij} = \sum_{k=1}^{N_i} S_{ijk}$ to vary over calendar time either due to randomness or period effects.
By definition, $N_{ij}$ is a function of $S_{ijk}$ and $N_i$.
Importantly, by setting $N_{ij} \perp N_i$ with Assumption A2, which can occur when cluster-period enrollment sizes are set regardless of the underlying cluster size $N_i$, then $N_{ij}$ is solely a function of $S_{ijk}$. 
This implies that $N_{ij}$ is also independent of all other random variables in $\bm{W}_i$ given $N_i$.
Notably, Assumption A1 on its own allows for the presence of informative cluster sizes, where data vectors $\bm{W}_i$ (including potential outcomes $(Y_{ijk}(0),Y_{ijk}(z))$) depend on the source population size $N_i$.
However, Assumption A2 stipulates that observed cluster-period sizes $N_{ij}$ are independent of the source population size $N_i$.
This serves as a ``restricted informative sizes'' assumption, and as we will see, allows the fixed-effects model to target the marginal cluster average treatment effect estimand.

Assumption A3 holds by design for longitudinal cluster randomized trials (CRTs).
Assumption A4 is a supplementary assumption to establish the consistency of the linear fixed-effects model in the absence of Assumption A3, as is the case in longitudinal cluster quasi-experimental trials (CQTs).
Assumption A4 first implies parallel trends, allowing for additive confounding.
Next, the assumption of exchangeable treatment effects ensures that the magnitude of the treatment effect is not confounded by, and is homogeneous across, treatment assignment $Z_i$ and cluster size $N_i$.
In the absence of randomization, Assumptions A1, A2, and A4 allow for individual and cluster-level time-invariant confounding in a longitudinal CQT, but imply that there is no additional within-cluster time-varying confounding. These are crucial assumptions in proving the consistency of fixed-effects methods in quasi-experimental settings where Assumption A3 is not met \citep{wooldridge_econometric_2010}.
That is, unlike the linear mixed-effects model \citep{wang_how_2024}, the linear fixed-effects model-based estimator can still be consistent in longitudinal CQTs by relying on Assumption A4.

To note, the vector of only the enrolled within-cluster deviations of the untreated potential outcomes in Assumption A4 will also be presented in later sections as $\left(\ddot{Y}_{ijk}(0)\right)_{(j,k) : S_{ijk}=1} = \left(Y_{ijk}(0)- (\sum_{l=1}^{J} \sum_{k'=1}^{N_{i}} S_{ilk'} Y_{ilk'}(0))/(\sum_{l=1}^{J} \sum_{k'=1}^{N_{i}} S_{ilk'})\right)_{(j,k) : S_{ijk}=1} = \bm{\mathcal{M}}_i\bm{Y}_i^o(0) = \bm{D}_i^\top \ddot{\bm{Y}}_i(0)=  \ddot{\bm{Y}}_i^o(0)$, with more details in subsequent sections.

As has rarely been pointed out, ``individual-level time-invariant confounding” in longitudinal CQTs under the stated assumptions operate entirely at the cluster-level due to the clustered design.
By design, an individual intuitively will never receive treatment unless their cluster is assigned the treatment, regardless of their individual characteristics.
Treatment is administered exclusively at the cluster-level, and any association between individual characteristics and treatment administration arise only through their aggregation into the cluster-level.
Therefore, all time-invariant confounding is aggregated at the cluster-level, as is depicted in the simple directed acyclic graph with confounding (satisfying Assumption A1)
\begin{center}\begin{tikzpicture}[
    node distance=2cm,
    every node/.style={circle, draw, minimum size=1cm},
    every edge/.style={draw, -Stealth}
]
    \node (N) at (0,1) {$N_i$};
    \node (X) at (0,-1) {$\bm{X}_{ik}$};
    \node (alpha) at (2.5,0) {$\alpha_i$};
    \node (Z) at (5,1) {$Z_i$};
    \node (Y) at (5,-1) {$Y_{ijk}$};
    \draw[->] (N) -- (X);
    \draw[->] (N) -- (alpha);
    \draw[->] (X) -- (alpha);
    \draw[->] (X) -- (Y);
    \draw[->] (alpha) -- (Z);
    \draw[->] (alpha) -- (Y);
    \draw[->] (Z) -- (Y);
\end{tikzpicture}\end{center}
where $\alpha_i$ capture the observed and unobserved cluster-level covariates.
Therefore, with Assumptions A1, A2, and A4, methods that adjust for all cluster-level time-invariant confounding will also automatically adjust for all individual-level time-invariant confounding in a longitudinal CQT.

\subsection{Implication of Assumption A4}
\label{app.sect:assumption4}

As mentioned previously, Assumption A4 (Mean independence) will specifically allow the linear fixed-effects model to yield consistent estimators in longitudinal CQTs lacking Assumption A3 (Randomization).

We quickly summarize key points regarding Assumption A4 below.
Assumption A4.i is akin to a standard parallel trends assumption and assumes that the centered untreated potential outcomes $Y_{ijk}(0)- (\sum_{l=1}^{J} \sum_{k'=1}^{N_{i}} S_{ilk'} Y_{ilk'}(0))/(\sum_{l=1}^{J} \sum_{k'=1}^{N_{i}} S_{ilk'}) = \ddot{Y}_{ijk}(0)$ are mean independent of $Z_i$ and $N_i$, given $S_i$.
Assumption A4.ii further assumes exchangeable treatment effects, that average treatment effects are mean independent of $Z_i$ and $N_i$, $E\left[Y_{ijk}(z)-Y_{ijk}(0)|Z_i,N_i\right] = E\left[Y_{ijk}(z)-Y_{ijk}(0)\right]$.
These mean independence assumptions are weaker than corresponding independence assumptions.

Here, we highlight a simple example that implies Assumption A4.
In practice, suppose the data-generating process (DGP) for the potential outcomes follow a linear structure model
\begin{equation}
\label{app.eq:DGP_ex}
\begin{split}
    Y_{ijk}(z) &= \beta_{0j} + \beta_{jZ}(1+\delta_i) + \bm{\beta_X}^\top \bm{X}_{ik} + \alpha_i + \epsilon_{ijk} \\
    Y_{ijk}(0) &= \beta_{0j} + \bm{\beta_X}^\top \bm{X}_{ik} + \alpha_i + \epsilon_{ijk}
\end{split}
\end{equation}
where $\beta_{0j}$ is the period effect for period $j$, $\beta_{jZ}(1+\delta_i)$ is the period-specific treatment effect structure, $\bm{X}_{ik}$ are the individual time-invariant covariates, $\alpha_i$ are the cluster intercepts, and $\epsilon_{ijk} \sim N(0,\sigma^2)$.
As in the earlier DAG, this DGP allows for (1.) $\bm{X}_{ik}$ and $\alpha_i$ to be arbitrarily correlated, (2.) $\alpha_i$ and treatment status $Z_i$ to be arbitrarily correlated. The latter implies that cluster-specific values of $\alpha_i$ affect whether clusters are more likely to receive treatment and observe the treated potential outcome $Y_{ijk}(z)$, therefore permitting cluster-level time-invariant confounding.

The treatment effect structure $\beta_{jZ}(1+\delta_i)$ includes a period-specific treatment effect structure $\beta_{jZ}$ and cluster-level heterogeneity $\delta_i$. As long as $\delta_i \perp Z_i, \alpha_i$ is assumed (e.g., $\delta_i \sim N(0,\sigma^2_\delta)$), then this DGP will satisfy Assumption A4.ii (exchangeable treatment effects).

We can then demonstrate that Assumption A4.i (parallel trends) holds by design in this DGP.
The within-cluster deviations of the potential outcomes have the following DGP
\begin{equation}
\label{app.eq:DGP_transf_ex}
\begin{split}
    \ddot{Y}_{ijk}(0) &= (\beta_{0j} + \bm{\beta_X}^\top \bm{X}_{ik} + \alpha_i + \epsilon_{ijk}) - \frac{\sum_{l=1}^{J} \sum_{k'=1}^{N_{i}} S_{ilk'} (\beta_{0l} + \bm{\beta_X}^\top \bm{X}_{ik'} + \alpha_i + \epsilon_{ilk'})}{\sum_{l=1}^{J} N_{il}} \,.
\end{split}
\end{equation}
With A1 (Super-population sampling) and A2 (Non-informative enrollment, $N_{ij} \perp N$), then crucially the following terms do not depend on treatment assignment $Z_i$ and cluster size $N_i$
\[
\begin{split}
    E\left[\alpha - \frac{\sum_{l=1}^{J} \sum_{k'=1}^{N} S_{.lk'} \alpha}{\sum_{l=1}^{J} N_{.l}} | Z,N,\bm{S}\right] = E\left[\alpha -\frac{\sum_{l=1}^{J} N_{.l} \alpha }{\sum_{l=1}^{J} N_{.l}} |\bm{S}\right] &= 0 \\
    E\left[\bm{X}_{.k} - \frac{\sum_{l=1}^{J} \sum_{k'=1}^{N} S_{.lk'} \bm{X}_{.k}}{\sum_{l=1}^{J} N_{.l}} | Z,N,\bm{S}\right] = E\left[\bm{X}_{.k} - \frac{\sum_{l=1}^{J} N_{.l} \bm{X}_{.k}}{\sum_{l=1}^{J} N_{.l}} |\bm{S}\right] &= 0 \,.
\end{split}
\]
Finally, we show that with Assumption A1 and A2
\[
\begin{split}
    E\left[\ddot{Y}_{.jk}(0) |N,\bm{S}\right] &= E\left[ \beta_{0j} - \frac{\sum_{l=1}^{J}\sum_{k'=1}^{N} S_{.lk'} I\{l=j\} \beta_{0l}}{\sum_{l=1}^{J} N_{.l}}  | Z,N,\bm{S}\right] \\
    &= E\left[ \beta_{0j} - \frac{\sum_{l=1}^{J} \sum_{k'=1}^{N} S_{.lk'} I\{l=j\} \beta_{0l}}{\sum_{l=1}^{J} N_{.l}} |\bm{S} \right] \\
     &=  E\left[\ddot{Y}_{.jk}(0) | \bm{S}\right] \,,
\end{split}
\]
fulfilling Assumption A4.i with the described DGP.

\newpage

\section{Theorems \& Lemmas}
\label{app.sect:Theorems_Lemmas}

\subsection{Lemma \ref{app.lemma:variance}}
\label{app.sect:Lemma1}

\begin{applemma}
\label{app.lemma:variance}
Let $\bm{O}_1,...,\bm{O}_{m}$ be i.i.d. samples from a common distribution on $O$.
Let $\bm{\psi}(\bm{O},\bm{\theta})$ be a known estimating equation with parameters $\bm{\theta} \in \Theta$, a compact set of Euclidean space.
Let $\hat{\bm{\theta}}$ be the solution to $\sum_{i=1}^{m} \bm{\psi}(\bm{O}_i, \bm{\theta})=0$.
We assume that $\bm{\psi}$ satisfies the following regularity conditions
\begin{enumerate}
    \item There exists a unique solution in the interior of $\Theta$, denoted as $\underline{\bm{\theta}}$, to the equation $E[\bm{\psi}(\bm{O},\bm{\theta})] = 0$.
    \item The function $\bm{\theta} \mapsto \bm{\psi}(o, \bm{\theta})$, together with its first and second derivatives, is dominated by a square-integrable function for every $o$ in the support of $\bm{O}$.
    \item $E\left[\frac{d\bm{\psi}(\bm{O},\bm{\theta})}{d\bm{\theta}^{\top}} \mid_{\bm{\theta}=\underline{\bm{\theta}}}\right]$ is invertible.
\end{enumerate}
Then we have
\[
\begin{split}
    \hat{\bm{\theta}} &\xrightarrow{P} \underline{\bm{\theta}} \\
    m^{1/2}(\hat{\bm{\theta}} - \underline{\bm{\theta}}) &\xrightarrow{d} N(0,\textbf{V})
\end{split}
\]
where $\textbf{V}=E[\text{IF}(\bm{O},\underline{\bm{\theta}})\text{IF}(\bm{O},\underline{\bm{\theta}})^\top]$ and $\text{IF}(\bm{O},\underline{\bm{\theta}}) = -\left( E\left[\frac{d\bm{\psi}(\bm{O},\bm{\theta})}{d\bm{\theta}^{\top}} \mid_{\bm{\theta}=\underline{\bm{\theta}}}\right]^{-1} \bm{\psi}(\bm{O},\underline{\bm{\theta}}) \right)$ is the influence function for $\hat{\bm{\theta}}$.

Furthermore, the sandwich variance estimator
\[
    \hat{\bm{V}} = m^{-1}\sum_{i=1}^{m} \widehat{\text{IF}}(\bm{O}_i,\hat{\bm{\theta}}) \widehat{\text{IF}}(\bm{O}_i,\hat{\bm{\theta}})^\top \xrightarrow{P} \textbf{V}
\]
where $\widehat{\text{IF}}(\bm{O}_i,\hat{\bm{\theta}})=\left( 
\left[
    m^{-1}\sum_{i=1}^{m} \frac{d\bm{\psi}(\bm{O}_i,\bm{\theta})}{d\bm{\theta}^{\top}} \mid_{\bm{\theta}=\hat{\bm{\theta}}}
\right]^{-1}
\bm{\psi}(\bm{O}_i, \hat{\bm{\theta}})
\right)$.
\end{applemma}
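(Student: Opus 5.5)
This is the standard M-estimation result, so I would prove it in the three classical steps: consistency, asymptotic normality via a Taylor expansion, and consistency of the plug-in sandwich estimator. Throughout, write $\Psi_m(\bm{\theta}) = m^{-1}\sum_{i=1}^m \bm{\psi}(\bm{O}_i,\bm{\theta})$ and $\Psi(\bm{\theta}) = E[\bm{\psi}(\bm{O},\bm{\theta})]$.

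\emph{Consistency.} Condition 2 says $\bm{\psi}(o,\cdot)$ and its derivatives are dominated by an integrable envelope $F(o)$. Together with compactness of $\Theta$ and continuity in $\bm{\theta}$, this makes the class $\{\bm{\psi}(\cdot,\bm{\theta}):\bm{\theta}\in\Theta\}$ Glivenko--Cantelli. I would verify this directly by a bracketing argument: cover $\Theta$ by finitely many small balls and use the mean value theorem with the derivative envelope, so that brackets have $L_1$ size at most $\epsilon E[F]$. This gives $\sup_{\bm{\theta}\in\Theta}\|\Psi_m(\bm{\theta})-\Psi(\bm{\theta})\|\xrightarrow{P}0$. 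Condition 1 says $\underline{\bm{\theta}}$ is the unique zero of $\Psi$. By dominated convergence $\Psi$ is continuous on the compact set $\Theta$, so $\inf_{\|\bm{\theta}-\underline{\bm{\theta}}\|\ge\epsilon}\|\Psi(\bm{\theta})\|>0$ for every $\epsilon>0$. Since $\Psi_m(\hat{\bm{\theta}})=0$, the standard argmin-type argument (van der Vaart, Theorem 5.9) yields $\hat{\bm{\theta}}\xrightarrow{P}\underline{\bm{\theta}}$.

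\emph{Asymptotic normality.} Expand $0=\Psi_m(\hat{\bm{\theta}})$ around $\underline{\bm{\theta}}$ with a second-order remainder:
\[
0=\Psi_m(\underline{\bm{\theta}})+\dot\Psi_m(\underline{\bm{\theta}})(\hat{\bm{\theta}}-\underline{\bm{\theta}})+\tfrac12(\hat{\bm{\theta}}-\underline{\bm{\theta}})^\top\ddot\Psi_m(\tilde{\bm{\theta}})(\hat{\bm{\theta}}-\underline{\bm{\theta}}),
\]
taken componentwise, with $\tilde{\bm{\theta}}$ on the segment between $\hat{\bm{\theta}}$ and $\underline{\bm{\theta}}$. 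The three terms are controlled as follows.
\begin{enumerate}
\item The envelope on the second derivatives makes $\ddot\Psi_m(\tilde{\bm{\theta}})$ bounded in probability (it is at most $m^{-1}\sum_i F(\bm{O}_i)=O_P(1)$), so by consistency the quadratic term is $o_P(\|\hat{\bm{\theta}}-\underline{\bm{\theta}}\|)$.
\item By the weak law of large numbers, $\dot\Psi_m(\underline{\bm{\theta}})\xrightarrow{P}\bm{A}:=E[d\bm{\psi}/d\bm{\theta}^\top|_{\underline{\bm{\theta}}}]$, which is invertible by condition 3.
\item $E[\bm{\psi}(\bm{O},\underline{\bm{\theta}})]=0$ and the envelope is square-integrable, so the multivariate CLT gives $m^{1/2}\Psi_m(\underline{\bm{\theta}})\xrightarrow{d}N(0,E[\bm{\psi}\bm{\psi}^\top])$.
\end{enumerate}
Solving the expansion gives
\[
m^{1/2}(\hat{\bm{\theta}}-\underline{\bm{\theta}})=-\bm{A}^{-1}m^{1/2}\Psi_m(\underline{\bm{\theta}})+o_P(1)=m^{-1/2}\sum_i \text{IF}(\bm{O}_i,\underline{\bm{\theta}})+o_P(1).
\]
Slutsky's lemma then gives the limit $N(0,\textbf{V})$ with $\textbf{V}=\bm{A}^{-1}E[\bm{\psi}\bm{\psi}^\top]\bm{A}^{-\top}$.

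\emph{Sandwich consistency.} The same uniform law of large numbers applied to the derivative class (envelope from condition 2) gives $\sup_{\bm{\theta}}\|\dot\Psi_m(\bm{\theta})-\dot\Psi(\bm{\theta})\|\xrightarrow{P}0$. Combined with $\hat{\bm{\theta}}\xrightarrow{P}\underline{\bm{\theta}}$ and continuity of $\dot\Psi$, this yields $\dot\Psi_m(\hat{\bm{\theta}})\xrightarrow{P}\bm{A}$, so the bread is eventually invertible and converges to $\bm{A}^{-1}$. For the meat, $\bm{\psi}\bm{\psi}^\top$ is dominated by $F^2$, which is integrable, and is continuous in $\bm{\theta}$. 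The same uniform LLN then gives $m^{-1}\sum_i\bm{\psi}(\bm{O}_i,\hat{\bm{\theta}})\bm{\psi}(\bm{O}_i,\hat{\bm{\theta}})^\top\xrightarrow{P}E[\bm{\psi}\bm{\psi}^\top]$. The continuous mapping theorem gives $\hat{\bm{V}}\xrightarrow{P}\textbf{V}$. The sign convention in $\widehat{\text{IF}}$ does not affect this, since the estimator is a quadratic form.

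The main obstacle is the uniform convergence step, which is used both for consistency and for the meat/bread at $\hat{\bm{\theta}}$. Condition 2 is phrased loosely: it says the envelope is square-integrable ``for every $o$'', and I will read it as a single $F$ with $E[F^2]<\infty$ bounding $\|\bm{\psi}\|$, $\|\dot{\bm{\psi}}\|$ and $\|\ddot{\bm{\psi}}\|$ uniformly over $\Theta$. Under that reading, the bracketing argument and the mean value inequality give everything needed.
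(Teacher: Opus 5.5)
Your proposal is correct and follows essentially the same route as the paper's proof: a Glivenko--Cantelli class plus van der Vaart's Theorem 5.9 for consistency, the second-order Taylor expansion behind Theorem 5.41 for asymptotic linearity and normality, and an LLN plus continuous mapping for the sandwich estimator. The one small difference is in evaluating the bread and meat at $\hat{\bm{\theta}}$: you use a uniform LLN over the derivative and product classes, whereas the paper uses a mean-value expansion around $\underline{\bm{\theta}}$ bounded by the envelope times $\|\hat{\bm{\theta}}-\underline{\bm{\theta}}\|$, and both rest on the same square-integrable envelope.
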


\noindent \textit{Proof of Lemma \ref{app.lemma:variance}.}
The above three conditions are standard assumptions in an M-estimation framework \citep{tsiatis_semiparametric_2006,van_der_vaart_asymptotic_1998,ross_m-estimation_2024} to ensure the estimating function $\bm{\psi}$ (equivalent to the likelihood score function) is well-behaved to prove the asymptotic results.
Of note, condition (1) does not imply any component of the working model is correctly specified. Instead, it solely requires the uniqueness of maxima in the maximum likelihood or maximum quasi-likelihood estimation.
This can be achieved by carefully designing $\bm{\psi}$ and restricting the parameter space to rule out degenerative solutions.
As a specific example, if $\bm{\psi}$ is the estimating equation for linear regression (based on ordinary least squares), then condition (1) is equivalent to the invertibility of the covariance matrix of covariates.

We use the notation $o_p(1)$ to denote a sequence of random vectors that converges to zero in probability, and $O_p(1)$ to denote a sequence that is bounded in probability, as per Section 2.2 of \citet{van_der_vaart_asymptotic_1998}.
An asymptotically linear estimator $\hat{\bm{\theta}}$, excluding nuisance parameters (as described in Section 3 of \citet{tsiatis_semiparametric_2006}), can be uniquely characterized by its influence function, as demonstrated in Theorem 3.1 of \citet{tsiatis_semiparametric_2006}. The influence function captures the influence of the $i$th observation on $\hat{\bm{\theta}}$ and is denoted as $\text{IF}(\bm{O}_i,\underline{\bm{\theta}}) = -\left( E\left[\frac{d\bm{\psi}(\bm{O},\bm{\theta})}{d\bm{\theta}^{\top}} \mid_{\bm{\theta}=\underline{\bm{\theta}}}\right]^{-1} \bm{\psi}(\bm{O},\underline{\bm{\theta}}) \right)$, as per Equation 3.6 of \citet{tsiatis_semiparametric_2006}.

The proof proceeds by using ``classical conditions'' for asymptotic normality of M-estimators, as per Section 5.6, and largely follows Theorem 5.41, of \citet{van_der_vaart_asymptotic_1998}.
By condition (2) for $\bm{\psi}$, Example 19.8 of \citet{van_der_vaart_asymptotic_1998} implies that $\{\bm{\psi}(\bm{O},\bm{\theta}) : \bm{\theta} \in \bm{\Theta}\}$ is P-Glivenko-Cantelli (Glivenko-Cantelli in probability).
Then with condition (1), Theorem 5.9 of \citet{van_der_vaart_asymptotic_1998} shows that $\hat{\bm{\theta}} \xrightarrow{P} \underline{\bm{\theta}}$.
Next, we apply Theorem 5.41 of \citet{van_der_vaart_asymptotic_1998} to obtain asymptotic normality, for which our assumptions on $\bm{\psi}$ ensure all conditions needed are satisfied. Then we have
\begin{equation}
    m^{1/2}(\hat{\bm{\theta}} - \underline{\bm{\theta}}) = m^{1/2} \left(\sum_{i=1}^{m} \text{IF}(\bm{O}_i, \underline{\bm{\theta}})\right) + o_p(1) \,,
\end{equation}
as per Theorem 5.41 of \citet{van_der_vaart_asymptotic_1998} and equation 3.1 of \citet{tsiatis_semiparametric_2006}, which implies the desired asymptotic normality by the Central Limit Theorem.
Specifically, by the central limit theorem,  $m^{1/2} \left(\sum_{i=1}^{m} \text{IF}(\bm{O}_i, \underline{\bm{\theta}})\right) \xrightarrow{D} N(0, E[\text{IF}(\bm{O},\underline{\bm{\theta}}) \text{IF}(\bm{O},\underline{\bm{\theta}})^\top])$ and by Slutsky's theorem,
\begin{equation}
    m^{1/2}(\hat{\bm{\theta}} - \underline{\bm{\theta}}) \xrightarrow{D} N\left( 0, E[\text{IF}(\bm{O},\underline{\bm{\theta}}) \text{IF}(\bm{O},\underline{\bm{\theta}})^\top] \right)
\end{equation}
hence $\textbf{V} = E[\text{IF}(\bm{O},\underline{\bm{\theta}}) \text{IF}(\bm{O},\underline{\bm{\theta}})^\top]$, as per Equation 3.7 of \citet{tsiatis_semiparametric_2006}.

We next prove the consistency of the sandwich variance estimator. First, we prove that
\[
    m^{-1}\sum_{i=1}^{m} \frac{d\bm{\psi}(\bm{O}_i,\bm{\theta})}{d\bm{\theta}^{\top}} \mid_{\bm{\theta}=\hat{\bm{\theta}}}
\xrightarrow{P} E\left[\frac{d\bm{\psi}(\bm{O},\bm{\theta})}{d\bm{\theta}^{\top}} \mid_{\bm{\theta}=\underline{\bm{\theta}}}\right] \,.
\]
Denoting $\dot{\bm{\psi}}_{ij}(\hat{\bm{\theta}})$ as the transpose of the $j$th row of $\frac{d\bm{\psi}(\bm{O}_i,\bm{\theta})}{d\bm{\theta}^{\top}} \mid_{\bm{\theta}=\hat{\bm{\theta}}}$, and $\ddot{\bm{\psi}}_{ij}$ being the derivative of $\dot{\bm{\psi}}_{ij}$, we apply the multivariate Taylor expansion to get
\[
    m^{-1} \sum_{i=1}^{m} \dot{\bm{\psi}}_{ij}(\hat{\bm{\theta}}) - m^{-1} \sum_{i=1}^{m} \dot{\bm{\psi}}_{ij}(\underline{\bm{\theta}}) 
    = m^{-1} \sum_{i=1}^{m} \left( \dot{\bm{\psi}}_{ij}(\hat{\bm{\theta}}) - \dot{\bm{\psi}}_{ij}(\underline{\bm{\theta}}) \right)
    = m^{-1} \left(  \sum_{i=1}^{m} \ddot{\bm{\psi}}_{ij}(\tilde{\bm{\theta}}) \right)(\hat{\bm{\theta}}-\underline{\bm{\theta}})
\]
for some $\tilde{\bm{\theta}}$ on the line segment between $\hat{\bm{\theta}}$ and $\underline{\bm{\theta}}$.
By condition 2 and $\hat{\bm{\theta}} \xrightarrow{P} \bm{\theta}$, we have $m^{-1} \sum_{i=1}^{m} \ddot{\bm{\psi}}_{ij}(\tilde{\bm{\theta}}) =O_p(1)$.
As a result, $\hat{\bm{\theta}}-\bm{\theta} = o_p(1)$ implies that $m^{-1} \sum_{i=1}^{m} \dot{\bm{\psi}}_{ij}(\hat{\bm{\theta}}) - m^{-1} \sum_{i=1}^{m} \dot{\bm{\psi}}_{ij}(\underline{\bm{\theta}})  = o_p(1)$.
Then the first step is completed by the fact that
\[
    m^{-1} \sum_{i=1}^{m} \dot{\bm{\psi}}_{ij}(\underline{\bm{\theta}}) = E[\dot{\bm{\psi}}_{ij}(\underline{\bm{\theta}})] + o_p(1)
\]
which results from the Law of Large Numbers and condition 2.
Next we prove
\[
    m^{-1} \sum_{i=1}^{m} \bm{\psi}(\bm{O}_i, \hat{\bm{\theta}}) \bm{\psi}(\bm{O}_i, \hat{\bm{\theta}})^\top 
    \xrightarrow{P} E[\bm{\psi}(\bm{O}_i, \underline{\bm{\theta}}) \bm{\psi}(\bm{O}_i, \underline{\bm{\theta}})^\top ]
\]
following a similar procedure to the prior step. Letting $\bm{\psi}_{ij}(\bm{\theta})$ be the $j$th entry of $\bm{\psi}(\bm{O}_i,\bm{\theta})$, we apply the multivariate Taylor expansion and get
\[
\begin{split}
    m^{-1} &\sum_{i=1}^{m} \bm{\psi}_{ij}(\hat{\bm{\theta}}) \bm{\psi}(\bm{O}_i, \hat{\bm{\theta}}) -  m^{-1} \sum_{i=1}^{m} \bm{\psi}_{ij}(\underline{\bm{\theta}}) \bm{\psi}(\bm{O}_i, \underline{\bm{\theta}}) \\
    &= m^{-1} \left( \sum_{i=1}^{m} \left[ 
    \bm{\psi}(\bm{O}_i,\tilde{\bm{\theta}})\dot{\bm{\psi}}_{ij}(\tilde{\bm{\theta}})^\top + \bm{\psi}_{ij}(\tilde{\bm{\theta}}) \frac{d\bm{\psi}(\bm{O}_i,\bm{\theta})}{d\bm{\theta}^{\top}} \mid_{\bm{\theta}=\tilde{\bm{\theta}}} 
    \right] \right) (\hat{\bm{\theta}}-\underline{\bm{\theta}}) \\
    &= O_p(1)o_p(1)
\end{split}
\]
which, combined with the Law of Large Numbers on $m^{-1} \sum_{i=1}^{m} \bm{\psi}_{ij}(\underline{\bm{\theta}})\bm{\psi}(\bm{O}_i,\underline{\bm{\theta}})$, implies the desired result in this step.
The sandwich variance estimator can be written in terms of influence functions while substituting $\hat{\bm{\theta}}$ for the unknown $\underline{\bm{\theta}}$, as defined in equation 3.10 of \citet{tsiatis_semiparametric_2006}.
\begin{align}
    m^{-1} &\sum_{i=1}^{m} \widehat{\text{IF}}(\bm{O}_i,\hat{\bm{\theta}}) \widehat{\text{IF}}(\bm{O}_i,\hat{\bm{\theta}})^\top \\
    &=\left(
        m^{-1} \sum_{i=1}^{m} \frac{d\bm{\psi}(\bm{O}_i,\bm{\theta})}{d\bm{\theta}^{\top}} \mid_{\bm{\theta}=\hat{\bm{\theta}}}
    \right)^{-1}
    \left(
        m^{-1} \sum_{i=1}^{m} \bm{\psi}(\bm{O}_i,\hat{\bm{\theta}}) \bm{\psi}(\bm{O}_i,\hat{\bm{\theta}})^\top
    \right)
    \left(
        m^{-1} \sum_{i=1}^{m} \frac{d\bm{\psi}(\bm{O}_i,\bm{\theta})}{d\bm{\theta}^{\top}} \mid_{\bm{\theta}=\hat{\bm{\theta}}}
    \right)^{-1\top} \notag \\
\intertext{Finally by the Continuous Mapping Theorem described in Theorem 2.3 of \citet{van_der_vaart_asymptotic_1998}, and the above derivations, we can show the sandwich variance estimator}
    &=\left(
        E\left[\frac{d\bm{\psi}(\bm{O},\bm{\theta})}{d\bm{\theta}^{\top}} \mid_{\bm{\theta}=\underline{\bm{\theta}}}\right] + o_p(1)
    \right)^{-1}
    \left(
        E\left[ \bm{\psi}(\bm{O},\underline{\bm{\theta}}) \bm{\psi}(\bm{O},\underline{\bm{\theta}})^\top \right] + o_p(1)
    \right)
    \left(
        E\left[\frac{d\bm{\psi}(\bm{O},\bm{\theta})}{d\bm{\theta}^{\top}} \mid_{\bm{\theta}=\underline{\bm{\theta}}}\right] + o_p(1)
    \right)^{-1\top} \notag \\
    &= \textbf{V} + o_p(1) \,.
\end{align}
Altogether, the sandwich variance estimator is consistent for the true variance \textbf{V}. $\square$

\subsection{Theorem 1: Asymptotic properties of the linear fixed-effects model estimator}

In Theorems \ref{app.Theorem:SW}-\ref{app.Theorem:XO} below (for SW-CTs, PB-CTs, and CXOs, respectively), we define the finite-sample sandwich variance estimators
$\widehat{Var}\left(\hat{\beta}_Z\right)=\frac{1}{m}\hat{V}_Z$, 
$\widehat{Var}\left(\hat{\bm{\beta}}_Z^D\right) = \frac{1}{m}\hat{\bm{V}}_Z^D$, 
$\widehat{Var}\left(\hat{\bm{\beta}}_Z^P\right) =\frac{1}{m}\hat{\bm{V}}_Z^P$, 
and $\widehat{Var}\left(\hat{\bm{\beta}}_Z^S\right) =\frac{1}{m}\hat{\bm{V}}_Z^S$, corresponding to the constant, duration-specific, period-specific, and saturated treatment effect structures, respectively.

\begin{appsubtheorem}
\label{app.Theorem:SW}
    Under standard regularity conditions (outlined in Lemma \ref{app.lemma:variance}) and with either (i.) Assumptions A1-A3, (ii.) Assumptions A1, A2, A4, 
    or (iii.) the mean model is correctly specified,
    the following Central Limit Theorems hold for a linear fixed-effects model analysis of a stepped-wedge designed cluster trial.
    That is, (a) under a true constant treatment effect structure, $\hat{V}_Z^{-1/2} m^{1/2} (\hat{\beta}_Z - \Delta) \xrightarrow{d} N(0,1)$;
    (b) under a true duration-specific treatment effect structure, $(\hat{\bm{V}}_Z^D)^{-1/2} m^{1/2} (\hat{\bm{\beta}}^D_Z - \bm{\Delta}^D) \xrightarrow{d} N(0,\textbf{I}_{J-1})$;
    (c) under a true period-specific treatment effect structure, $(\hat{\bm{V}}_Z^P)^{-1/2} m^{1/2} (\hat{\bm{\beta}}^P_Z - \bm{\Delta}^P) \xrightarrow{d} N(0,\textbf{I}_{J-2})$
    and $\hat{V}_Z^{-1/2} m^{1/2} (\hat{\beta}_Z - \Delta^{P-ATO}) \xrightarrow{d} N(0,1)$;
    and (d) under a true saturated treatment effect structure, $(\hat{\bm{V}}_Z^S)^{-1/2} m^{1/2} (\hat{\bm{\beta}}^S_Z - \bm{\Delta}^S) \xrightarrow{d} N(0,\textbf{I}_{(J-2)(J-1)/2})$;
    in (a)-(d), $\textbf{I}_q$ is a $q \times q$ identity matrix.
\end{appsubtheorem}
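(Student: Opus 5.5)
We reduce the problem to Lemma \ref{app.lemma:variance} through the within-transformation. Stack the structural parameters as $\bm{\theta}=(\bm{\beta}_0^\top,\bm{\beta}_{TE}^\top,\bm{\beta}_X^\top)^\top$, where $\bm{\beta}_{TE}$ is $\beta_Z$, $\bm{\beta}^D_Z$, $\bm{\beta}^P_Z$ or $\bm{\beta}^S_Z$ depending on the working structure. Let $\bm{D}_i$ be the within-cluster centering operator over enrolled observations, $\bm{\mathcal{M}}_i=\bm{I}-\bm{1}(\bm{1}^\top\bm{1})^{-1}\bm{1}^\top$. The first step is the Frisch--Waugh--Lovell argument. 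Profiling out $\alpha_i$ in the working-independence normal likelihood of (\ref{eq:FE_model}) shows that $\hat{\bm{\theta}}$ solves $\sum_{i=1}^m\bm{\psi}(\bm{O}_i,\bm{\theta})=0$ with $\bm{\psi}(\bm{O}_i,\bm{\theta})=\ddot{\bm{Q}}_i^{o\top}(\ddot{\bm{Y}}_i^o-\ddot{\bm{Q}}_i^o\bm{\theta})$. Here $\ddot{\bm{Q}}_i^o=\bm{\mathcal{M}}_i\bm{Q}_i^o$ collects the centered period, treatment and covariate columns, and one period column is dropped for identifiability. This $\bm{\psi}$ depends only on $\bm{O}_i$ and has fixed dimension, so the incidental-parameter issue is gone and the data are i.i.d. across clusters by A1. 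Conditions (2) and (3) then follow from finite second moments and invertibility of $E[\ddot{\bm{Q}}^{o\top}\ddot{\bm{Q}}^o]$, which we check design by design, for example full rank of the SW sequence/period incidence under $\pi_l>0$. The lemma gives consistency for the unique root $\underline{\bm{\theta}}=E[\ddot{\bm{Q}}^{o\top}\ddot{\bm{Q}}^o]^{-1}E[\ddot{\bm{Q}}^{o\top}\ddot{\bm{Y}}^o]$, asymptotic normality, and consistency of the sandwich. Slutsky then yields each studentized CLT. Everything therefore reduces to showing that the treatment block of $\underline{\bm{\theta}}$ equals the target estimand.

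For the identification step, write $Y_{ijk}=Y_{ijk}(0)+\sum_z I\{Z_i=z\}I\{j\ge z\}(Y_{ijk}(z)-Y_{ijk}(0))$. In case (iii), $E[\bm{\psi}(\bm{O},\bm{\theta}^*)]=0$ holds at the true $\bm{\theta}^*$ because the centered errors have conditional mean zero, and uniqueness gives the result. In cases (i) and (ii), we show that the candidate $\underline{\bm{\theta}}$ solves the population equation. The treatment part is $\bm{\beta}_{TE}=\bm{\Delta}$. The period part is $\beta_{0j}=E[\ddot Y_{.jk}(0)\mid\bm{S}]$ plus the appropriate centering, which is well defined by A2. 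The covariate part is obtained by projecting the residual on $\ddot{\bm{X}}$. Under A3, $Z$ is independent of $(\bm{Y}(0),\bm{X},\bm{S},N)$. Every column of $\ddot{\bm{Q}}^o$ is a function of $(Z,\bm{S},\bm{X})$. The residual $\ddot{\bm{Y}}^o(0)-\ddot{\bm{Q}}^o_{0,X}\bm{\beta}_{0,X}$ is orthogonal to the period and $\bm{X}$ columns by construction. It is orthogonal to the treatment columns because, conditionally on $Z$, its mean equals its unconditional mean, and the treatment columns are functions of $Z$ and $\bm{S}$ whose products with period-specific means cancel by the choice of $\beta_{0j}$. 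A subtlety is the $\bm{X}$ columns. We handle them by noting that $\ddot{\bm{X}}$ is independent of $Z$ and has mean zero given $\bm{S}$ (A1, A2), so the cross-moment $E[\ddot{\bm{X}}^\top\ddot{TE}]=0$. This makes the information matrix block-diagonal between $\bm{X}$ and the (period, treatment) block, and misspecifying $\bm{\beta}_X^\top\bm{X}$ does not contaminate $\bm{\beta}_{TE}$. The effect-modification terms $Y(z)-Y(0)-\Delta_j(d)$ have mean zero given $(Z,\bm{S})$ by A3 and A2. Under (ii), A3 is replaced by A4: A4.i supplies $E[\ddot Y(0)\mid Z,N,\bm{S}]=E[\ddot Y(0)\mid\bm{S}]$, A4.ii supplies the treatment-effect conditional means, and centering kills all cluster-level and individual-level time-invariant terms. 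The $\bm{X}$ block is the delicate part in (ii). We plan to absorb it by showing $E[\ddot{\bm{X}}\mid Z,\bm{S}]=0$ still holds, because within-cluster centering of a time-invariant $\bm{X}_{ik}$ under non-informative $\bm{S}$ has zero conditional mean.

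The main obstacle is the second half of part (c): the misspecified constant working model under a period-specific truth. Here $\underline{\beta}_Z$ is a weighted combination $\sum_j w_j\Delta_j$. We compute $w_j$ by Frisch--Waugh: residualize the centered treatment indicator $I\{Z\le j\}$ on period dummies and cluster means in population. Under A2, enrollment proportions $E[N_{.j}]/E[\sum_l N_{.l}]$ enter symmetrically, and the residualized treatment indicator in period $j$ is $I\{Z\le j\}-\pi^s_j$ minus its cluster average. The weights are then $w_j\propto E[(I\{Z\le j\}-\pi^s_j)\,\ddot{TE}_j]$. We must show this is proportional to $\lambda_j=\pi^s_j(1-\pi^s_j)$, which requires the cluster-mean correction term to contribute equally across $j$. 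We expect this to need the balanced design (equal sequence allocation and constant expected enrollment across periods), and we will verify it by explicit algebra for the SW incidence matrix. Parts (a), (b) and (d) then follow as the special cases in which the working and true structures coincide.
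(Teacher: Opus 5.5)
Your route is essentially the paper's: within-transform to remove $\alpha_i$, apply Lemma~\ref{app.lemma:variance} to a fixed-dimension estimating function, then use the period equations (weighted by $E[\bm{\Delta}_Z\bm{1}_J]$) to residualize the treatment column so that the period, covariate and $\bm{Y}(0)$ contributions vanish.

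The gap is in how you dispose of the remaining treatment-effect terms in case (i). You claim that $Y(z)-Y(0)-\Delta_j(d)$ has mean zero given $(Z,\bm{S})$ by A2 and A3. This is false under the paper's assumptions. A1 allows informative cluster size, so $\delta^{(d)}_j(N)=E[Y_{.jk}(d)-Y_{.jk}(0)\mid N]$ may vary with $N$. Since $\bm{S}_i\in\mathbb{R}^{N_iJ}$ reveals $N_i$, in general $E[Y(d)-Y(0)\mid Z,\bm{S}]=\delta^{(d)}_j(N)\neq\Delta_j(d)$.

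Your argument never uses the first clause of A2 ($N_{ij}\perp N_i$), and that clause cannot be dropped. Take full enrollment $N_{ij}=N_i$, which is compatible with A1, A3 and the second clause of A2. Take an effect $\delta(N)$ that is constant in $(j,d)$ but varies with $N$. The working-independence within estimator then weights clusters by $N_i$ and converges to $E[N\delta(N)]/E[N]$, not $\Delta=E[\delta(N)]$.

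The missing step is the paper's factorization:
\begin{itemize}
\item Condition on $(\bm{S},N,Z)$ and use within-cluster exchangeability (A1) to write $E[\bm{Y}(d)-\bm{Y}(0)\mid N]=(\bm{I}_J\otimes\bm{1}_N)\bm{\delta}^{(d)}(N)$.
\item Observe that the resulting weight $\bm{U}_d^\top(\bm{I}_J\otimes\bm{1}_N)$ depends only on $(Z_i,N_{i1},\dots,N_{iJ})$, which is independent of $N_i$ under A2--A3.
\item Conclude $E[\bm{U}_d^\top(\bm{I}_J\otimes\bm{1}_N)\bm{\delta}^{(d)}(N)]=E[\bm{U}_d^\top(\bm{I}_J\otimes\bm{1}_N)]\,E[\bm{\delta}^{(d)}(N)]$, with $E[\delta^{(d)}_j(N)]=\Delta_j(d)$.
\end{itemize}
This ``restricted informative sizes'' mechanism is what makes the estimator target the cluster-average estimand. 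Under case (ii) your claim is actually fine, because A4.ii imposes mean independence from $N$ as well as from $Z$.

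Two smaller points. First, you only handle working independence, whereas the theorem concerns model (\ref{eq:FE_model}) with the nested term $RE_{ij}$. The paper carries the weight $\bm{\mathcal{M}}\ddot{\bm{V}}\bm{\mathcal{M}}$ together with the score equations for $(\kappa^2,\sigma^2)$. Your argument should extend, since that weight depends only on enrollment and on the limits of the variance components, but you need to state this and redo the residualization with that weight.

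Second, in the P-ATO step the cluster-mean correction cannot ``contribute equally across $j$'': it is zero at $j=1$ and $j=J$, and it is in fact proportional to $\lambda_j$. With working independence, equal allocation and equal cluster-period sizes, $w_j=N\{\lambda_j-J^{-1}\sum_{l=1}^{J}\mathrm{Cov}(I\{Z\le j\},I\{Z\le l\})\}$. Since $\sum_{l}\mathrm{Cov}(I\{Z\le j\},I\{Z\le l\})=\tfrac{J-1}{2}\lambda_j$, this gives $w_j=N\lambda_j(J+1)/(2J)$. With random $N_{ij}$, constant expected sizes are not enough. You need something like an exchangeable joint law for $(N_{i1},\dots,N_{iJ})$, which the paper's $\mathcal{H}_A,\mathcal{H}_B$ structure tacitly assumes.
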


\begin{appsubtheorem}
\label{app.Theorem:PB}
    Under standard regularity conditions (outlined in Lemma \ref{app.lemma:variance}) and with either (i.) Assumptions A1-A3, (ii.) Assumptions A1, A2, A4, 
    or (iii.) the mean model is correctly specified,
    the following Central Limit Theorems hold for a linear fixed-effects model analysis of a parallel-with-baseline designed cluster trial.
    That is, (a) under a true constant treatment effect structure, $\hat{V}_Z^{-1/2} m^{1/2} (\hat{\beta}_Z - \Delta) \xrightarrow{d} N(0,1)$;
    (b) under a true saturated/duration/period-specific treatment effect structure, $(\hat{\bm{V}}_Z^S)^{-1/2} m^{1/2} (\hat{\bm{\beta}}^S_Z - \bm{\Delta}^S) \xrightarrow{d} N(0,\textbf{I}_{J-1})$ and 
    $\hat{V}_Z^{-1/2} m^{1/2} (\hat{\beta}_Z - \Delta^{S-avg}) = \hat{V}_Z^{-1/2} m^{1/2} (\hat{\beta}_Z - \Delta^{P-ATO}) \xrightarrow{d} N(0,1)$;
    in (a)-(b), $\textbf{I}_q$ is a $q \times q$ identity matrix.
\end{appsubtheorem}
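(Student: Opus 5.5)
The plan is to cast the within-transformed least-squares fit of Equation (\ref{eq:FE_model_within}) for the PB-CT as an M-estimator and apply Lemma \ref{app.lemma:variance}. The only remaining task is then to identify the probability limit $\underline{\bm{\theta}}$ of the treatment coefficients.

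\textbf{Step 1: reduce to the within-transformed estimating equation.} Write $\bm{Q}_{ijk}$ for the design row, containing period indicators (one dropped for identifiability), treatment indicators $I\{Z_i=2, j>1\}$ (constant) or $I\{Z_i=2\}I\{j=l\}$ for $l=2,\dots,J$ (saturated), and $\bm{X}_{ik}$. By Frisch--Waugh--Lovell, profiling out $\alpha_i$ gives the structural estimating function
\[
\bm{\psi}(\bm{O}_i,\bm{\beta})=\sum_{j,k:S_{ijk}=1}\ddot{\bm{Q}}_{ijk}\bigl(\ddot{Y}_{ijk}-\ddot{\bm{Q}}_{ijk}\bm{\beta}\bigr).
\]
This has fixed dimension, so the incidental parameters problem disappears. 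Working random effects $RE_{ij}$ and any working correlation enter only through a cluster-level weight matrix, and do not affect the argument below. Regularity conditions 2--3 are assumed, so Lemma \ref{app.lemma:variance} yields consistency and asymptotic normality for $\underline{\bm{\beta}}$, the root of $E[\bm{\psi}]=0$. It also yields sandwich-variance consistency, and the standardized statements then follow from Slutsky's theorem.

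\textbf{Step 2: identify $\underline{\bm{\beta}}$ under (i).} Decompose the observed outcome as $Y_{ijk}=Y_{ijk}(0)+I\{Z_i=2,j>1\}\{Y_{ijk}(2)-Y_{ijk}(0)\}$. Take the candidate $\bm{\beta}^\ast$ whose treatment coefficients are $\bm{\Delta}^S$ (saturated) or $\Delta^{S-avg}$ (constant), with period and covariate coefficients chosen as projection coefficients. We then verify $E[\bm{\psi}(\bm{O},\bm{\beta}^\ast)]=0$ block by block.

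Condition on $(N_i,\bm{S}_i)$. By A2 the weights $\ddot{\bm{Q}}$, which depend on $\bm{S}_i$ and $Z_i$ only through $N_{ij}$, are independent of the potential outcomes. By A3, $Z_i$ is independent of everything else. Hence the expected residual in each treatment-indicator row factorizes into
\begin{itemize}
\item a term $E[\ddot{Y}(0)\mid \bm S]$, which is common to both arms and is absorbed by the period coefficients, and
\item a term involving $E[Y_{\cdot jk}(2)-Y_{\cdot jk}(0)]=\Delta_j(j-1)$, using A1 for identical distribution over $k$.
\end{itemize}
For the saturated model this is exactly just-identified and returns $\bm{\Delta}^S$.

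For the constant model, the score for $\beta_Z$ gives a weighted average of the $\Delta_j$. Its weights are $E[\ddot{D}_{ij} D_{ij}]$ with $D_{ij}=I\{Z_i=2,j>1\}$. After partialling out period effects, these weights are proportional to $\pi_2(1-\pi_2)$ in every post-baseline period and zero at baseline, that is, to $\lambda_j$. So $\underline{\beta}_Z=\Delta^{P-ATO}=\Delta^{S-avg}$.

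The time-invariant $\bm{X}_{ik}$ are removed by the within transformation, up to enrollment-composition terms whose means are unaffected by $Z_i$. Their coefficient therefore does not bias the treatment coefficients.

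\textbf{Step 3: the other cases.} Under (ii), A3 is replaced by A4. A4.i makes $E[\ddot{Y}(0)\mid Z,N,\bm S]$ free of $Z$ and $N$, and A4.ii does the same for the treatment-effect term, so the same factorization goes through. Case (iii) is immediate: the true $\bm\beta$ solves the population equation.

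\textbf{Main obstacle.} The delicate step is the constant-effect case. There one must show that, after within-cluster and period demeaning, the treatment regressor's weights are exactly flat across $j\ge2$ despite unequal $N_{ij}$. My plan is to take expectations first and use $N_{ij}\perp N_i$ together with A2 to replace $N_{ij}$ by the same distribution in both arms. I would then compute explicitly in the two-sequence PB design, where symmetry across post-baseline periods should give equal weights.
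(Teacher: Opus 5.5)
Your overall route matches the paper's. You within-transform (equivalently, profile out $\alpha_i$) and apply the M-estimation lemma. You then orthogonalize the treatment score against the period scores by centering the treatment indicator at its mean, let A2--A3 (or A4) remove the $\bm Y(0)$, period and $\bm X$ contributions, and read off the limit. Part (a) and the $\bm\Delta^S$ claim in (b) go through this way, because there the weights either cancel or the system is just-identified.

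\textbf{The gap} is the step you flag yourself, $\underline\beta_Z=\Delta^{S-avg}$. Flatness of the weights over $j\ge 2$ does not follow from A1--A3 (or A1, A2, A4). Your plan of balancing the $N_{ij}$ across arms cannot deliver it, because the imbalance that matters is across periods, not across arms.

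To see this, take working independence. The centered regressor that survives period-partialling is $c_{ij}=(I\{Z_i=2\}-\pi_2)I\{j\ge 2\}$. Its enrollment-weighted within-cluster demeaning is $\ddot c_{ij}=(I\{Z_i=2\}-\pi_2)N_{i1}/\mathcal N_i$ for $j\ge 2$. The population score then gives
\[
\underline\beta_Z=\frac{\sum_{j\ge 2}w_j\Delta_j}{\sum_{j\ge 2}w_j},\qquad
w_j=\pi_2(1-\pi_2)\,E\!\left[\frac{N_{ij}N_{i1}}{\mathcal N_i}\right].
\]
Arm balance only supplies the common factor $\pi_2(1-\pi_2)$; the dependence on $j$ sits in $E[N_{ij}N_{i1}/\mathcal N_i]$. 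For example, take $J=3$ with deterministic $N_{i1}=N_{i2}=n$ and $N_{i3}=2n$. This satisfies A1--A3, and the paper explicitly allows $N_{ij}$ to vary with calendar time under A2. Yet it gives $\underline\beta_Z=(\Delta_2+2\Delta_3)/3\neq(\Delta_2+\Delta_3)/2$.

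Your remark that the working correlation does not affect the argument is also wrong for this part. Even with equal sizes, an AR-type working correlation across periods weights period $2$ (adjacent to baseline) differently from later periods.

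Flat weights need two symmetry conditions:
\begin{enumerate}
\item the law of $(N_{i1},\dots,N_{iJ})$ is invariant under permutations of the post-baseline coordinates, with equal cluster-period sizes as the simplest case;
\item the working covariance is invariant under permuting periods, as the nested-exchangeable $\kappa^2\textbf{I}_J\otimes\bm 1_N\bm 1_N^\top+\sigma^2\textbf{I}_{NJ}$ used in the appendix is.
\end{enumerate}
Under these, interchanging two post-baseline periods is a symmetry of the population score. Hence $w_j$ is constant for $j\ge 2$, and $\underline\beta_Z=\Delta^{P-ATO}=\Delta^{S-avg}$.

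The paper's own proof relies on exactly this unproved symmetry when it asserts $E[\mathcal H_{\text n}]=E[\mathcal H]$ for all post-baseline columns $\text n$. You should therefore state it as an additional hypothesis rather than try to derive it from $N_{ij}\perp N_i$.
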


\begin{appsubtheorem}
\label{app.Theorem:XO}
    Under standard regularity conditions (outlined in Lemma \ref{app.lemma:variance}) and with either (i.) Assumptions A1-A3, (ii.) Assumptions A1, A2, A4,
    or (iii.) the mean model is correctly specified,
    the following Central Limit Theorems hold for a linear fixed-effects model analysis of a crossover designed cluster trial.
    That is, (a) under a true constant treatment effect structure, $\hat{V}_Z^{-1/2} m^{1/2} (\hat{\beta}_Z - \Delta) \xrightarrow{d} N(0,1)$;
    (b) under a true duration-specific treatment effect structure, $(\hat{\bm{V}}_Z^D)^{-1/2} m^{1/2} (\hat{\bm{\beta}}^D_Z - \bm{\Delta}^D) \xrightarrow{d} N(0,\textbf{I}_{(2J+1-(-1)^J)/4})$. 
    Furthermore, under a true saturated/period-specific treatment effect structure, $\hat{V}_Z^{-1/2} m^{1/2} (\hat{\beta}_Z - \Delta^{S-avg}) = \hat{V}_Z^{-1/2} m^{1/2} (\hat{\beta}_Z - \Delta^{P-ATO}) \xrightarrow{d} N(0,1)$;
    in (a)-(b), $\textbf{I}_q$ is a $q \times q$ identity matrix.
\end{appsubtheorem}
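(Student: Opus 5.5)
The statement immediately above is Theorem \ref{app.Theorem:XO} for the CXO design. I would reduce each part to Lemma \ref{app.lemma:variance} applied to the within-transformed estimating equations of Equation (\ref{eq:FE_model_within}).

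\textbf{Step 1: reduce to a finite-dimensional M-estimator.} For cluster $i$, let $\bm{D}_i$ be the observed design matrix of period indicators, treatment-structure indicators and covariates, and let $\bm{\mathcal{M}}_i = \bm{I} - \bm{1}\bm{1}^\top/\sum_l N_{il}$ be the within-cluster centering projection. By Frisch--Waugh--Lovell, the least-squares (or working-independence ML) estimator of the structural parameters $\bm{\theta}=(\beta_{0j},\beta_Z \text{ or } \bm{\beta}_Z^D,\bm{\beta}_X)$ in Equation (\ref{eq:FE_model}) solves
\[
\sum_{i=1}^m \bm{\psi}(\bm{O}_i,\bm{\theta}) = \sum_{i=1}^m \bm{D}_i^\top \bm{\mathcal{M}}_i(\bm{Y}_i^o-\bm{D}_i\bm{\theta})=0 .
\]
Each summand depends only on $\bm{O}_i$, so the $\alpha_i$ disappear. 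Under A1 the summands are i.i.d., and the dimension of $\bm{\theta}$ is fixed. Under the regularity conditions (uniqueness of the population root and invertibility of $E[\bm{D}^\top\bm{\mathcal{M}}\bm{D}]$), Lemma \ref{app.lemma:variance} then gives $m^{1/2}(\hat{\bm{\theta}}-\underline{\bm{\theta}})\to N(0,\textbf{V})$ with $\hat{\bm{V}}\xrightarrow{P}\textbf{V}$. The studentized forms follow by Slutsky's theorem, taking the relevant sub-block of $\hat{\bm V}$. What remains is to show that the treatment components of $\underline{\bm{\theta}}$ equal $\Delta$, $\bm\Delta^D$, or $\Delta^{P-ATO}=\Delta^{S-avg}$.

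\textbf{Step 2: identify the probability limit.} Write $Y_{ijk}=Y_{ijk}(0)+\sum_j I\{\text{treated in } j\}\,(Y_{ijk}(z)-Y_{ijk}(0))$. Then
\[
E[\bm{\psi}] = E[\bm{D}^\top\bm{\mathcal{M}}\bm{Y}^o(0)] + E[\bm{D}^\top\bm{\mathcal{M}}\,\bm{T}\,\bm{\tau}] - E[\bm{D}^\top\bm{\mathcal{M}}\bm{D}]\bm{\theta},
\]
where $\bm{T}\bm{\tau}$ collects the individual treatment contrasts. The period and treatment columns of $\bm{D}_i$ depend only on $(Z_i,S_i)$, so I condition on $(Z,N,S)$.
\begin{itemize}
\item Under A3 with A2, $E[\bm{\mathcal{M}}\bm{Y}^o(0)\mid Z,S]$ does not depend on $Z$.
\item Under A4.i this is assumed directly.
\item Exchangeable effects (A4.ii, or A3) give $E[\text{contrast}\mid Z,N,S]=\Delta_j(d)$.
\end{itemize}
The covariate columns enter only through $\ddot{\bm X}$. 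By A1 and A2, their projection onto the centered period and treatment columns has conditional mean free of $Z$. Hence $\underline{\bm\beta}_X$ is a pseudo-true value whose contribution is orthogonal to the treatment columns after averaging over $Z$. I would verify this through the block structure of the normal equations, showing that the $Z$-dependent parts factor, as in \citet{wang_how_2024}.

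Case (iii), a correctly specified mean model, is immediate, because the true coefficients then solve $E[\bm{\psi}]=0$.

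\textbf{Step 3: the CXO algebra.} For (a) and (b), substituting a correctly specified structure $\bm\tau=\bm{T}\bm\Delta$ shows that $\bm\theta$ with treatment block $\Delta$ (resp.\ $\bm\Delta^D$) zeroes $E[\bm\psi]$, and uniqueness finishes.

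For the misspecified constant fit under a true period-specific or saturated structure, I solve the normal equations explicitly using A2. In a CXO with two sequences of probability $1/2$, the treated indicator in period $j$ alternates between sequences. After partialling out period effects and cluster means, the residualized treatment regressor is, up to sign, the same in every period, with within-period variance $\lambda_j=\tfrac14$. The limit of $\hat\beta_Z$ is therefore $\sum_j\lambda_j\Delta_j/\sum_j\lambda_j$, the simple average, which equals $\Delta^{S-avg}=\Delta^{P-ATO}$.

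I expect this last computation to be the main obstacle. It requires carefully handling the unequal $N_{ij}$ weights inside the centering and the $E[N_{.j}]$ factors, so that the weights reduce exactly to $\pi_j^s(1-\pi_j^s)$. I would first treat the case $E[N_{.j}]$ constant in $j$ (implied by A2's independence of $N_{ij}$ from the rest, together with identical marginals). Failing that, I would state the P-ATO weights with the period-specific expected enrollment included.
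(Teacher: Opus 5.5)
Your skeleton matches the paper's: within-transform to eliminate the $\alpha_i$, apply Lemma~\ref{app.lemma:variance}, then identify the pseudo-true treatment coefficient. Two steps, however, fail as written.

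\emph{Step~2.} You claim that A3 gives $E[\text{contrast}\mid Z,N,S]=\Delta_j(d)$. Under A1--A3 this is false. A1 permits informative cluster sizes, and A2--A3 only give $E[Y_{ijk}(d)-Y_{ijk}(0)\mid Z,N,S]=E[Y_{ijk}(d)-Y_{ijk}(0)\mid N]$. That quantity may vary with $N$ and equals $\Delta_j(d)$ only after averaging over $N$; your claim is correct only under A4.ii. The missing idea is the paper's factorization. Sum the individual contributions within each cluster-period, using A1's within-cluster identical distribution and A2's conditional independence of $\bm S$ given $N$. The coefficient multiplying the period-$j$ contrast then depends only on $Z$ and $(N_{i1},\dots,N_{iJ})$. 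This is the paper's $\bm U_d$, built from the treatment column centered by $E[\bm\Delta_Z\bm 1_J]$. The first clause of A2, $N_{ij}\perp N_i$, together with A3 then gives $E\bigl[w\,E\{Y(d)-Y(0)\mid N\}\bigr]=E[w]\,\Delta_j(d)$. Your proposal never uses $N_{ij}\perp N_i$, and without it the argument cannot reach the cluster-average estimand under (i).

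\emph{The P-ATO step.} You claim the residualized treatment regressor is $\pm\tfrac12$ in every period. For fixed sizes this requires $\sum_j(-1)^jN_{\cdot j}=0$, and with equal period sizes that forces $J$ even. This is why the paper's proof restricts to balanced CXOs with even $J$. For odd $J$ the claim fails even with equal sizes. For $J=3$ under working independence, the residualized regressor is $\pm(\tfrac13,-\tfrac23,\tfrac13)$ and $\hat\beta_Z\to(\Delta_1+2\Delta_2+\Delta_3)/4$, which is not $\Delta^{P-ATO}$.

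Unequal enrollment also breaks equal weighting, and A2 does \emph{not} imply that $E[N_{\cdot j}]$ is constant in $j$; the paper itself stresses that A2 lets $N_{ij}$ vary with calendar time. For example, with $J=4$ and fixed sizes proportional to $(1,1,2,2)$, the regressor is $\pm\tfrac12$, yet the working-independence limit is $\sum_jN_{\cdot j}\Delta_j/\sum_jN_{\cdot j}$. With random sizes, even a common mean is not enough. You need a homogeneity condition such as exchangeability of $(N_{i1},\dots,N_{iJ})$ across periods.

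The paper obtains equal weights only by asserting $E[\mathcal H_{\text{n}}]=E[\mathcal H]$ for every column, which is this homogeneity assumption imposed implicitly. To prove the statement you must impose it explicitly, together with even $J$. Your fallback of enrollment-weighted $\lambda_j$ would prove a different result.

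Finally, you treat only the OLS fit. The paper's estimating function carries the nested working covariance and its variance components through $\bm{\mathcal M}\underline{\ddot{\bm V}}\bm{\mathcal M}$, so these weight calculations must be done with that matrix rather than $\bm{\mathcal M}$.
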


Theorems \ref{app.Theorem:SW}-\ref{app.Theorem:XO} suggest that assuming a correct treatment effect structure, the linear fixed-effects model provides consistent and asymptotically normal estimators for the respective marginal cluster-average treatment effect estimands, even if other model components (time-invariant covariates, random effects, residual distribution) are arbitrarily misspecified in the analysis of (A.) SW-CRTs, (B.) PB-CRTs, and (C.) CRXOs.
We can generally extend these conclusions from these three designs to other longitudinal complete CRT designs that have within-cluster and between-cluster variation in treatment status.

Across these longitudinal CRT designs (Theorem \ref{app.Theorem:SW}-\ref{app.Theorem:XO}), the constant treatment effect estimator is notably still consistent for the period-average treatment effect for the overlap population (P-ATO).
This matches a similar observation in \citet{lee_analysis_2025} for linear mixed-effects models with constant treatment effect structures in SW-CRTs with period-specific treatment effect structures; although \citet{lee_analysis_2025} did not explicitly make the connection between the targeted  estimand and the P-ATO.
Unlike in staggered designs (such as the SW-CT), the P-ATO is also equivalent to the saturated/duration/period-average treatment effect estimands in PB-CT and CXO designs.
This is due to the saturated, period-specific, and duration-specific treatment effect structures coinciding, with the saturated/period-specific treatment effect structures being a more general version of the duration-specific treatment effect structure in a CXO.
That is, regardless of the treatment effect structure, the constant working treatment effect structure still estimates an interpretable time-averaged treatment effect estimand in a PB-CRT or CRXO design.
Importantly then, robust inference of PB-CRT and CRXO designs does not even require a correctly specified treatment effect structure.

Furthermore, without relying on randomization, the linear fixed effects model can still provide consistent and asymptotically normal estimators for the average treatment effect, even in the presence of measured or unmeasured time-invariant confounding or covariate imbalance.
This highlights a central message that robust inference for the treatment effect estimands with such a fixed effects model typically requires a correct treatment effect structure in the working model, particularly in stepped-wedge designs, rather than any other remaining model aspects.

Altogether, we prove that despite being typically understood to target a conditional treatment effect estimand, the linear fixed-effects model in a longitudinal CT can also asymptotically target a marginal treatment effect estimand.
With randomization (Assumption A3), cluster intercepts are uncorrelated with other model covariates.
This, alongside the assumed super-population sampling (Assumption A1), asymptotically produces a \textit{collapsible} and coinciding marginal and conditional difference estimand, that is targeted by the specified linear fixed-effects model.
In the absence of randomization, the detailed supplementary assumption (Assumption A4) can achieve a similar result.
Intuitively, a linear mixed-effects model coincides with a linear GEE, and targets a collapsible estimand that is both marginal and conditional \citep{gardiner_fixed_2009}. Subsequently, mixed-effects models and fixed-effects model estimates coincide under the traditional definition of strict exogeneity, where cluster intercepts are uncorrelated with model fixed covariates, as is the case in randomized trials \citep{hausman_specification_1978,lee_fixed-effects_2024}.
Accordingly, like a mixed-effects model with Assumptions A1-A3 fulfilled \citep{wang_how_2024}, we prove that the linear fixed-effects model yields consistent and asymptotically normal estimators for the marginal cluster-average treatment effect estimators in longitudinal CRT designs.

In the following sections, we will broadly demonstrate the consistency of treatment effect structures using a linear fixed-effects model, where each cluster is included as a separate dummy variable in the model.
Importantly, such a linear model is able to avoid the 
``incidental parameters problem'' described in Neyman \& Scott \citep{neyman_consistent_1948} where the number of model parameters goes to infinity as the sample of clusters goes to infinity. The clusters (or incidental parameters) in these linear fixed-effects models can be treated as \textit{nuisance parameters} and removed from the above fixed-effects likelihood and score function by using a ``within-transformation'' \citep{kiefer_estimation_1980,cameron_microeconometrics_2005,wooldridge_econometric_2010}.
The equivalence between the within-transformed specification (Equation \ref{app.eq:transfY_joint}) and previously defined fixed-effects model (Equation \ref{app.eq:Y_joint}) is due to the \textit{orthogonality} of residuals and model covariates in this linear setting, allowing us to project out the cluster covariates \citep{cameron_microeconometrics_2005,wooldridge_econometric_2010}.
We discuss this in more detail in Section \ref{app.sect:constant}.

\newpage
\section{Proof of Theorem \ref{app.Theorem:SW}}
\label{app.sect:Theorem_SW_Proof}

\textit{Proof of Theorem \ref{app.Theorem:SW}.}

\subsection{Treatment Effect Estimands for Stepped Wedge Designs}
\label{app.sect:estimands_SW}

We pursue the potential outcomes framework to define treatment effect estimands. 
With potential outcomes $Y_{ijk}(\tilde{z})$ where $\tilde{z} \in \{0, z\}$, let $Y_{ijk}(z)$ denote the potential outcome of individual $k$ in cluster $i$ during period $j$ had the cluster been first treated in period $z$ for $2 \leq z \leq j$ in stepped-wedge designs. If $z > j$, we assume no anticipation and $Y_{ijk}(0)$ denotes the untreated potential outcome.
We connect the observed outcome $Y_{ijk}$ and potential outcomes via the following
\[
Y_{ijk} = \sum_{z=2}^j I\{Z_i=z\}Y_{ijk}(z) + I\{Z_i>j\}Y_{ijk}(0) \,, j \in \{1,...,J\} \,.
\] 

We define the marginal cluster-average treatment effect as a function of potential outcomes. Given the treatment adoption time $z$ and in period $j$, we denote $d=j-z+1 >0$ as the duration of treatment or exposure time. Then, $Y_{ijk}(j-d+1)$ is the individual potential outcome in period $j$ had cluster $i$ been treated for $d$ ($0<d<j$) periods (or equivalently had the treatment been first adopted in period $z=j-d+1$). The treatment effect estimand is then defined as
\begin{equation}
\label{app.eq:PO_estimands_SW}
\begin{split}
    \Delta_j(d) &= E\left[ \frac{1}{N} \sum_{k=1}^{N} Y_{.jk}(j-d+1) \right] - E\left[ \frac{1}{N} \sum_{k=1}^{N} Y_{.jk}(0) \right] \\
    & = E\left[ E\left[\frac{1}{N} \sum_{k=1}^{N} Y_{.jk}(j-d+1) \mid N\right] \right] - E\left[ E\left[\frac{1}{N} \sum_{k=1}^{N} Y_{.jk}(0) \mid N\right] \right] \\
    & = E\left[ \frac{1}{N} \sum_{k=1}^{N} E\left[Y_{.jk}(j-d+1) \mid N\right] \right] - E\left[ \frac{1}{N} \sum_{k=1}^{N} E\left[Y_{.jk}(0) \mid N\right] \right] \\
    & = E\left[ E\left[Y_{.jk}(j-d+1) \mid N\right] \right] - E\left[ E\left[Y_{.jk}(0) \mid N\right] \right] \\
    &= E[Y_{.jk}(j-d+1)] - E[Y_{.jk}(0)]
\end{split}
\end{equation}
for $1 \leq d < j \leq J$.
To reiterate, $Y_{i1k}(0)$ is always observed and $Y_{iJk}(0)$ is never observed in a SW-CT.
The third equality is due to the assumed exchangeability within clusters in Assumption A1 (Super-population sampling).
Here, the expectation is taken over the distribution of clusters, with $j$ and $d$ being fixed quantities; that is, $E[f(\bm{W}_i)] = \int f(\bm{w}) d\mathcal{P}(\bm{w})$ for any integrable function $f$.
This corresponds to the average causal effect had all clusters been treated for $d$ ($0<d<j$) periods at calendar period $j$.
Definition \ref{app.eq:PO_estimands_SW} is model-free and fully accommodates the treatment effect heterogeneity due to calendar time and exposure time. However, it is not design free as the number and length of periods can be specific to each study.
One may make the following simplifications of estimands based on content knowledge.

The \textit{constant treatment effect structure} assumes that $\Delta_j(d)$ is invariant across all $j$ and $d$, that is, $\Delta_j(d) = \Delta$, and leads to a univariate target estimand $\Delta$.

The \textit{duration-specific treatment effect structure} considers $\Delta_j(d)$ to be constant across $j$, but vary by $d$, that is, $\Delta_j(d)=\Delta(d)$.
The target estimand in this case can be any function of $\bm{\Delta}^D=(\Delta(1),...,\Delta(J-1))^\top \in \mathbb{R}^{J-1}$.

The \textit{period-specific treatment effect structure} allows the treatment effect to vary by the calendar period, but not the treatment duration, that is, $\Delta_j(d)=\Delta_j$.
Of note, although we can conceptualize $\Delta_J$ as the treatment effect in period $J$, it is not identifiable because $Y_{iJk}(0)$ is never observable by design in a SW-CT \citep{chen_model-assisted_2025}. Therefore, we do not further address $\Delta_J$ in the period-specific treatment effect setting.
The target estimand in this case can be any function of $\bm{\Delta}^P = (\Delta_{2},...,\Delta_{J-1})^\top \in \mathbb{R}^{J-2}$.
Among these different estimands, we will particularly highlight the period-average treatment effect for the overlap population (P-ATO), generally defined as
\[
    \Delta^{P-ATO}  = \frac{ \sum_{j=1}^J \lambda_{j} \Delta_j}{\sum_{j=1}^J \lambda_{j}}
\]
with weights $\lambda_{j}=\pi_j^s(1-\pi_j^s)$ being the tilting function that generates the overlap propensity weights \citep{li_balancing_2018}, where $\pi_j = P(Z_i=j)$ such that $\pi_j^s = \sum_{l=1}^j \pi_l$ is the proportion of clusters receiving treatment in period $j$.
Since the probability of treatment assignment depends on the given period, 
the tilting function helps define the target population for which treatment effects are estimated, emphasizing periods with better treatment overlap \citep{li_balancing_2018,li_propensity_2019}.
Accordingly, the baseline and all-exposed periods lacking treatment positivity in the SW-CRT do not contribute any information to the P-ATO estimand.

Finally, the \textit{saturated treatment effect structure} can be used when the treatment effects are expected to vary by both the calendar time and exposure time.
Similar to the period-specific treatment effect structure, the treatment effects in period $J$, $(\Delta_J(1),...,\Delta_J(J))$, are not identifiable and thus excluded.
In this assumption-lean set up, we write $\bm{\Delta}^S = (\Delta_{2}(1), \Delta_{3}(1), \Delta_{3}(2), ..., \Delta_{J-1}(1), ..., \Delta_{J-1}(J-2))^\top \in \mathbb{R}^{(J-2)(J-1)/2}$.

\subsection{Introducing an equivalent definition of potential outcomes}
\label{app.sect:PO_definition}

Above, we denote $Y_{ijk}(z)$ as the potential outcome of individual $k$ in cluster $i$ during period $j$ had the cluster been first treated in period $z$.
In order to simplify the notation in the proof of our theoretical results, we describe an alternative definition of potential outcomes, as in \citet{wang_how_2024}, and show its equivalence to the above definition of potential outcomes.

Let $\widetilde{Y}_{ijk}(d)$ denote the potential outcome of individual $k$ in cluster $i$ during period $j$ had the cluster first been treated for $d$ ($0<d<j$) periods already. In addition, $\widetilde{Y}_{ijk}(0)$ denotes the untreated potential outcome. We next show these two potential outcome definitions, $\widetilde{Y}_{ijk}(d)$ and $Y_{ijk}(z)$ are mathematically equivalent and can be used interchangeably in proving the subsequent technical results.
First $\widetilde{Y}_{ijk}(0)$ and $Y_{ijk}(0)$ are both potential outcomes in the absence of treatment in period $j$.
When $d>0$, $\widetilde{Y}_{ijk}(d)$ refers to the individual potential outcome in period $j$ had the cluster been treated for $d$ ($0<d<j$) periods, which is equivalent to the individual potential outcome in period $j$ had the treatment started in period $j-d+1$, i.e., $Y_{ijk}(z)$ where $z=j-d+1$. 
Equivalently, when $2 \leq z \leq j, Y_{ijk}(z)$ and $\widetilde{Y}_{ijk}(j-z+1)$ both refer to the individual potential outcome in period $j$ had the treatment started in period $z$.
With the above elaboration, we arrive at $Y_{ijk}(0) = \widetilde{Y}_{ijk}(0)$ and $Y_{ijk}(z)=\widetilde{Y}_{ijk}(j-z+1)$, which establishes a bijection between two potential outcome definitions.

Denoting $d=j-z+1 >0$ and $z=j-d+1$, the causal consistency assumption can be alternatively stated as
\[
\begin{split}
    Y_{ijk} &= \sum_{z=2}^j I\{Z_i = z\}Y_{ijk}(z) + I\{Z_i > j\}Y_{ijk}(0) \\
    &= \sum_{z=2}^j I\{Z_i = j-d+1\}\widetilde{Y}_{ijk}(j-z+1) + I\{Z_i > j\}\widetilde{Y}_{ijk}(0)
\end{split}
\]
In addition, Assumption A1 remains the same under the alternative definition of potential outcomes because the given set $\{Y_{ijk}(z) : k=1 ,..., N_i, 2 \leq z \leq j \leq J\}$ is equal to $\{\widetilde{Y}_{ijk}(d) : k=1 ,..., N_i, 1 \leq d < j \leq J\}$.

In the following proofs, we will use the potential outcome definition $\widetilde{Y}_{ijk}(d)$ in place of $Y_{ijk}(z)$. This is because our treatment effect estimand is defined on $Y_{ijk}(j-d+1)$, which is equal to $\widetilde{Y}_{ijk}(d)$. Therefore, using $\widetilde{Y}_{ijk}(d)$ simplifies the notation.
Finally, we further simplify the notation from $\widetilde{Y}_{ijk}(d)$ to $Y_{ijk}(d)$ in our proofs, which improves readability without loss of clarity.
A similar potential outcomes definition can be extended to other longitudinal CT designs, including PB-CTs and CXOs.

\subsection{Notation and connecting observed to potential outcomes}

We first introduce a few definitions. Recall that, in the prior section (Section \ref{app.sect:PO_definition}), we define $Y_{ijk}(d)$ as the potential outcome of individual $k$ in cluster $i$ during period $j$ had the cluster been treated for $d$ ($0<d<j$) periods already. In addition, $Y_{ijk}(0)$ denotes the untreated potential outcome. Let
\[
\begin{split}
    \bm{Y}_{ij} &= (Y_{ij1},...,Y_{ijN_i})^\top \in \mathbb{R}^{N_i} \,, \\
    \bm{Y}_{ij}(d) &= (Y_{ij1}(d),...,Y_{ijN_i}(d))^\top \in \mathbb{R}^{N_i} \,, \\ 
    \bm{Y}_{i} &= (\bm{Y}_{i1}^\top,...,\bm{Y}_{iJ}^\top)^\top \in \mathbb{R}^{N_iJ} \,, \\
    \bm{Y}_{i}(d) &= (\bm{Y}_{i1}(d)^\top,...,\bm{Y}_{iJ}(d)^\top)^\top \in \mathbb{R}^{N_iJ} \,,\\
    \bm{\Lambda}_{Z_i}^d &= 
    \begin{pmatrix}
        0 & & & & & \\
        & \ddots & & & & \\
        & & 0 & & & \\
        & & & I\{Z_i=2\} & & \\
        & & & & \ddots & \\
        & & & & & I\{Z_i=J-d+1\} \\
    \end{pmatrix} \in \mathbb{R}^{J \times J} \text{ for } d=1,...,J-1 \,, \\
    \bm{\Delta}_{Z_i} &= diag\{I\{Z_i \leq j\}: j=1,...,J\} \in \mathbb{R}^{J \times J}
\end{split}
\]
where $\bm{\Delta}_{Z_i} = \sum_{d=1}^{J} \bm{\Lambda}_{Z_i}^d$.
To clarify, in $\bm{\Lambda}_{Z_i}^d$, the 0's on the diagonal preceding $I\{Z_i = 1\}$ are arbitrarily long such that the diagonal entries of $\bm{\Lambda}_{Z_i}^d$ are a combination of $(0,...,0)^\top \in \mathbb{R}^{d}$ and $(I\{Z_i = 2\} ,..., I\{Z_i = J-d+1\})^\top \in \mathbb{R}^{J-d}$.
We illustrate this below with a 3 cluster, 4 period example
\[
\bm{\Lambda}_{Z_i}^d = \,\,
\begin{matrix}
    & i=1 & i=2 & i=3 \\
    d=1 &
        \begin{pmatrix} 0 & & & \\ & 1 & & \\ & & 0 & \\ & & & 0 \end{pmatrix} \,, & 
        \begin{pmatrix} 0 & & & \\ & 0 & & \\ & & 1 & \\ & & & 0 \end{pmatrix} \,, & 
        \begin{pmatrix} 0 & & & \\ & 0 & & \\ & & 0 & \\ & & & 1 \end{pmatrix} 
    \\
    d=2 & 
        \begin{pmatrix} 0 & & & \\ & 0 & & \\ & & 1 & \\ & & & 0 \end{pmatrix} \,, & 
        \begin{pmatrix} 0 & & & \\ & 0 & & \\ & & 0 & \\ & & & 1 \end{pmatrix} \,, & 
        \begin{pmatrix} 0 & & & \\ & 0 & & \\ & & 0 & \\ & & & 0 \end{pmatrix} 
    \\
    d=3 & 
        \begin{pmatrix} 0 & & & \\ & 0 & & \\ & & 0 & \\ & & & 1 \end{pmatrix} \,, & 
        \begin{pmatrix} 0 & & & \\ & 0 & & \\ & & 0 & \\ & & & 0 \end{pmatrix} \,, & 
        \begin{pmatrix} 0 & & & \\ & 0 & & \\ & & 0 & \\ & & & 0 \end{pmatrix} 
    \\
\end{matrix}
\]
for clusters $i=1,2,3$ corresponding to exposure duration $d=1,2,3$. Accordingly
\[
    \bm{\Delta}_{Z_i} = \left\{
        \begin{pmatrix} 0 & & & \\ & 1 & & \\ & & 1 & \\ & & & 1 \end{pmatrix} \,,
        \begin{pmatrix} 0 & & & \\ & 0 & & \\ & & 1 & \\ & & & 1 \end{pmatrix} \,,
        \begin{pmatrix} 0 & & & \\ & 0 & & \\ & & 0 & \\ & & & 1 \end{pmatrix} 
    \right\} \,,
\]
corresponding to clusters $i=1,2,3$, respectively.

Of note, $\bm{Y}_{ij}(d)$ for $d \geq j$ is not identifiable with the observed data. Here we still define them for notational convenience, but these quantities will not be evaluated throughout the proof. An alternative approach is to simply define $\bm{Y}_{ij}(d)=0$ (or an arbitrary constant quantity) for $d \geq j$.
We further define $\mathbf{I}_q \in \mathbb{R}^{q \times q}$ as the identity matrix for any positive integer $q$, and $\otimes$ as the Kronecker product operator.
With the above definitions, we observe in a SW-CRT that
\[
\begin{split}
    \bm{Y}_{i1} & = \bm{Y}_{i1}(0)\\
    \bm{Y}_{i2} &= I\{Z_i > 2\}\bm{Y}_{i2}(0) + I\{Z_i=2\}\bm{Y}_{i2}(1) \\
    \bm{Y}_{i3} &= I\{Z_i > 3\}\bm{Y}_{i3}(0) + I\{Z_i=3\}\bm{Y}_{i3}(1) + I\{Z_i=2\}\bm{Y}_{i3}(2) \\
    \vdots\\
    \bm{Y}_{iJ} &= I\{Z_i > J\}\bm{Y}_{iJ}(0) + I\{Z_i=J\}\bm{Y}_{iJ}(1) + I\{Z_i=J-1\}\bm{Y}_{iJ}(2) +...+ I\{Z_i=2\}\bm{Y}_{iJ}(J-1) \,.
\end{split}
\]
Altogether
\begin{equation}
\label{app.eq:PO}
\begin{split}
    \bm{Y}_i &= \left[\sum_{d=1}^{J-1}(\bm{\Lambda}_{Z_i}^d \otimes \textbf{I}_{N_i})\bm{Y}_i(d)\right] + \{(\textbf{I}_{J}-\bm{\Delta}_{Z_i}) \otimes \textbf{I}_{N_i}\} \bm{Y}_i(0) \\
    &= \left[\sum_{d=1}^{J-1}(\bm{\Lambda}_{Z_i}^d \otimes \textbf{I}_{N_i})\{\bm{Y}_i(d)-\bm{Y}_i(0)\}\right] + \{\textbf{I}_{J} \otimes \textbf{I}_{N_i}\}\bm{Y}_i(0) \\
    &= \left[\sum_{d=1}^{J-1}(\bm{\Lambda}_{Z_i}^d \otimes \textbf{I}_{N_i})\{\bm{Y}_i(d)-\bm{Y}_i(0)\}\right] + \bm{Y}_i(0)
\end{split}
\end{equation}
which can be interpreted as the observations' potential outcomes under the control ($\bm{Y}_{i}(0)$) added by the corresponding treatment effect ($\bm{Y}_{i}(d)-\bm{Y}_{i}(0))$, depending on the specified treatment effect structure.

For the proof's working linear fixed-effects model, we focus on the case of a within cluster-period exchangeable correlation structure, since the working independence assumption (ordinary least squares estimation) is just a special case with some variance parameters set to zero.

\subsection{Constant Treatment Effect}
\label{app.sect:constant}

We start by proving the results for the constant treatment effect $\Delta$. The linear fixed-effects model can be re-written as
\begin{equation}
    \label{app.eq:Y_constant}
    \bm{Y}_i = (
    \textbf{I}_{J}
    \otimes \bm{1}_{N_i})\bm{\beta}_0 + \left((\bm{\Delta}_{Z_i}\bm{1}_{J}) \otimes \bm{1}_{N_i}\right) \beta_Z + (\bm{1}_{J} \otimes \bm{X}_i) \bm{\beta_X} + \alpha_i \bm{1}_{N_iJ} + \bm{\gamma}_i \otimes \bm{1}_{N_i} + \bm{\epsilon}_i \,,
\end{equation}
where $\bm{\beta}_0 = (\beta_{01},...,\beta_{0J})^{\top}$ for the period effects (with $\beta_{01}=0$ for identifiability), $\bm{1}_q$ is a $q$-dimensional vector of ones, $\bm{X}_i=(\bm{X}_{i1},...,\bm{X}_{1N_i})^\top \in \mathbb{R}^{N_i \times p}$ indicating the $p$ covariates, $\alpha_i$ are the cluster fixed intercepts, $\bm{\gamma}_i=(\gamma_{i1},...,\gamma_{iJ})^\top$ for the cluster-period random effects inducing within-cluster-period correlation, and $\bm{\epsilon}_i = (\epsilon_{i11}, ..., \epsilon_{i1N_i}, ..., \epsilon_{iJ1}, ..., \epsilon_{iJN_i})^\top$ are the residuals.
Accordingly, we can rewrite the above equation as:
\begin{equation}
    \label{app.eq:Y_constant_Q}
    \bm{Y}_i = \bm{Q}_{0,i}\bm{\beta}_0 + \bm{Q}_{Z,i}\beta_Z + \bm{Q}_{X,i}\bm{\beta}_X + \bm{\mathcal{C}}_i + \bm{\gamma}_i \otimes \bm{1}_{N_i} + \bm{\epsilon}_i
\end{equation}
where $\bm{Q}_{0,i} = \textbf{I}_{J} \otimes \bm{1}_{N_i} \in \mathbb{R}^{N_iJ \times J}$, $\bm{Q}_{Z,i} = (\bm{\Delta}_{Z_i} \bm{1}_{J}) \otimes \bm{1}_{N_i} \in \mathbb{R}^{N_iJ \times 1}$, $\bm{Q}_{X,i} = \bm{1}_{J} \otimes \bm{X}_i \in \mathbb{R}^{N_iJ \times p}$, and $\bm{\mathcal{C}}_i=\alpha_i \bm{1}_{N_iJ} \in \mathbb{R}^{N_iJ \times 1}$, are the fixed indicators for period effects, constant treatment effect, time-varying covariates, and cluster intercepts, respectively.
Then, the working model becomes
\begin{equation}
    \label{app.eq:Y_joint}
    \bm{Y}_i|  Z_i, \bm{X}_i, \bm{\mathcal{C}}_i, N_i     \sim N(\bm{Q}_i\bm{\beta} + \bm{\mathcal{C}}_i, \bm{\Sigma}_i) \,,
\end{equation}
where $\bm{Q}_i=(\bm{Q}_{0,i}, \bm{Q}_{Z,i}, \bm{Q}_{X,i}) \in \mathbb{R}^{N_iJ \times (J+1+p)}$, $\bm{\beta}=(\bm{\beta}_0^{\top}, \beta_Z^{\top},\bm{\beta}_X^{\top})^{\top} \in \mathbb{R}^{J+1+p}$, and $\bm{\Sigma}_i=\kappa^2 \textbf{I}_{J} \otimes (\bm{1}_{N_i}\bm{1}_{N_i}^\top) + \sigma^2 \textbf{I}_{N_iJ} \in \mathbb{R}^{N_iJ \times N_iJ}$.
Notably, the above equation assumes there are $N_i$ individuals in each cluster, who can be repeatedly observed over $J$ time periods. Accordingly, the period indicator $\bm{Q}_{0,i}$ is dependent on the total cluster size $N_i$ (with the first $N_i$ entries corresponding to period $j=1$, and so on); therefore, Equation \ref{app.eq:Y_joint} is implicitly dependent on the period indicator.

\sloppy Recall that $N_i$ represents the source population size (may be unknown) and $N_{ij} = \sum_{k=1}^{N_i} S_{ijk}$, where $S_{ijk}=1$ indicates enrolled subjects.
Where $\bm{Y}_{ij} \in \mathbb{R}^{N_i}$, let $\bm{Y}_{ij}^o$ be the observed outcome vector for cluster $i$ in period $j$, and $\bm{D}_{ij} \in \mathbb{R}^{N_i \times N_{ij}}$ be a matrix, such that $\bm{Y}_{ij}^o = \bm{D}_{ij}^\top \bm{Y}_{ij} = (Y_{ijk})_{k:S_{ijk}=1} \in \mathbb{R}^{N_{ij}}$. Specifically, the $k$-th row of $\bm{D}_{ij}$ is a binary row vector with the $\tilde{k}$-th entry being 1 and the rest being 0, where $\tilde{k}$ is the $k$-th smallest index such that $S_{ijk}=1$.
To clarify, consider a simple example with $N_i=4$ and $S_{ijk}=1: \forall \, k \neq 2$, then we have
\[
\bm{Y}_{ij}^o = \bm{D}_{ij}^\top \bm{Y}_{ij}
= \begin{bmatrix}
    1 & 0 & 0 & 0 \\
    0 & 0 & 1 & 0 \\
    0 & 0 & 0 & 1 \\
\end{bmatrix}
\begin{bmatrix}
    Y_{ij1} \\ Y_{ij2} \\ Y_{ij3} \\ Y_{ij4}
\end{bmatrix}
= \begin{bmatrix}
    Y_{ij1} \\ Y_{ij3} \\ Y_{ij4}
\end{bmatrix} \,.
\]
That is, $\bm{D}_{ij}$ is a deterministic function of $(S_{ij1},...,S_{ijN_i}, N_i)$. Let $\mathcal{N}_i =\sum_{j=1}^{J}N_{ij} =\sum_{j=1}^J\sum_{k=1}^{N_i}S_{ijk}$ be the total number of enrolled subjects in cluster $i$, $\bm{S}_i = (S_{i11},...,S_{iJN_i})^{\top} \in \mathbb{R}^{N_iJ}$, and 
$\bm{D}_i = bdiag\{D_{ij} : j=1,...,J\} \in \mathbb{R}^{N_i J \times \mathcal{N}_i}$ denote the block diagonal matrix of $\bm{D}_{ij}$, and $\bm{Y}_i^o = (\bm{Y}_{i1}^o,...,\bm{Y}_{iJ}^o)^{\top} \in \mathbb{R}^{\mathcal{N}_i}$. Then, we have $\bm{Y}_i^o = \bm{D}_i^\top \bm{Y}_i$, $\bm{D}_i^\top \bm{1}_{N_i J}=\bm{1}_{\mathcal{N}_i}$, $\bm{D}_i \bm{1}_{\mathcal{N}_i} = \bm{S}_i$, $\bm{D}_i\bm{D}_i^\top = diag\{\bm{S}_i\}$, and $\bm{D}_i^\top \bm{D}_i = \textbf{I}_{\mathcal{N}_i}$.
Finally, the observed outcome follows
\begin{equation}
    \label{app.eq:Y_joint_enrolled}
    \bm{Y}_i^o | Z_i, \bm{X}_i^o, \bm{\mathcal{C}}_i^o, N_i, \bm{S}_i 
    \sim N(\bm{D}_i^\top \bm{Q}_i\bm{\beta} + \bm{D}_i^\top \bm{\mathcal{C}}_i, \bm{D}_i^\top\bm{\Sigma}_i\bm{D}_i)
\end{equation}
where $\bm{X}_i^o=\bm{D}_i^\top(\bm{1}_{J} \otimes \bm{X}_i)$ is the observed covariate matrix for cluster $i$ across periods.

Importantly, as the number of clusters $m \rightarrow \infty$, the number of parameters $\alpha_i$ (also denoted as $\bm{\mathcal{C}}_i$) in Equations \ref{app.eq:Y_constant_Q} \& \ref{app.eq:Y_joint} will also $\rightarrow \infty$. This is an example of the ``incidental parameters problem'' described in Neyman \& Scott \citep{neyman_consistent_1948} and violates regularity condition 3 in Lemma \ref{app.lemma:variance}.
Importantly, the linear fixed-effects model with cluster dummy variables and potential within-cluster, between-period correlation, yields an equivalent estimator to the ``within-transformed'' estimator in this linear setting \citep{kiefer_estimation_1980,wooldridge_econometric_2010}.
Conveniently, we can then avoid the incidental parameters problem and remove the incidental parameters $\bm{\mathcal{C}}_i$ from the resulting linear fixed-effects likelihood and score function by using this ``within-transformation''.

To derive the ``within-transformed'' estimator assuming full enrollment ($S_{ijk}=1 \, \forall \, i,j,k$), the within-cluster averages can first be derived from Equation (\ref{app.eq:Y_constant_Q}) as
\begin{equation}
\label{app.eq:barY_joint}
    \overline{Y}_i = \overline{\bm{Q}}_{0,i}\beta_0 + \overline{Q}_{Z,i}\beta_Z + \overline{\bm{Q}}_{X,i}\bm{\beta}_{X} + \alpha_i + \overline{\gamma}_i + \overline{\epsilon}_i \,,
\end{equation}
where
$\overline{Y}_i = \frac{1}{N_iJ} \bm{1}_{N_iJ}^\top \bm{Y}_i \in \mathbb{R}^{1}$,
$\overline{\bm{Q}}_{0,i} = \frac{1}{N_iJ} \bm{1}_{N_iJ}^\top \bm{Q}_{0,i} = \frac{1}{J} \bm{1}_J^\top \in \mathbb{R}^{1 \times J}$,
$\overline{Q}_{Z,i} = \frac{1}{N_iJ} \bm{1}_{N_iJ}^\top \bm{Q}_{Z,i} = \frac{1}{J} \bm{1}_{J}^\top \bm{\Delta}_{Z_i} \bm{1}_{J} \in \mathbb{R}^{1}$,
$\overline{\bm{Q}}_{X,i} = \frac{1}{N_iJ} \bm{1}_{N_iJ}^\top \bm{Q}_{X,i} \in \mathbb{R}^{1 \times p}$,
$\overline{\gamma}_i = \frac{1}{J}\bm{1}_{J}^\top \bm{\gamma}_i \in \mathbb{R}^1$,
and $\overline{\epsilon}_i = \frac{1}{N_iJ} \bm{1}_{N_iJ}^\top \bm{\epsilon}_i \in \mathbb{R}^1$.
Multiplying the above equation by $\bm{1}_{N_iJ}$ and subtracting it from Equation (\ref{app.eq:Y_constant_Q}), allows us to rewrite the linear mixed model as the following ``within-transformed'' equation
\begin{equation}
\label{app.eq:ddotY}
    \ddot{\bm{Y}}_i = \ddot{\bm{Q}}_{0,i}\bm{\beta}_0 + \ddot{\bm{Q}}_{Z,i}\beta_Z + \ddot{\bm{Q}}_{X,i}\bm{\beta}_{X} + \ddot{\bm{\gamma}}_i \otimes \bm{1}_{N_i} + \ddot{\bm{\epsilon}}_i \,
\end{equation}
where $\ddot{\bm{Y}}=\bm{Y}_i - \overline{Y}_i \otimes \bm{1}_{N_iJ}$,
$\ddot{\bm{Q}}_{0,i}= \bm{Q}_{0,i} - \overline{\bm{Q}}_{0,i} \otimes \bm{1}_{N_iJ}$, $\ddot{\bm{Q}}_{Z,i} = \bm{Q}_{Z,i} - \overline{Q}_{Z,i} \otimes \bm{1}_{N_iJ}$, $\ddot{\bm{Q}}_{X,i} = \bm{Q}_{X,i} - \overline{\bm{Q}}_{X,i} \otimes \bm{1}_{N_iJ}$,
$\ddot{\bm{\gamma}}_i = \bm{\gamma}_i - \overline{\gamma}_i \otimes \bm{1}_{N_iJ}$,
and $\ddot{\bm{\epsilon}}_i = \bm{\epsilon}_i - \overline{\epsilon}_i \otimes \bm{1}_{N_iJ}$. 
Equation (\ref{app.eq:ddotY}) is commonly referred to as the ``within-transformed Fixed-Effects'' estimator \citep{kiefer_estimation_1980,cameron_microeconometrics_2005,wooldridge_econometric_2010}.
To reiterate, the equivalence between the within-transformed specification (Equation \ref{app.eq:transfY_joint}) and previously defined linear fixed-effects model (Equation \ref{app.eq:Y_joint}) results from the \textit{orthogonality} of residuals $\bm{Y}_i-(\bm{Q}_i\bm{\beta} + \bm{\mathcal{C}}_i)$ and model covariates $(\bm{Q}_i, \bm{\mathcal{C}}_i)$ in this linear setting, such that 
$(\bm{Q}_i, \bm{C}_i)^\top \{\bm{Y}_i-(\bm{Q}_i\bm{\beta} + \bm{\mathcal{C}}_i)\} = 0$
, allowing us to project out the cluster covariates $\bm{\mathcal{C}}_i$ \citep{wooldridge_econometric_2010}. 

In practice, the within-transformation is applied using only the enrolled subjects and can be implemented via matrix multiplication with the following symmetric matrix 
\begin{equation}
\label{app.eq:within-transformation-mat}
    \bm{\mathcal{M}}_i \equiv \textbf{I}_{N_iJ} - \bm{S}_i (\bm{S}_i^\top \bm{S}_i)^{-1} \bm{S}_i^\top = \textbf{I}_{N_iJ} - \frac{1}{\mathcal{N}_i}\bm{S}_i \bm{S}_i^\top
\end{equation}
where recall $\mathcal{N}_i=\sum_{j=1}^JN_{ij} = \sum_{j=1}^J \sum_{k=1}^{N_i}S_{ijk}$ being the total number of enrolled subjects in cluster $i$, and $\bm{S}_i =(S_{i11},...,S_{iJN_i})^\top$.
If all individuals in cluster $i$ are enrolled ($S_{ijk}=1 \, \forall \, i,j,k$), then $\bm{\mathcal{M}}_i \equiv  \textbf{I}_{N_iJ} - \bm{1}_{N_iJ} (\bm{1}_{N_iJ}^\top \bm{1}_{N_iJ})^{-1} \bm{1}_{N_iJ}^\top = \textbf{I}_{N_iJ} - \frac{1}{N_iJ}\bm{1}_{N_iJ}\bm{1}_{N_iJ}^\top$ \citep{kiefer_estimation_1980,wooldridge_econometric_2010}.
The linear fixed-effects ``within-transformed'' working model is then
\begin{equation}
    \label{app.eq:ddotY_joint}
    \ddot{\bm{Y}}_i| Z_i, \bm{X}_i, N_i, \bm{S}_i \sim N(\ddot{\bm{Q}}_i\bm{\beta}, \ddot{\bm{\Sigma}}_i) \,,
\end{equation}
where $\ddot{\bm{Q}}_i=(\ddot{\bm{Q}}_{0,i}, \ddot{\bm{Q}}_{Z,i}, \ddot{\bm{Q}}_{X,i}) = (\bm{\mathcal{M}}_i \bm{Q}_{0,i}, \bm{\mathcal{M}}_i \bm{Q}_{Z,i}, \bm{\mathcal{M}}_i \bm{Q}_{X,i}) \in \mathbb{R}^{N_iJ \times (J+1+p)}$, $\bm{\beta}=(\bm{\beta}_0^\top, \beta_Z^\top,\bm{\beta}_{X}^\top)^\top$, and $\ddot{\bm{\Sigma}}_i= \bm{\mathcal{M}}_i \bm{\Sigma}_i \bm{\mathcal{M}}_i$ \citep{kiefer_estimation_1980,wooldridge_econometric_2010}, or equivalently
\begin{equation}
    \label{app.eq:transfY_joint}
    \bm{\mathcal{M}}_i \bm{Y}_i| Z_i, \bm{X}_i, N_i, \bm{S}_i \sim N(\bm{\mathcal{M}}_i \bm{Q}_i\bm{\beta}, \bm{\mathcal{M}}_i \bm{\Sigma}_i \bm{\mathcal{M}}_i) \,.
\end{equation}
Altogether, the observed fixed-effects within-transformed outcome $\ddot{\bm{Y}}_i^o = \bm{D}_i^\top \bm{\mathcal{M}}_i \bm{Y}_i$ follows 
\begin{equation}
    \label{app.eq:ddotY_joint_enrolled}
    \ddot{\bm{Y}}_i^o | Z_i, \bm{X}_i^o, N_i, \bm{S}_i 
    \sim N(\bm{D}_i^\top \ddot{\bm{Q}}_i\bm{\beta}, \bm{D}_i^\top\ddot{\bm{\Sigma}}_i\bm{D}_i) \,,
\end{equation}
or equivalently
\begin{equation}
    \label{app.eq:transfY_joint_enrolled}
    \bm{D}_i^\top \bm{\mathcal{M}}_i \bm{Y}_i | Z_i, \bm{X}_i^o, N_i, \bm{S}_i  \sim N(\bm{D}_i^\top \bm{\mathcal{M}}_i \bm{Q}_i\bm{\beta}, \bm{D}_i^\top\bm{\mathcal{M}}_i \bm{\Sigma}_i\bm{\mathcal{M}}_i \bm{D}_i) \,.
\end{equation}

Denote $\bm{\theta}=(\bm{\beta}, \kappa^2, \sigma^2)^{\top}$ as the vector of unknown parameters.
Based on the observed data, the log-likelihood function given $\{Z_i, \bm{X}_i^{o}, N_i, \bm{S}_i\}$ is
\[
\begin{split}
    & l(\bm{\theta}; \{\ddot{\bm{Y}}_i^o\}_{i=1}^{m} | \{Z_i, \bm{X}_i^o,  N_i, \bm{S}_i\}_{i=1}^{m}) \\
    & = C - \frac{1}{2} \left( \sum_{i=1}^{m} log(|\bm{D}_i^\top \ddot{\bm{\Sigma}}_i \bm{D}_i|)\right) + (\ddot{\bm{Y}}_i^o - \bm{D}_i^\top \ddot{\bm{Q}}_i\bm{\beta})^\top (\bm{D}_i^\top \ddot{\bm{\Sigma}}_i \bm{D}_i)^{-1} (\ddot{\bm{Y}}_i^o - \bm{D}_i^\top \ddot{\bm{Q}}_i\bm{\beta}) \\
    & = C - \frac{1}{2} \left(\sum_{i=1}^{m} log(|\bm{D}_i^\top \ddot{\bm{\Sigma}}_i \bm{D}_i|)\right) + (\ddot{\bm{Y}}_i - \ddot{\bm{Q}}_i\bm{\beta})^\top \bm{D}_i (\bm{D}_i^\top \ddot{\bm{\Sigma}}_i \bm{D}_i)^{-1} \bm{D}_i^\top (\ddot{\bm{Y}}_i - \ddot{\bm{Q}}_i\bm{\beta})
\end{split}
\]
where $C$ is a constant independent of  parameters $\bm{\theta}$.
The score function, being the derivative of the log-likelihood function, is then
\[
\begin{split}
    & \frac{l(\bm{\theta}; \{\ddot{\bm{Y}}_i^o\}_{i=1}^{m} | \{Z_i, \bm{X}_i^o, N_i, \bm{S}_i\}_{i=1}^{m})}{d\bm{\theta}} \\
    & = -\sum_{i=1}^{m}
        \left(
        \begin{gathered}
            2\ddot{\bm{Q}}_i^\top \ddot{\bm{V}}_i (\ddot{\bm{Y}}_i - \ddot{\bm{Q}}_i\bm{\beta}) \\
            -\bm{1}_{N_i}^\top \ddot{\bm{V}}_i \bm{1}_{N_i} + (\ddot{\bm{Y}}_i - \ddot{\bm{Q}}_i\bm{\beta})^\top \ddot{\bm{V}}_i (\bm{1}_{N_iJ} \bm{1}_{N_iJ}^\top) \ddot{\bm{V}}_i (\ddot{\bm{Y}}_i - \ddot{\bm{Q}}_i\bm{\beta}) \\
            -tr(\ddot{\bm{V}}_i \textbf{I}_J \otimes (\bm{1}_{N_i}\bm{1}_{N_i}^\top)) + (\ddot{\bm{Y}}_i - \ddot{\bm{Q}}_i\bm{\beta})^\top \ddot{\bm{V}}_i (\textbf{I}_J \otimes (\bm{1}_{N_i} \bm{1}_{N_i}^\top)) \ddot{\bm{V}}_i (\ddot{\bm{Y}}_i - \ddot{\bm{Q}}_i\bm{\beta})
        \end{gathered}
        \right)
\end{split}
\]
\citep{wang_how_2024} where $\ddot{\bm{V}}_i = \bm{D}_i (\bm{D}_i^\top \ddot{\bm{\Sigma}}_i \bm{D}_i)^{-1} \bm{D}_i^\top \in \mathbb{R}^{N_iJ \times N_iJ}$ and $tr(\ddot{\bm{V}}_i)$ is the trace of $\ddot{\bm{V}}_i$.
We hence define the estimating function as
\begin{equation}
    \bm{\psi}(\bm{O};\bm{\theta}) =  \left(
        \begin{gathered}
            2\ddot{\bm{Q}}^\top \ddot{\bm{V}} (\ddot{\bm{Y}} - \ddot{\bm{Q}}\bm{\beta}) \\
            -\bm{1}_{N}^\top \ddot{\bm{V}} \bm{1}_{N} + (\ddot{\bm{Y}} - \ddot{\bm{Q}}\bm{\beta})^\top \ddot{\bm{V}} (\bm{1}_{NJ} \bm{1}_{NJ}^\top) \ddot{\bm{V}} (\ddot{\bm{Y}} - \ddot{\bm{Q}}_i\bm{\beta}) \\
            -tr(\ddot{\bm{V}} \textbf{I}_{J} \otimes (\bm{1}_{N}\bm{1}_{N}^\top)) + (\ddot{\bm{Y}} - \ddot{\bm{Q}}\bm{\beta})^\top \ddot{\bm{V}} (\textbf{I}_{J} \otimes (\bm{1}_{N} \bm{1}_{N}^\top)) \ddot{\bm{V}}(\ddot{\bm{Y}} - \ddot{\bm{Q}}\bm{\beta})
        \end{gathered}
        \right)
\end{equation}
for the within-transformed fixed-effects estimator. Recall that subscript $i$ is omitted when taking the expectation with respect to distribution $\mathcal{P}$ (Section \ref{app.sect:Theorems_Lemmas}).
The maximum likelihood estimator for $\bm{\theta}$ is defined as a solution to the estimating equation
\[
    \sum_{i=1}^{m} \bm{\psi}(\bm{O}_i;\bm{\theta})=0
\]

For the estimating equation $\bm{\psi}$, we prove the convergence and asymptotic normality of $\hat{\bm{\theta}}$ by applying Lemma \ref{app.lemma:variance}. In Lemma \ref{app.lemma:variance}, the conditions for $\bm{\psi}$ are assumed in the main paper as regularity conditions, which implies the desired results. In addition, the consistency of the sandwich variance estimators is also implied.
As previously mentioned, the regularity conditions of Lemma \ref{app.lemma:variance} still hold for the fixed-effects model, despite concerns of an incidental parameters problem.

Denoting $\underline{\bm{\theta}}=(\underline{\bm{\beta}}_0, \underline{\beta}_Z, \underline{\bm{\beta}}_{X}^\top, \underline{\kappa}^2, \underline{\sigma^2})^{\top}$
as the solution to $E[\bm{\psi}(\bm{O};\bm{\theta})]=0$, we next prove $\underline{\beta}_Z = \Delta$, which will imply the robustness of $\hat{\beta}_Z$ and completes the proof under the constant treatment effect structure specification.
To proceed, the first two groups of entries of $E[\bm{\psi}(\bm{O};\bm{\theta})]=0$ (corresponding to the period effect $\bm{\beta}_0$ and constant treatment effect $\beta_Z$) are
\begin{equation}
\label{app.eq:period}
    E[\ddot{\bm{Q}}_0^\top \underline{\ddot{\bm{V}}}(\ddot{\bm{Y}} - \ddot{\bm{Q}}\underline{\bm{\beta}})] = 0
\end{equation}
\begin{equation}
\label{app.eq:treatment}
     E[\ddot{\bm{Q}}_Z^\top \underline{\ddot{\bm{V}}}(\ddot{\bm{Y}} - \ddot{\bm{Q}}\underline{\bm{\beta}})] = 0
\end{equation}
corresponding to the score functions for the period and constant treatment effects, respectively, where$\underline{\ddot{\bm{V}}}$ is equal to $\ddot{\bm{V}}$ with $(\kappa^2, \sigma^2)$ replaced by $(\underline{\kappa}^2, \underline{\sigma}^2)$.

\subsubsection{Proof of Fixed-Effects model constant treatment effect consistency}

With Equation (\ref{app.eq:period})
\[
    E[\ddot{\bm{Q}}_0^\top \underline{\ddot{\bm{V}}}(\ddot{\bm{Y}} - \ddot{\bm{Q}}\underline{\bm{\beta}})] =  E[(\bm{\mathcal{M}}\bm{Q}_0)^\top \underline{\ddot{\bm{V}}}(\ddot{\bm{Y}} - \ddot{\bm{Q}}\underline{\bm{\beta}})] = E[\bm{Q}_0^\top \bm{\mathcal{M}} \underline{\ddot{\bm{V}}}(\ddot{\bm{Y}} - \ddot{\bm{Q}}\underline{\bm{\beta}})] = 0
\]
and Equation (\ref{app.eq:treatment})
\[
    E[\ddot{\bm{Q}}_Z^\top \underline{\ddot{\bm{V}}}(\ddot{\bm{Y}} - \ddot{\bm{Q}}\underline{\bm{\beta}})] =  E[(\bm{\mathcal{M}}\bm{Q}_Z)^\top \underline{\ddot{\bm{V}}}(\ddot{\bm{Y}} - \ddot{\bm{Q}}\underline{\bm{\beta}})] = E[\bm{Q}_Z^\top \bm{\mathcal{M}} \underline{\ddot{\bm{V}}}(\ddot{\bm{Y}} - \ddot{\bm{Q}}\underline{\bm{\beta}})]= 0 \,,
\]
respectively.
Define $E[\bm{\Delta}_Z \bm{1}_{J}]=(\pi_1^s,...,\pi_J^s)^\top \in \mathbb{R}^{J}$, where $\pi_j^s=\sum_{j'=1}^{j}\pi_{j'}$, representing the between-cluster, within-period averages of the constant treatment effect indicator.
By left-multiplying Equation (\ref{app.eq:period}) by $-E[\bm{\Delta}_Z \bm{1}_{J}]^\top$ (which would still =0) and adding it to Equation (\ref{app.eq:treatment}), we get
\begin{equation}
\label{app.eq:period_demean}
\begin{split}
    E&[\{\ddot{\bm{Q}}_Z-\ddot{\bm{Q}}_0E[\bm{\Delta}_Z \bm{1}_{J}]\}^\top \underline{\ddot{\bm{V}}} (\ddot{\bm{Y}}-\ddot{\bm{Q}}\underline{\bm{\beta}})] \\
    &=E[\{\bm{\mathcal{M}}(\bm{Q}_Z-\bm{Q}_0E[\bm{\Delta}_Z \bm{1}_{J}])\}^\top \underline{\ddot{\bm{V}}} \bm{\mathcal{M}} (\bm{Y}-\bm{Q}\underline{\bm{\beta}})] \\
    &= E\left[\left\{(\bm{\Delta}_Z \bm{1}_{J} - E[\bm{\Delta}_Z \bm{1}_{J}]) \otimes \bm{1}_N \right\}^\top \bm{\mathcal{M}}  \underline{\ddot{\bm{V}}} \bm{\mathcal{M}} (\bm{Y} - \bm{Q}\underline{\bm{\beta}})\right] \\ 
    &= 0 
\end{split}
\end{equation}
As we will further demonstrate below, doing so will help to remove the contribution of the period effects from the treatment effect score function (Equation \ref{app.eq:treatment}), and isolate the contribution of the only treatment effect.

To clarify, we provide an illustrative example from a 3 cluster, 4 period SW-CRT:
\[
\bm{\Delta}_{Z_i} \bm{1}_J = \left\{ 
    \begin{bmatrix}
        0 \\
        1 \\
        1 \\
        1
    \end{bmatrix},
    \begin{bmatrix}
        0 \\
        0 \\
        1 \\
        1
    \end{bmatrix},    
    \begin{bmatrix}
        0 \\
        0 \\
        0 \\
        1
    \end{bmatrix}
\right\};
\frac{1}{m}\sum_{i=1}^{m}(\bm{\Delta}_{Z_i} \bm{1}_J) =
    \begin{bmatrix}
        0 \\
        1/3 \\
        2/3 \\
        3/3
    \end{bmatrix} ;
\]
\[
\bm{\Delta}_{Z_i} \bm{1}_J - \frac{1}{m}\sum_{i=1}^{m}(\bm{\Delta}_{Z_i} \bm{1}_J) = \left\{ 
    \begin{bmatrix}
        0 \\
        2/3 \\
        1/3 \\
        0
    \end{bmatrix},
    \begin{bmatrix}
        0 \\
        -1/3 \\
        1/3 \\
        0
    \end{bmatrix},    
    \begin{bmatrix}
        0 \\
        -1/3 \\
        -2/3 \\
        0
    \end{bmatrix}
\right\}
\]
with entries corresponding to clusters $i=1,2,3$, respectively.

We can then subtract the following components from Equation \ref{app.eq:period_demean}.
\[
\begin{split}
    E&\left[\left\{(\bm{\Delta}_Z \bm{1}_{J} - E[\bm{\Delta}_Z \bm{1}_{J}]) \otimes \bm{1}_N \right\}^\top \bm{\mathcal{M}}  \underline{\ddot{\bm{V}}} \bm{\mathcal{M}} \bm{Q}_0\right] \\
    &= E\left[(\bm{\Delta}_Z \bm{1}_{J} - E[\bm{\Delta}_Z \bm{1}_{J}])^\top (\textbf{I}_J \otimes \bm{1}_N)^\top \bm{\mathcal{M}}  \underline{\ddot{\bm{V}}} \bm{\mathcal{M}} \bm{Q}_0\right] \\
    &= E\left[E\left[(\bm{\Delta}_Z \bm{1}_{J} - E[\bm{\Delta}_Z \bm{1}_{J}])^\top (\textbf{I}_J \otimes \bm{1}_N)^\top \bm{\mathcal{M}}  \underline{\ddot{\bm{V}}} \bm{\mathcal{M}} \bm{Q}_0 | N\right]\right] \\
\text{(A2} &\text{, Non-informative enrollment)}\\
    &= E\left[E\left[\bm{\Delta}_Z \bm{1}_{J} - E[\bm{\Delta}_Z \bm{1}_{J}] | N\right]^\top E\left[(\textbf{I}_J \otimes \bm{1}_N)^\top \bm{\mathcal{M}}  \underline{\ddot{\bm{V}}} \bm{\mathcal{M}} \bm{Q}_0 | N\right]\right] \\
\text{(A2} &\text{, Non-informative enrollment)}\\
    &= E\left[E\left[\bm{\Delta}_Z \bm{1}_{J} - E[\bm{\Delta}_Z \bm{1}_{J}] | N\right]^\top E\left[(\textbf{I}_J \otimes \bm{1}_N)^\top \bm{\mathcal{M}}  \underline{\ddot{\bm{V}}} \bm{\mathcal{M}} \bm{Q}_0\right]\right] \\
    &= E\left[E\left[\bm{\Delta}_Z \bm{1}_{J} - E[\bm{\Delta}_Z \bm{1}_{J}]\right]^\top E\left[(\textbf{I}_J \otimes \bm{1}_N)^\top \bm{\mathcal{M}}  \underline{\ddot{\bm{V}}} \bm{\mathcal{M}} \bm{Q}_0\right]\right] \\
    &= 0
\end{split}
\]
removing the influence of the period effects.
In general, we can state $\left\{(\bm{\Delta}_Z \bm{1}_{J} - E[\bm{\Delta}_Z \bm{1}_{J}]) \otimes \bm{1}_N \right\}^\top = (\bm{\Delta}_Z \bm{1}_{J} - E[\bm{\Delta}_Z \bm{1}_{J}])^\top (\textbf{I}_J \otimes \bm{1}_N)^\top$, producing the above first equality.
Then, $\bm{\Delta}_Z$ is a function of $Z$ and $(\textbf{I}_J \otimes \bm{1}_N)^\top\bm{\mathcal{M}}  \underline{\ddot{\bm{V}}} \bm{\mathcal{M}} \bm{Q}_0$ is a function of the working variance components and $N_{.j}$ 
Therefore, the third equality results from $N_{.j} \perp Z | N$ (A2), and the fourth equality results from $N_{.j} \perp N$ (A2).

Furthermore
\[
\begin{split}
    E&\left[\left\{(\bm{\Delta}_Z \bm{1}_{J} - E[\bm{\Delta}_Z \bm{1}_{J}]) \otimes \bm{1}_N \right\}^\top \bm{\mathcal{M}}  \underline{\ddot{\bm{V}}} \bm{\mathcal{M}} \bm{Q}_X \right] \\
    &= E\left[(\bm{\Delta}_Z \bm{1}_{J} - E[\bm{\Delta}_Z \bm{1}_{J}])^\top (\textbf{I}_J \otimes \bm{1}_N)^\top \bm{\mathcal{M}}  \bm{D} (\bm{D}^\top \underline{\ddot{\bm{\Sigma}}} \bm{D})^{-1} \bm{D}^\top \ddot{\bm{Q}}_X \right] \\
    &= E\left[E\left[(\bm{\Delta}_Z \bm{1}_{J} - E[\bm{\Delta}_Z \bm{1}_{J}])^\top (\textbf{I}_J \otimes \bm{1}_N)^\top \bm{\mathcal{M}}  \bm{D} (\bm{D}^\top \underline{\ddot{\bm{\Sigma}}} \bm{D})^{-1} \bm{D}^\top \ddot{\bm{Q}}_X |\bm{S},N,Z\right] \right] \\
    &= E\left[(\bm{\Delta}_Z \bm{1}_{J} - E[\bm{\Delta}_Z \bm{1}_{J}])^\top (\textbf{I}_J \otimes \bm{1}_N)^\top \bm{\mathcal{M}}  \bm{D} (\bm{D}^\top \underline{\ddot{\bm{\Sigma}}} \bm{D})^{-1} \bm{D}^\top E\left[ \ddot{\bm{Q}}_X |\bm{S},N,Z\right] \right] \\
\text{(A1} &\text{, Super-population sampling; A2, Non-informative enrollment)} \\
    &= 0
\end{split}
\]
where $(\bm{\Delta}_Z \bm{1}_{J} - E[\bm{\Delta}_Z \bm{1}_{J}])^\top (\textbf{I}_J \otimes \bm{1}_N)^\top \bm{\mathcal{M}}  \bm{D} (\bm{D}^\top \underline{\ddot{\bm{\Sigma}}} \bm{D})^{-1} \bm{D}^\top$ is a function of $\bm{S},N,Z$, and therefore deterministic leading to the third equality.
The fourth equality arises from
\[
\begin{split}
    E\left[\ddot{\bm{Q}}_X |\bm{S},N,Z\right]
     &= E\left[ 
     \left(\bm{X}_{.k} - \frac{\sum_{l=1}^{J} \sum_{k'=1}^{N} S_{.lk'} \bm{X}_{.k}}{\sum_{l=1}^{J} \sum_{k'=1}^{N} S_{.lk'}}
     \right)_{j=1,...,J;k=1,...,N}  | \bm{S},N,Z\right] 
     = \bm{0}_{NJ} \,.
\end{split}
\]
This result can also straightforwardly proceed by relying on randomization (A3) in longitudinal CRTs.

Finally, we can also show
\[
\begin{split}
    E&\left[\left\{(\bm{\Delta}_Z \bm{1}_{J} - E[\bm{\Delta}_Z \bm{1}_{J}]) \otimes \bm{1}_N \right\}^\top \bm{\mathcal{M}}  \underline{\ddot{\bm{V}}} \bm{\mathcal{M}} (\textbf{I}_J \otimes \textbf{I}_N)\bm{Y}(0) \right] \\
    &= E\left[(\bm{\Delta}_Z \bm{1}_{J} - E[\bm{\Delta}_Z \bm{1}_{J}])^\top (\textbf{I}_J \otimes \bm{1}_N)^\top \bm{\mathcal{M}}  \underline{\ddot{\bm{V}}} \bm{\mathcal{M}} \bm{Y}(0) \right] \\
    &= E\left[E\left[(\bm{\Delta}_Z \bm{1}_{J} - E[\bm{\Delta}_Z \bm{1}_{J}])^\top (\textbf{I}_J \otimes \bm{1}_N)^\top \bm{\mathcal{M}}  \underline{\ddot{\bm{V}}} \bm{\mathcal{M}} \bm{Y}(0) |\bm{S},N,Z\right] \right] \\
    &= E\left[(\bm{\Delta}_Z \bm{1}_{J} - E[\bm{\Delta}_Z \bm{1}_{J}])^\top (\textbf{I}_J \otimes \bm{1}_N)^\top \bm{\mathcal{M}}  \underline{\ddot{\bm{V}}} \bm{\mathcal{M}} E\left[ \bm{Y}(0) |\bm{S},N,Z\right] \right] \\
\text{(A1} &\text{, Super-population sampling; A2, Non-informative enrollment)} \\
    &= E\left[(\bm{\Delta}_Z \bm{1}_{J} - E[\bm{\Delta}_Z \bm{1}_{J}])^\top (\textbf{I}_J \otimes \bm{1}_N)^\top \bm{\mathcal{M}}  \underline{\ddot{\bm{V}}} \bm{\mathcal{M}} (\textbf{I}_J \otimes \bm{1}_N)  \bm{v}(0, N, Z) \right]  \\
    &= E\left[(\bm{\Delta}_Z \bm{1}_{J} - E[\bm{\Delta}_Z \bm{1}_{J}])^\top (\textbf{I}_J \otimes \bm{1}_N)^\top \bm{\mathcal{M}}  \underline{\ddot{\bm{V}}} (\textbf{I}_J \otimes \bm{1}_N) \ddot{\bm{v}}(0,\bm{S},N,Z) \right]  \\
\text{(A4} &\text{, Mean independence)} \\
    &= E\left[(\bm{\Delta}_Z \bm{1}_{J} - E[\bm{\Delta}_Z \bm{1}_{J}])^\top (\textbf{I}_J \otimes \bm{1}_N)^\top \bm{\mathcal{M}}  \underline{\ddot{\bm{V}}} (\textbf{I}_J \otimes \bm{1}_N) \ddot{\bm{v}}(0,\bm{S}) \right]  \\
\text{(A2} &\text{, Non-informative enrollment)} \\
    &= E\left[E\left[\bm{\Delta}_Z \bm{1}_{J} - E[\bm{\Delta}_Z \bm{1}_{J}]|N \right]^\top E\left[(\textbf{I}_J \otimes \bm{1}_N)^\top \bm{\mathcal{M}}  \underline{\ddot{\bm{V}}} (\textbf{I}_J \otimes \bm{1}_N) \ddot{\bm{v}}(0,\bm{S}) |N \right] \right]  \\
\text{(A2} &\text{, Non-informative enrollment)} \\
    &= E\left[\bm{\Delta}_Z \bm{1}_{J} - E[\bm{\Delta}_Z \bm{1}_{J}] \right]^\top E\left[(\textbf{I}_J \otimes \bm{1}_N)^\top \bm{\mathcal{M}}  \underline{\ddot{\bm{V}}} (\textbf{I}_J \otimes \bm{1}_N) \ddot{\bm{v}}(0,\bm{S}) \right]  \\
    &= 0 \,.
\end{split}
\]
As previously stated, $(\bm{\Delta}_Z \bm{1}_{J} - E[\bm{\Delta}_Z \bm{1}_{J}])^\top (\textbf{I}_J \otimes \bm{1}_N)^\top \bm{\mathcal{M}}  \underline{\ddot{\bm{V}}} \bm{\mathcal{M}}$ is a function of $\bm{S},N,Z$ and therefore deterministic in the third equality.
With A1, we can rewrite $E\left[\bm{Y}(0) |\bm{S},N,Z\right] = (\textbf{I}_J \otimes \bm{1}_N) \bm{v}(0,\bm{S},N,Z)$ where $\bm{v}(0,\bm{S},N,Z) = (v_j(0))_{j=1,...,J} \in \mathbb{R}^J$ and $E[\bm{Y}_{.j}(0)|\bm{S},N,Z] = v_j(0,\bm{S},N,Z)\bm{1}_N \in \mathbb{R}^N$. Then with A2, $v_j(0,\bm{S},N,Z)=v_j(0,N,Z)$, leading to the fourth equality.
Similarly, with A1 and A2, $E[\ddot{\bm{Y}}_{ij}(0)|\bm{S},N,Z] = \ddot{v}_j(0,\bm{S},N,Z)\bm{1}_N = \left(v_j(0,N,Z) - [\sum_{l=1}^J N_{.l} v_l(0,N,Z)]/[\sum_{l=1}^J N_{.l}]\right)\bm{1}_N \in \mathbb{R}^N$ that depends on $\bm{S}$ through $N_{ij}$, and we can define $\ddot{\bm{v}}(0,\bm{S},N,Z) = (\ddot{v}_j(0,\bm{S},N,Z))_{j=1,...,J} \in \mathbb{R}^J$. Then, with $\underline{\ddot{\bm{V}}}$ projecting onto the enrolled subspace, $\bm{\mathcal{M}}  \underline{\ddot{\bm{V}}} \bm{\mathcal{M}} (\textbf{I}_J \otimes \bm{1}_N) \bm{v}(0,N,Z) = \bm{\mathcal{M}}  \underline{\ddot{\bm{V}}} (\textbf{I}_J \otimes \bm{1}_N) \ddot{\bm{v}}(0,\bm{S},N,Z)$ in the fifth equality.
The sixth equality is a direct application of A4.
Finally, where $(\textbf{I}_J \otimes \bm{1}_N)^\top \bm{\mathcal{M}}  \underline{\ddot{\bm{V}}} (\textbf{I}_J \otimes \bm{1}_N)$ is a function of the working variance components and $N_{.j}$, then the seventh and eighth equalities arise from $\bm{S} \perp Z|N$ (A2) and $N_{.j} \perp N$ (A2), respectively.
Alternatively, the above result can also straightforwardly proceed by relying on randomization (A3) in longitudinal CRTs.

Therefore, using formula (\ref{app.eq:PO}), the above equalities, and the definition of $\bm{Q}\underline{\bm{\beta}}$, we can simplify the constant treatment effect score function (Equations \ref{app.eq:treatment} \& \ref{app.eq:period_demean}) to the following estimating equation
\[
\begin{split}
    E&\left[\left\{(\bm{\Delta}_Z \bm{1}_{J} - E[\bm{\Delta}_Z \bm{1}_{J}]) \otimes \bm{1}_N \right\}^\top \bm{\mathcal{M}} \underline{\ddot{\bm{V}}} \bm{\mathcal{M}} \{\bm{Y}-\bm{Q}_Z\underline{\bm{\beta}}_Z\}\right] \\
    &= E\left[\left\{(\bm{\Delta}_Z \bm{1}_{J} - E[\bm{\Delta}_Z \bm{1}_{J}]) \otimes \bm{1}_N \right\}^\top \bm{\mathcal{M}}\underline{\ddot{\bm{V}}} \bm{\mathcal{M}} \left\{ \sum_{d=1}^{J-1} \left( (\bm{\Lambda}_Z^d \otimes \textbf{I}_N) \{\bm{Y}(d)-\bm{Y}(0)\} \right) - ((\bm{\Delta}_Z \bm{1}_{J})\otimes \bm{1}_N) \underline{\beta}_Z \right\}\right] \\
    &= E\left[\left\{(\bm{\Delta}_Z \bm{1}_{J} - E[\bm{\Delta}_Z \bm{1}_{J}]) \otimes \bm{1}_N \right\}^\top \bm{\mathcal{M}}\underline{\ddot{\bm{V}}} \bm{\mathcal{M}} \left\{ \sum_{d=1}^{J-1} \left( (\bm{\Lambda}_Z^d \otimes \textbf{I}_N) \{\bm{Y}(d)-\bm{Y}(0)\} \right) - \left\{ \bm{\Delta}_Z \otimes \textbf{I}_N\right\} 1_{NJ} \underline{\beta}_Z \right\}\right] \,.
\end{split}
\]
We can then simplify the above formula with
\[
\begin{split}
    \bm{Y}(d)-\bm{Y}(0) &\in \mathbb{R}^{NJ} \\
    \bm{U}_d^\top  = \left\{(\bm{\Delta}_Z \bm{1}_{J} - E[\bm{\Delta}_Z \bm{1}_{J}]) \otimes \bm{1}_N \right\}^\top \bm{\mathcal{M}}\underline{\ddot{\bm{V}}} \bm{\mathcal{M}} \left\{ \bm{\Lambda}_Z^d \otimes \textbf{I}_N\right\} &\in \mathbb{R}^{1 \times NJ} \\
    \sum_{d=1}^{J-1} \bm{U}_d^\top  = \left\{(\bm{\Delta}_Z \bm{1}_{J} - E[\bm{\Delta}_Z \bm{1}_{J}]) \otimes \bm{1}_N \right\}^\top \bm{\mathcal{M}}\underline{\ddot{\bm{V}}} \bm{\mathcal{M}} \left\{ \bm{\Delta}_Z \otimes \textbf{I}_N\right\}  &\in \mathbb{R}^{1 \times NJ} \\
    \sum_{d=1}^{J-1} \bm{U}_d^\top  \bm{1}_{NJ} = \left\{(\bm{\Delta}_Z \bm{1}_{J} - E[\bm{\Delta}_Z \bm{1}_{J}]) \otimes \bm{1}_N \right\}^\top \bm{\mathcal{M}} \underline{\ddot{\bm{V}}} \bm{\mathcal{M}} \left\{ \bm{\Delta}_Z \otimes \textbf{I}_N\right\} \bm{1}_{NJ} &\in \mathbb{R}^{1} \,.
\end{split}
\]
then the constant treatment effect score function (Equations \ref{app.eq:treatment} \& \ref{app.eq:period_demean}) can be further simplified to
\[
    E\left[ \sum_{d=1}^{J-1} \bm{U}_d^\top \left(\{\bm{Y}(d)-\bm{Y}(0)\} - \bm{1}_{NJ}\underline{\beta}_Z\right) \right] =0
\]

Notably, $\bm{U}_d$ is a function of $Z$ and $N_{.j}$. 
Accordingly
\[
\begin{split}
    E\left[ \sum_{d=1}^{J-1} \bm{U}_d^\top  \{\bm{Y}(d)-\bm{Y}(0)\} \right] &= E\left[ \sum_{d=1}^{J-1} \bm{U}_d^\top \bm{1}_{NJ} \right] \underline{\beta}_Z  \\
    E\left[ \sum_{d=1}^{J-1} E\left[\bm{U}_d^\top \{\bm{Y}(d)-\bm{Y}(0)\} | \bm{S}, N, Z \right] \right] &= E\left[ \sum_{d=1}^{J-1} \bm{U}_d^\top \bm{1}_{NJ} \right] \underline{\beta}_Z \\
    E\left[ \sum_{d=1}^{J-1} \bm{U}_d^\top E\left[\bm{Y}(d)-\bm{Y}(0) | \bm{S}, N, Z \right] \right] &= E\left[ \sum_{d=1}^{J-1} \bm{U}_d^\top \bm{1}_{NJ} \right] \underline{\beta}_Z \\
\text{(A2} &\text{, Non-informative enrollment)}\\
    E\left[ \sum_{d=1}^{J-1} \bm{U}_d^\top  E\left[\bm{Y}(d)-\bm{Y}(0) | N, Z \right] \right] &= E\left[ \sum_{d=1}^{J-1} \bm{U}_d^\top \bm{1}_{NJ} \right] \underline{\beta}_Z \\
\text{(A3} &\text{, Randomization)}\\
    E\left[ \sum_{d=1}^{J-1} \bm{U}_d^\top E\left[\bm{Y}(d)-\bm{Y}(0) | N \right] \right] &= E\left[ \sum_{d=1}^{J-1} \bm{U}_d^\top \bm{1}_{NJ} \right] \underline{\beta}_Z \\
\text{(A2, Non-informative enrollment} &\text{; A3, Randomization)}\\
    E\left[ \sum_{d=1}^{J-1} \bm{U}_d^\top \right] E\left[E\left[\bm{Y}(d)-\bm{Y}(0) | N \right]\right] &= E\left[ \sum_{d=1}^{J-1} \bm{U}_d^\top \bm{1}_{NJ} \right] \underline{\beta}_Z \\
\end{split}
\]
The second equality results from the law of total expectation.
The third equality is a result of $\bm{U}_d$ being deterministic given $(\bm{S}, N, Z)$.
The fourth and fifth equalities then naturally extend from Assumptions A2 and A3.
In the final equality, $\bm{U}_d$ is a function of $Z$ and $N_{.j}$, which recall $N_{.j} \perp N$ (A2) and $Z \perp \bm{Y}(d),\bm{Y}(0), N$ (A3).
In the absence of randomization (A3), as is the case in SW-CQTs, the fifth and sixth equalities can still proceed with the mean independence assumption of exchangeable treatment effects $E\left[\bm{Y}(d)-\bm{Y}(0) | N, Z \right] = E\left[\bm{Y}(d)-\bm{Y}(0) \right]$ (A4).

\sloppy Then, alongside Assumption A1, assuming a true duration-invariant and period-invariant constant treatment effect, $E\left[ E[\bm{Y}(d)-\bm{Y}(0)|N] \right] = E\left[ \bm{Y}(d)-\bm{Y}(0) \right] = \bm{1}_{NJ} \Delta \in \mathbb{R}^{NJ}$, and
\[
    E\left[ \sum_{d=1}^{J-1} \bm{U}_d^\top \bm{1}_{NJ} \right] \Delta = E\left[ \sum_{d=1}^{J-1} \bm{U}_d^\top \bm{1}_{NJ} \right] \underline{\beta}_Z \,.
\]
Finally, we have the desired consistency result
\[
\underline{\beta}_Z = \Delta \,.
\]

Altogether, we establish the consistency of the working linear fixed-effects model within-transformed constant treatment effect estimator for $\Delta$ in a SW-CRT by relying on Assumptions A1 (Super-population sampling), A2 (Non-informative enrollment), and  A3 (Randomization), and in a SW-CQT by relying on assumptions A1, A2, and A4 (Mean independence).

\subsection{Duration-specific Treatment Effect}
\label{app.sect:duration}

The proof for the consistency of a correctly specified duration-specific treatment effect extends directly from the proof outlined in Section \ref{app.sect:constant}.
Under the duration-specific treatment effect setting, the linear mixed model becomes
\begin{equation}
    \label{app.eq:Y_DS}
    \bm{Y}_i=(
    \textbf{I}_{J}
    \otimes \bm{1}_{N_i})\bm{\beta}_0 + (\textbf{H}_{Z_i} \otimes \bm{1}_{N_i}) \bm{\beta}_Z^D + (\bm{1}_{J} \otimes \bm{X}_i) \bm{\beta}_X + \alpha_i \bm{1}_{N_iJ} + \bm{\gamma}_i \otimes \bm{1}_{N_i} + \bm{\epsilon}_i,
\end{equation}
resembling Equation \ref{app.eq:Y_constant}, but with the duration-specific treatment effect structure denoted as
\[
\textbf{H}_{Z_i} = \begin{pmatrix}
0 & 0 & \cdots &0\\
I\{Z_i=2\} & & & \\
I\{Z_i=3\} & I\{Z_i=2\} &  & \\
\vdots & & \ddots & \\
I\{Z_i=J\} & I\{Z_i=J-1\} & \cdots & I\{Z_{i}=2\} \\
\end{pmatrix} \in \mathbb{R}^{J \times (J-1)} \,.
\]
To reiterate, $Z_i=j$ if cluster $i$ starts receiving treatment in the beginning of period $j\in \{2,...,J\}$, and $\bm{\beta}_Z^D = (\beta_{Z1}, ... , \beta_{Zd},... ,\beta_{Z,J-1})^{\top} \in \mathbb{R}^{J-1}$ in a SW-CT.
Notice that the estimating equations under this setting are the same as those under the constant treatment effect setting, except that 
$\bm{Q}=(\bm{Q}_0, \bm{Q}_Z^D, \bm{Q}_X)$ where $\bm{Q}_Z^D = \textbf{H}_Z \otimes \bm{1}_N$, and $\bm{\beta} = (\bm{\beta}_0^\top, \bm{\beta}_Z^{D\top}, \bm{\beta}_X^{\top})^{\top}$.
Accordingly, we can rewrite Equation \ref{app.eq:Y_DS} as:
\begin{equation}
    \label{app.eq:Y_DS_Q}
    \bm{Y}_i = \bm{Q}_{0,i}\bm{\beta}_0 + \bm{Q}_{Z,i}^D\bm{\beta}_Z^D + \bm{Q}_{X,i}\bm{\beta}_X + \bm{\mathcal{C}}_i + \bm{\gamma}_i \otimes \bm{1}_{N_i} + \bm{\epsilon}_i
\end{equation}
akin to the notation in Equation \ref{app.eq:Y_constant_Q}, but with $\bm{Q}_Z^D$ denoting the duration-specific treatment effect.
We can then rewrite the linear mixed model as the following ``within-transformed'' equation 
\begin{equation}
\label{app.eq:ddotY_DS}
    \ddot{\bm{Y}}_i = \ddot{\bm{Q}}_{0,i}\bm{\beta}_0 + \ddot{\bm{Q}}_{Z,i}^D\bm{\beta}_Z^D + \ddot{\bm{Q}}_{X,i}\bm{\beta}_{X} + \ddot{\bm{\gamma}}_i \otimes \bm{1}_{N_i} + \ddot{\bm{\epsilon}}_i \,
\end{equation}
akin to the notation in Equation \ref{app.eq:ddotY}, but with $\ddot{\bm{Q}}_{Z,i}^D = \bm{\mathcal{M}}_i \bm{Q}_{Z,i}^D$, where recall $\bm{\mathcal{M}}_i \equiv \textbf{I}_{N_iJ} - \bm{S}_i (\bm{S}_i^\top \bm{S}_i)^{-1} \bm{S}_i^\top = \textbf{I}_{N_iJ} - \frac{1}{\mathcal{N}_i}\bm{S}_i \bm{S}_i^\top$ as the following symmetric matrix that returns the within-transformed, de-meaned matrices and vectors (Equation \ref{app.eq:within-transformation-mat}).
Then the asymptotic normality of $\hat{\bm{\beta}}_Z^D$ and consistency of variance estimators can be proved similarly using Lemma \ref{app.lemma:variance}.
Hence the proof for this setting is completed if we can prove $\underline{\bm{\beta}}_Z^D = \bm{\Delta}^D$, where $\bm{\Delta}^D=(\Delta(1), ..., \Delta(d) , ...,\Delta(J-1))^\top$ with $\Delta(d)$ being the $d$ duration-specific treatment effect.
To this end, we observe that the component of $E[\bm{\psi}(\bm{O};\bm{\theta})]=0$ corresponding to the duration-specific treatment effects $\bm{\beta}_Z^D$ is
\begin{equation}
\label{app.eq:treatment_DS}
    E[\{\ddot{\bm{Q}}_Z^D\}^\top \underline{\ddot{\bm{V}}}(\ddot{\bm{Y}} - \ddot{\bm{Q}}\underline{\bm{\beta}})] = 0
\end{equation}
corresponding to Equation \ref{app.eq:treatment}.
The component of $E[\bm{\psi}(\bm{O};\bm{\theta})]=0$ corresponding to the period fixed-effects is of the same form as Equation \ref{app.eq:period}, while accounting for the duration-specific treatment effect structure.

\subsubsection{Proof of Fixed-Effects model duration-specific treatment effect consistency}

\noindent By left-multiplying the estimating equation corresponding to period effects (Equation \ref{app.eq:period}), but with a duration-specific treatment effect structure) by $-E[\textbf{H}_Z]^{\top}$ and adding it to Equation (\ref{app.eq:treatment_DS}), we get
\begin{equation}
\label{app.eq:period_demean_DS}
\begin{split}
    E&[\{\ddot{\bm{Q}}_Z^D-\ddot{\bm{Q}}_0E[\textbf{H}_Z]\}^\top \underline{\ddot{\bm{V}}} (\ddot{\bm{Y}}-\ddot{\bm{Q}}\underline{\bm{\beta}})] \\
    &=E[\{\bm{\mathcal{M}}(\bm{Q}_Z^D-\bm{Q}_0 E[\textbf{H}_Z])\}^\top \underline{\ddot{\bm{V}}} \bm{\mathcal{M}} (\bm{Y}-\bm{Q}\underline{\bm{\beta}})] \\
    &= E\left[\left\{(\textbf{H}_Z - E[\textbf{H}_Z]) \otimes \bm{1}_N \right\}^\top \bm{\mathcal{M}}  \underline{\ddot{\bm{V}}} \bm{\mathcal{M}} (\bm{Y} - \bm{Q}\underline{\bm{\beta}})\right] \\ 
    &= 0 
\end{split}
\end{equation}
where $\bm{Q}\underline{\bm{\beta}} = (\bm{Q}_0 \underline{\bm{\beta}}_0 + \bm{Q}_Z^D \underline{\bm{\beta}}_Z^D + \bm{Q}_X \underline{\bm{\beta}}_X)$ and 
$\bm{Q}_0=\textbf{I}_{J} \otimes \bm{1}_{N}$,
$\bm{Q}_Z^D = \textbf{H}_Z \otimes \bm{1}_N$, $\bm{Q}_X = \bm{1}_{J} \otimes \bm{X}_i$, corresponding to the indicators for the period, duration-specific treatment, and covariate  indicators.

To provide an illustrative example from a 3 cluster, 4 period SW-CRT:
\[
\textbf{H}_{Z_i} = \left\{
\begin{pmatrix}
    0 & 0 & 0 \\
    1 & 0 & 0 \\
    0 & 1 & 0 \\
    0 & 0 & 1 \\
\end{pmatrix} \,, 
\begin{pmatrix}
    0 & 0 & 0 \\
    0 & 0 & 0 \\
    1 & 0 & 0 \\
    0 & 1 & 0 \\
\end{pmatrix} \,,
\begin{pmatrix}
    0 & 0 & 0 \\
    0 & 0 & 0 \\
    0 & 0 & 0 \\
    1 & 0 & 0 \\
\end{pmatrix}\right\} ;
\frac{1}{m}\sum_{i=1}^{m} \textbf{H}_{Z_i} = \begin{pmatrix}
    0 & 0 & 0 \\
    1/3 & 0 & 0 \\
    1/3 & 1/3 & 0 \\
    1/3 & 1/3 & 1/3 \\
\end{pmatrix}
\]
and
\[
\textbf{H}_Z - \frac{1}{m}\sum_{i=1}^{m} \textbf{H}_{Z_i} = \left\{
    \begin{pmatrix}
        0 & 0 & 0 \\
        2/3 & 0 & 0 \\
        -1/3 & 2/3 & 0 \\
        -1/3 & -1/3 & 2/3 \\
    \end{pmatrix},
    \begin{pmatrix}
        0 & 0 & 0 \\
        -1/3 & 0 & 0 \\
        2/3 & -1/3 & 0 \\
        -1/3 & 2/3 & -1/3 \\
    \end{pmatrix},
    \begin{pmatrix}
        0 & 0 & 0 \\
        -1/3 & 0 & 0 \\
        -1/3 & -1/3 & 0 \\
        2/3 & -1/3 & -1/3 \\
    \end{pmatrix}
\right\}
\]
corresponding to clusters $i=1,2,3$, respectively.

As in the scenario with constant treatment effects, we can simplify the duration-specific treatment effect score function (Equation \ref{app.eq:period_demean_DS}) to the following estimating equation.
\[
\begin{split}
    E&\left[\left\{(\textbf{H}_Z - E[\textbf{H}_Z]) \otimes \bm{1}_N \right\}^\top \bm{\mathcal{M}} \underline{\ddot{\bm{V}}} \bm{\mathcal{M}} \{\bm{Y}-\bm{Q}_Z^D \underline{\bm{\beta}}_Z^D\}\right] \\
    &= E\left[\left\{(\textbf{H}_Z - E[\textbf{H}_Z]) \otimes \bm{1}_N \right\}^\top \bm{\mathcal{M}}\underline{\ddot{\bm{V}}} \bm{\mathcal{M}} \left\{ \sum_{d=1}^{J-1} \left((\bm{\Lambda}_Z^d \otimes \textbf{I}_N) \{\bm{Y}(d)-\bm{Y}(0)\} \right) - (\textbf{H}_Z\otimes \bm{1}_N) \underline{\bm{\beta}}_Z^D \right\}\right] \\
    &= 0 \,.
\end{split}
\]
We can simplify the above equation by setting 
\[
    \bm{U}^\top = \left\{(\textbf{H}_Z - E[\textbf{H}_Z]) \otimes \bm{1}_N \right\}^\top \bm{\mathcal{M}}\underline{\ddot{\bm{V}}}\bm{\mathcal{M}} \in \mathbb{R}^{(J-1) \times NJ}
\]
where $\bm{U}^\top$ and $\bm{U}^\top (\bm{\Lambda}_Z^d \otimes \textbf{I}_N)$ are functions of $\bm{S}$, $N$, and $Z$.
\[
    E\left[\bm{U}^\top \left\{ \sum_{d=1}^{J-1} \left((\bm{\Lambda}_Z^d \otimes \textbf{I}_N) \{\bm{Y}(d)-\bm{Y}(0)\} \right) -  (\textbf{H}_Z\otimes \bm{1}_N) \underline{\bm{\beta}}_Z^D \right\}\right] =0
\]
As in the proof of consistency for the constant treatment effect, with Assumptions A3 (Randomization), A2 (Non-informative enrollment), we can demonstrate that
\[
\begin{split}
    E\left[\bm{U}^\top \left\{ \sum_{d=1}^{J-1} \left((\bm{\Lambda}_Z^d \otimes \textbf{I}_N) \{\bm{Y}(d)-\bm{Y}(0)\} \right)  \right\}\right] &= E\left[\bm{U}^\top (\textbf{H}_Z\otimes \bm{1}_N) \right]\underline{\bm{\beta}}_Z^D \\
    E\left[ \sum_{d=1}^{J-1} \bm{U}^\top (\bm{\Lambda}_Z^d \otimes \textbf{I}_N) \{\bm{Y}(d)-\bm{Y}(0)\}   \right] &= E\left[\bm{U}^\top (\textbf{H}_Z\otimes \bm{1}_N) \right]\underline{\bm{\beta}}_Z^D \\
    E\left[ \sum_{d=1}^{J-1} E\left[ \bm{U}^\top (\bm{\Lambda}_Z^d \otimes \textbf{I}_N) \{\bm{Y}(d)-\bm{Y}(0)\}  | \bm{S}, N, Z \right]\right] &= E\left[\bm{U}^\top (\textbf{H}_Z\otimes \bm{1}_N) \right]\underline{\bm{\beta}}_Z^D \\
    E\left[ \sum_{d=1}^{J-1} \bm{U}^\top (\bm{\Lambda}_Z^d \otimes \textbf{I}_N) E[\bm{Y}(d)-\bm{Y}(0) | \bm{S}, N, Z] \right] &= E\left[\bm{U}^\top (\textbf{H}_Z\otimes \bm{1}_N) \right]\underline{\bm{\beta}}_Z^D \\
\text{(A2} &\text{, Non-informative enrollment)}\\
    E\left[ \sum_{d=1}^{J-1} \bm{U}^\top (\bm{\Lambda}_Z^d \otimes \textbf{I}_N) E[\bm{Y}(d)-\bm{Y}(0) | N, Z] \right] &= E\left[\bm{U}^\top (\textbf{H}_Z\otimes \bm{1}_N) \right]\underline{\bm{\beta}}_Z^D \\
\text{(A3} &\text{, Randomization)}\\
    E\left[ \sum_{d=1}^{J-1} \bm{U}^\top (\bm{\Lambda}_Z^d \otimes \textbf{I}_N) E[\bm{Y}(d)-\bm{Y}(0) | N] \right] &= E\left[\bm{U}^\top (\textbf{H}_Z\otimes \bm{1}_N) \right]\underline{\bm{\beta}}_Z^D \\
\text{(A2, Non-informative enrollment; A3} &\text{, Randomization)}\\
   \sum_{d=1}^{J-1} E[\bm{U}^\top (\bm{\Lambda}_Z^d \otimes \textbf{I}_N) ] E\left[ E[\bm{Y}(d)-\bm{Y}(0) | N] \right] &= E\left[\bm{U}^\top (\textbf{H}_Z\otimes \bm{1}_N) \right]\underline{\bm{\beta}}_Z^D \\
\end{split}
\]
The third equality results from the law of total expectation.
The fourth equality is a result of $\bm{U}^\top (\bm{\Lambda}_Z^d \otimes \textbf{I}_N)$ being deterministic given $(\bm{S}, N, Z)$.
The fifth and sixth equalities then naturally extend from Assumptions A2 and A3.
In the final equality, $\bm{U}^\top (\bm{\Lambda}_Z^d \otimes \textbf{I}_N)$ is a function of $Z$ and $N_{.j}$, which recall $N_{.j} \perp N$ (A2) and $Z \perp \bm{Y}(d), \bm{Y}(0), N$ (A3). 
In the absence of randomization (A3), as is the case in SW-CQTs, the sixth and seventh equalities can still proceed with the mean independence assumption of exchangeable treatment effects (A4).

Then, alongside Assumption A1, assuming a duration-specific treatment effect, $E\left[ E[\bm{Y}(d)-\bm{Y}(0) | N] \right] = E\left[ \bm{Y}(d)-\bm{Y}(0) \right] = \bm{1}_{NJ}\Delta(d) \in \mathbb{R}^{NJ}$.
Altogether
\[
\begin{split}
    &\sum_{d=1}^{J-1} E[\bm{U}^\top (\bm{\Lambda}_Z^d \otimes \textbf{I}_N)] E\left[\bm{Y}(d)-\bm{Y}(0) \right] \\
    &= \sum_{d=1}^{J-1} E[\bm{U}^\top (\bm{\Lambda}_Z^d \otimes \textbf{I}_N) \bm{1}_{NJ}\Delta(d)] \\
    &= E\left[ \bm{U}^\top  \sum_{d=1}^{J-1} (\bm{\Lambda}_Z^d \otimes \textbf{I}_N) \bm{1}_{NJ}\Delta(d) \right]
\end{split}
\]
Furthermore, $\sum_{d=1}^{J-1} (\bm{\Lambda}_Z^d \otimes \textbf{I}_N) \bm{1}_{NJ}\Delta(d) = (\textbf{H}_Z\otimes \bm{1}_N)\bm{\Delta}^D \in \mathbb{R}^{NJ}$.
Altogether
\[
    E\left[\bm{U}^\top (\textbf{H}_Z\otimes \bm{1}_N) \right] \bm{\Delta}^D = E\left[\bm{U}^\top (\textbf{H}_Z\otimes \bm{1}_N) \right]\underline{\bm{\beta}}_Z^D \,.
\]
Finally, we have the desired consistency result 
\[\underline{\bm{\beta}}_Z^D = \bm{\Delta}^D \,.\]

Altogether, we establish the consistency of the working linear fixed-effects model within-transformed duration-specific treatment effect estimator for $\bm{\Delta}^D$ in a SW-CRT by relying on Assumptions A1 (Super-population sampling), A2 (Non-informative enrollment), and  A3 (Randomization), and in a SW-CQT by relying on assumptions A1, A2, and A4 (Mean independence).

\subsection{Period-specific Treatment Effect}
\label{app.sect:period}

For the period-specific treatment effect setting, we drop the period $J$ to avoid over-parameterization.
To simplify the switch from $J$ periods to $J-1$ periods, we use the superscript * to denote all subsequent changes in notation. For example, $J^* = J-1$, $\bm{Y}_i^* = (\bm{Y}_{i1}^{\top},...,\bm{Y}_{i,J-1}^{\top})^{\top} \in \mathbb{R}^{NJ^*}$,
$\bm{\beta}_0^*=(\beta_{01},...,\beta_{0,J-1})^{\top} \in \mathbb{R}^{J^*}$ (where $\beta_{01} = 0$ for identifiability).
Generically, $\textbf{M}^* \in \mathbb{R}^{J^* \times J^*}$ is a matrix consisting of the first $J-1$ columns and rows of a matrix $\textbf{M} \in \mathbb{R}^{J \times J}$.
As a result, the linear mixed model becomes
\begin{equation}
    \label{app.eq:Y_PS}
    \bm{Y}_i^*=(\textbf{I}_{J^*} \otimes \bm{1}_{N_i})\bm{\beta}_0^* + (\bm{\Delta}_{Z_i}^* \otimes \bm{1}_{N_i}) \bm{\beta}_Z^P + (\bm{1}_{J^*} \otimes \bm{X}_i) \bm{\beta}_X + \alpha_i \bm{1}_{N_iJ^*} + \bm{\gamma}_i^* \otimes \bm{1}_N + \bm{\epsilon}_i^*,
\end{equation}
where $\bm{\beta}_Z^P=(\beta_{1Z}, ..., \beta_{jZ},...,\beta_{J^*Z})^{\top} \in \mathbb{R}^{J^*}$ is the treatment effect coefficient vector where $\beta_{1Z}=0$, and $\alpha_i$ are the cluster fixed intercepts.
Notice that the estimating equations under this setting remain the same as those under the constant treatment effect setting, except that
$\bm{Q}^* = (\bm{Q}_0^*, \bm{Q}_Z^P, \bm{Q}_X^*)$
and $\bm{\beta}^* = (\bm{\beta}_0^{*\top}, \bm{\beta}_Z^{P\top}, \bm{\beta}_X^{\top})^{\top}$,
where $\bm{Q}_Z^P=\bm{\Delta}_{Z_i}^* \otimes \bm{1}_{N_i} \in \mathbb{R}^{NJ^* \times J^*}$ denotes the period-specific treatment effect structure.
Accordingly, we can rewrite Equation \ref{app.eq:Y_PS} as:
\begin{equation}
    \label{app.eq:Y_PS_Q}
    \bm{Y}_i^* = \bm{Q}_{0,i}^* \bm{\beta}_0^* + \bm{Q}_{Z,i}^P \bm{\beta}_Z^P + \bm{Q}_{X,i}^* \bm{\beta}_X + \bm{\mathcal{C}}_i^* + \bm{\gamma}_i^* \otimes \bm{1}_{N_i} + \bm{\epsilon}_i^*
\end{equation}
akin to the notation in Equation \ref{app.eq:Y_constant_Q}, but with $\bm{Q}_Z^P$ and $\bm{\beta}_Z^P$ denoting the period-specific treatment indicators and effects.
We can then rewrite the linear mixed model as the following ``within-transformed'' equation 
\begin{equation}
\label{app.eq:ddotY_PS}
    \ddot{\bm{Y}}_i^* = \ddot{\bm{Q}}_{0,i}^* \bm{\beta}_0^* + \ddot{\bm{Q}}_{Z,i}^P \bm{\beta}_Z^P + \ddot{\bm{Q}}_{X,i}^* \bm{\beta}_{X} + \ddot{\bm{\gamma}}_i^* \otimes \bm{1}_{N_i} + \ddot{\bm{\epsilon}}_i^* \,
\end{equation}
akin to the notation in Equation \ref{app.eq:ddotY}, but with $\ddot{\bm{Q}}_{Z,i}^P = \bm{\mathcal{M}}_i^* \bm{Q}_{Z,i}^P$,
where $\bm{\mathcal{M}}_i^* \equiv \textbf{I}_{N_iJ^*} - \bm{S}_i^* (\bm{S}_i^{*\top} \bm{S}_i^*)^{-1} \bm{S}_i^{*\top} = \textbf{I}_{N_iJ^*} - \frac{1}{\mathcal{N}_i^*}\bm{S}_i^* \bm{S}_i^{*\top}$ as the following symmetric matrix that returns the within-transformed, de-meaned matrices and vectors (Equation \ref{app.eq:within-transformation-mat}).

Then the asymptotic normality of $\hat{\bm{\beta}}_Z^P$ and consistency of variance estimators can be proved similarly using Lemma \ref{app.lemma:variance}.
Hence, the proof for this setting is completed if we can prove $\underline{\bm{\beta}}_Z^P = (0,\bm{\Delta}^{P\top})^\top$, where $\bm{\Delta}^P = (\Delta_2,...,\Delta_j...,\Delta_{J^*})^\top \in \mathbb{R}^{J^*-1}$ with $\Delta_j$ being the $j\geq2$ period-specific treatment effect.
This is equivalent to stating that $\Delta_1=0$, but will be presented as in the previous sentence throughout this proof for simplicity.
To this end, we observe that the component of $E[\bm{\psi}(\bm{O}^*; \underline{\bm{\theta}})] = 0$ corresponding to period-specific treatment effects $\bm{\beta}_Z^P$ imply that
\begin{equation}
\label{app.eq:treatment_PS}
    E[\{\ddot{\bm{Q}}_Z^P\}^\top \underline{\ddot{\bm{V}}}^* (\ddot{\bm{Y}}^*-\ddot{\bm{Q}}^*\underline{\bm{\beta}}^*)] = 0
\end{equation}
corresponding to Equation \ref{app.eq:treatment}.
The components of $E[\bm{\psi}(\bm{O}^*;\bm{\theta})]=0$ corresponding to the period and cluster fixed-effects are of the same form as equations \ref{app.eq:period}, while accounting for the period-specific treatment effect structure and the switch from $J$ periods to $J^*=J-1$ periods.

\subsubsection{Proof of Fixed-Effects model period-specific treatment effect consistency}

As in our previous proofs, Equation \ref{app.eq:treatment_PS} implies
\[
    E[\{(\bm{\Delta}_Z^* - E[\bm{\Delta}_Z^*]) \otimes \bm{1}_N\}^\top \bm{\mathcal{M}}^* \underline{\bm{V}}^* \bm{\mathcal{M}}^*  (\bm{Y}^* - \bm{Q}^* \underline{\bm{\beta}}^*)] = 0
\]
To provide an illustrative example from a 4 cluster, 5 ($J^*=4$) period SW-CRT:
\[
\bm{\Delta}_{Zi}^* = \left\{
\begin{pmatrix}
    0 & 0 & 0 & 0 \\
    0 & 1 & 0 & 0 \\
    0 & 0 & 1 & 0 \\
   0 &  0 & 0 & 1
\end{pmatrix} , 
\begin{pmatrix}
    0 & 0 & 0 & 0 \\
    0 & 0 & 0 & 0 \\
    0 & 0 & 1 & 0\\
    0 & 0 & 0 & 1 \\
\end{pmatrix} ,
\begin{pmatrix}
    0 & 0 & 0 & 0 \\
    0 & 0 & 0 & 0 \\
    0 & 0 & 0 & 0 \\
    0 & 0 & 0 & 1 \\
\end{pmatrix} ,
\begin{pmatrix}
    0 & 0 & 0 & 0 \\
    0 & 0 & 0 & 0 \\
    0 & 0 & 0 & 0 \\
    0 & 0 & 0 & 0 \\
\end{pmatrix} \right\} ;
\]
\[
\frac{1}{m}\sum_{i=1}^{m} \bm{\Delta}_{Z_{i}}^* = \begin{pmatrix}
    0 & 0 & 0 & 0 \\
    0 & 1/4 & 0 & 0 \\
    0 & 0 & 2/4 & 0 \\
    0 & 0 & 0 & 3/4 \\
\end{pmatrix} \,.
\]
Following the previous proofs, we have
\[
\begin{split}
    E&\left[\left\{(\bm{\Delta}_Z^* - E[\bm{\Delta}_Z^*]) \otimes \bm{1}_N \right\}^\top \bm{\mathcal{M}}^* \underline{\ddot{\bm{V}}}^* \bm{\mathcal{M}}^* \{\bm{Y}^*-\bm{Q}_Z^P \underline{\bm{\beta}}_Z^P\}\right] \\
    &= E\left[\left\{(\bm{\Delta}_Z^* - E[\bm{\Delta}_Z^*]) \otimes \bm{1}_N \right\}^\top \bm{\mathcal{M}}^* \underline{\ddot{\bm{V}}}^* \bm{\mathcal{M}}^* \left\{ \sum_{d=1}^{J^*-1} \left((\bm{\Lambda}_Z^{d*} \otimes \textbf{I}_N) \{\bm{Y}^*(d)-\bm{Y}^*(0)\} \right) - (\bm{\Delta}_Z^*\otimes \bm{1}_N) \underline{\bm{\beta}}_Z^P \right\}\right] \\
    &= E\left[\left\{(\bm{\Delta}_Z^* - E[\bm{\Delta}_Z^*]) \otimes \bm{1}_N \right\}^\top \bm{\mathcal{M}}^* \underline{\ddot{\bm{V}}}^* \bm{\mathcal{M}}^* \left\{ \sum_{d=1}^{J^*-1} \left((\bm{\Lambda}_Z^{d*} \otimes \textbf{I}_N) \{\bm{Y}^*(d)-\bm{Y}^*(0)\} \right) - \left\{ \bm{\Delta}_Z^* \otimes \textbf{I}_N\right\} (\textbf{I}_{J^*} \otimes \bm{1}_N) \underline{\bm{\beta}}_Z^P \right\}\right] \\
    &= 0 \,.
\end{split}
\]
To further simplify the above formula, we can compute
\[
\begin{split}
    \bm{Y}^*(d)-\bm{Y}^*(0) &\in \mathbb{R}^{NJ^*}\\
    \bm{U}_d^{*\top} = \left\{(\bm{\Delta}_Z^* - E[\bm{\Delta}_Z^*]) \otimes \bm{1}_N \right\}^\top \bm{\mathcal{M}}^* \underline{\ddot{\bm{V}}}^* \bm{\mathcal{M}}^* \left\{ \bm{\Lambda}_Z^{d*} \otimes \textbf{I}_N \right\} &\in \mathbb{R}^{J^* \times NJ^*} \\
    \sum_{d=1}^{J^*-1} \bm{U}_d^{*\top}  = \left\{(\bm{\Delta}_Z^* - E[\bm{\Delta}_Z^*]) \otimes \bm{1}_N \right\}^\top \bm{\mathcal{M}}^* \underline{\ddot{\bm{V}}}^* \bm{\mathcal{M}}^* \left\{ \bm{\Delta}_Z^* \otimes \textbf{I}_N\right\}  &\in \mathbb{R}^{J^* \times NJ^*} \\
    \sum_{d=1}^{J^*-1} \bm{U}_d^{*\top} (\textbf{I}_{J^*} \otimes \bm{1}_N) = \left\{(\bm{\Delta}_Z^* - E[\bm{\Delta}_Z^*]) \otimes \bm{1}_N \right\}^\top \bm{\mathcal{M}}^* \underline{\ddot{\bm{V}}}^* \bm{\mathcal{M}}^* \left\{ \bm{\Delta}_Z^* \otimes \textbf{I}_N\right\} (\textbf{I}_{J^*} \otimes \bm{1}_N) &\in \mathbb{R}^{J^* \times J^*}
\end{split}
\]
and $\bm{U}_d^*$ is a function of $\bm{S}$, $N$, and $Z$.
Then, the period-specific treatment effect score function can be further simplified to
\[
    E\left[ \sum_{d=1}^{J^*-1} \bm{U}_d^{*\top} \left(\{\bm{Y}^*(d)-\bm{Y}^*(0)\} - (\textbf{I}_{J^*} \otimes \bm{1}_N)\underline{\bm{\beta}}_Z^P\right) \right] =0 \,.
\]
As in the earlier proofs, by the law of total probability, Assumptions A3 (Randomization), and A2 (Non-informative enrollment)
\[
\begin{split}
    E\left[ \sum_{d=1}^{J^*-1} \bm{U}_d^{*\top} \{\bm{Y}^*(d)-\bm{Y}^*(0)\} \right] &= E\left[ \sum_{d=1}^{J^*-1} \bm{U}_d^{*\top} (\textbf{I}_{J^*} \otimes \bm{1}_N) \right] \underline{\bm{\beta}}_Z^P \\
    E\left[ \sum_{d=1}^{J^*-1} E\left[ \bm{U}_d^{*\top} \{\bm{Y}^*(d)-\bm{Y}^*(0)\} |\bm{S}, N, Z \right] \right] &= E\left[ \sum_{d=1}^{J^*-1} \bm{U}_d^{*\top} (\textbf{I}_{J^*} \otimes \bm{1}_N) \right] \underline{\bm{\beta}}_Z^P \\
    E\left[ \sum_{d=1}^{J^*-1} \bm{U}_d^{*\top}  E\left[\bm{Y}^*(d)-\bm{Y}^*(0) |\bm{S}, N, Z \right] \right] &= E\left[ \sum_{d=1}^{J^*-1} \bm{U}_d^{*\top} (\textbf{I}_{J^*} \otimes \bm{1}_N) \right] \underline{\bm{\beta}}_Z^P \\
\text{(A2} &\text{, Non-informative enrollment)}\\
    E\left[ \sum_{d=1}^{J^*-1} \bm{U}_d^{*\top} E\left[\bm{Y}^*(d)-\bm{Y}^*(0) |N, Z \right] \right] &= E\left[ \sum_{d=1}^{J^*-1} \bm{U}_d^{*\top} (\textbf{I}_{J^*} \otimes \bm{1}_N) \right] \underline{\bm{\beta}}_Z^P \\
\text{(A3} &\text{, Randomization)}\\
    E\left[ \sum_{d=1}^{J^*-1} \bm{U}_d^{*\top} E\left[\bm{Y}^*(d)-\bm{Y}^*(0) |N \right] \right] &= E\left[ \sum_{d=1}^{J^*-1} \bm{U}_d^{*\top} (\textbf{I}_{J^*} \otimes \bm{1}_N) \right] \underline{\bm{\beta}}_Z^P \\
\text{(A2, Non-informative enrollment; A3} &\text{, Randomization)}\\
    \sum_{d=1}^{J^*-1} E\left[ \bm{U}_d^{*\top} \right] E\left[ E\left[\bm{Y}^*(d)-\bm{Y}^*(0) |N \right] \right] &= E\left[ \sum_{d=1}^{J^*-1} \bm{U}_d^{*\top} (\textbf{I}_{J^*} \otimes \bm{1}_N) \right] \underline{\bm{\beta}}_Z^P \\
\end{split}
\]
where individuals $k$ in the $h$-th set of $N$ entries in $E[\bm{Y}^*(d) | N] \in \mathbb{R}^{NJ^*}$ are $E[Y_{.hk}^*(d) | N]$.
The second equality results from the law of total expectation.
The third equality is a result of $\bm{U}_d^{*\top}$ being deterministic given $(\bm{S}, N, Z)$.
The fourth and fifth equalities then naturally extend from Assumptions A2 and A3.
In the final equality, $\bm{U}_d^{*\top}$ is a function of $Z$ and $N_{.j}$, which recall $N_{.j} \perp N$ (A2) and $Z \perp \bm{Y}^*(d),\bm{Y}^*(0), N$ (A3).
In the absence of randomization (A3), as is the case in SW-CQTs, the fifth and sixth equalities can still proceed with the mean independence assumption of exchangeable treatment effects (A4).

Then, alongside Assumption A1, assuming a period-specific treatment effect,
$E\left[E[\bm{Y}^*(d)-\bm{Y}^*(0) |N] \right] = E\left[\bm{Y}^*(d)-\bm{Y}^*(0) \right] = (0,\bm{\Delta}^{P\top})^\top \otimes 1_{N} =(\textbf{I}_{J^*} \otimes \bm{1}_N) (0,\bm{\Delta}^{P\top})^\top \in \mathbb{R}^{NJ^*}$.
Accordingly
\[
    E\left[ \sum_{d=1}^{J^*-1} \bm{U}_d^{*\top} (\textbf{I}_{J^*} \otimes \bm{1}_N) \right] (0, \bm{\Delta}^{P\top})^\top = E\left[ \sum_{d=1}^{J^*-1} \bm{U}_d^{*\top} (\textbf{I}_{J^*} \otimes \bm{1}_N) \right] \underline{\bm{\beta}}_Z^P \,.
\]
Finally, we have the desired consistency result 
$\underline{\bm{\beta}}_Z^P =(0,\beta_{2Z},...,\beta_{J^*Z})^\top = (0,\bm{\Delta}^{P\top})^\top$
and
\[
\beta_{jZ} = \Delta_j
\]
for $j \geq 2$.

Altogether, we establish the consistency of the working linear fixed-effects model within-transformed period-specific treatment effect estimator for $\bm{\Delta}^P$ in a SW-CRT by relying on Assumptions A1 (Super-population sampling), A2 (Non-informative enrollment), and  A3 (Randomization), and in a SW-CQT by relying on assumptions A1, A2, and A4 (Mean independence).

\subsection{Saturated Treatment Effect}
\label{app.sect:saturated}

Under the saturated treatment effect setting, we also drop the data from period $J$ to avoid over-parameterization. Similar to the period-specific treatment effects, we use the superscript * to denote all subsequent changes in notation.
Then the linear mixed model becomes
\begin{equation}
\label{app.eq:Y_S}
    \bm{Y}_i^*=(\textbf{I}_{J^*} \otimes \bm{1}_{N_i})\bm{\beta}_0^* + \sum_{d=1}^{J^* -1} \left[\left( \tilde{\bm{\Lambda}}_{Z_i}^{d*} \otimes \bm{1}_{N_i} \right) \bm{\beta}_d^S \right] + (\bm{1}_{J^*} \otimes \bm{X}_i) \bm{\beta}_X + \alpha_i \bm{1}_{N_iJ^*} + \bm{\gamma}_i^* \otimes \bm{1}_N + \bm{\epsilon}_i^*,
\end{equation}
where $\tilde{\bm{\Lambda}}_{Z_i}^{d*} \in \mathbb{R}^{J^* \times (J^* - d)}$ contains column $d+1$ to column $J^*$ of $\bm{\Lambda}_{Z_i}^{d*}$ and column vector $\bm{\beta}_d^S = (\beta_{dZd}, ... , \beta_{J^*-1, Zd})^{\top} \in \mathbb{R}^{ (J^* - d)}$, such that $\left( \tilde{\bm{\Lambda}}_{Z_i}^{d*} \otimes \bm{1}_{N_i} \right) \bm{\beta}_d^S  \in \mathbb{R}^{N_iJ^*}$, and $\alpha_i$ are the cluster fixed intercepts.
Notice that the estimating equations under this setting remain the same as those under the period-specific treatment effect setting, except that
$\bm{Q}^* = (\bm{Q}_0^*, \bm{Q}_{Z1}^S,...,\bm{Q}_{Z,J^*-1}^S,\bm{Q}_X^*)$ where $\bm{Q}_{Zd}^S = \tilde{\bm{\Lambda}}_{Z}^{d*} \otimes \bm{1}_{N}$,
and $\bm{\beta}^* = (\bm{\beta}_0^{*\top}, \bm{\beta}_1^{S\top}, ..., \bm{\beta}_{J^*-1}^{S\top}, \bm{\beta}_X^{\top})^{\top}$.

Accordingly, we can rewrite Equation \ref{app.eq:Y_S} as:
\begin{equation}
    \label{app.eq:Y_S_Q}
    \bm{Y}_i^* = \bm{Q}_{0,i}^* \bm{\beta}_0^* + \sum_{d=1}^{J^*-1} \bm{Q}_{Zd,i}^S \bm{\beta}_d^S + \bm{Q}_{X,i}^* \bm{\beta}_X + \bm{\mathcal{C}}_i^* + \bm{\gamma}_i^* \otimes \bm{1}_{N_i} + \bm{\epsilon}_i^*
\end{equation}
akin to the notation in Equation \ref{app.eq:Y_constant_Q}, but with $\bm{Q}_{Zd}^S$ and $\bm{\beta}_d^S$ denoting the saturated treatment indicators and effects during treatment duration $d$.
We can then rewrite the linear mixed model as the following ``within-transformed'' equation 
\begin{equation}
\label{app.eq:ddotY_PS}
\begin{split}
    \ddot{\bm{Y}}_i^* &= \ddot{\bm{Q}}_{0,i}^* \bm{\beta}_0^* + \sum_{d=1}^{J^*-1} \ddot{\bm{Q}}_{Zd,i}^S \bm{\beta}_d^S + \ddot{\bm{Q}}_{X,i}^* \bm{\beta}_{X} + \ddot{\bm{\gamma}}_i^* \otimes \bm{1}_{N_i} + \ddot{\bm{\epsilon}}_i^* \\
\end{split}
\end{equation}
akin to the notation in Equation \ref{app.eq:ddotY}, but with $\ddot{\bm{Q}}_{Zd,i}^S = \bm{\mathcal{M}}_i^* \bm{Q}_{Zd,i}^S$, 
where $\bm{\mathcal{M}}_i^* \equiv \textbf{I}_{N_iJ^*} - \bm{S}_i^* (\bm{S}_i^{*\top} \bm{S}_i^*)^{-1} \bm{S}_i^{*\top} = \textbf{I}_{N_iJ^*} - \frac{1}{\mathcal{N}_i^*}\bm{S}_i^* \bm{S}_i^{*\top}$
 as the following symmetric matrix that returns the within-transformed, de-meaned matrices and vectors (Equation \ref{app.eq:within-transformation-mat}).

Then using Lemma \ref{app.lemma:variance}, the asymptotic normality of $\hat{\bm{\beta}}_1^S, ..., \hat{\bm{\beta}}_d^S,..., \hat{\bm{\beta}}_{J^* -1}^S$ and consistency of variance estimators can be proved.
Hence, the proof for this setting is completed if we can prove $\underline{\beta}_{jZd} = \Delta_j(d)$ for $d < j$, with $\Delta_j(d)$ being the saturated (both duration and period-specific) treatment effect in period $j$ and having been administered for duration $d$.
To this end, we observe that the component of $E[\bm{\psi}(\bm{O}^*; \underline{\bm{\theta}})] = 0$ corresponding to the vector of saturated treatment effect $\bm{\beta}_d^S$ for each $d=1,...,J^*-1$ is
\begin{equation}
\label{app.eq:treatment_S}
    E[\{\ddot{\bm{Q}}_{Zd}^S\}^\top \underline{\ddot{\bm{V}}}^* (\ddot{\bm{Y}}^* -\ddot{\bm{Q}}^*\underline{\bm{\beta}}^*)] = 0 \,.
\end{equation}
The components of $E[\bm{\psi}(\bm{O};\bm{\theta})]=0$ corresponding to the period and cluster fixed-effects are of the same form as equations \ref{app.eq:period}, while accounting for the saturated treatment effect structure and the switch from $J$ periods to $J^* = J-1$ periods.

\subsubsection{Proof of Fixed-Effects model saturated treatment effect consistency}
\label{app.sect:fe_s_rand}

As in our previous proofs, Equation \ref{app.eq:treatment_S} implies, for $d=1,...,J^*-1$, 
\[
E[\{(\tilde{\bm{\Lambda}}_Z^{d*} - E[\tilde{\bm{\Lambda}}_Z^{d*}]) \otimes \bm{1}_N\}^\top \bm{\mathcal{M}}^* \underline{\bm{V}}^* \bm{\mathcal{M}}^* (\bm{Y}^* - \bm{Q}^* \underline{\bm{\beta}}^*)] = 0 \,.
\]
To provide an illustrative example from a 4 cluster, 5 ($J^*=4$) period SW-CRT:
\[
\tilde{\bm{\Lambda}}_{Z_i}^{1*} = \left\{
\begin{pmatrix}
    0 & 0 & 0 \\
    1 & 0 & 0 \\
    0 & 0 & 0 \\
    0 & 0 & 0 \\
\end{pmatrix},
\begin{pmatrix}
    0 & 0 & 0 \\
    0 & 0 & 0 \\
    0 & 1 & 0 \\
    0 & 0 & 0 \\
\end{pmatrix},
\begin{pmatrix}
    0 & 0 & 0 \\
    0 & 0 & 0 \\
    0 & 0 & 0 \\
    0 & 0 & 1 \\
\end{pmatrix},
\begin{pmatrix}
    0 & 0 & 0 \\
    0 & 0 & 0 \\
    0 & 0 & 0 \\
    0 & 0 & 0 \\
\end{pmatrix} \right\} ;
\]
\[
\tilde{\bm{\Lambda}}_{Z_i}^{2*} = \left\{
\begin{pmatrix}
    0 & 0 \\
    0 & 0 \\
    1 & 0 \\
    0 & 0 \\
\end{pmatrix},
\begin{pmatrix}
    0 & 0 \\
    0 & 0 \\
    0 & 0 \\
    0 & 1 \\
\end{pmatrix},
\begin{pmatrix}
    0 & 0 \\
    0 & 0 \\
    0 & 0 \\
    0 & 0 \\
\end{pmatrix},
\begin{pmatrix}
    0 & 0 \\
    0 & 0 \\
    0 & 0 \\
    0 & 0 \\
\end{pmatrix} \right\};
\]
\[
\tilde{\bm{\Lambda}}_{Z_i}^{3*} = \left\{
\begin{pmatrix}
    0 \\
    0 \\
    0 \\
    1 \\
\end{pmatrix},
\begin{pmatrix}
    0 \\
    0 \\
    0 \\
    0 \\
\end{pmatrix},
\begin{pmatrix}
    0 \\
    0 \\
    0 \\
    0 \\
\end{pmatrix},
\begin{pmatrix}
    0 \\
    0 \\
    0 \\
    0 \\
\end{pmatrix} \right\}
\]
with the entries in $\tilde{\bm{\Lambda}}_{Z_i}^{d*}$ corresponding to clusters $i=1,2,3,4$, respectively.
Then
\[
\frac{1}{m}\sum_{i=1}^m \tilde{\bm{\Lambda}}_{Z_i}^{d*} = \left\{
\begin{pmatrix}
    0 & 0 & 0 \\
    1/4 & 0 & 0 \\
    0 & 1/4 & 0 \\
    0 & 0 & 1/4 \\
\end{pmatrix},
\begin{pmatrix}
    0 & 0 \\
    0 & 0 \\
    1/4 & 0 \\
    0 & 1/4 \\
\end{pmatrix},
\begin{pmatrix}
    0 \\
    0 \\
    0 \\
    1/4 \\
\end{pmatrix}
\right\}
\]
with entries corresponding to $d=1,2,3$, respectively

Since $E[\{(\tilde{\bm{\Lambda}}_Z^{d*} - E[\tilde{\bm{\Lambda}}_Z^{d*}]) \otimes \bm{1}_N\}^\top \bm{\mathcal{M}}^* \underline{\bm{V}}^* \bm{\mathcal{M}}^* (\textbf{I}_{J^*} \otimes \bm{1}_N)] = 0$, $E[\{(\tilde{\bm{\Lambda}}_Z^{d*} - E[\tilde{\bm{\Lambda}}_Z^{d*}]) \otimes \bm{1}_N\}^\top \bm{\mathcal{M}}^* \underline{\bm{V}}^* \bm{\mathcal{M}}^* (\textbf{I}_{J^*} \otimes \bm{X})] = 0$, and
$E[\{(\tilde{\bm{\Lambda}}_Z^{d*} - E[\tilde{\bm{\Lambda}}_Z^{d*}]) \otimes \bm{1}_N\}^\top \bm{\mathcal{M}}^* \underline{\bm{V}}^* \bm{\mathcal{M}}^* (\textbf{I}_{J^*} \otimes \textbf{I}_N)\bm{Y}^*(0)] = 0$,
formula (\ref{app.eq:PO}) (with period $J$ dropped) and the expression of $Q^*$ implies
\[
\begin{split}
    E&\left[\{ (\tilde{\bm{\Lambda}}_Z^{d*} - E[\tilde{\bm{\Lambda}}_Z^{d*}]) \otimes \bm{1}_N \}^\top \bm{\mathcal{M}}^* \underline{\bm{V}}^* \bm{\mathcal{M}}^* \left\{\bm{Y}^* - \sum_{d'=1}^{J^*-1} \left[\left( \tilde{\bm{\Lambda}}_{Z}^{d'*} \otimes \bm{1}_{N} \right) \underline{\bm{\beta}}_{d'}^S \right]\right\} \right] \\
    &= E\left[\{ (\tilde{\bm{\Lambda}}_Z^{d*} - E[\tilde{\bm{\Lambda}}_Z^{d*}]) \otimes \bm{1}_N \}^\top \bm{\mathcal{M}}^* \underline{\bm{V}}^* \bm{\mathcal{M}}^* \left\{\sum_{d'=1}^{J^*-1}(\bm{\Lambda}_Z^{d'*} \otimes \textbf{I}_N)[\bm{Y}^*(d')-\bm{Y}^*(0)] - \sum_{d'=1}^{J^*-1} \left[\left( \tilde{\bm{\Lambda}}_{Z}^{d'*} \otimes \bm{1}_{N} \right) \underline{\bm{\beta}}_{d'}^S \right]\right\} \right] \\
    &= E\left[ \sum_{d'=1}^{J^*-1} \{ (\tilde{\bm{\Lambda}}_Z^{d*} - E[\tilde{\bm{\Lambda}}_Z^{d*}]) \otimes \bm{1}_N \}^\top \bm{\mathcal{M}}^* \underline{\bm{V}}^* \bm{\mathcal{M}}^* \left\{(\bm{\Lambda}_Z^{d'*} \otimes \textbf{I}_N)[\bm{Y}^*(d')-\bm{Y}^*(0)] - \left( \tilde{\bm{\Lambda}}_{Z}^{d'*} \otimes \bm{1}_{N} \right) \underline{\bm{\beta}}_{d'}^S \right\} \right] \\
    &= 0 \,.
\end{split}
\]
Note the distinction above between $\bm{\Lambda}_Z^{d'*} \in \mathbb{R}^{J^* \times J^*}$ (defined in Section \ref{app.sect:Theorem_SW_Proof}) and $\tilde{\bm{\Lambda}}_{Z}^{d'*} \in \mathbb{R}^{J^* \times (J^*-d)}$ (defined in Section \ref{app.sect:saturated}).
To further simplify the above formula, we can compute
\[
\begin{split}
    \bm{Y}^*(d)-\bm{Y}^*(0) &\in \mathbb{R}^{NJ^*}\\
    \tilde{\bm{U}}_{d}^{*\top} = \{ (\tilde{\bm{\Lambda}}_Z^{d*} - E[\tilde{\bm{\Lambda}}_Z^{d*}]) \otimes \bm{1}_N \}^\top \bm{\mathcal{M}}^* \underline{\bm{V}}^* \bm{\mathcal{M}}^* &\in \mathbb{R}^{(J^*-d) \times NJ^*} \\
\end{split}
\]
and $\tilde{\bm{U}}_{d}^*$ and $\tilde{\bm{U}}_{d}^{*\top} (\bm{\Lambda}_Z^{d'*} \otimes \textbf{I}_N)$ are functions of $\bm{S}$, $N$, and $Z$.
Then, the saturated treatment effect score function can be further simplified to
\[
    E\left[ \sum_{d'=1}^{J^*-1} \tilde{\bm{U}}_{d}^{*\top} \left((\bm{\Lambda}_Z^{d'*} \otimes \textbf{I}_N)\{\bm{Y}^*(d')-\bm{Y}^*(0)\} - \left( \tilde{\bm{\Lambda}}_{Z}^{d'*} \otimes \bm{1}_{N} \right) \underline{\bm{\beta}}_{d'}^S\right) \right] =0 \,.
\]
As in the earlier proofs, by the law of total probability, Assumptions A3 (Randomization), and A2 (Non-informative enrollment)
\[
\begin{split}
        E\left[\sum_{d'=1}^{J^*-1} \tilde{\bm{U}}_{d}^{*\top} (\bm{\Lambda}_Z^{d'*} \otimes \textbf{I}_N) \{\bm{Y}^*(d')-\bm{Y}^*(0)\} \right] &= E\left[\sum_{d'=1}^{J^*-1} \tilde{\bm{U}}_{d}^{*\top} \left( \tilde{\bm{\Lambda}}_{Z}^{d'*} \otimes \bm{1}_{N} \right) \underline{\bm{\beta}}_{d'}^S \right] \\
        \sum_{d'=1}^{J^*-1} E\left[E\left[ \tilde{\bm{U}}_{d}^{*\top} (\bm{\Lambda}_Z^{d'*} \otimes \textbf{I}_N) \{\bm{Y}^*(d')-\bm{Y}^*(0)\} | \bm{S}, N, Z \right]\right] &= \sum_{d'=1}^{J^*-1} E\left[\tilde{\bm{U}}_{d}^{*\top} \left( \tilde{\bm{\Lambda}}_{Z}^{d'*} \otimes \bm{1}_{N} \right) \right] \underline{\bm{\beta}}_{d'}^S \\
        \sum_{d'=1}^{J^*-1} E\left[\tilde{\bm{U}}_{d}^{*\top} (\bm{\Lambda}_Z^{d'*} \otimes \textbf{I}_N) E[\bm{Y}^*(d')-\bm{Y}^*(0) | \bm{S}, N, Z] \right] &= \sum_{d'=1}^{J^*-1} E\left[\tilde{\bm{U}}_{d}^{*\top} \left( \tilde{\bm{\Lambda}}_{Z}^{d'*} \otimes \bm{1}_{N} \right) \right] \underline{\bm{\beta}}_{d'}^S \\
\text{(A2} &\text{, Non-informative enrollment)}\\
        \sum_{d'=1}^{J^*-1} E\left[ \tilde{\bm{U}}_{d}^{*\top} (\bm{\Lambda}_Z^{d'*} \otimes \textbf{I}_N) E[\bm{Y}^*(d')-\bm{Y}^*(0) | N, Z] \right] &= \sum_{d'=1}^{J^*-1} E\left[\tilde{\bm{U}}_{d}^{*\top} \left( \tilde{\bm{\Lambda}}_{Z}^{d'*} \otimes \bm{1}_{N} \right) \right] \underline{\bm{\beta}}_{d'}^S \\
\text{(A3} &\text{, Randomization)}\\
        \sum_{d'=1}^{J^*-1} E\left[ \tilde{\bm{U}}_{d}^{*\top} (\bm{\Lambda}_Z^{d'*} \otimes \textbf{I}_N) E[\bm{Y}^*(d')-\bm{Y}^*(0) | N] \right] &= \sum_{d'=1}^{J^*-1} E\left[\tilde{\bm{U}}_{d}^{*\top} \left( \tilde{\bm{\Lambda}}_{Z}^{d'*} \otimes \bm{1}_{N} \right) \right] \underline{\bm{\beta}}_{d'}^S \\
\text{(A2, Non-informative enrollment; A3} &\text{, Randomization)}\\
        \sum_{d'=1}^{J^*-1} E[\tilde{\bm{U}}_{d}^{*\top} (\bm{\Lambda}_Z^{d'*} \otimes \textbf{I}_N)] E\left[E[\bm{Y}^*(d')-\bm{Y}^*(0) | N] \right] &= \sum_{d'=1}^{J^*-1} E\left[\tilde{\bm{U}}_{d}^{*\top} \left( \tilde{\bm{\Lambda}}_{Z}^{d'*} \otimes \bm{1}_{N} \right) \right] \underline{\bm{\beta}}_{d'}^S \,.
\end{split}
\]
The above prove proceeds most similarly to the proof of consistency for period-specific treatment effect estimators.
The second equality results from the law of total expectation.
The third equality is a result of $\tilde{\bm{U}}_d^{*\top} (\bm{\Lambda}_Z^{d'*} \otimes \textbf{I}_N)$ being deterministic given $(\bm{S}, N, Z)$.
The fourth and fifth equalities then naturally extend from Assumptions A2 and A3.
In the final equality, $\tilde{\bm{U}}_d^{*\top} (\bm{\Lambda}_Z^{d'*} \otimes \textbf{I}_N)$ is a function of $Z$ and $N_{.j}$, which recall $N_{.j} \perp N$  (A2) and $Z \perp \bm{Y}^*(d'), \bm{Y}^*(0), N$ (A3).
In the absence of randomization (A3), as is the case in SW-CQTs, the fifth and sixth equalities can still proceed with the mean independence assumption of exchangeable treatment effects (A4).

Assuming a saturated treatment effect, alongside Assumption A1, $(\bm{\Lambda}_Z^{d'*} \otimes \textbf{I}_N) E[\bm{Y}^*(d')-\bm{Y}^*(0)] = (\bm{\Lambda}_Z^{d'*} \otimes \textbf{I}_N) E[\bm{Y}^*(d')-\bm{Y}^*(0)] = ( \tilde{\bm{\Lambda}}_{Z}^{d'*} \otimes \bm{1}_{N})\bm{\Delta}_{d'}^S \in \mathbb{R}^{NJ^*}$ where $\bm{\Delta}_d^S = (\Delta_{d}(d), ..., \Delta_{J^*-1}(d))^\top \in \mathbb{R}^{J^*-d}$.
Accordingly
\[
    \sum_{d'=1}^{J^*-1} E\left[\tilde{\bm{U}}_{d}^{*\top} \left( \tilde{\bm{\Lambda}}_{Z}^{d'*} \otimes \bm{1}_{N} \right) \right] \bm{\Delta}_{d'}^S = \sum_{d'=1}^{J^*-1} E\left[\tilde{\bm{U}}_{d}^{*\top} \left( \tilde{\bm{\Lambda}}_{Z}^{d'*} \otimes \bm{1}_{N} \right) \right] \underline{\bm{\beta}}_{d'}^S \,.
\]
Finally, by the regularity condition that the solution is unique, we have the desired consistency result
\[
\underline{\bm{\beta}}_{d}^S = \bm{\Delta}_d^S \,.
\]

Altogether, we establish the consistency of the working linear fixed-effects model within-transformed saturated treatment effect estimator for $\bm{\Delta}_d^S$ in a SW-CRT by relying on Assumptions A1 (Super-population sampling), A2 (Non-informative enrollment), and  A3 (Randomization), and in a SW-CQT by relying on assumptions A1, A2, and A4 (Mean independence).

\subsection{Proof of Fixed-Effects model constant treatment effect consistency for the P-ATO in a SW-CRT}
\label{app.sect:SW_const=P-ATO}

We can further demonstrate, using the same set of assumptions presented above, that the fixed-effects model constant treatment effect estimator is typically consistent the period-average treatment effect for the overlap population (P-ATO), generally defined as
\[
    \Delta^{P-ATO}  = \frac{ \sum_{j=1}^J \lambda_{j} \Delta_j}{\sum_{j=1}^J \lambda_{j}}
\]
with weights $\lambda_{j}=\pi_j^s(1-\pi_j^s)$ being the tilting function that generates the overlap propensity weights \citep{li_balancing_2018}, where $\pi_j^s$ is the proportion of clusters receiving treatment in period $j$.

As presented in Section \ref{app.sect:constant}, modeling a SW-CRT with a constant treatment effect results in the following score function
\[
   \sum_{d=1}^{J-1} E\left[ \bm{U}_{d}^\top \right]  \left( E\left[\bm{Y}(d)-\bm{Y}(0) \right] - \bm{1}_{NJ} \underline{\beta}_Z \right) = 0
\]
with
$\sum_{d=1}^{J-1} \bm{U}_{d}^\top = \left\{(\bm{\Delta}_{Z}\bm{1}_{J} - E[\bm{\Delta}_{Z}\bm{1}_{J}]) \otimes \bm{1}_N \right\}^\top \bm{\mathcal{M}}  \underline{\ddot{\bm{V}}} \bm{\mathcal{M}} \left\{ \bm{\Delta}_{Z} \otimes \textbf{I}_N\right\} \in \mathbb{R}^{1 \times NJ}$.
Assuming a true underlying period-specific treatment effect,
$E\left[\bm{Y}(d)-\bm{Y}(0) \right] = \bm{\Delta}^{P} \otimes 1_{N} =(\textbf{I}_{J} \otimes \bm{1}_N) \bm{\Delta}^{P} \in \mathbb{R}^{NJ}$.
Unlike Section \ref{app.sect:period}, we define the full vector of all period-specific treatment effects $\bm{\Delta}^P = (\Delta_1,...,\Delta_J)^\top = (\Delta_j)_{j=1,...,J}$.
Then, the previous score function can be re-written as
\[
     E\left[\sum_{d=1}^{J-1} \bm{U}_{d}^\top\right]  (\textbf{I}_{J} \otimes \bm{1}_N)  \left( \bm{\Delta}^{P} - \bm{1}_J\underline{\beta}_Z \right) = 0 \,.
\]

Recall, 
$\bm{\Delta}_{Z_i} = diag\{I\{Z_i \leq j\}: j=1,...,J\} \in \mathbb{R}^{J \times J}$,
$E[\bm{\Delta}_{Z}] = diag\{\frac{j-1}{J-1}: j=1,...,J\} \in \mathbb{R}^{J \times J}$,
$\left\{(\bm{\Delta}_{Z}\bm{1}_{J} - E[\bm{\Delta}_{Z}\bm{1}_{J}]) \otimes \bm{1}_N \right\}^\top = (\bm{\Delta}_{Z}\bm{1}_{J} - E[\bm{\Delta}_{Z}\bm{1}_{J}])^\top (\textbf{I}_{J} \otimes \bm{1}_N)^\top$
and,
$\bm{\mathcal{M}}_i \ddot{\bm{V}}_i \bm{\mathcal{M}}_i = \bm{\mathcal{M}}_i \bm{D}_i (\bm{D}_i^\top \ddot{\bm{\Sigma}}_i \bm{D}_i)^{-1} \bm{D}_i^\top \bm{\mathcal{M}}_i \in \mathbb{R}^{N_iJ \times N_iJ}$ is a symmetric matrix.
Define 
$[\bm{\mathcal{M}}  \underline{\ddot{\bm{V}}} \bm{\mathcal{M}}]_{\text{m},\text{n}}$ as the entry in row $\text{m}$ of column $\text{n}$ of the matrix $\bm{\mathcal{M}}  \underline{\ddot{\bm{V}}} \bm{\mathcal{M}}$; then define here
\[
    \mathcal{H}_{a,\text{n}} = \sum_{m=(a-1)N+1}^{aN} [\bm{\mathcal{M}}  \underline{\ddot{\bm{V}}} \bm{\mathcal{M}}]_{\text{m},\text{n}} \, \forall \, a \in \{1,...,J\}, \text{n} \in \{1,...,NJ\} \,, 
\]
summing entries from row $\text{m}=(a-1)N+1$ to $\text{a}N$ (such that we sum from row $1$ to $N$, $N+1$ to $2N$, ..., etc.) for $[\bm{\mathcal{M}}  \underline{\ddot{\bm{V}}} \bm{\mathcal{M}}]_{\text{m},\text{n}}$, given column $\text{n}$. Then
\[
\begin{split}
    \mathcal{H}_A &= \mathcal{H}_{a,\text{n}} \, \forall \, \text{n} \in \{(a-1)N+1,(a-1)N+2,...,aN\}, a \in \{1,...,J\} \,, \\
    \mathcal{H}_B &= \mathcal{H}_{a,\text{n}} \, \forall \, \text{n} \notin \{(a-1)N+1,(a-1)N+2,...,aN\}, a \in \{1,...,J\} \,.
\end{split}
\]
Altogether, this can be formulated as $(\textbf{I}_J \otimes \bm{1}_N)^\top\bm{\mathcal{M}}  \underline{\ddot{\bm{V}}} \bm{\mathcal{M}} = (\textbf{I}_J(\mathcal{H}_A-\mathcal{H}_B) + \bm{1}_J \bm{1}_J^\top \mathcal{H}_B) \otimes \bm{1}_N^\top \in \mathbb{R}^{J \times NJ}$.
Accordingly, we can then derive the following
\[
\begin{split}
    &(\textbf{I}_{J} \otimes \bm{1}_N)^\top \bm{\mathcal{M}}  \underline{\ddot{\bm{V}}} \bm{\mathcal{M}}\left\{ \bm{\Delta}_{Z} \otimes \textbf{I}_N\right\}  (\textbf{I}_{J} \otimes \bm{1}_N) \\
    &= \begin{pmatrix}
        0 & (I\{Z \leq j\} \mathcal{H}_B N)^\top_{j=2,...,J} \\
        \bm{0}_{J-1}  & \left( (I\{Z \leq j\} \mathcal{H}_B N)^\top_{j=2,...,J} \otimes \bm{1}_{J-1} \right) - diag\left(I\{Z \leq j\}(\mathcal{H}_B-\mathcal{H}_A)N \, : \,j=2,...,J\right)      
    \end{pmatrix} 
\end{split} \,.
\]
Altogether,
\[
\begin{split}
    E&\left[\sum_{d=1}^{J-1} \bm{U}_{d}^\top\right]  (\textbf{I}_{J} \otimes \bm{1}_N) \left( \bm{\Delta}^{P} - \bm{1}_J\underline{\beta}_Z \right) \\
    &= E\left[ \left(I\{Z \leq j\} - \frac{j-1}{J-1} \right)^\top_{j=1,...,J}
    (\textbf{I}_{J} \otimes \bm{1}_N)^\top \bm{\mathcal{M}}  \underline{\ddot{\bm{V}}} \bm{\mathcal{M}}\left\{ \bm{\Delta}_{Z} \otimes \textbf{I}_N\right\}  (\textbf{I}_{J} \otimes \bm{1}_N) 
    \left( \bm{\Delta}^{P} - \bm{1}_J\underline{\beta}_Z \right)\right] \\
    &= E\left[ \left(0, \left(I\{z \leq j\} N \left[\mathcal{H}_A\left(1-\frac{j-1}{J-1}\right) + \mathcal{H}_B \sum_{l \neq j }\left(I\{z \leq l\} - \frac{l-1}{J-1}\right) \right] \right)^\top_{j=2,...,J} \right) \left( \bm{\Delta}^{P} - \bm{1}_J\underline{\beta}_Z \right) \right] \\
    &= E\left[ \left(0, \left(I\{z \leq j\} N \mathcal{H}_A\left(\frac{J-j}{J-1}\right) + \mathcal{H}_B \sum_{l \leq j } \left(I\{z \leq l\} - \frac{l-1}{J-1}\right) \right)^\top_{j=2,...,J} \right) \left( \bm{\Delta}^{P} - \bm{1}_J\underline{\beta}_Z \right) \right] \\
    &=  \left(0, \left(\left(\frac{j-1}{J-1}\right) N \mathcal{H}_A \left(\frac{J-j}{J-1}\right) \right)^\top_{j=2,...,J} \right) \left( \bm{\Delta}^{P} - \bm{1}_J\underline{\beta}_Z \right) \\
    &= 0 \,. 
\end{split}
\]
Finally, we have
\[
    \underline{\beta}_Z = \frac{\sum_{j=1}^J \lambda_j \Delta_j}{ \sum_{j=1}^J \lambda_j } = \Delta^{P-ATO}
\]
where $\lambda_j = \pi_j^s(1-\pi_j^s)$ and $\pi_j^s = \frac{j-1}{J-1}$. Note $\lambda_1=\lambda_J=0$ in a SW-CT.

Altogether, we establish the consistency of the working linear fixed-effects model within-transformed constant treatment effect estimator for the period-average treatment effect for the overlap population (P-ATO) when there is a true period-specific treatment effect in a SW-CRT by relying on Assumptions A1 (Super-population sampling), A2 (Non-informative enrollment), and  A3 (Randomization), and in a SW-CQT by relying on assumptions A1, A2, and A4 (Mean independence).

\noindent $\square$

\newpage
\section{Proofs of Theorems \ref{app.Theorem:PB} \& \ref{app.Theorem:XO}}
\label{app.sect:Theorem_PB_XO_Proof}

\subsection{Extension to other longitudinal CRT designs}

In the previous sections, we demonstrate that a linear model with cluster fixed-effects can yield consistent treatment effect estimators for the true treatment effect structure, as long as the treatment effect structure is correctly specified, in a SW-CRT.
The SW-CRT is an example of a longitudinal CRT design, and can potentially have treatment effect structures that can vary over time by duration, period, or both (saturated), as we have previously highlighted.
Other examples of popular longitudinal and complete (all clusters are observed over the same number of periods) CRT designs include the parallel-with-baseline CRT (PB-CRT) and cluster randomized crossover (CRXO) designs.

It's then straightforward to extend the prior results from the SW-CRT with different treatment effect structures to other longitudinal and complete CRT designs.
Notably, these results with fixed-effects models only hold for longitudinal CRT designs that have both within-cluster and between-cluster variation in treatment status (i.e., SW-CRT, PB-CRT, CRXO).
For example, the fixed-effects model with a constant treatment effect structure cannot estimate a treatment effect if $\bm{Q}_{Z,i} = \bar{Q}_{Z,i} 1_J \, \forall \, i$ since $\ddot{\bm{Q}}_{Z,i}=0 \, \forall i$ and a treatment effect $\beta_Z$ is not estimable via a within-transformed fixed-effects estimator.
That is, more formally
\[
\exists \, i : \bm{Q}_{Z,i} \neq \bar{Q}_{Z,i} 1_J
\]
and there exists some $i$ where $\bm{Q}_{Z,i} \neq \bar{Q}_{Z,i}1_J$ and $\ddot{\bm{Q}}_{Z,i} \neq 0$.

We explicitly demonstrate this by outlining proofs of consistency of a fixed-effects model treatment effect estimator in a PB-CRT and CRXO.
We can then define the treatment effect estimands in PB-CRTs and CRXOs as was done previously in SW-CRTs (Equation \ref{app.eq:PO_estimands_SW}).

In a PB-CRT, we denote $d=\frac{z}{2}(j-1)$ as the duration of treatment exposure time, with $\tilde{z} \in \{0, z=2\}$.
Where $z=\frac{2d}{j-1}$, then $Y_{ijk}(z)=Y_{ijk}\left(\frac{2d}{j-1}\right)$.
The treatment effect estimand in a PB-CRT can then be defined as 
\begin{equation}
\label{app.eq:PO_estimands_PB}
\begin{split}
    \Delta_j(d) &= E\left[ \frac{1}{N} \sum_{k=1}^{N} Y_{.jk}\left(\frac{2d}{j-1}\right) \right] - E\left[ \frac{1}{N} \sum_{k=1}^{N} Y_{.jk}(0) \right] \\
    &= E\left[Y_{.jk}\left(\frac{2d}{j-1}\right)\right] - E[Y_{.jk}(0)] \,.
\end{split}
\end{equation}

In a CRXO, we denote $d=\frac{j+2-z}{2}$ as the duration of treatment exposure time, with $z \in \{1,2\}$.
Where $z=j-2d+2$, then $Y_{ijk}(z)=Y_{ijk}\left(j-2d+2\right)$.
The treatment effect estimand in a CRXO can then be defined as 
\begin{equation}
\label{app.eq:PO_estimands_XO}
\begin{split}
    \Delta_j(d) &= E\left[ \frac{1}{N} \sum_{k=1}^{N} Y_{.jk}\left(j-2d+2\right) \right] - E\left[ \frac{1}{N} \sum_{k=1}^{N} Y_{.jk}(0) \right] \\
    &= E\left[Y_{.jk}\left(j-2d+2\right)\right] - E[Y_{.jk}(0)] \,.
\end{split}
\end{equation}

\subsubsection{Proof of Fixed-Effects model constant treatment effect consistency in a PB-CRT}
\label{app.sect:PB_constant}

\textit{Proof of Theorem \ref{app.Theorem:PB}.}

Consider a $J$ period PB-CRT, with period $j=1$ representing the baseline period, and $j=2,...,J$ representing the follow-up periods (where some clusters receive treatment) in the PB-CRT design.
We then define the linear model corresponding to Equation \ref{app.eq:Y_constant}
\[
    \bm{Y}_i = (
    \textbf{I}_{J}
    \otimes \bm{1}_{N_i})\bm{\beta}_0 + \left((\bm{\Delta}_{Z_i,PB}\bm{1}_{J}) \otimes \bm{1}_{N_i}\right) \beta_Z + (\bm{1}_{J} \otimes \bm{X}_i) \bm{\beta}_X + \alpha_i \bm{1}_{N_iJ} + \bm{\gamma}_i \otimes \bm{1}_{N_i} + \bm{\epsilon}_i \,,
\]
where $\bm{Y}_i = (\bm{Y}_{i1}^\top,...,\bm{Y}_{iJ}^\top)^\top \in \mathbb{R}^{N_iJ}$, $\bm{\Delta}_{Z_i,PB} = diag\{(0, I\{Z_i=2\} \bm{1}_{J-1}^\top)^\top\} \in \mathbb{R}^{J \times J}$ as a diagonal matrix with the first diagonal entry being 0 and the remaining $J$ diagonal entries being 1 if $Z_i=2$ (where clusters begin receiving treatment in period $j=2$) or 0 if $Z_i=0$ (where clusters never receive the treatment).
As in a SW-CRT (Equation (\ref{app.eq:PO})), where the observed outcomes are related to the potential outcomes, we can also do so with in a PB-CRT with
\begin{equation}
\label{app.eq:PO_PB}
\begin{split}
    \bm{Y}_i  = \bm{Y}_i(0) + \sum_{d=1}^{J-1}(\bm{\Lambda}_{Z,PB}^d \otimes \textbf{I}_N)[\bm{Y}_i(d) - \bm{Y}_i(0)]
\end{split}
\end{equation}
where $\bm{\Lambda}_{Z,PB}^d = diag\{I\{Z=2, d=j-1 \} : j=1,...,J\} \in \mathbb{R}^{J \times J}$ and $\sum_{d=1}^{J-1} \bm{\Lambda}_{Z,PB}^d = \bm{\Delta}_{Z,PB}$.

As in the prior sections, we can then equivalently define using the fixed-effects model with 
Assumptions A1-A3 or Assumptions A1, A2, and A4.
\[
\begin{split}
    E\left[\left\{(\bm{\Delta}_{Z,PB}\bm{1}_{J} - E[\bm{\Delta}_{Z,PB}\bm{1}_{J}]) \otimes \bm{1}_N \right\}^\top \bm{\mathcal{M}}  \underline{\ddot{\bm{V}}} \bm{\mathcal{M}} \bm{Q}_0\right] &= 0 \,, \\
    E\left[\left\{(\bm{\Delta}_{Z,PB}\bm{1}_{J} - E[\bm{\Delta}_{Z,PB}\bm{1}_{J}]) \otimes \bm{1}_N \right\}^\top \bm{\mathcal{M}}  \underline{\ddot{\bm{V}}} \bm{\mathcal{M}} \bm{Q}_X\right] &= 0 \,, \\
    E\left[\left\{(\bm{\Delta}_{Z,PB}\bm{1}_{J} - E[\bm{\Delta}_{Z,PB}\bm{1}_{J}]) \otimes \bm{1}_N \right\}^\top \bm{\mathcal{M}}  \underline{\ddot{\bm{V}}} \bm{\mathcal{M}} (\textbf{I}_J \otimes \textbf{I}_N)\bm{Y}(0) \right] &= 0 \,,
\end{split}
\]
where
\[
\begin{split}
    \bm{U}_{d,PB}^\top &= \left\{(\bm{\Delta}_{Z,PB}\bm{1}_{J} - E[\bm{\Delta}_{Z,PB}\bm{1}_{J}]) \otimes \bm{1}_N \right\}^\top \bm{\mathcal{M}}  \underline{\ddot{\bm{V}}} \bm{\mathcal{M}} \left\{ \bm{\Lambda}_{Z,PB}^d \otimes \textbf{I}_N\right\} \in \mathbb{R}^{1 \times NJ} \,, \\
    \sum_{d=1}^{J-1} \bm{U}_{d,PB}^\top &= \left\{(\bm{\Delta}_{Z,PB}\bm{1}_{J} - E[\bm{\Delta}_{Z,PB}\bm{1}_{J}]) \otimes \bm{1}_N \right\}^\top \bm{\mathcal{M}}  \underline{\ddot{\bm{V}}} \bm{\mathcal{M}} \left\{ \bm{\Delta}_{Z,PB} \otimes \textbf{I}_N\right\} \in \mathbb{R}^{1 \times NJ} \,. \\
\end{split}
\]
Then
\[
    \sum_{d=1}^{J-1}  E\left[\bm{U}_{d,PB}^\top\right] E\left[\bm{Y}(d)-\bm{Y}(0) \right] = \sum_{d=1}^{J-1} E\left[ \bm{U}_{d,PB}^\top \bm{1}_{NJ} \right] \underline{\beta}_Z \,,
\]
where $E[\bm{Y}(d)-\bm{Y}(0)]=\bm{1}_{NJ}\Delta \in \mathbb{R}^{NJ} \, \forall d$ assuming a constant treatment effect.
Accordingly, we have the desired consistency result for a constant treatment effect in a PB-CRT
\[
\underline{\beta}_Z = \Delta \,.
\]

Altogether, we establish the consistency of the working linear fixed-effects model within-transformed constant treatment effect estimator for $\Delta$ in a PB-CRT by relying on Assumptions A1 (Super-population sampling), A2 (Non-informative enrollment), and  A3 (Randomization), and in a PB-CQT by relying on assumptions A1, A2, and A4 (Mean independence).

\subsubsection{Proof of Fixed-Effects model saturated treatment effect consistency in a PB-CRT}

In a PB-CRT, saturated, duration, and period-specific treatment effects coincide with $\bm{\Delta}^{S} = (\Delta_2(1), ...,\Delta_{J}(J-1))^\top = \bm{\Delta}^D = (\Delta(1), ...,\Delta(J-1))^\top = \bm{\Delta}^{P} = (\Delta_2, ...,\Delta_{J})^\top \in \mathbb{R}^{J-1}$.
In this case, with saturated treatment effects, we have $\Delta_j(d)$ where $d = j-1$.
For this proof, we will generally use the notation for duration-specific treatment effects $\bm{\Delta}^D$ to stand in for saturated treatment effects.

A PB-CRT with a saturated/duration/period-specific treatment effect structure can be specified similarly to the previous section but with the treatment indicator replaced by $(\bm{\Delta}_{Z_i,PB} \otimes \bm{1}_{N_i}) (0, \bm{\beta}_Z^{D\top})^\top$, in the linear model
\[
    \bm{Y}_i = (
    \textbf{I}_{J}
    \otimes \bm{1}_{N_i})\bm{\beta}_0 + \left(\bm{\Delta}_{Z_i,PB} \otimes \bm{1}_{N_i}\right) (0, \bm{\beta}_Z^{D\top})^\top + (\bm{1}_{J} \otimes \bm{X}_i) \bm{\beta}_X + \alpha_i \bm{1}_{N_iJ} + \bm{\gamma_i} \otimes \bm{1}_{N_i} + \bm{\epsilon}_i \,,
\]
where again $\bm{\Delta}_{Z,PB} = I\{Z_i=2\} \left[diag\{(0, \bm{1}_{J-1}^\top)^\top\}\right] \in \mathbb{R}^{J \times J}$ and $\bm{\beta}_Z^D = (\beta_{Z1}^D, ..., \beta_{Z,J-1}^D)^\top \in \mathbb{R}^{J-1}$ in a PB-CRT.
Accordingly,
\[
    \bm{U}_{PB}^{\top} = \left\{(\bm{\Delta}_{Z_i,PB} - E[\bm{\Delta}_{Z_i,PB}]) \otimes \bm{1}_N \right\}^\top \bm{\mathcal{M}}  \underline{\ddot{\bm{V}}} \bm{\mathcal{M}} \in \mathbb{R}^{J \times NJ} \,,
\]
Then, using the fixed-effects model with Assumptions A1-A3 or Assumptions A1, A2, and A4 results in
\[
    \sum_{d=1}^{J-1} E\left[\bm{U}_{PB}^{\top} \left\{ \bm{\Lambda}_{Z,PB}^d \otimes \textbf{I}_N\right\} \right] E\left[\bm{Y}(d)-\bm{Y}(0) \right] = E\left[\bm{U}_{PB}^{\top}  \left(\bm{\Delta}_{Z,PB} \otimes \bm{1}_{N}\right)  \right] (0, \underline{\bm{\beta}}_Z^{D\top})^\top
\]
where $E[\bm{Y}(d)-\bm{Y}(0)] = \bm{1}_{NJ}\Delta(d)$
assuming a saturated/duration/period-specific treatment effect.
Then
\[
\begin{split}
    & \sum_{d=1}^{J-1} E\left[\bm{U}_{PB}^{\top} \left\{ \bm{\Lambda}_{Z,PB}^d \otimes \textbf{I}_N\right\} \right] E\left[\bm{Y}(d)-\bm{Y}(0) \right] \\
    &= \sum_{d=1}^{J-1} E\left[\bm{U}_{PB}^{\top}  \left\{ \bm{\Lambda}_{Z,PB}^d \otimes \textbf{I}_N\right\} \bm{1}_{NJ}\Delta(d)\right] \\
    &= E\left[\bm{U}_{PB}^{\top} \sum_{d=1}^{J-1} \left\{ \bm{\Lambda}_{Z,PB}^d \otimes \textbf{I}_N\right\} \bm{1}_{NJ}\Delta(d)\right]
\end{split}
\]
Finally, we can show $\sum_{d=1}^{J-1} \left\{ \bm{\Lambda}_{Z,PB}^d \otimes \textbf{I}_N\right\} \bm{1}_{NJ}\Delta(d) = \left(\bm{\Delta}_{Z,PB} \otimes \bm{1}_{N}\right)(0,\bm{\Delta}^{D\top})^\top$.
Accordingly, we have the desired consistency result for saturated/duration/period-specific treatment effects in a PB-CRT
\[
    \underline{\bm{\beta}}_Z^D = \bm{\Delta}^D \,.
\]

Altogether, we establish the consistency of the working linear fixed-effects model within-transformed saturated/duration/period-specific treatment effect estimator for $\bm{\Delta}^S = \bm{\Delta}^D = \bm{\Delta}^P$ in a PB-CRT by relying on Assumptions A1 (Super-population sampling), A2 (Non-informative enrollment), and  A3 (Randomization), and in a PB-CQT by relying on assumptions A1, A2, and A4 (Mean independence).

\subsubsection{Proof of Fixed-Effects model constant treatment effect consistency for P-ATO in a PB-CRT}
\label{app.sect:PB_const=time-avg}

We can further demonstrate that the fixed-effects model constant treatment effect estimator is typically consistent for the period-average treatment effect for the overlap population (P-ATO), generally defined as
\[
    \Delta^{P-ATO}  = \frac{ \sum_{j=1}^J \lambda_{j} \Delta_j}{\sum_{j=1}^J \lambda_{j}}
\]
(with weights $\lambda_{j}=\pi_j^s(1-\pi_j^s)$ being the tilting function that generates the overlap propensity weights) which incidentally is equivalent to the saturated/duration/period-time-averaged treatment effects $\Delta^{P-ATO}=\Delta^{S-avg} = \Delta^{D-avg} = \Delta^{P-avg}$ in a PB-CRT, using the same set of assumptions presented above.

As presented in Section \ref{app.sect:PB_constant}, modeling a PB-CRT with a constant treatment effect results in the following score function
\[
   \sum_{d=1}^{J-1} E\left[ \bm{U}_{d,PB}^\top \right]  \left( E\left[\bm{Y}(d)-\bm{Y}(0) \right] - \bm{1}_{NJ} \underline{\beta}_Z \right) = 0
\]
with
$\sum_{d=1}^{J-1} \bm{U}_{d,PB}^\top = \left\{(\bm{\Delta}_{Z,PB}\bm{1}_{J} - E[\bm{\Delta}_{Z,PB}\bm{1}_{J}]) \otimes \bm{1}_N \right\}^\top \bm{\mathcal{M}}  \underline{\ddot{\bm{V}}} \bm{\mathcal{M}}   \left\{ \bm{\Delta}_{Z,PB} \otimes \textbf{I}_N\right\} \in \mathbb{R}^{1 \times NJ}$.
Assuming a true underlying saturated/duration/period-specific treatment effect,
$E[\bm{Y}(d)-\bm{Y}(0)] =  \bm{1}_{NJ} \Delta(d) \ \in \mathbb{R}^{NJ}$
\[
    \sum_{d=1}^{J-1} E\left[ \bm{U}_{d,PB}^\top \right] \bm{1}_{NJ}\left( \Delta(d) - \underline{\beta}_Z \right) = 0
\]
denoted above as duration-specific treatment effects (where in a PB-CRT, saturated, duration, and period-specific treatment effects coincide).

Recall, 
$\bm{\Delta}_{Z_i,PB} = I\{Z_i=2\} 
\left[diag\{(0,  \bm{1}_{J-1}^\top)^\top\} \right]\in \mathbb{R}^{J \times J}$, 
$\bm{\mathcal{M}}_i \ddot{\bm{V}}_i \bm{\mathcal{M}}_i = \bm{\mathcal{M}}_i \bm{D}_i (\bm{D}_i^\top \ddot{\bm{\Sigma}}_i \bm{D}_i)^{-1} \bm{D}_i^\top \bm{\mathcal{M}}_i \in \mathbb{R}^{N_iJ \times N_iJ}$ is a symmetric matrix.
Then $E[\bm{\Delta}_{Z_i,PB}] = 1/2 \left[diag\{(0, \bm{1}_{J-1}^\top)^\top\}\right]$ in a equally allocated PB-CRT (where half of all clusters are randomized to receive treatment) with a single baseline period when $j=1$.
Furthermore, define $[\bm{\mathcal{M}}  \underline{\ddot{\bm{V}}} \bm{\mathcal{M}}]_{m,n}$ as the entry in row $m$ of column $n$ of the matrix $\bm{\mathcal{M}}  \underline{\ddot{\bm{V}}} \bm{\mathcal{M}}$.
Then
\[
\begin{split}
    \left\{(\bm{\Delta}_{Z,PB}\bm{1}_{J} - E[\bm{\Delta}_{Z,PB}\bm{1}_{J}]) \otimes \bm{1}_N \right\}^\top &= \left\{ (0, (I\{Z=2\}-1/2) \bm{1}_{J-1}^\top)^\top  \otimes \bm{1}_N \right\}^\top \in \mathbb{R}^{1 \times NJ} \,, \\
    \left\{(\bm{\Delta}_{Z,PB}\bm{1}_{J} - E[\bm{\Delta}_{Z,PB}\bm{1}_{J}]) \otimes \bm{1}_N \right\}^\top \bm{\mathcal{M}}  \underline{\ddot{\bm{V}}} \bm{\mathcal{M}} &= \left(I\{Z=2\}-\frac{1}{2} \right) \left( \sum_{\text{m}=N+1}^{NJ} [\bm{\mathcal{M}}  \underline{\ddot{\bm{V}}} \bm{\mathcal{M}}]_{\text{m},1} , ..., \sum_{\text{m}=N+1}^{NJ} [\bm{\mathcal{M}}  \underline{\ddot{\bm{V}}} \bm{\mathcal{M}}]_{\text{m},NJ} \right) \\
    &= \left(I\{Z=2\}-\frac{1}{2} \right) (\mathcal{H}_1, ..., \mathcal{H}_{NJ}) \in \mathbb{R}^{1 \times NJ} \,.
\end{split}
\]
We define $\mathcal{H}_{\text{n}}=\sum_{\text{m}=N+1}^{NJ} [\bm{\mathcal{M}}  \underline{\ddot{\bm{V}}} \bm{\mathcal{M}}]_{\text{m,n}} \in \mathbb{R}^1$.
In a PB-CRT, $E[\mathcal{H}_{\text{n}} | \bm{S},N,Z] = E[\mathcal{H}_{\text{n}}]$ with Assumptions A2 (Non-informative enrollment) and A3 (Randomization).
Furthermore, $E[\mathcal{H}_{\text{n}}] = E[\mathcal{H}] \, \forall \text{n} \in \{N+1,...,NJ\}$, and
\[
\begin{split}
    E\left[ \sum_{d=1}^{J-1} \bm{U}_{d,PB}^\top \right] &= E[\left\{( \bm{\Delta}_{Z,PB}\bm{1}_{J} - E[\bm{\Delta}_{Z_i,PB}\bm{1}_{J}]) \otimes \bm{1}_N \right\}^\top \bm{\mathcal{M}}  \underline{\ddot{\bm{V}}} \bm{\mathcal{M}}   \left\{ \bm{\Delta}_{Z,PB} \otimes \textbf{I}_N \right\}] \\
    &= E\left[ \left(I\{Z=2\}-\frac{1}{2} \right) I\{Z=2\} (0 \otimes \bm{1}_N^\top , \mathcal{H}_{N+1} ,..., \mathcal{H}_{NJ}) \right] \\
    &= E\left[\left(\frac{1}{2} \right) I\{Z=2\} (0 \otimes \bm{1}_N^\top , \mathcal{H}_{N+1} ,..., \mathcal{H}_{NJ}) \right] \\
    &=  \left(\frac{1}{4} \right) (0 \otimes \bm{1}_N^\top , E[\mathcal{H}] \otimes \bm{1}_{N(J-1)}) \in \mathbb{R}^{1 \times NJ} \,.
\end{split}
\]
Finally, modeling with a constant treatment effect when there the true treatment effect structure is $\bm{\Delta}^{S} = (\Delta_2(1), ...,\Delta_{J}(J-1))^\top = \bm{\Delta}^D = (\Delta(1), ...,\Delta(J-1))^\top = \bm{\Delta}^{P} = (\Delta_2, ...,\Delta_{J})^\top \in \mathbb{R}^{J-1}$, produces 
\[
\begin{split}
    \sum_{d=1}^{J-1} E\left[ \bm{U}_{d,PB}^\top \right] \bm{1}_{NJ} \Delta(d) &= E\left[\sum_{d=1}^{J-1} \bm{U}_{d,PB}^\top  \right] \bm{1}_{NJ} \underline{\beta}_Z \\
    \sum_{d=1}^{J-1} E\left[   \left(\frac{1}{4} \right) \left( \sum_{\text{n}=Nd}^{N(d+1)} \mathcal{H}_{\text{n}} \right) \right] \Delta(d) &= \left(\frac{1}{4}\right) E[\mathcal{H}] N (J-1)  \underline{\beta}_Z \\
    E[\mathcal{H}] N \left(\sum_{d=1}^{J-1} \Delta(d) \right) &= E[\mathcal{H}] N (J-1)  \underline{\beta}_Z \\
    \sum_{d=1}^{J-1} \Delta(d) &= (J-1)  \underline{\beta}_Z \\
\end{split}
\]
proving that the working constant treatment effect estimator is consistent for the time averaged treatment effect when there is a true saturated/duration/period-specific treatment effect in a PB-CRT
\[
\begin{split}
    \underline{\beta}_Z &= \frac{\sum_{j=2}^{J} \Delta_j(j-1)}{J-1} = \frac{1}{J-1} \bm{1}_{J-1}^\top \bm{\Delta}^S = \Delta^{S-avg} \\
    &= \frac{\sum_{d=1}^{J-1} \Delta(d)}{J-1} = \frac{1}{J-1} \bm{1}_{J-1}^\top \bm{\Delta}^D = \Delta^{D-avg} \\
    & = \frac{\sum_{j=2}^{J} \Delta_j}{J-1} = \frac{1}{J-1} \bm{1}_{J-1}^\top \bm{\Delta}^P  = \Delta^{P-avg} \\
    & = \frac{ \sum_{j=1}^J \lambda_{j} \Delta_j}{\sum_{j=1}^J \lambda_{j}} = \Delta^{P-ATO}
\,.
\end{split}
\]

Altogether, we establish the consistency of the working linear fixed-effects model within-transformed constant treatment effect estimator for the P-ATO estimand, which is equivalent to the saturated/duration/period-time-averaged treatment effect estimand in a PB-CRT by relying on Assumptions A1 (Super-population sampling), A2 (Non-informative enrollment), and  A3 (Randomization), and in a PB-CQT by relying on assumptions A1, A2, and A4 (Mean independence).

\noindent $\square$

\newpage
\subsubsection{Proof of Fixed-Effects model constant treatment effect consistency in a CRXO}
\label{app.sect:XO_constant}

\textit{Proof of Theorem \ref{app.Theorem:XO}.}

In the simplest $J=2$ period CRXO, clusters randomized to $Z=1$ first receive the treatment in period $j=1$ (and control in $j=2$), and clusters randomized to $Z=2$ first receive the treatment in period $j=2$ (and control in $j=1$).
This definition can be extended to a $J$ period CRXO, where clusters randomized to $Z=1$ receive the treatment in odd number periods $(j=2a-1, a \in \mathbb{Z})$, and clusters randomized to $Z=2$ receive the treatment in even number periods $(j=2a, a \in \mathbb{Z})$.
In this article, we will focus on completely balanced CRXO designs with equal allocation of clusters to either $Z=1$ or $2$, with clusters followed for an even $J$ periods.

We then define the linear model corresponding to Equation \ref{app.eq:Y_constant}
\[
    \bm{Y}_i = (
    \textbf{I}_{J}
    \otimes \bm{1}_{N_i})\bm{\beta}_0 + \left((\bm{\Delta}_{Z_i,XO}\bm{1}_{J}) \otimes \bm{1}_{N_i}\right) \beta_Z + (\bm{1}_{J} \otimes \bm{X}_i) \bm{\beta}_X + \alpha_i \bm{1}_{N_i(J)} + \bm{\gamma}_i \otimes \bm{1}_{N_i} + \bm{\epsilon}_i \,,
\]
where provided $J$ is an even number, $\bm{\Delta}_{Z,XO} = \textbf{I}_{J/2} \otimes diag\{(I\{Z_i=1\},I\{Z
_i=2\})^\top\} \in \mathbb{R}^{J \times J}$.
Observed outcomes in a CRXO are related to the potential outcomes with
\begin{equation}
\label{app.eq:PO_XO}
\begin{split}
    \bm{Y}_i  = \bm{Y}_i(0) + \sum_{d=1}^{J/2}(\bm{\Lambda}_{Z,XO}^d \otimes \textbf{I}_N)[\bm{Y}_i(d) - \bm{Y}_i(0)]
\end{split}
\end{equation}
where $\bm{\Lambda}_{Z,XO}^d = diag\{I\{Z_i=1\}I\{d=(j+1)/2 \} + I\{Z_i=2\}I\{d=j/2 \} : j=1,...,J\} \in \mathbb{R}^{J \times J}$ 
and $\sum_{d=1}^{J/2} \bm{\Lambda}_{Z,XO}^d = \bm{\Delta}_{Z,XO}$.

As in the prior sections, we can then equivalently define
\[
\begin{split}
    E\left[\left\{(\bm{\Delta}_{Z_,XO}\bm{1}_{J} - E[\bm{\Delta}_{Z,XO}\bm{1}_{J}]) \otimes \bm{1}_N \right\}^\top \bm{\mathcal{M}}  \underline{\ddot{\bm{V}}} \bm{\mathcal{M}} \bm{Q}_0\right] &= 0 \,, \\
    E\left[\left\{(\bm{\Delta}_{Z,XO}\bm{1}_{J} - E[\bm{\Delta}_{Z,XO}\bm{1}_{J}]) \otimes \bm{1}_N \right\}^\top \bm{\mathcal{M}}  \underline{\ddot{\bm{V}}} \bm{\mathcal{M}} \bm{Q}_X\right] &= 0 \,, \\
    E\left[\left\{(\bm{\Delta}_{Z,XO}\bm{1}_{J} - E[\bm{\Delta}_{Z,XO}\bm{1}_{J}]) \otimes \bm{1}_N \right\}^\top \bm{\mathcal{M}}  \underline{\ddot{\bm{V}}} \bm{\mathcal{M}} (\textbf{I}_J \otimes \textbf{I}_N)\bm{Y}(0) \right] &= 0 \,,
\end{split}
\]
where
\[
\begin{split}
    \bm{U}_{d,XO}^\top &= \left\{(\bm{\Delta}_{Z,XO}\bm{1}_{J} - E[\bm{\Delta}_{Z,XO}\bm{1}_{J}]) \otimes \bm{1}_N \right\}^\top \bm{\mathcal{M}}  \underline{\ddot{\bm{V}}} \bm{\mathcal{M}} \left\{ \bm{\Lambda}_{Z,XO}^d \otimes \textbf{I}_N\right\} \in \mathbb{R}^{1 \times NJ} \,, \\
    \sum_{d=1}^{J/2} \bm{U}_{d,XO}^\top &= \left\{(\bm{\Delta}_{Z,XO}\bm{1}_{J} - E[\bm{\Delta}_{Z,XO}\bm{1}_{J}]) \otimes \bm{1}_N \right\}^\top \bm{\mathcal{M}}  \underline{\ddot{\bm{V}}} \bm{\mathcal{M}} \left\{ \bm{\Delta}_{Z,XO} \otimes \textbf{I}_N\right\} \in \mathbb{R}^{1 \times NJ} \,. \\
\end{split}
\]
Then, using the fixed-effects model with Assumptions A1-A3 or Assumptions A1, A2, and A4 results in
\[
    \sum_{d=1}^{J/2} E\left[\bm{U}_{d,XO}^\top \right] E\left[\bm{Y}(d)-\bm{Y}(0) \right] = E\left[\sum_{d=1}^{J/2} \bm{U}_{d,XO}^\top \bm{1}_{NJ} \right] \underline{\beta}_Z \,,
\]
where $E[\bm{Y}(d)-\bm{Y}(0)]=\bm{1}_{NJ}\Delta \in \mathbb{R}^{NJ} \, \forall d$ assuming a constant treatment effect.
Finally, we have the desired consistency result for a constant treatment effect in a CRXO
\[
\underline{\beta}_Z = \Delta \,.
\]

Altogether, we establish the consistency of the working linear fixed-effects model within-transformed constant treatment effect estimator for $\Delta$ in a CRXO by relying on Assumptions A1 (Super-population sampling), A2 (Non-informative enrollment), and  A3 (Randomization), and in a CQXO by relying on assumptions A1, A2, and A4 (Mean independence).

\subsubsection{Proof of Fixed-Effects model duration-specific treatment effect consistency in a CRXO}

The analysis of a CRXO with a fixed-effects model can be further defined with duration-specific treatment effects.
In contrast, a period-specific treatment effect is unidentifiable in a fixed-effects model analysis of a CRXO, due to perfect collinearity with the period indicators violating standard regularity condition 3 of Lemma \ref{app.lemma:variance}.

Again, we focus on completely balanced CRXO designs with clusters followed for an even $J$ periods.
Duration-specific treatment effects in a CRXO are then defined here with
$\bm{\Delta}^D = (\Delta(1), ...,\Delta(J/2))^\top \in \mathbb{R}^{J/2}$.
A CRXO with a duration-specific treatment effect structure can be specified similarly to the previous section but with the treatment indicator replaced by $(\bm{\Delta}_{Z_i,XO} \otimes \bm{1}_{N_i}) (\bm{\beta}_Z^D \otimes \bm{1}_2)$, in the linear model
\[
    \bm{Y}_i = (
    \textbf{I}_{J}
    \otimes \bm{1}_{N_i})\bm{\beta}_0 + \left(\bm{\Delta}_{Z_i,XO} \otimes \bm{1}_{N_i}\right) \left(\bm{\beta}_Z^D \otimes \bm{1}_2\right) + (\bm{1}_{J} \otimes \bm{X}_i) \bm{\beta}_X + \alpha_i \bm{1}_{N_iJ} + \bm{\gamma}_i \otimes \bm{1}_{N_i} + \bm{\epsilon}_i \,,
\]
where again $\bm{\Delta}_{Z,XO} = \textbf{I}_{J/2} \otimes diag\{(I\{Z_i=1\},I\{Z
_i=2\})^\top\} \in \mathbb{R}^{J \times J}$ and $\bm{\beta}_Z^D = ( \beta_{Z1}^D, ..., \beta_{Z,J/2}^D)^\top \in \mathbb{R}^{J/2}$ in a CRXO.
Accordingly,
\[
    \bm{U}_{XO}^{\top} = \left\{(\bm{\Delta}_{Z,XO} - E[\bm{\Delta}_{Z,XO}]) \otimes \bm{1}_N \right\}^\top \bm{\mathcal{M}}  \underline{\ddot{\bm{V}}} \bm{\mathcal{M}} \in \mathbb{R}^{J \times NJ} \,,
\]
Then, using the fixed-effects model with Assumptions A1-A3 or Assumptions A1, A2, and A4 results in
\[
    \sum_{d=1}^{J/2} E\left[\bm{U}_{XO}^{\top} \left\{ \bm{\Lambda}_{Z,XO}^d \otimes \textbf{I}_N\right\} \right] E\left[\bm{Y}(d)-\bm{Y}(0) \right] =  E\left[ \bm{U}_{XO}^{\top} \left(\bm{\Delta}_{Z,XO} \otimes \bm{1}_{N}\right) \right] \left(\underline{\bm{\beta}}_Z^D \otimes \bm{1}_2\right) \,,
\]
where $E[\bm{Y}(d)-\bm{Y}(0)] = \bm{1}_{NJ}\Delta(d)$
assuming a duration-specific treatment effect.
Then, the left side of the above equation is
\[
\begin{split}
    & \sum_{d=1}^{J-1} E\left[\bm{U}_{XO}^{\top} \left\{ \bm{\Lambda}_{Z,XO}^d \otimes \textbf{I}_N\right\} \right] E\left[\bm{Y}(d)-\bm{Y}(0) \right] \\
    &= \sum_{d=1}^{J-1} E\left[\bm{U}_{XO}^{\top} \left\{ \bm{\Lambda}_{Z,XO}^d \otimes \textbf{I}_N\right\} \bm{1}_{NJ}\Delta(d)\right] \\
    &= E\left[\bm{U}_{XO}^{\top} \sum_{d=1}^{J-1} \left\{ \bm{\Lambda}_{Z,XO}^d \otimes \textbf{I}_N\right\} \bm{1}_{NJ}\Delta(d)\right] \,.
\end{split}
\]
Furthermore, $\sum_{d=1}^{J-1} \left\{ \bm{\Lambda}_{Z,XO}^d \otimes \textbf{I}_N\right\} \bm{1}_{NJ}\Delta(d) = \left(\bm{\Delta}_{Z,XO} \otimes \bm{1}_{N}\right) (\bm{\Delta}^D \otimes \bm{1}_2)$.
Finally, we have the desired consistency result for duration-specific treatment effects in a CRXO
\[
    \underline{\bm{\beta}}_Z^D = \bm{\Delta}^D \,.
\]

Altogether, we establish the consistency of the working linear fixed-effects model within-transformed duration-specific treatment effect estimator for $\bm{\Delta}^D$ in a CRXO by relying on Assumptions A1 (Super-population sampling), A2 (Non-informative enrollment), and  A3 (Randomization), and in a CQXO by relying on assumptions A1, A2, and A4 (Mean independence).

\subsubsection{Proof of Fixed-Effects model constant treatment effect consistency for P-ATO in a CRXO}
\label{app.sect:XO_const=time-avg}

In a CRXO, the saturated and period-specific treatment effect structures coincide, $\bm{\Delta}^S = (\Delta_1(1),\Delta_2(1), \Delta_3(2),\Delta_4(2),...,\Delta_{J-1}(J/2), \Delta_J(J/2))^\top = \bm{\Delta}^P = (\Delta_1, ...,\Delta_{J})^\top \in \mathbb{R}^J$, and are a more general form of the duration-specific treatment effect. 
In this case with saturated treatment effects, we have $\Delta_j(d)$ where $d=\frac{2j+1-(-1)^j}{4}$.
With such a treatment effect structure, we can then define the period-average treatment effect for the overlap population (P-ATO), generally defined as
\[
    \Delta^{P-ATO}  = \frac{ \sum_{j=1}^J \lambda_{j} \Delta_j}{\sum_{j=1}^J \lambda_{j}}
\]
(with weights $\lambda_{j}=\pi_j^s(1-\pi_j^s)$ being the tilting function that generates the overlap propensity weights) which incidentally is equivalent to the saturated-average treatment effect estimand $\Delta^{P-ATO} = \Delta^{S-avg} = \Delta^{P-avg}$.
Despite the fixed-effects model being unidentifiable with a saturated/period-specific treatment effect structure in a CRXO due to collinearity with the period and cluster indicators, we can demonstrate that the fixed-effects model constant treatment effect estimator is still consistent for the saturated-average treatment effect estimand.

As presented in Section \ref{app.sect:XO_constant}, modeling a CRXO with a constant treatment effect results in the following score function
\[
    \sum_{d=1}^{J/2} E\left[\bm{U}_{d,XO}^\top \right] \left( E\left[\bm{Y}(d)-\bm{Y}(0) \right] - \bm{1}_{NJ} \underline{\beta}_Z \right) = 0
\]
with
$\sum_{d=1}^{J/2} \bm{U}_{d,XO}^\top = \left\{(\bm{\Delta}_{Z,XO}\bm{1}_{J} - E[\bm{\Delta}_{Z,XO}\bm{1}_{J}]) \otimes \bm{1}_N \right\}^\top \bm{\mathcal{M}}  \underline{\ddot{\bm{V}}} \bm{\mathcal{M}} \left\{ \bm{\Delta}_{Z,XO} \otimes \textbf{I}_N\right\} \in \mathbb{R}^{1 \times NJ}$. Assuming a true underlying saturated/period-specific treatment effect,
$E[\bm{Y}(d)-\bm{Y}(0)] =  \bm{\Delta}^P \otimes 1_N \in \mathbb{R}^{NJ}$, the above score function can be rewritten as
\[
    E\left[\sum_{d=1}^{J/2} \bm{U}_{d,XO}^\top \right] \left( (\bm{\Delta}^P \otimes 1_N) - \bm{1}_{NJ}\underline{\beta}_Z \right) = 0 \,.
\]

Recall, 
$\bm{\Delta}_{Z,XO} = \textbf{I}_{J/2} \otimes diag\{(I\{Z=1\},I\{Z=2\})^\top\} \in \mathbb{R}^{J \times J}$, 
$\bm{\mathcal{M}} \ddot{\bm{V}} \bm{\mathcal{M}} = \bm{\mathcal{M}} \bm{D} (\bm{D}^\top \ddot{\bm{\Sigma}} \bm{D})^{-1} \bm{D}^\top \bm{\mathcal{M}} \in \mathbb{R}^{NJ \times NJ}$ is a symmetric matrix.
Then $E[\bm{\Delta}_{Z,XO}] = 1/2 \left[diag\{\bm{1}_{J}\}\right]$ in a equally allocated CRXO (where half of all clusters are randomized to receive treatment).
Furthermore, define $[\bm{\mathcal{M}}  \underline{\ddot{\bm{V}}} \bm{\mathcal{M}}]_{m,n}$ as the entry in row $m$ of column $n$ of the matrix $\bm{\mathcal{M}}  \underline{\ddot{\bm{V}}} \bm{\mathcal{M}}$.
Then
\[
\begin{split}
    \left\{(\bm{\Delta}_{Z,XO}\bm{1}_{J} - E[\bm{\Delta}_{Z,XO}\bm{1}_{J}]) \otimes \bm{1}_N \right\}^\top &= \left\{ \left[ \bm{1}_{J/2} \otimes \left( I\{Z=1\}-\frac{1}{2}, I\{Z=2\}-\frac{1}{2} \right)^\top \right] \otimes \bm{1}_N\right\}^\top \in \mathbb{R}^{1 \times NJ} \\
    \left\{(\bm{\Delta}_{Z,XO}\bm{1}_{J} - E[\bm{\Delta}_{Z,XO}\bm{1}_{J}]) \otimes \bm{1}_N \right\}^\top \bm{\mathcal{M}}  \underline{\ddot{\bm{V}}} \bm{\mathcal{M}} &= (\mathcal{H}_1, ..., \mathcal{H}_{\text{n}}, ...,\mathcal{H}_{NJ}) \in \mathbb{R}^{1 \times NJ}
\end{split}
\]
where $\mathcal{H}_{\text{n}} = \sum_{b=1}^{J/2} \left\{ \left( I\{Z=1\}-\frac{1}{2} \right) \sum_{\text{m}=(2b-2)N+1}^{(2b-1)N} [\bm{\mathcal{M}}  \underline{\ddot{\bm{V}}} \bm{\mathcal{M}}]_{\text{m,n}} +\left( I\{Z=2\}-\frac{1}{2} \right) \sum_{\text{m}=(2b-1)N+1}^{2bN} [\bm{\mathcal{M}}  \underline{\ddot{\bm{V}}} \bm{\mathcal{M}}]_{\text{m,n}} \right\}$.

In a CRXO, $E[\mathcal{H}_{\text{n}} | \bm{S},N,Z] = E[\mathcal{H}_{\text{n}}]$ with Assumptions A3 (Randomization) A2 (Non-informative enrollment).
Furthermore, $E[\mathcal{H}_{\text{n}}] = E[\mathcal{H}] \, \forall \text{n}$, and
\[
\begin{split}
    E\left[\sum_{d=1}^{J/2} \bm{U}_{d,XO}^\top \right] &= E[\left\{( \bm{\Delta}_{Z,XO}\bm{1}_{J} - E[\bm{\Delta}_{Z,XO}\bm{1}_{J}]) \otimes \bm{1}_N \right\}^\top \bm{\mathcal{M}}  \underline{\ddot{\bm{V}}} \bm{\mathcal{M}}   \left\{ \bm{\Delta}_{Z,XO} \otimes \textbf{I}_N\right\}] \\
    &= E\left[ (\mathcal{H} \otimes \bm{1}_{NJ})^\top \left\{ \left( \textbf{I}_{J/2} \otimes diag\{(I\{Z=1\},I\{Z=2\})^\top\} \right) \otimes \textbf{I}_N \right\} \right] \\
    &= E\left[ \mathcal{H} \left\{ \bm{1}_{J/2}^\top \otimes (I\{Z=1\}, I\{Z=2\}) \otimes \bm{1}_{N}^\top \right\} \right] \\
    &= E\left[ \mathcal{H} \right] E\left[ \bm{1}_{J/2}^\top \otimes (I\{Z=1\}, I\{Z=2\}) \otimes \bm{1}_{N}^\top \right] \\
    &= E\left[ \mathcal{H} \right] \left( \frac{1}{2} \right) \bm{1}_{NJ}^\top \in \mathbb{R}^{1 \times NJ} \,.
\end{split}
\]
Finally, modeling with a constant treatment effect when there the true treatment effect structure is saturated/period-specific, $\bm{\Delta}^P = ( \Delta_1, ...,\Delta_{J})^\top \in \mathbb{R}^J$, produces
\[
\begin{split}
    E\left[\sum_{d=1}^{J/2} \bm{U}_{d,XO}^\top \right] (\bm{\Delta}^P \otimes 1_N) &=  E\left[ \sum_{d=1}^{J/2} \bm{U}_{d,XO}^\top \right] \bm{1}_{NJ} \underline{\beta}_Z \\
    E[\mathcal{H}] N (1/2) \left( \sum_{j=1}^{J} \Delta_j \right) &= E[\mathcal{H}] N  (J/2) \underline{\beta}_Z \\
    \sum_{j=1}^{J} \Delta_j &= (J)\underline{\beta}_Z \,.
\end{split}
\]
This altogether proves that the constant treatment effect is consistent for the saturated/period-time averaged treatment effect and P-ATO estimands in a CRXO
\[
\begin{split}
    \underline{\beta}_Z &= \frac{\sum_{j=1}^{J} \Delta_j\left(\frac{2j+1-(-1)^j}{4}\right)}{J} = \frac{1}{J} \bm{1}_{J}^\top \bm{\Delta}^S = \Delta^{S-avg} \\
    &= \frac{\sum_{j=1}^{J} \Delta_j}{J} = \frac{1}{J} \bm{1}_{J}^\top \bm{\Delta}^P = \Delta^{P-avg} \\
    & = \frac{ \sum_{j=1}^J \lambda_{j} \Delta_j}{\sum_{j=1}^J \lambda_{j}} = \Delta^{P-ATO} \,.
\end{split}
\]

Altogether, we establish the consistency of the working linear fixed-effects model within-transformed constant treatment effect estimator for the P-ATO estimand, which is equivalent to the saturated/period-time-averaged treatment effect in a CRXO by relying on Assumptions A1 (Super-population sampling), A2 (Non-informative enrollment), and  A3 (Randomization), and in a CQXO by relying on assumptions A1, A2, and A4 (Mean independence).

\noindent $\square$

\newpage
\section{Theorem \ref{app.Theorem:G} \& Lemma \ref{app.lemma:g}}

Despite its wide applicability, the linear fixed-effects model may not be a natural model choice when the outcome is not continuous.
In this section, we study robust inference for our marginal causal estimands via fixed-effects models which can be fit by GEE, to naturally handle non-continuous outcomes via non-identity-link functions.
We consider a general mean model for the individual-level data as
\begin{equation}
\label{app.eq:GEE_mean_model}
    E[Y_{ijk}| Z_i, \bm{X}_{ik}] = g^{-1}(\beta_{0j} + TE_{ij} + \bm{\beta}_X^\top \bm{X}_{ik} + \alpha_i) =g^{-1}(\bm{Q}_{ijk} \bm{\beta} + \alpha_i) = \mu_{ijk} 
\end{equation}
where $g$ is the link function, $\beta_{01}=0$ for identifiability, and $TE_{ij}$ is the treatment effect structure specified in the same way as in previous sections.
Following the most common practice, we assume that $g$ is a canonical link function. The vector of regression coefficients $\bm{\beta}$ (excluding the cluster fixed-effects coefficients) is estimated by $\hat{\bm{\beta}}$, a solution to the following GEE:
\begin{equation}
\label{app.eq:GEE}
    \sum_{i=1}^{m} \bm{U}_i^\top \bm{\mathcal{Z}}_i^{-1/2} \bm{R}_i^{-1} \bm{Z}_i^{-1/2} (\bm{Y}_i^o - \bm{\mu}_i^o) = 0 \,.
\end{equation}
where $\bm{Y}_i^o= (Y_{ijk})_{(j,k):S_{ijk}=1} \in \mathbb{R}^{\mathcal{N}_i}$ is the observed outcome vector across periods for cluster $i$,
$\bm{\mu}_i^o = \left(g^{-1}(\bm{Q}_{ijk} \bm{\beta} + \alpha_i)\right)_{(j,k):S_{ijk}=1} = (\mu_{ijk})_{(j,k):S_{ijk}=1} \in \mathbb{R}^{\mathcal{N}_i}$
is the mean function vector for all observed individuals in cluster $i$, $\bm{U}_i = \frac{d\bm{\mu}_i^o}{d\bm{\beta}}$ is the derivative matrix, $\bm{\mathcal{Z}}_i = \text{diag}\{v(Y_{ijk}) : j=1,...,J, k=1,...,N_i, S_{ijk}=1\}$ is the diagonal matrix of the natural variance functions $v(Y_{ijk})$, and $\bm{R}_i$ encodes the working correlation structure for the observed outcomes in cluster $i$.

\begin{apptheorem}
\label{app.Theorem:G}
    Under standard regularity conditions in Lemma \ref{app.lemma:variance}, assume Assumptions A1-A3, and at least one of the following supplemental conditions:
    (I.) $g$ is the identity-link with constant working variance $v(Y_{ijk}) = \sigma^2$,
    or (II.) $g$ is the log-link with working variance $v(Y_{ijk}) = \mu_{ijk}$ and $\rho = 0$ such that working independence is assumed.
    In the absence of Assumption A3, further assume alongside supplemental condition (II.), (III.) the mean model (Equation \ref{app.eq:GEE_mean_model}) is correctly specified.
    Then the following Central Limit Theorems hold for a SW-CT.
    That is, (a) under a true constant treatment effect structure, $\hat{V}_{GEE-g}^{-1/2} m^{1/2} (\hat{\Delta}_{GEE-g} - \Delta) \xrightarrow{d} N(0,1)$;
    (b) under a true duration-specific treatment effect structure, $(\hat{\bm{V}}_{GEE-g}^{D})^{-1/2} m^{1/2} (\hat{\bm{\Delta}}_{GEE-g}^D - \bm{\Delta}^D) \xrightarrow{d} N(0,\textbf{I}_{J-1})$;
    (c) under a true period-specific treatment effect structure, $(\hat{\bm{V}}_{GEE-g}^{P})^{-1/2} m^{1/2} (\hat{\bm{\Delta}}_{GEE-g}^P - \bm{\Delta}^P) \xrightarrow{d} N(0,\textbf{I}_{J-2})$ and $\hat{V}_{GEE-g}^{-1/2} m^{1/2} (\hat{\Delta}_{GEE-g} - \Delta^{P-ATO}) \xrightarrow{d} N(0,1)$; 
    and (d) under a true saturated treatment effect structure, $(\hat{\bm{V}}_{GEE-g}^{S})^{-1/2} m^{1/2} (\hat{\bm{\Delta}}_{GEE-g}^S - \bm{\Delta}^S) \xrightarrow{d} N(0,\textbf{I}_{(J-2)(J-1)/2})$.
\end{apptheorem}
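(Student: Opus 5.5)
The plan is to cast the whole procedure as one stacked M-estimator and then invoke Lemma \ref{app.lemma:variance}. The stacked estimating function has three blocks: (i) the structural GEE scores for $\bm{\beta}$, (ii) the cluster-intercept equations, and (iii) the g-computation equations defining $\mu_j(b)$ and the weighted differences $\hat{\Delta}_{GEE-g}$, $\hat{\bm{\Delta}}^D_{GEE-g}$, and so on. Once the stack satisfies conditions (1)–(3), asymptotic normality and sandwich consistency follow directly from Lemma \ref{app.lemma:variance}, together with the delta method, which is implicit in stacking. What remains is to identify the probability limits of the g-computation targets.

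\textbf{Step 1: remove the incidental parameters.} Under supplemental condition (I), the GEE with identity link and constant variance is the linear fixed-effects score. The within-transformation argument from the proof of Theorem \ref{app.Theorem:SW} then applies verbatim. Under condition (II), I will show that the working-independence Poisson GEE score in $\alpha_i$ is
\[
\sum_{j,k:S_{ijk}=1}(Y_{ijk}-\mu_{ijk})=0 .
\]
This gives $\hat\alpha_i(\bm\beta)=\log\bigl(\sum S Y/\sum S e^{\bm{Q}\bm\beta}\bigr)$ in closed form. Substituting it back, the $\bm\beta$-score becomes the conditional Poisson (multinomial) score
\[
\sum_{j,k} S_{ijk}\,\bm{Q}_{ijk}^\top\Bigl(Y_{ijk}-\bar Y_i\,\frac{e^{\bm{Q}_{ijk}\bm\beta}}{\overline{e^{\bm{Q}\bm\beta}}_i}\Bigr),
\]
which involves each cluster only through its own data. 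The estimating function therefore has fixed dimension, and I can treat $\hat\alpha_i$ as a plug-in per cluster. The quantity $e^{\hat\alpha_i}$ enters the g-formula as the ratio $\sum SY/\sum Se^{\bm Q\hat{\bm\beta}}$, so block (iii) is a smooth function of cluster-level averages.

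\textbf{Step 2: identify the limits.} In the log-link case, $e^{\hat\alpha_i}\sum_{l,k}S_{ilk}e^{\bm Q_{ilk}\hat{\bm\beta}}=\sum_{l,k} S_{ilk}Y_{ilk}$. Because of this, the g-computed $\mu_j(0)$ and $\mu_j(b)$ reproduce within-cluster observed totals after reweighting by the fitted period and treatment factors. I will use the score equations for the period and treatment coefficients, which are moment balances of the form $E[\sum S\,(\text{indicator})(Y-\mu)]=0$, together with A3 (so that $Z$ is independent of $\bm Y(\cdot)$, $N$ and $\bm S$) and A1–A2. From these I will show that the population limit satisfies
\[
\underline{\mu}_j(\underline\beta_Z)-\underline{\mu}_j(0)=E[Y_{.jk}(z)]-E[Y_{.jk}(0)]
\]
for each identified $(j,d)$ cell under the correctly specified treatment structure. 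The key is the canonical-link property: the score sets fitted totals equal to observed totals within every cell spanned by the indicators, and randomization makes each treated or control sequence-arm cell representative of the marginal potential-outcome mean. Under condition (III), without A3, correct specification of the mean model gives $E[\psi]=0$ at the truth directly, and g-computation then returns the marginal means.

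\textbf{Step 3: the P-ATO claim.} Here I mimic Section \ref{app.sect:SW_const=P-ATO}. The $\lambda_j$-weighted combination in $\hat\Delta_{GEE-g}$ is exactly designed so that the period-wise differences are averaged with weights $\pi^s_j(1-\pi^s_j)$, which yields $\Delta^{P-ATO}$.

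\textbf{Main obstacle.} I expect the hardest step to be Step 2 under model misspecification with the log link. Unlike the linear case, the fitted $\underline{\bm\beta}$ is not itself causal, and showing that the standardized predictions nonetheless recover the marginal means requires carefully exploiting the total-matching identities of the Poisson score within each cluster and each treatment/period cell. My first approach will be to prove this cell by cell, using the saturated period and treatment indicators as the balancing moments.
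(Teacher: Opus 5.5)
Your Step~1 (profiling out $\alpha_i$ to reach the conditional Poisson score) and your handling of case (III) match the paper. The gap is in Step~2, and Step~3 inherits it.

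Under the constant and duration-specific working structures, the score does not balance fitted and observed totals in each period-by-arm cell. The columns of $\bm{Q}$ give only period-pooled balances, one pooled treated balance (one per duration in (b)), covariate moments, and, after profiling, cluster totals. So there is no cell-by-cell identity to extract, and the one you state is false in general. The working contrast in period $j$ is $(e^{\underline{\beta}_Z}-1)\underline{\mu}_j(0)$, which moves with the secular trend carried by $\underline{\mu}_j(0)$, whereas the true contrast $\Delta$ is flat. For the P-ATO part of (c), a single $\beta_Z$ can match period-varying $\Delta_j$ only if $\Delta_j\propto\underline{\mu}_j(0)$.

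Step~3 is therefore circular. The $\lambda_j$ are not a convenient averaging of period-wise identities; they are the implicit weights of the GEE, and they must be derived. The paper adds $-E[\bm{\Delta}_Z\bm{1}_J]^\top$ times the period scores to the treatment score to get $E[\{(\bm{\Delta}_Z\bm{1}_J-E[\bm{\Delta}_Z\bm{1}_J])\otimes\bm{1}_N\}^\top\bm{D}\bm{D}^\top(\bm{Y}-\underline{\bm{\mu}})]=0$. It then writes $\bm{Y}$ and $\underline{\bm{\mu}}$ as control parts plus $(\bm{\Delta}_Z\otimes\mathbf{I}_N)$ times contrasts. Using $E[(I\{Z\le j\}-\pi^s_j)I\{Z\le j\}]=\pi^s_j(1-\pi^s_j)=\lambda_j$, it obtains $\sum_j\lambda_j\{\Delta_j-(\underline{\mu}_j(\underline{\beta}_Z)-\underline{\mu}_j(0))\}=0$. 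The estimator is built with the same $\lambda_j$ (in (b), the matrices $\bm{\lambda}_i(d)$ and $\mathbf{A}_d$ play this role), and that is what delivers $\Delta$ or $\Delta^{P-ATO}$. A cell-by-cell argument is appropriate only for the period-specific and saturated working models, where every identified cell has its own treatment parameter; that is how the paper handles the first half of (c) and (d).

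Separately, ``randomization makes each arm representative'' holds for $\bm{Y}(0)$ but not for the fitted means. The profiled intercept $e^{\hat{\alpha}_i}=\sum S Y/\sum S e^{\bm{Q}\hat{\bm{\beta}}}$ depends on $Z_i$ through the outcome total and through the treatment indicators in the denominator. The paper's weighted argument needs the control cross term $E[\{(\bm{\Delta}_Z\bm{1}_J-E[\bm{\Delta}_Z\bm{1}_J])\otimes\bm{1}_N\}^\top\bm{D}\bm{D}^\top\{\bm{Y}(0)-\underline{\bm{\mu}}(0)\}]$ to vanish, and it simply attributes this to A3. That is fine when the profiled intercepts have the same limit in every sequence, as happens, e.g., in a two-arm parallel-with-baseline design without covariates. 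It is not automatic in a staggered design under misspecification.

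My own calculation (not in the paper) gives a concrete case. Take $Z$ uniform on $\{2,3,4\}$, $J=4$, control period means $(1,2,4,8)$, an additive constant effect $\Delta=1$, and no covariates. Solving the population score equations, the intercept limits differ across sequences. The period-2 and period-3 working contrasts are about $0.72$ and $1.31$, which also refutes your per-period identity, and $\hat{\Delta}_{GEE-g}$ converges to about $1.013$ rather than $1$. So whichever route you take, a complete proof of (a), (b) and the P-ATO claim needs an additional condition or argument at exactly this step.
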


To reiterate, in the absence of Assumption A3 (Randomization), as long as the mean model (Equation \ref{app.eq:GEE_mean_model}) is correctly specified, the resulting model-based g-computation estimator is consistent.
Unlike a similar generalized linear mixed-effects model where clusters are fit as random intercepts, the fixed-effects model described in equation \ref{app.eq:GEE_mean_model} does not restrict cluster intercepts to be uncorrelated with other model covariates.
Accordingly, the fixed-effects model automatically adjusts for all time-invariant, cluster-level confounding.
With the described fixed-effects models in a generalized  setting, the incidental parameters problem can be avoided with supplemental condition (I) via the within-transformation described in previous sections, or with supplemental condition (II) via a demonstrated equivalence to the conditional Poisson model, as we will describe (Section \ref{app.sect:Theorem_SW_G_Proof}).

Unlike the marginal mean model described in Wang et al. \citep{wang_how_2024}, the fixed-effects model is understood to target a conditional estimand.
However, even when using non-collapsible link functions (as is in supplemental condition (II)), g-computation with a fixed-effects model can still target a marginal estimand over the empirical distribution of the sampled clusters.
Asymptotically, the sample intuitively includes all clusters that comprise the entire population and the resulting g-computation estimator based on the fixed-effects model will target the marginal estimand over the entire population of clusters.
This holds even in non-linear settings.

\begin{applemma}
\label{app.lemma:g}
    Under either of the supplemental conditions (I) or (II) described in Theorem \ref{app.Theorem:G}, $\bm{\mathcal{Z}}_i^{-1/2} \bm{R}_i^{-1} \bm{\mathcal{Z}}_i^{-1/2} = \bm{\mathcal{Z}}_i^{-1}\bm{R}_i^{-1}$
\end{applemma}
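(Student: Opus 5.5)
The plan is a direct case-by-case verification that uses only the structure of the two working matrices, $\bm{\mathcal{Z}}_i$ and $\bm{R}_i$. In general, $\bm{\mathcal{Z}}_i^{-1/2}$ and $\bm{R}_i^{-1}$ need not commute, so the identity cannot hold for arbitrary working choices. The point is to show that each supplemental condition forces either $\bm{\mathcal{Z}}_i$ to be a scalar matrix or $\bm{R}_i$ to be the identity, and either one makes the factors commute trivially.

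First, I would record that $\bm{\mathcal{Z}}_i = \text{diag}\{v(Y_{ijk}) : S_{ijk}=1\} \in \mathbb{R}^{\mathcal{N}_i \times \mathcal{N}_i}$ is diagonal with strictly positive entries. Hence $\bm{\mathcal{Z}}_i^{-1/2}$ is well defined as the diagonal matrix of entrywise inverse square roots, and $\bm{\mathcal{Z}}_i^{-1/2}\bm{\mathcal{Z}}_i^{-1/2} = \bm{\mathcal{Z}}_i^{-1}$. Under (II), positivity holds because $\mu_{ijk} = e^{\bm{Q}_{ijk}\bm{\beta} + \alpha_i} > 0$. Under (I), it holds because $\sigma^2 > 0$. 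I would also note that the middle factor written as $\bm{Z}_i^{-1/2}$ in Equation (\ref{app.eq:GEE}) is read as $\bm{\mathcal{Z}}_i^{-1/2}$.

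Case (I). Here $v(Y_{ijk}) = \sigma^2$ for all enrolled observations, so $\bm{\mathcal{Z}}_i = \sigma^2 \textbf{I}_{\mathcal{N}_i}$ and $\bm{\mathcal{Z}}_i^{-1/2} = \sigma^{-1}\textbf{I}_{\mathcal{N}_i}$. A scalar multiple of the identity commutes with every matrix, so for any invertible working correlation $\bm{R}_i$ (exchangeable, nested exchangeable, etc.) we get
\[
\bm{\mathcal{Z}}_i^{-1/2}\bm{R}_i^{-1}\bm{\mathcal{Z}}_i^{-1/2} = \sigma^{-2}\bm{R}_i^{-1} = \bm{\mathcal{Z}}_i^{-1}\bm{R}_i^{-1}.
\]

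Case (II). Working independence with $\rho = 0$ gives $\bm{R}_i = \textbf{I}_{\mathcal{N}_i}$, hence $\bm{R}_i^{-1} = \textbf{I}_{\mathcal{N}_i}$. Then
\[
\bm{\mathcal{Z}}_i^{-1/2}\bm{R}_i^{-1}\bm{\mathcal{Z}}_i^{-1/2} = \bm{\mathcal{Z}}_i^{-1/2}\bm{\mathcal{Z}}_i^{-1/2} = \bm{\mathcal{Z}}_i^{-1} = \bm{\mathcal{Z}}_i^{-1}\bm{R}_i^{-1},
\]
where $\bm{\mathcal{Z}}_i^{-1} = \text{diag}\{\mu_{ijk}^{-1}\}$.

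There is no substantive obstacle. The only care needed is to make explicit the invertibility of $\bm{R}_i$ in case (I), which is implicit in the GEE being well defined, and the positivity of the variance functions noted above.

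The role of the lemma downstream is what makes it worth isolating. In case (II), combining it with the canonical log link gives $\bm{U}_i = \text{diag}\{\mu_{ijk}\}\bm{Q}_i^o$, so the GEE reduces to $\sum_i \bm{Q}_i^{o\top}(\bm{Y}_i^o - \bm{\mu}_i^o) = 0$. This is exactly the Poisson score, which is then linked to the conditional Poisson likelihood in the proof of Theorem \ref{app.Theorem:G}. In case (I), it gives the linear GEE score, which matches the setting already handled by the within-transformation.
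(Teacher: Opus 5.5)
Your proof is correct and follows the same route as the paper. Both reduce the identity to the commutation of $\bm{\mathcal{Z}}_i^{-1/2}$ and $\bm{R}_i^{-1}$: under (I) this holds because $\bm{\mathcal{Z}}_i=\sigma^2\textbf{I}_{\mathcal{N}_i}$, and under (II) because $\bm{R}_i=\textbf{I}_{\mathcal{N}_i}$. Your point that case (I) needs $\bm{\mathcal{Z}}_i$ to be a scalar multiple of the identity, not merely diagonal, is stated more precisely than in the paper.
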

\textit{Proof of Lemma \ref{app.lemma:g}.}
It suffices to show $\bm{R}_i^{-1}\bm{\mathcal{Z}}_i^{-1/2} = \bm{\mathcal{Z}}_i^{-1/2}\bm{R}_i^{-1}$.
For supplemental condition (I) that $v(Y_{ijk}) \equiv \sigma^2$, then $\bm{\mathcal{Z}}_i=\sigma^2 \textbf{I}_{\mathcal{N}_i}$ is a diagonal matrix, which yields the desired result.
For supplemental condition (II), with $\rho=0$, we have a working independence correlation $\bm{R}_i = \textbf{I}_{\mathcal{N}_i}$, which directly implies the desired result.
Then $\bm{R}_i^{-1} \bm{\mathcal{Z}}_i^{-1/2} = \bm{\mathcal{Z}}_i^{-1/2} \bm{R}_i^{-1}$, which completes the proof.
Finally, the above derivation also works if we replace $J$ with $J^* = J-1$, i.e., dropping the data from the last period for period-specific or saturated treatment effects. $\square$

Lemma \ref{app.lemma:g} then helps simplify some of the equations in the subsequent proofs.

\subsection{Proof of Theorem \ref{app.Theorem:G}}
\label{app.sect:Theorem_SW_G_Proof}

\noindent \textit{Proof of Theorem \ref{app.Theorem:G}.}

We inherit all notation from the proof of Theorems \ref{app.Theorem:SW}-\ref{app.Theorem:XO}. We first focus on the proof with supplemental condition (I) or (II), where the mean model $E[Y_{ijk}|Z_i, \bm{X}_{ik}] = g^{-1}(\beta_{0j} + TE_{ij} + \bm{\beta}_X^\top \bm{X}_{ik} + \alpha_i) =g^{-1}(\bm{Q}_{ijk} \bm{\beta} + \alpha_i) = \mu_{ijk}$ can be misspecified.
Then, $\bm{\mu}_i = \left(g^{-1}( \beta_{0j} + TE_{ij} + \bm{\beta}_X^\top \bm{X}_{ik} + \alpha_i )\right)_{k=1,...,N_i, j=1,...,J} = \left(g^{-1}(\bm{Q}_{ijk} \bm{\beta} + \alpha_i)\right)_{k=1,...,N_i, j=1,...,J} = \left(\mu_{ijk}\right)_{k=1,...,N_i, j=1,...,J} \in \mathbb{R}^{N_iJ}$.
The proof structure is similar to Theorems \ref{app.Theorem:SW}-\ref{app.Theorem:XO}, except for the additional g-computation step.
We can then construct the g-computation formula
\begin{equation}
\label{app.eq:g-comp}
    \hat{\mu}_j(b) = 
    \frac{
        \sum_{i=1}^{m} \sum_{l=1}^{J} \sum_{k:S_{ilk}=1} g^{-1}(\hat{\beta}_{0j} + b + \hat{\bm{\beta}}_X^\top \bm{X}_{ik} + \hat{\alpha}_i)
    }
    {
        \sum_{i=1}^{m} \sum_{l=1}^J N_{il}
    } \in \mathbb{R}^1
\end{equation}
where $b$ can be $\hat{\beta}_Z, \hat{\beta}_{Zd}, \hat{\beta}_{jZ}, \hat{\beta}_{jZd}$ or 0 depending on the assumed treatment effect structure.
This g-computation estimator for treatment $b$ and fixed period $j$ is taken as the empirical mean over all enrolled individuals ($S_{ilk}=1$) across all periods $l \in \{1,...,J\}$. 
For example, under a constant treatment effect structure, $\hat{\mu}_j(\hat{\beta}_Z)$ and $\hat{\mu}_j(0)$ target $E[Y_{ijk}(z)]$ (for the expected treated potential outcome) and $E[Y_{ijk}(0)]$ (for the expected untreated potential outcome), respectively.
Altogether, g-computation allows one to obtain estimates for model-free marginal estimands by standardizing across the covariate distribution in the target population.

The subsequent subsections will solely focus on proving Theorem \ref{app.Theorem:G} under supplemental condition (II).
Notably, for Lemma \ref{app.lemma:variance} to apply, the incidental parameters problem needs to be avoided, despite the inclusion of infinite cluster fixed intercepts.
Supplemental condition (I) leads to the same proofs as Theorems \ref{app.Theorem:SW}-\ref{app.Theorem:XO}, hence the within-transformation removes the infinite cluster fixed intercepts $\alpha_i$.
With supplemental condition (II), we will demonstrate that such a mean model specification allows us to draw on its equivalence to a full likelihood conditional Poisson model, allowing us to condition out the cluster fixed intercepts.

With supplemental condition (II) in Lemma \ref{app.lemma:g}, the consistency of the sandwich variance estimator under the M-estimation framework \citep{tsiatis_semiparametric_2006,van_der_vaart_asymptotic_1998,ross_m-estimation_2024} relies on the resulting equivalence between the score equations from such a semi-parametric GEE and a full-likelihood conditional Poisson model. This allows us to lean on the likelihood-based assumptions to condition out cluster fixed intercepts $\alpha_i$ and avoid contradicting condition (3) in Lemma \ref{app.lemma:variance}.
Assuming $Y_{ijk} \sim Poisson(\mu_{ijk})$ and $\mu_{ijk} = e^{\bm{Q}_{ijk}\bm{\beta} + \alpha_i}$, the Poisson likelihood is then
\[
    L_i(\bm{\beta},\alpha_i) = \prod_{j=1}^J \prod_{k:S_{ijk}=1} \frac{\mu_{ijk}^{Y_{ijk}}}{Y_{ijk}!}e^{-\mu_{ijk}}
\]
which yields an equivalent score equation to the third line of the above score function corresponding to the GEE, given supplemental condition (II) in Lemma \ref{app.lemma:g} \citep{cameron_microeconometrics_2005}.
The likelihood can be rewritten as
\[
\begin{split}
    L_i(\bm{\beta},\alpha_i) &= \left( \prod_{j=1}^J \prod_{k: S_{ijk}=1} \frac{e^{\bm{Q}_{ijk}\bm{\beta}}}{Y_{ijk}!} \right) 
    \left(e^{\alpha_i \left(\sum_{l=1}^J \sum_{k': S_{ilk'}=1} Y_{ilk'}\right)} e^{-e^{\alpha_i} \left(\sum_{l=1}^J \sum_{k': S_{ilk'}=1} e^{\bm{Q}_{ilk'}\bm{\beta}}\right)} \right) \\
    &= h\left(Y_{i11} ,...,Y_{iJN} ; \bm{\beta}\right) f\left(\sum_{l=1}^J \sum_{k: S_{ilk'}=1} Y_{ilk'}; \alpha_i \right)
\end{split}
\]
where $f\left(\sum_{l=1}^J \sum_{k': S_{ilk'}=1} Y_{ilk'}; \alpha_i \right)$ is a function only depending on $Y_{ijk}$ through the cluster-specific summed outcome over enrolled individuals $\sum_{l=1}^J \sum_{k': S_{ilk'}=1} Y_{ilk'}$, and $h(Y_{i11} ,...,Y_{iJN} ; \bm{\beta})$ is a function that does not depend on $\alpha_i$.
By the factorization theorem, $\sum_{l=1}^J \sum_{k': S_{ilk'}=1} Y_{ilk'}$ is a sufficient statistic for cluster intercepts $\alpha_i$.
Then, conditioning on this sufficient statistic for $\alpha_i$ results in the conditional Poisson likelihood:
\[
    L_i^{cond}(\bm{\beta}) = \frac{\prod_{j=1}^J \prod_{k:S_{ijk}=1} (e^{\bm{Q}_{ijk} \bm{\beta}})^{Y_{ijk}}/Y_{ijk}!}{\sum_{\{Y'_{ilk'} : \sum_{l=1}^J \sum_{k': S_{ilk'}=1} Y'_{ilk'} = \sum_{l=1}^J \sum_{k': S_{ilk'}=1} Y_{ilk'} \}}\prod_{l=1}^J \prod_{k': S_{ilk'}=1} (e^{\bm{Q}_{ilk'} \bm{\beta}})^{Y'_{ilk'}}/Y'_{ilk'}!}
\]
invoking a multinomial probability, where the denominator represents the sum of all possible combinations of non-negative integers $Y'_{ilk'}$ over given individual $k'$ and period $l$ such that $\sum_{l=1}^J \sum_{k': S_{ilk'}=1} Y'_{ilk'} = \sum_{l=1}^J \sum_{k': S_{ilk'}=1} Y_{ilk'}$. 
Cluster fixed intercepts can accordingly be treated as nuisance parameters and conditioned out by relying on the equivalence between the GEE given supplemental condition (II) in Lemma \ref{app.lemma:g} and a conditional Poisson model. 
Altogether, the incidental parameters problem of infinite $\alpha_i$ dummy variables is avoided, ensuring that Lemma \ref{app.lemma:variance}, and subsequently Theorem \ref{app.Theorem:G}, hold.

Alternatively, recall $\mu_{ijk} = g^{-1}(\bm{Q}_{ijk} \bm{\beta} + \alpha_i)$. With the previous assumptions and taking into account enrollment, the cluster fixed intercepts $\alpha_i$ can be solved for as
\[
    \underline{\alpha_i} = ln\left( \frac{\sum_{l=1}^J \sum_{k'=1}^{N_i} S_{ilk'}Y_{ilk'}}{\sum_{l=1}^J \sum_{k'=1}^{N_i} S_{ilk'}e^{\bm{Q}_{ilk'}\underline{\bm{\beta}}}} \right)
\]
\citep{cameron_microeconometrics_2005} (page 805).
Replacing these terms in the score equation yields an equivalent score equation to the conditional Poisson score equation.
Then $\underline{\bm{\mu}}_i$ is dependent solely on $\underline{\bm{\beta}}$ and the sufficient statistic $\sum_{j=1}^J \sum_{k: S_{ijk}=1} Y_{ijk}$
\[
    \underline{\bm{\mu}}_{i} = e^{\bm{Q}_i\underline{\bm{\beta}} + \underline{\alpha}_i\bm{1}_{N_iJ}} 
    = e^{\bm{Q}_i\underline{\bm{\beta}}}e^{\alpha_i} 
    = e^{\bm{Q}_i\underline{\bm{\beta}}} \left( \frac{\sum_{l=1}^J \sum_{k'=1}^{N_i} S_{ilk'}Y_{ilk'}}{\sum_{l=1}^J \sum_{k'=1}^{N_i} S_{ilk'}e^{\bm{Q}_{ilk'}\underline{\bm{\beta}}}} \right) \,,
\]
again demonstrating that the fixed-effects GEE, given supplemental condition (II) in Lemma \ref{app.lemma:g}, avoids the incidental parameter problem \citep{cameron_microeconometrics_2005} (page 805).

\subsubsection{Constant Treatment Effect}
\label{app.sect:Theorem_SW_G_Proof_constant}

We start by proving the results under the constant treatment effect setting.
By Lemma \ref{app.lemma:g}, the estimating equations become 
\[
    \sum_{i=1}^{m} \bm{U}_i^\top \bm{\mathcal{Z}}_i^{-1} \bm{R}_i^{-1}(\bm{Y}_i^o - \bm{\mu}_i^o) =  \sum_{i=1}^{m} \bm{U}_i^\top \bm{\mathcal{Z}}_i^{-1} (\bm{Y}_i^o - \bm{\mu}_i^o) = 0 \,.
\]
Again, supplemental condition (II) specifies $\bm{R}_i = \textbf{I}_{\mathcal{N}_i}$ with a working independence correlation structure.
Modeling with a constant treatment effect, $\bm{\mu}_i = \left(g^{-1}( \beta_{0j} + \beta_{Z}I\{Z_i \leq j\} + \bm{\beta}_X^\top \bm{X}_{ik} + \alpha_i )\right)_{k=1,...,N_i, j=1,...,J} = \left(g^{-1}(\bm{Q}_{ijk} \bm{\beta} + \alpha_i)\right)_{k=1,...,N_i, j=1,...,J} = \left(\mu_{ijk}\right)_{k=1,...,N_i, j=1,...,J}\} \in \mathbb{R}^{N_{i}J}$, then $\bm{\mu}_i^o = \bm{D}_i^\top \bm{\mu}_i \in \mathbb{R}^{\mathcal{N}_i}$.
Recall that $\bm{Q}_i = (\textbf{I}_J \otimes \bm{1}_{N_i}, (\bm{\Delta}_{Z_i} \bm{1}_J) \otimes \bm{1}_{N_i}, \bm{1}_J \otimes \bm{X}_i) \in \mathbb{R}^{N_iJ \times (J+1+p)}$ is the design matrix 
and $\bm{Y}_i^o = \bm{D}_i^\top \bm{Y}_i \in \mathbb{R}^{\mathcal{N}_i}$.
Since a canonical link function is used, then $U_i = \frac{d\bm{\mu}_i^o}{d\bm{\beta} } = \bm{D}_i^\top \frac{d\bm{\mu}_i}{d\bm{\beta}} = \bm{D}_i^\top (\bm{\mathcal{Z}}_i\bm{Q}_i)$, where with supplemental condition (II) $\bm{\mathcal{Z}}_i = \text{diag}\{\bm{\mu}_{i}\} \in \mathbb{R}^{N_iJ \times N_iJ}$.
Altogether, the above estimating equations become
\[
    \sum_{i=1}^{m} \bm{Q}_i^\top \bm{D}_i\bm{D}_i^\top(\bm{Y}_i - \bm{\mu}_i) = 0 \,.
\]
where recall $\bm{D}_i\bm{D}_i^\top = \text{diag}\{\bm{S}_i\} \in \mathbb{R}^{N_iJ \times N_iJ}$.

Denoting $\bm{\theta} = (\Delta_{GEE-g}, \left(\mu_j(b)\right)_{b\ \in \{\beta_Z, 0\}, j=1,...,J}^{\top},\bm{\beta}^{\top})^{\top}$ as the vector of the parameters, our estimator is a solution to the estimating equations $\sum_{i=1}^{m} \bm{\psi}(\bm{O}_i; \bm{\theta}) = 0$, where
\begin{equation}
\label{app.eq:GEE_c}
    \bm{\psi}(\bm{O}_i;\bm{\theta}) =  \left(
        \begin{gathered}
            \sum_{j=1}^J \lambda_{j} \{\Delta_{GEE-g} - (\mu_j(\beta_Z) - \mu_j(0)) \} \\
            \mathcal{N}_i \mu_j(b) - \bm{S}_i^\top \bm{h}_{ij}(b), b \in \{\beta_Z,0\}, j=1,...,J \\
            \bm{Q}_i^\top \bm{D}_i \bm{D}_i^\top (\bm{Y}_i - \bm{\mu}_i) 
        \end{gathered}
        \right)
\end{equation}
with 
$\mathcal{N}_i = \sum_{j=1}^J N_{ij}$, $\mu_j(b) \in \mathbb{R}^1$ being the g-computation estimator defined in Equation (\ref{app.eq:g-comp}), $\bm{h}_{ij}(b) = \left(g^{-1}(\beta_{0j} + b + \bm{\beta}_X^\top \bm{X}_{ik} + \alpha_i)\right)_{k=1,...,N_i} \otimes \bm{1}_J \in \mathbb{R}^{N_iJ}$, and $\lambda_{j} = \pi_j^s (1-\pi_j^s)$ as will be derived later on.
Intuitively, the first line of the above score functions are used to estimate the  weighted mean (GEE-g) of the g-computation estimators ($\mu_j(b)$), such that setting it equal to 0 returns
\[
\Delta_{GEE-g} = \frac{\sum_{i=1}^{m} \sum_{j=1}^J \lambda_{j}(\mu_j(\beta_Z)-\mu_j(0))}{\sum_{i=1}^{m} \sum_{j=1}^J \lambda_{j}} = \frac{\sum_{j=1}^J \lambda_{j}(\mu_j(\beta_Z)-\mu_j(0))}{\sum_{j=1}^J \lambda_{j}}
\]
such that the first line of the score functions does not depend on cluster $i$ and is $=0$.
This estimator draws an immediate and obvious resemblence to the P-ATO estimand.
Then the second line of the score functions connects the g-computation estimators $\mu_j(b)$ to the model-based means $\bm{h}_{ij}$.
The third line of the score function corresponds to the GEE estimating equation, as similarly detailed for the linear estimating equations in previous sections.
Much of the following proof will focus on the score function in the third line, then invoking the first and second lines of Equation (\ref{app.eq:GEE_c}) to finish the proof of consistency.

By Assumption A1 (super-population sampling) and regularity conditions 1-3 of Lemma \ref{app.lemma:variance}, Lemma \ref{app.lemma:variance} alongside the supplemental conditions in Lemma \ref{app.lemma:g} imply that $\bm{\theta}$ converges in probability to $\underline{\bm{\theta}} = (\underline{\Delta}_{GEE-g}, \left(\underline{\mu}_j(b)\right)_{b\ \in \{\beta_Z, 0\}, j=1,...,J}^{\top}, \underline{\bm{\beta}}^{\top})^{\top}$
that solves $E[\bm{\psi}(\bm{O};\bm{\theta})] = 0$, is asymptotically normally distributed, and has a consistent sandwich variance estimator $\hat{W}_{GEE-g}$ (Lemma \ref{app.lemma:variance}).
The estimating equation (Equation \ref{app.eq:GEE_c}) can be simply re-written as 
\[
    \bm{\psi}(\bm{O}_i;\bm{\theta}) =
    \begin{pmatrix}
        \psi_1(O_i,\theta) \\ \bm{\psi}_2(O_i,\theta) \\ \bm{\psi}_3(O_i,\theta)
    \end{pmatrix} \in \mathbb{R}^{2+3J+p} \,.
\]
We can then demonstrate
\[
\begin{split}
    \frac{d\bm{\psi}(\bm{O}_i,\bm{\theta})}{d\bm{\theta}^{\top}} & \mid_{\bm{\theta}=\hat{\bm{\theta}}}\\ 
    & = 
    \begin{pmatrix}
    \frac{d\psi_1(O_i,\theta)}{d \Delta_{GEE-g}} & \left(\frac{d\psi_1(O_i,\theta)}{d \mu_j(\beta_Z)} \right)^\top_{j=1,...,J} , \left(\frac{d\psi_1(O_i,\theta)}{d \mu_j(0)} \right)^\top_{j=1,...,J} & \frac{d\psi_1(O_i,\theta)}{d \bm{\beta}^\top} \\
    \frac{d\bm{\psi}_2(O_i,\theta)}{d \Delta_{GEE-g}} & \left(\frac{d\bm{\psi}_2(O_i,\theta)}{d \mu_j(\beta_Z)} \right)^\top_{j=1,...,J} , \left(\frac{d\bm{\psi}_2(O_i,\theta)}{d \mu_j(0)} \right)^\top_{j=1,...,J} & \frac{d\bm{\psi}_2(O_i,\theta)}{d \bm{\beta}^\top} \\
    \frac{d\bm{\psi}_3(O_i,\theta)}{d \Delta_{GEE-g}} & \left(\frac{d\bm{\psi}_3(O_i,\theta)}{d \mu_j(\beta_Z)} \right)^\top_{j=1,...,J} , \left(\frac{d\bm{\psi}_3(O_i,\theta)}{d \mu_j(0)} \right)^\top_{j=1,...,J} & \frac{d\bm{\psi}_3(O_i,\theta)}{d \bm{\beta}^\top} \\
    \end{pmatrix} \mid_{\bm{\theta}=\hat{\bm{\theta}}}
    \\
    &=
    \begin{pmatrix}
        \sum_{j=1}^J \lambda_j & \left(-\lambda_j\right)_{j=1,...,J}^\top,\left(\lambda_j \right)_{j=1,...,J}^\top & \bm{0}_{J+1+p}^\top \\
        \bm{0}_{2J}  & \mathcal{N}_i \textbf{I}_{2J} & \left( -\bm{S}_i^\top \frac{d \bm{h}_{ij}(b)}{d\bm{\beta}^\top} \right)_{b \in \{\beta_Z,0\}, j=1,...,J} \\
        \bm{0}_{J+1+p} & \bm{0}_{J+1+p} \bm{0}_{2J}^\top & -\bm{Q}_i^\top \bm{D}_i \bm{D}_i^\top \frac{d \bm{\mu}_i}{d\bm{\beta}^\top}
    \end{pmatrix} \\
    &\in \mathbb{R}^{(2+3J+p) \times (2+3J+p)} \,.
\end{split}
\]
Above, $J+1+p$ is the length of the parameters vector $\bm{\beta} \in \mathbb{R}^{J+1+p}$ (excluding cluster intercepts) (Section \ref{app.sect:constant}) and $\bm{0}_n \in \mathbb{R}^n$ is an $n$-dimension vector of 0's.
Building on the equivalence to the conditional Poisson model under the conditions in Lemma \ref{app.lemma:g}.II and Theorem \ref{app.Theorem:G}.II, the following derivatives are
\[
\begin{split}
    \frac{d \bm{\mu}_i}{d\bm{\beta}^\top} &= 
    \left(\left( e^{\bm{Q}_{ijk}\bm{\beta}} \frac{\sum_{l=1}^J \sum_{k'=1}^{N_i} S_{ilk'} Y_{ilk'}}{\sum_{l=1}^J \sum_{k'=1}^{N_i} S_{ilk'} e^{\bm{Q}_{ilk'}\bm{\beta}}} \right) \left(\bm{Q}_{ijk} - \frac{\sum_{l=1}^J \sum_{k'=1}^{N_i} S_{ilk'} e^{\bm{Q}_{ilk'}\bm{\beta}} \bm{Q}_{ilk'}}{\sum_{l=1}^J \sum_{k'=1}^{N_i} S_{ilk'} e^{\bm{Q}_{ilk'}\bm{\beta}}}\right)\right)_{j=1,...,J; k=1,...,N_i} \\
    &= 
    \left(\mu_{ijk} \left(\bm{Q}_{ijk} - \frac{\sum_{l=1}^J \sum_{k'=1}^{N_i} S_{ilk'} e^{\bm{Q}_{ilk'}\bm{\beta}} \bm{Q}_{ilk'}}{\sum_{l=1}^J \sum_{k'=1}^{N_i} S_{ilk'} e^{\bm{Q}_{ilk'}\bm{\beta}}}\right)\right)_{j=1,...,J; k=1,...,N_i}
    \in \mathbb{R}^{N_iJ \times (J+1+p)}
\end{split}
\]
and
\[
\begin{split}
    \frac{d \bm{h}_{ij}(b)}{d\bm{\beta}^\top}  &=
    \left(\left( e^{\bm{Q}_{ijk}(b)\bm{\beta}} \frac{\sum_{l=1}^J \sum_{k'=1}^{N_i} S_{ilk'} Y_{ilk'}}{\sum_{l=1}^J \sum_{k'=1}^{N_i} S_{ilk'} e^{\bm{Q}_{ilk'}\bm{\beta}}} \right) \left(\bm{Q}_{ijk}(b) - \frac{\sum_{l=1}^J \sum_{k'=1}^{N_i} S_{ilk'} e^{\bm{Q}_{ilk'}\bm{\beta}} \bm{Q}_{ilk'}}{\sum_{l=1}^J \sum_{k'=1}^{N_i} S_{ilk'} e^{\bm{Q}_{ilk'}\bm{\beta}}}\right)\right)_{k=1,...,N_i} \\
     &= 
    \left(\mu_{ijk}(b) \left(\bm{Q}_{ijk}(b) - \frac{\sum_{l=1}^J \sum_{k'=1}^{N_i} S_{ilk'} e^{\bm{Q}_{ilk'}\bm{\beta}} \bm{Q}_{ilk'}}{\sum_{l=1}^J \sum_{k'=1}^{N_i} S_{ilk'} e^{\bm{Q}_{ilk'}\bm{\beta}}}\right)\right)_{k=1,...,N_i}
    \in \mathbb{R}^{N_i \times (J+1+p)} \,.
\end{split}
\]
Altogether, these components can be combined with the estimating equations (Equation \ref{app.eq:GEE_c}) to derive the influence functions and sandwich variance estimator (Lemma \ref{app.lemma:variance}) under an M-estimation framework \citep{tsiatis_semiparametric_2006,van_der_vaart_asymptotic_1998,ross_m-estimation_2024}.

The only remaining part that then needs to be addressed is consistency, i.e., $\underline{\Delta}_{GEE-g} = \Delta$.
Using the fact that $E[\bm{\psi}(\bm{O};\underline{\bm{\theta}})]=0$ and $\bm{Q}=(\textbf{I}_J \otimes \bm{1}_N, (\bm{\Delta}_Z \bm{1}_J) \otimes \bm{1}_N, \bm{1}_J \otimes \bm{X})$, we have
\[
\begin{split}
    E&[(\textbf{I}_J \otimes \bm{1}_N)^\top \bm{D}\bm{D}^\top (\bm{Y}-\underline{\bm{\mu}})] = 0 \\
    E&[\{(\bm{\Delta}_Z\bm{1}_J) \otimes \bm{1}_N\}^\top \bm{D}\bm{D}^\top (\bm{Y}-\underline{\bm{\mu}})] = 0 \,,
\end{split}
\]
where $\underline{\bm{\mu}} = \left(g^{-1}(\underline{\beta}_{0j} + \underline{\beta}_Z I\{Z \leq j\} + \underline{\bm{\beta}}_X^\top \bm{X}_{.k} + \underline{\alpha})\right)_{k=1,...,N, j=1,...,J} \in \mathbb{R}^{NJ}$.
Recall $E[\bm{\Delta}_Z \bm{1}_J] = (\pi_1^s,...,\pi_J^s)^\top$.
By left-multiplying the first equation by $-E[\bm{\Delta}_Z \bm{1}_J]^\top$ and adding it to the second equation, we get
\[
    E[\{(\bm{\Delta}_Z\bm{1}_J - E[\bm{\Delta}_Z \bm{1}_J]) \otimes \bm{1}_N\}^\top \bm{D} \bm{D}^\top (\bm{Y}-\underline{\bm{\mu}})] = 0 \,.
\]
By the assumption that the treatment effect is duration invariant, we have $\bm{Y} = (\bm{\Delta}_Z \otimes \textbf{I}_N) \{\bm{Y}(d) - \bm{Y}(0)\} + \bm{Y}(0)$ in the above equation.
Furthermore, we have $\underline{\bm{\mu}} = (\bm{\Delta}_Z \otimes \textbf{I}_N) \{\underline{\bm{\mu}}(\underline{\beta}_Z) - \underline{\bm{\mu}}(0)\} + \underline{\bm{\mu}}(0)$, where $\underline{\bm{\mu}}(b) = \left(g^{-1}(\underline{\beta}_{0j} + b + \underline{\bm{\beta}}_X^\top \bm{X}_{.k} + \underline{\alpha})\right)_{k=1,...,N, j=1,...,J} \in \mathbb{R}^{NJ}$.

Where with Assumption A3 (randomization) or assuming the mean model is correctly specified (as would be necessary for consistency results to hold for longitudinal CQTs)
\[
    E[\{(\bm{\Delta}_Z\bm{1}_J - E[\bm{\Delta}_Z \bm{1}_J]) \otimes \bm{1}_N\}^\top \bm{D}\bm{D}^\top \{\bm{Y}(0) - \underline{\bm{\mu}}(0) \}] = 0 \,.
\]
This altogether implies
\[
    E[\{(\bm{\Delta}_Z\bm{1}_J - E[\bm{\Delta}_Z \bm{1}_J]) \otimes \bm{1}_N\}^\top \bm{D}\bm{D}^\top (\bm{\Delta}_Z \otimes \textbf{I}_N )(\{\bm{Y}(d) - \bm{Y}(0)\} - \{\underline{\bm{\mu}}(\underline{\beta}_Z) - \underline{\bm{\mu}}(0) \})] = 0 \,.
\]
To reiterate
\[
\begin{split}
    \bm{Y}(d)-\bm{Y}(0) &\in \mathbb{R}^{NJ} \\
    \underline{\bm{\mu}}(\underline{\beta}_Z) - \underline{\bm{\mu}}(0) &\in \mathbb{R}^{NJ} \\
    \left\{(\bm{\Delta}_Z \bm{1}_{J} - E[\bm{\Delta}_Z \bm{1}_{J}]) \otimes \bm{1}_N \right\}^\top \bm{D}\bm{D}^\top \left\{ \bm{\Delta}_Z \otimes \textbf{I}_N\right\} &\in \mathbb{R}^{1 \times NJ} \\
    \left\{(\bm{\Delta}_Z \bm{1}_{J} - E[\bm{\Delta}_Z \bm{1}_{J}]) \otimes \bm{1}_N \right\}^\top \bm{D}\bm{D}^\top \left\{ (\bm{\Delta}_Z \bm{1}_{J})\otimes \bm{1}_N\right\} &\in \mathbb{R}^{1} \,.
\end{split}
\]
As in the linear setting (Section \ref{app.sect:constant}), by Assumptions A2 (non-informative enrollment) and A3 (randomization), the above equation becomes
\[
    E[\{(\bm{\Delta}_Z\bm{1}_J - E[\bm{\Delta}_Z \bm{1}_J]) \otimes \bm{1}_N\}^\top \bm{D}\bm{D}^\top (\bm{\Delta}_Z \otimes \textbf{I}_N )(E[\bm{Y}(d) - \bm{Y}(0)|N] - 
    E[\underline{\bm{\mu}}(\underline{\beta}_Z) - \underline{\bm{\mu}}(0)|N]
    )] = 0 \,.
\]
We observe that the above formula is a function of $N_{.j}$ and $Z$, and is independent of potential outcomes and covariates by Assumptions A2 (non-informative enrollment) and A3 (randomization).
Furthermore, the $j$-th component of the above formula has expectation $E[\lambda_{ij}]$.
Under the setting of constant treatment effects, $E[\bm{Y}(d)-\bm{Y}(0)] = \Delta \bm{1}_{NJ}$
and
\[
\begin{split}
    \Delta = E[&\{(\bm{\Delta}_Z\bm{1}_J - E[\bm{\Delta}_Z \bm{1}_J]) \otimes \bm{1}_N\}^\top \bm{D} \bm{D}^\top (\bm{\Delta}_Z \bm{1}_J \otimes \bm{1}_N)]^{-1} \\
    & E[\{(\bm{\Delta}_Z\bm{1}_J - E[\bm{\Delta}_Z \bm{1}_J]) \otimes \bm{1}_N\}^\top \bm{D} \bm{D}^\top (\bm{\Delta}_Z \otimes \textbf{I}_N)] E[\underline{\bm{\mu}}(\underline{\beta}_Z) - \underline{\bm{\mu}}(0)|N]
\end{split}
\]
Since $\underline{\bm{\mu}}(b)$ is a function of $\bm{X}$, Assumption A1 (super-population sampling) implies that $E[\underline{\bm{\mu}}(b) | N] \in \mathbb{R}^{NJ}$ can be written as
$E[\tilde{\underline{\bm{\mu}}}(b) | N] \otimes \bm{1}_N = \tilde{\underline{\bm{\mu}}}(b) \otimes \bm{1}_N$, where 
$\tilde{\underline{\bm{\mu}}}(b) = \left(E[g^{-1}(\underline{\beta}_{0j} + b + \underline{\bm{\beta}}_X^\top \bm{X} + \underline{\alpha})]\right)_{j=1,...,J} \in \mathbb{R}^J$.
Accordingly, the above equation can then be further simplified to
\[
\begin{split}
    \Delta &= E[\{(\bm{\Delta}_Z\bm{1}_J - E[\bm{\Delta}_Z \bm{1}_J]) \otimes \bm{1}_N\}^\top \bm{D} \bm{D}^\top (\bm{\Delta}_Z \bm{1}_J \otimes \bm{1}_N)]^{-1} \\
    & \,\,\,\,\,\,\,\,\,\,\,\, E[\{(\Delta_Z\bm{1}_J - E[\bm{\Delta}_Z \bm{1}_J]) \otimes \bm{1}_N\}^\top \bm{D} \bm{D}^\top (\bm{\Delta}_Z \otimes \bm{1}_N)] 
    \{\tilde{\underline{\bm{\mu}}}(\underline{\beta}_Z) - \tilde{\underline{\bm{\mu}}}(0)\} \\
    &= \left(\sum_{j=1}^J \lambda_{j}\right)^{-1} \sum_{j=1}^J \lambda_{j} E\left[ g^{-1}(\underline{\beta}_{0j} + \underline{\beta}_Z + \underline{\bm{\beta}}_X^\top \bm{X}+ \underline{\alpha}
    ) - g^{-1}(\underline{\beta}_{0j} + \underline{\bm{\beta}}_X^\top \bm{X} + \underline{\alpha}
    ) \right] \\
\end{split}
\]
where $g$ is the log-link, and $\lambda_{j} = \pi_j^s (1-\pi_j^s)$ \citep{wang_how_2024} is the tilting function that generates the overlap propensity weights embedded within the GEE estimator with an independence working correlation and an identity-link function \citep{li_balancing_2018}.
Notably, $\lambda_{1} = \lambda_{J} = 0$, which correspond with the periods that lack treatment positivity, the first period (where all clusters receive the control, $b=0$) and final period (where all clusters receive the treatment, $b=\beta_Z$).
To illustrate, in a simple example of a 3 sequence, 4 period SW-CRT with total enrollment, we would have $(\lambda_{1}, \lambda_{2}, \lambda_{3}, \lambda_{4}) = (0, 2/9, 2/9, 0)$.

To show $\underline{\Delta}_{GEE-g} = \Delta$, we recall that the first two entries of estimating equations (\ref{app.eq:GEE_c}) imply
\[
\begin{split}
    E\left[\sum_{j=1}^J \lambda_{j} \{\Delta_{GEE-g} - (\mu_j(\beta_Z) - \mu_j(0)) \}\right] &=0 \,, \\
    E\left[\mathcal{N} \mu_j(b) - \bm{S}^\top \bm{h}_{.j}(b)\right] &= 0 \,.
\end{split}
\]
The first equation then implies that
\[
\begin{split}
    \underline{\Delta}_{GEE-g} &= \left(\sum_{j=1}^J \lambda_{j}\right)^{-1} \sum_{j=1}^J \lambda_{j} E\left[ \mu_j(\underline{\beta}_Z) - \mu_j(0) \right]  \\
    & = \left(\sum_{j=1}^J \lambda_{j}\right)^{-1} \sum_{j=1}^J \lambda_{j} \left(\mu_j(\underline{\beta}_Z) - \mu_j(0) \right) 
\end{split}
\]
with the second equality arising from the definition of the g-computation formula (Equation \ref{app.eq:g-comp}).
The second entry in the estimating equations (\ref{app.eq:GEE_c}) indicates that $E[\underline{\mu}_j(b)] = \underline{\mu}_j(b) = E\left[\mathcal{N} \right]^{-1}E[\bm{S}^\top \bm{h}_{.j}(b)] = E[g^{-1}(\underline{\beta}_{0j} + b + \underline{\bm{\beta}}_X^\top \bm{X} + \underline{\alpha}
)] \in \mathbb{R}^1$ using Assumption A2 (non-informative enrollment).
\[
\begin{split}
    \underline{\Delta}_{GEE-g} &= \left(\sum_{j=1}^J \lambda_{j}\right)^{-1} \sum_{j=1}^J \lambda_{j} E\left[ g^{-1}(\underline{\beta}_{0j} + \underline{\beta}_Z + \underline{\bm{\beta}}_X^\top \bm{X} + \underline{\alpha}) - g^{-1}(\underline{\beta}_{0j} + \underline{\bm{\beta}}_X^\top \bm{X} + \underline{\alpha}) \right] \\
    &= \Delta    
\end{split}
\]
which completes the proof of consistency.

\subsubsection{Duration-specific Treatment Effect}
\label{app.sect:Theorem_SW_G_Proof_ds}

Similarly, we can straightforwardly prove the results under the duration-specific treatment effect setting. The corresponding estimating equations are
\begin{equation}
\label{app.eq:GEE_ds}
    \bm{\psi}(\bm{O}_i;\bm{\theta}) =  \left(
        \begin{gathered}
            \sum_{d=1}^{J-1} \bm{\lambda}_{i}(d) \left[ \textbf{A}_d \bm{\Delta}_{GEE-g}^D - \{ \left(\mu_1(\beta_{Zd}) - \mu_1(0) \right) ,..., \left(\mu_J(\beta_{Zd}) - \mu_J(0) \right) \}^\top \right] \\
            \mathcal{N}_i \mu_j(b) - \bm{S}_i^\top \bm{h}_{ij}(b), j=1,...,J, b \in \{\beta_{Z1},  ..., \beta_{Z,J-1}, 0\} \\
            \bm{Q}_i^\top \bm{D}_i \bm{D}_i^\top (\bm{Y}_i - \bm{\mu}_i) 
        \end{gathered}
        \right)
\end{equation}
where $\textbf{A}_d \in \mathbb{R}^{J \times (J-1)}$ is a matrix with the $d$th column equal to $\bm{1}_J$ and all other elements 0, $\bm{Q}_i = (\textbf{I}_J \otimes \bm{1}_{N_i}, \textbf{H}_{Z_i} \otimes \bm{1}_{N_i}, \bm{1}_J \otimes \bm{X}_i)$, $\mu_j(b)$ is the g-computation estimator, $\bm{h}_{ij}(b)$ are defined as in the constant treatment effect setting, and
\[
    \bm{\lambda}_i(d) = \{(\textbf{H}_{Z_i} - E[\textbf{H}_{Z}]) \otimes \bm{1}_{N_i}\}^\top \bm{D}_i \bm{D}_i^\top (\bm{\Lambda}_{Z_i}^d \otimes \bm{1}_{N_i}) \in \mathbb{R}^{(J-1) \times J}
\]
as we will demonstrate.
Recall $\textbf{H}_{Z_i} \in \mathbb{R}^{J \times (J-1)}$ is defined in Section \ref{app.sect:duration}.
Importantly, the first column of $\bm{\lambda}_i(d)$ is composed of all 0's, hence only $\mu_1(0)$ is observed due to $j=1$ being the baseline period.
Some algebra shows that $\bm{\lambda}_i(d)$ is a function of $N_{ij}$ and $Z_i$ but not $\bm{S}_i$ nor $N_i$. Then $E[\bm{\psi}(\bm{O}, \underline{\bm{\theta}})] = 0$ implies that
\[
\begin{split}
    E&[(\textbf{I}_J \otimes \bm{1}_N)^\top \bm{D}\bm{D}^\top (\bm{Y}-\underline{\bm{\mu}})] = 0 \\
    E&[(\textbf{H}_Z \otimes \bm{1}_N)^\top \bm{D}\bm{D}^\top (\bm{Y}-\underline{\bm{\mu}})] = 0 \,.
\end{split}
\]
Left multiplying the first equation by $-E[\textbf{H}_Z]^\top$ and adding it to the second equation, we get
\[
    E[\{(\textbf{H}_Z  -E[\textbf{H}_Z]) \otimes \bm{1}_N\}^\top \bm{D}\bm{D}^\top (\bm{Y}-\underline{\bm{\mu}})] = 0 \,.
\]
For $\underline{\bm{\mu}}$, we have
\[
    \underline{\bm{\mu}} = \sum_{d=1}^{J-1} (\bm{\Lambda}_{Z}^d \otimes \textbf{I}_{N}) \{ \underline{\bm{\mu}}(\underline{\beta}_{Zd}) - \underline{\bm{\mu}}(0) \} + \{\textbf{I}_J \otimes \textbf{I}_{N}\} \underline{\bm{\mu}}(0) \,.
\]
Using Equation \ref{app.eq:PO}, we get
\[
    \sum_{d=1}^{J-1} E[ \{ (\textbf{H}_Z  -E[\textbf{H}_Z]) \otimes \bm{1}_N \}^\top \bm{D}\bm{D}^\top (\bm{\Lambda}_{Z_i}^d \otimes \textbf{I}_{N_i}) (\{\bm{Y}(d)-\bm{Y}(0)\} - \{\underline{\bm{\mu}}(\underline{\beta}_{Zd}) - \underline{\bm{\mu}}(0) \}) ] = 0 \,.
\]
The proof here extends as seen in earlier proofs,
where
\[
\begin{split}
    E[\bm{Y}(d)-\bm{Y}(0) | N] &\in \mathbb{R}^{NJ} \\
    \underline{\bm{\mu}}(\underline{\beta}_{Zd}) - \underline{\bm{\mu}}(0) &\in \mathbb{R}^{NJ} \\
    \tilde{\underline{\bm{\mu}}}(\underline{\beta}_{Zd}) - \tilde{\underline{\bm{\mu}}}(0) &\in \mathbb{R}^J \\
    \bm{\Lambda}_Z^d &\in \mathbb{R}^{J \times J} \,.
\end{split}
\]
Recall that $\tilde{\underline{\bm{\mu}}}(b) = \left(E[g^{-1}(\underline{\beta}_{0j} + b + \underline{\bm{\beta}}_X^\top \bm{X} + \underline{\alpha})] \right)_{j=1,...,J} \in \mathbb{R}^J$ and with Assumption A1 (super-population sampling)
\[
    (\bm{\Lambda}_{Z}^d \otimes \textbf{I}_{N})  E[\underline{\bm{\mu}}(\underline{\beta}_{Zd}) - \underline{\bm{\mu}}(0)|N] = (\bm{\Lambda}_{Z}^d \otimes \bm{1}_{N}) \{\tilde{\underline{\bm{\mu}}}(\underline{\beta}_{Zd}) - \tilde{\underline{\bm{\mu}}}(0)\} \,.
\]
The duration-specific treatment effect then implies
\[
\begin{split}
    (\bm{\Lambda}_Z^d \otimes \textbf{I}_N) E[\bm{Y}(d)-\bm{Y}(0) | N] &= (\bm{\Lambda}_Z^d \otimes \textbf{I}_N) \{\Delta(d)\bm{1}_{NJ} \in \mathbb{R}^{NJ}\} \\
    \sum_{d=1}^{J-1} (\bm{\Lambda}_Z^d \otimes \textbf{I}_N) E[\bm{Y}(d)-\bm{Y}(0) | N] &= \sum_{d=1}^{J-1}(\bm{\Lambda}_Z^d \otimes \textbf{I}_N) \{\Delta(d)\bm{1}_{NJ}\} \\
    &= \sum_{d=1}^{J-1} (\bm{\Lambda}_Z^d \otimes \textbf{I}_N) \{(\textbf{A}_d \bm{\Delta}^D) \otimes \bm{1}_N\} \\
    &= \sum_{d=1}^{J-1} \{(\bm{\Lambda}_Z^d \otimes \bm{1}_N) \textbf{A}_d \} \bm{\Delta}^D \in \mathbb{R}^{NJ} \,.
\end{split}
\]
Finally, as in the linear setting (Section \ref{app.sect:duration}) and relying on Assumptions A2 (non-informative enrollment) and A3 (randomization), we have
\[
\begin{split}
    \sum_{d=1}^{J-1} & E[ \{ (\textbf{H}_Z  -E[\textbf{H}_Z]) \otimes \bm{1}_N \}^\top \bm{D}\bm{D}^\top (\bm{\Lambda}_{Z_i}^d \otimes \textbf{I}_{N_i}) (\{\bm{Y}(d)-\bm{Y}(0)\} - \{\underline{\bm{\mu}}(\underline{\beta}_{Zd}) - \underline{\bm{\mu}}(0) \}) ] \\
    &= \sum_{d=1}^{J-1} E[ \{ (\textbf{H}_Z  -E[\textbf{H}_Z]) \otimes \bm{1}_N \}^\top \bm{D}\bm{D}^\top (\bm{\Lambda}_{Z_i}^d \otimes \textbf{I}_{N_i}) (E[\bm{Y}(d)-\bm{Y}(0)|N] - E[\underline{\bm{\mu}}(\underline{\beta}_{Zd}) - \underline{\bm{\mu}}(0)|N]) ] \\ 
   & = 0 \,.
\end{split}
\]
such that
\[
\begin{split}
    \bm{\Delta}^D = E&\left[\sum_{d=1}^{J-1} \{(\textbf{H}_Z  -E[\textbf{H}_Z]) \otimes \bm{1}_N\}^\top \bm{D}\bm{D}^\top \{(\bm{\Lambda}_Z^d \otimes \bm{1}_N) \textbf{A}_d \} \right]^{-1} \\
    & \sum_{d=1}^{J-1}E\left[ \{ (\textbf{H}_Z  -E[\textbf{H}_Z]) \otimes \bm{1}_N \}^\top \bm{D}\bm{D}^\top  (\bm{\Lambda}_{Z}^d \otimes \bm{1}_{N}) \{\tilde{\underline{\bm{\mu}}}(\underline{\beta}_{Zd}) - \tilde{\underline{\bm{\mu}}}(0)\} \right] \\
    = E&\left[ \sum_{d=1}^{J-1} \bm{\lambda}(d) \textbf{A}_d \right]^{-1} E\left[ \sum_{d=1}^{J-1} \bm{\lambda}(d) \left(\tilde{\underline{\bm{\mu}}}(\underline{\beta}_{Zd}) - \tilde{\underline{\bm{\mu}}}(0)\right) \right] \,.
\end{split}
\]

Recall that the first two entries of estimating equations (\ref{app.eq:GEE_ds}) imply that
\[
\begin{split}
    E\left[ \sum_{d=1}^{J-1} \bm{\lambda}(d) \left[ \textbf{A}_d \underline{\bm{\Delta}}_{GEE-g}^D - \{ \left(\mu_1(\beta_{Zd}) - \mu_1(0) \right) ,..., \left(\mu_J(\beta_{Zd}) - \mu_J(0) \right) \}^\top \right] \right] &= 0 \\
    E\left[ \mathcal{N} \mu_j(b) - \bm{S}^\top \bm{h}_{.j}(b) \right] &= 0 \,.
\end{split}
\]
The second entry in the estimating equations (\ref{app.eq:GEE_ds}) indicates that $E[\underline{\mu}_j(b)] = \underline{\mu}_j(b) = E[g^{-1}(\underline{\beta}_{0j} + b + \underline{\bm{\beta}}_X^\top \bm{X} + \underline{\alpha})] \in \mathbb{R}^1$. Hence, the first entry of estimating equation (\ref{app.eq:GEE_ds}) is equivalent to
\[
    E\left[ \sum_{d=1}^{J-1} \bm{\lambda}(d) \left( \textbf{A}_d \underline{\bm{\Delta}}_{GEE-g}^D - \left(\tilde{\underline{\bm{\mu}}}(\underline{\beta}_{Zd}) - \tilde{\underline{\bm{\mu}}}(0)\right)  \right) \right] = 0 \,,
\]
yielding $\underline{\bm{\Delta}}_{GEE-g}^D = \bm{\Delta}^D$, which completes the proof of consistency.

\paragraph{Illustration of the duration-specific treatment effect g-computation estimator in a 4 sequence SW-CT design.} \mbox{}\\

As previously described, the duration-specific treatment effect g-computation estimator can be formulated as
\[
    \bm{\Delta}^D = E\left[ \sum_{d=1}^{J-1} \bm{\lambda}(d) \textbf{A}_d \right]^{-1} E\left[ \sum_{d=1}^{J-1} \bm{\lambda}(d) \left(\tilde{\underline{\bm{\mu}}}(\underline{\beta}_{Zd}) - \tilde{\underline{\bm{\mu}}}(0)\right) \right] \,.
\]
In the example of a 4 sequence SW-CT with $J=5$ periods (as in the main manuscript case study re-analysis of \citet{tang_crowdsourcing_2018}), we can then demonstrate with treatment duration $d = 1,2,3,4$ for clusters $i$ when $Z_i=2$: 
\[\bm{\lambda}_i(d) \left(\tilde{\underline{\bm{\mu}}}(\underline{\beta}_{Zd}) - \tilde{\underline{\bm{\mu}}}(0)\right)=\] 
\[\Scale[0.8]{
    \left\{
    \begin{pmatrix}
        \frac{3}{4}N_{i2} \left(\tilde{\underline{\mu}}_{2}(\underline{\beta}_{Z1}) - \tilde{\underline{\mu}}_{2}(0)\right) \\
        0 \\
        0 \\
        0 \\
    \end{pmatrix},
    \begin{pmatrix}
        - \frac{1}{4}N_{i3} \left(\tilde{\underline{\mu}}_{3}(\underline{\beta}_{Z2}) - \tilde{\underline{\mu}}_{3}(0)\right) \\
        \frac{3}{4}N_{i3} \left(\tilde{\underline{\mu}}_{3}(\underline{\beta}_{Z2}) - \tilde{\underline{\mu}}_{3}(0)\right) \\
        0 \\
        0 \\
    \end{pmatrix},
    \begin{pmatrix}
        - \frac{1}{4}N_{i4} \left(\tilde{\underline{\mu}}_{4}(\underline{\beta}_{Z3}) - \tilde{\underline{\mu}}_{4}(0)\right) \\
        - \frac{1}{4}N_{i4} \left(\tilde{\underline{\mu}}_{4}(\underline{\beta}_{Z3}) - \tilde{\underline{\mu}}_{4}(0)\right) \\
        \frac{3}{4}N_{i4} \left(\tilde{\underline{\mu}}_{4}(\underline{\beta}_{Z3}) - \tilde{\underline{\mu}}_{4}(0)\right) \\
        0 \\
    \end{pmatrix},
    \begin{pmatrix}
        - \frac{1}{4}N_{i5} \left(\tilde{\underline{\mu}}_{5}(\underline{\beta}_{Z4}) - \tilde{\underline{\mu}}_{5}(0)\right) \\
        - \frac{1}{4}N_{i5} \left(\tilde{\underline{\mu}}_{5}(\underline{\beta}_{Z4}) - \tilde{\underline{\mu}}_{5}(0)\right) \\
        - \frac{1}{4}N_{i5} \left(\tilde{\underline{\mu}}_{5}(\underline{\beta}_{Z4}) - \tilde{\underline{\mu}}_{5}(0)\right) \\
        \frac{3}{4}N_{i5} \left(\tilde{\underline{\mu}}_{5}(\underline{\beta}_{Z4}) - \tilde{\underline{\mu}}_{5}(0)\right) \\
    \end{pmatrix}
    \right\}
}^\top \,,\]
for clusters $i$ when $Z_i=3$:
\[\Scale[0.8]{
    \left\{
    \begin{pmatrix}
        \frac{3}{4}N_{i3} \left(\tilde{\underline{\mu}}_{3}(\underline{\beta}_{Z1}) - \tilde{\underline{\mu}}_{3}(0)\right) \\
        - \frac{1}{4}N_{i3} \left(\tilde{\underline{\mu}}_{3}(\underline{\beta}_{Z1}) - \tilde{\underline{\mu}}_{3}(0)\right) \\
        0 \\
        0 \\
    \end{pmatrix},
    \begin{pmatrix}
        - \frac{1}{4}N_{i4} \left(\tilde{\underline{\mu}}_{4}(\underline{\beta}_{Z2}) - \tilde{\underline{\mu}}_{4}(0)\right) \\
        \frac{3}{4}N_{i4} \left(\tilde{\underline{\mu}}_{4}(\underline{\beta}_{Z2}) - \tilde{\underline{\mu}}_{4}(0)\right) \\
        - \frac{1}{4}N_{i4} \left(\tilde{\underline{\mu}}_{4}(\underline{\beta}_{Z2}) - \tilde{\underline{\mu}}_{4}(0)\right) \\
        0 \\
    \end{pmatrix},
    \begin{pmatrix}
        - \frac{1}{4}N_{i5} \left(\tilde{\underline{\mu}}_{5}(\underline{\beta}_{Z3}) - \tilde{\underline{\mu}}_{5}(0)\right) \\
        - \frac{1}{4}N_{i5} \left(\tilde{\underline{\mu}}_{5}(\underline{\beta}_{Z3}) - \tilde{\underline{\mu}}_{5}(0)\right) \\
        \frac{3}{4}N_{i5} \left(\tilde{\underline{\mu}}_{5}(\underline{\beta}_{Z3}) - \tilde{\underline{\mu}}_{5}(0)\right) \\
        - \frac{1}{4}N_{i5} \left(\tilde{\underline{\mu}}_{5}(\underline{\beta}_{Z3}) - \tilde{\underline{\mu}}_{5}(0)\right) \\
    \end{pmatrix},
    \begin{pmatrix}
        0 \\
        0 \\
        0 \\
        0 \\
    \end{pmatrix}
    \right\}
}^\top \,,\]
for clusters $i$ when $Z_i=4$:
\[\Scale[0.8]{
    \left\{
    \begin{pmatrix}
        \frac{3}{4}N_{i4} \left(\tilde{\underline{\mu}}_{4}(\underline{\beta}_{Z1}) - \tilde{\underline{\mu}}_{4}(0)\right) \\
        - \frac{1}{4}N_{i4} \left(\tilde{\underline{\mu}}_{4}(\underline{\beta}_{Z1}) - \tilde{\underline{\mu}}_{4}(0)\right) \\
        - \frac{1}{4}N_{i4} \left(\tilde{\underline{\mu}}_{4}(\underline{\beta}_{Z1}) - \tilde{\underline{\mu}}_{4}(0)\right) \\
        0 \\
    \end{pmatrix},
    \begin{pmatrix}
        - \frac{1}{4}N_{i5} \left(\tilde{\underline{\mu}}_{5}(\underline{\beta}_{Z2}) - \tilde{\underline{\mu}}_{5}(0)\right) \\
        \frac{3}{4}N_{i5} \left(\tilde{\underline{\mu}}_{5}(\underline{\beta}_{Z2}) - \tilde{\underline{\mu}}_{5}(0)\right) \\
        - \frac{1}{4}N_{i5} \left(\tilde{\underline{\mu}}_{5}(\underline{\beta}_{Z2}) - \tilde{\underline{\mu}}_{5}(0)\right) \\
        - \frac{1}{4}N_{i5} \left(\tilde{\underline{\mu}}_{5}(\underline{\beta}_{Z2}) - \tilde{\underline{\mu}}_{5}(0)\right) \\
    \end{pmatrix},
    \begin{pmatrix}
        0 \\
        0 \\
        0 \\
        0 \\
    \end{pmatrix},
    \begin{pmatrix}
        0 \\
        0 \\
        0 \\
        0 \\
    \end{pmatrix}
    \right\} 
}^\top \,,\]
and for clusters $i$ when $Z_i=5$:
\[\Scale[0.8]{
    \left\{
    \begin{pmatrix}
        \frac{3}{4}N_{i5} \left(\tilde{\underline{\mu}}_{5}(\underline{\beta}_{Z1}) - \tilde{\underline{\mu}}_{5}(0)\right) \\
        - \frac{1}{4}N_{i5} \left(\tilde{\underline{\mu}}_{5}(\underline{\beta}_{Z1}) - \tilde{\underline{\mu}}_{5}(0)\right) \\
        - \frac{1}{4}N_{i5} \left(\tilde{\underline{\mu}}_{5}(\underline{\beta}_{Z1}) - \tilde{\underline{\mu}}_{5}(0)\right) \\
        - \frac{1}{4}N_{i5} \left(\tilde{\underline{\mu}}_{5}(\underline{\beta}_{Z1}) - \tilde{\underline{\mu}}_{5}(0)\right) \\
    \end{pmatrix},
    \begin{pmatrix}
        0 \\
        0 \\
        0 \\
        0 \\
    \end{pmatrix},
    \begin{pmatrix}
        0 \\
        0 \\
        0 \\
        0 \\
    \end{pmatrix},
    \begin{pmatrix}
        0 \\
        0 \\
        0 \\
        0 \\
    \end{pmatrix}
    \right\} 
}^\top \,.\]
Additionally,
\[
\begin{split}
    & \sum_{d=1}^{J-1} \bm{\lambda}_i(d) \textbf{A}_{d} =\\
    & \frac{1}{4}
    \Scale[0.60]{
    \begin{pmatrix}
        3(I\{Z_i=2\}N_{i2}+I\{Z_i=3\}N_{i3}+I\{Z_i=4\}N_{i4}+I\{Z_i=5\}N_{i5}) & -I\{Z_i=2\}N_{i3}-I\{Z_i=3\}N_{i4}-I\{Z_i=4\}N_{i5} & -I\{Z_i=2\}N_{i4}-I\{Z_i=3\}N_{i5} & -I\{Z_i=2\}N_{i5} \\
        -I\{Z_i=3\}N_{i3}-I\{Z_i=4\}N_{i4}-I\{Z_i=5\}N_{i5} & 3(I\{Z_i=2\}N_{i3}+I\{Z_i=3\}N_{i4}+I\{Z_i=4\}N_{i5}) & -I\{Z_i=2\}N_{i4}-I\{Z_i=3\}N_{i5} & -I\{Z_i=2\}N_{i5} \\
        -I\{Z_i=4\}N_{i4}-I\{Z_i=5\}N_{i5} & -I\{Z_i=3\}N_{i4}-I\{Z_i=4\}N_{i5} & 3(I\{Z_i=2\}N_{i4}+I\{Z_i=3\}N_{i5}) & -I\{Z_i=2\}N_{i5} \\
        -I\{Z_i=5\}N_{i5} & -I\{Z_i=4\}N_{i5} & -I\{Z_i=3\}N_{i5} & 3I\{Z_i=2\}N_{i5}
    \end{pmatrix}
    }
\end{split} \,.
\]

\subsubsection{Period-specific Treatment Effect}
\label{app.sect:Theorem_SW_G_Proof_ps}

We next prove the consistency result under the period-specific treatment effect setting. Since we need to drop the data from period $J$ to avoid over-parameterization, we again use the superscript * to denote all subsequent changes in notation, as in the proof of Theorem \ref{app.Theorem:SW}.
To reiterate from Theorem \ref{app.Theorem:SW}, $J^* = J-1, \bm{Y}_i^* = (\bm{Y}_{i1},...,\bm{Y}_{i,J-1})^{\top}, \bm{\beta}_0^* = (\beta_{01},...,\beta_{0,J-1})^{\top}$, and generically, $\textbf{M}^* \in \mathbb{R}^{J^* \times J^*}$ is a matrix consisting of the first $J-1$ columns and rows of a matrix $\textbf{M} \in \mathbb{R}^{J \times J}$.

The corresponding estimating equations are then
\begin{equation}
\label{app.eq:GEE_ps}
    \bm{\psi}(\bm{O}_i^*;\bm{\theta}) =  \left(
        \begin{gathered}
           \bm{\Delta}_{GEE-g}^P - \{(\mu_{2}^*(\beta_{2Z}) - \mu_{2}^*(0)), ..., (\mu_{J^*}^*(\beta_{J^*Z}) - \mu_{J^*}^*(0)) \}^\top \\
            \mathcal{N}_{i}^* \mu_j^*(\beta_{j'Z}) - \bm{S}_i^{*\top} \bm{h}_{ij}^*(\beta_{j'Z}), j=1,...,J, j'=j \text{ or } 0 \\
            \bm{Q}_i^{*\top} \bm{D}_i^* \bm{D}_i^{*\top} (\bm{Y}_i^* - \bm{\mu}_i^*) 
        \end{gathered}
        \right)
\end{equation}
with $\bm{Q}^* = (\textbf{I}_{J^*} \otimes \bm{1}_N, \bm{\Delta}_Z^* \otimes \bm{1}_N, \bm{1}_{J^*} \otimes \bm{X}).$
Importantly, $\beta_{1Z}=0$ since period $j=1$ is the baseline period where no clusters receive treatment, hence, $\mu_{1}^*(\beta_{1Z}) = \mu_{1}^*(0)$.
Then $E[\bm{\psi}(\bm{O}^*, \underline{\bm{\theta}})] = 0$ implies that
\[
\begin{split}
    E[(\textbf{I}_{J^*} \otimes \bm{1}_N)^\top \bm{D}^* \bm{D}^{*\top} (\bm{Y}^* - \underline{\bm{\mu}}^*)] &= 0 \\
    E[(\bm{\Delta}_Z^* \otimes \bm{1}_N)^\top \bm{D}^* \bm{D}^{*\top} (\bm{Y}^* - \underline{\bm{\mu}}^*)] &= 0 \,.
\end{split}
\]
Left multiplying the first equation by $-E[\bm{\Delta}_Z^*]^\top$ and adding it to the second equation, we get $E[\{(\bm{\Delta}_Z^* - E[\bm{\Delta}_Z^*]) \otimes \bm{1}_N\}^\top \bm{D}^* \bm{D}^{*\top} (\bm{Y}^* - \underline{\bm{\mu}}^*)] = 0$.
Under the setting of period-specific treatment effects, we have 
\[
    \underline{\bm{\mu}}^* = (\bm{\Delta}_{Z}^* \otimes \textbf{I}_{N})\{
    \underline{\bm{\mu}}^{*}(1)
    - \underline{\bm{\mu}}^{*}(0)\} + (\textbf{I}_{J^*} \otimes \textbf{I}_{N})\underline{\bm{\mu}}^{*}(0) \,,
\]
where specifically for period-specific treatment effects, we denote $\bm{Y}^*(1)$ and $\underline{\bm{\mu}}^{*}(1) = \left(g^{-1}(\underline{\beta}_{0j} + \underline{\beta}_{jZ} + \underline{\bm{\beta}}_X^\top \bm{X}_{.k})\right)_{k=1,....,N, j=1,...,J^*}$ to denote the potential outcome and expected mean model under treatment, respectively.
Using formula (\ref{app.eq:PO}) with $J$ replaced by $J^*$, we get
\[
    E[\{(\bm{\Delta}_Z^* - E[\bm{\Delta}_Z^*]) \otimes \bm{1}_N\}^\top \bm{D}^* \bm{D}^{*\top} (\bm{\Delta}_Z^* \otimes \textbf{I}_N)(\{
    \bm{Y}^*(1) 
    - \bm{Y}^*(0)\} - \{
    \underline{\bm{\mu}}^{*}(1)
    - \underline{\bm{\mu}}^*(0)\})] = 0 \,.
\]
Hence
\[
\begin{split}
    E&[\{(\bm{\Delta}_Z^* - E[\bm{\Delta}_Z^*]) \otimes \bm{1}_N\}^\top \bm{D}^* \bm{D}^{*\top} (\bm{\Delta}_Z^* \otimes \textbf{I}_N)E[\{
    \bm{Y}^*(1) 
    - \bm{Y}^*(0)\} - \{
    \underline{\bm{\mu}}^{*}(1) 
    - \underline{\bm{\mu}}^*(0)\}]]  \\
    &= E[\{(\bm{\Delta}_Z^* - E[\bm{\Delta}_Z^*]) \otimes \bm{1}_N\}^\top \bm{D}^* \bm{D}^{*\top} (\bm{\Delta}_Z^* \otimes \textbf{I}_N) E[\{
    \bm{Y}^*(1) 
    - \bm{Y}^*(0)\} - \{
    \underline{\tilde{\bm{\mu}}}^{*}(1) 
    - \underline{\tilde{\bm{\mu}}}^*(0)\} \otimes \bm{1}_N]] \\
    &= 0 \,.
\end{split}
\]
where $E[\bm{Y}^*(1) - \bm{Y}^*(0)] = (0, \bm{\Delta}^{P\top})^\top 
\otimes \bm{1}_N$ with $\bm{\Delta}^P = (\Delta_2,...,\Delta_j,...,\Delta_{J^*})^\top$.
Again, recall that $\tilde{\underline{\bm{\mu}}}(b) = \left(E[g^{-1}(\underline{\beta}_{0j} + b + \underline{\bm{\beta}}_X^\top \bm{X} + \underline{\alpha})] \right)_{j=1,...,J} \in \mathbb{R}^J$.
Accordingly, we demonstrate 
$\bm{\Delta}^P = E[\underline{\tilde{\bm{\mu}}}^{*}(1) - \underline{\tilde{\bm{\mu}}}^*(0)] \in \mathbb{R}^{J^*-1}$
(with $\underline{\beta}_{1Z} = 0$).

Recall the first two entries of estimating equations (\ref{app.eq:GEE_ps}) imply the g-computation estimators $E[\underline{\mu}_j^*(\underline{\beta}_{jZ})] = \underline{\mu}_j^*(\underline{\beta}_{jZ})= E[g^{-1}(\underline{\beta}_{0j} + \underline{\beta}_{jZ} + \underline{\bm{\beta}}_X^\top \bm{X} + \underline{\alpha})]$
and 
$E[\underline{\mu}_j^*(0)] = \underline{\mu}_j^*(0) = E[g^{-1}(\underline{\beta}_{0j} + \underline{\bm{\beta}}_X^\top \bm{X} + \underline{\alpha})]$.
Finally,
$\underline{\bm{\Delta}}_{GEE-g}^P = \{ (\mu_{2}^*(\beta_{2Z}) - \mu_{2}^*(0)), ..., (\mu_{J^*}^*(\beta_{J^*Z}) - \mu_{J^*}^*(0)) \}^\top = \bm{\Delta}^P$, which completes the proof for consistency.

\subsubsection{Saturated Treatment Effect}
\label{app.sect:Theorem_SW_G_Proof_sat}

Finally, we prove the results under the saturated treatment effect setting. Like the period-specific treatment effect setting, we drop period $J$ to avoid over-parameterization, and we thus use superscript * to notate this change.
The corresponding estimating equations are
\begin{equation}
\label{app.eq:GEE_s}
    \bm{\psi}(\bm{O}_i^*;\bm{\theta}) =  \left(
        \begin{gathered}
           \bm{\Delta}_{GEE-g}^S - \{ (\mu_{2}^*(\beta_{2Z1}) - \mu_{2}^*(0)), ..., (\mu_{J^*}^*(\beta_{J^*ZJ^*}) - \mu_{J^*}^*(0)) \}^\top \\
            \mathcal{N}_{i}^* \mu_j^*(\beta_{j'Zd}) - \bm{S}_i^{*\top} \bm{h}_{ij}^*(\beta_{j'Zd}), 1 \leq d < j \leq J^*, j'=j \text{ or } 0 \\
            \bm{Q}_i^{*\top} \bm{D}_i^* \bm{D}_i^{*\top} (\bm{Y}_i^* - \bm{\mu}_i^*)
        \end{gathered}
        \right)
\end{equation}
with $\bm{Q}^* = (\textbf{I}_{J^*} \otimes \bm{1}_N, \tilde{\bm{\Lambda}}_Z^{1*} \otimes \bm{1}_N, ..., \tilde{\bm{\Lambda}}_Z^{J^**} \otimes \bm{1}_N, \bm{1}_{J^*} \otimes \bm{X})$.
Recall, $\tilde{\bm{\Lambda}}_{Z_i}^{d*} \in \mathbb{R}^{J^* \times (J^* - d)}$ contains column $d+1$ to column $J^*$ of $\bm{\Lambda}_{Z_i}^{d*}$ (Section \ref{app.sect:saturated}).
Importantly, $\beta_{1Zd}=0 \,\forall\, d$ since period $j=1$ is the baseline period where no clusters receive treatment, hence, $\mu_{1}^*(\beta_{1Zd})-\mu_{1}^*(0)=0 \,\forall\, d$.

Following a similar proof to that seen in the previous sections, we have for each exposure time $d=1,...,J$
\[
\begin{split}
    E[(\textbf{I}_{J^*} \otimes \bm{1}_N)^\top \bm{D}^* \bm{D}^{*\top} (\bm{Y}^* - \underline{\bm{\mu}}^*)] &= 0 \\
    E[(\tilde{\bm{\Lambda}}_Z^{d*} \otimes \bm{1}_N)^\top \bm{D}^* \bm{D}^{*\top} (\bm{Y}^* - \underline{\bm{\mu}}^*)] &= 0 \,.
\end{split}
\]
which as in the linear setting (Section \ref{app.sect:fe_s_rand}), implies for $d=1,...,J^*-1$
\[
\begin{split}
    \sum_{d'=1}^{J^*-1} &E\left[ \{ (\tilde{\bm{\Lambda}}_Z^{d*} - E[\tilde{\bm{\Lambda}}_Z^{d*}]) \otimes \bm{1}_N \}^\top \bm{D}^* \bm{D}^{*\top} \left\{(\bm{\Lambda}_Z^{d'*} \otimes \textbf{I}_N) E[\bm{Y}^*(d')-\bm{Y}^*(0)] - \left( \tilde{\bm{\Lambda}}_{Z}^{d'*} \otimes \bm{1}_{N} \right) \left(
    \underline{\tilde{\bm{\mu}}}^{*}(d')
    - \underline{\tilde{\bm{\mu}}}^*(0)\right) \right\} \right] \\
    &= \sum_{d'=1}^{J^*-1} E\left[ \{ (\tilde{\bm{\Lambda}}_Z^{d*} - E[\tilde{\bm{\Lambda}}_Z^{d*}]) \otimes \bm{1}_N \}^\top \bm{D}^* \bm{D}^{*\top} \left( \tilde{\bm{\Lambda}}_{Z}^{d'*} \otimes \bm{1}_{N} \right) \left\{ \bm{\Delta}_d^S -\left(
    \underline{\tilde{\bm{\mu}}}^{*}(d')
    - \underline{\tilde{\bm{\mu}}}^*(0)\right) \right\} \right] \\
    &= 0
\end{split}
\]
Again, recall that $\tilde{\underline{\bm{\mu}}}(b) = \left(E[g^{-1}(\underline{\beta}_{0j} + b + \underline{\bm{\beta}}_X^\top \bm{X} + \underline{\alpha})] \right)_{j=1,...,J} \in \mathbb{R}^J$.
Then, specifically for saturated treatment effects, we denote $\underline{\tilde{\bm{\mu}}}^{*}(d) 
= \left(E[g^{-1}(\underline{\beta}_{0j} + \underline{\beta}_{jZd} + \underline{\bm{\beta}}_X^\top \bm{X} + \underline{\alpha})]\right)_{j=d+1,...,J^*} \in \mathbb{R}^{J^*-d}$
and $\underline{\tilde{\bm{\mu}}}^{*}(0) = \left(E[g^{-1}(\underline{\beta}_{0j} + \underline{\bm{\beta}}_X^\top \bm{X} + \underline{\alpha})] \right)_{j=d+1,...,J^*} \in \mathbb{R}^{J^*-d}$.
Assuming a saturated treatment effect, $(\bm{\Lambda}_Z^{d*} \otimes \textbf{I}_N)E[\bm{Y}^*(d)-\bm{Y}^*(0)] = ( \tilde{\bm{\Lambda}}_{Z}^{d*} \otimes \bm{1}_{N})\bm{\Delta}_d^S \in \mathbb{R}^{NJ^*}$ where $\bm{\Delta}_d^S = (\Delta_{d}(d), ..., \Delta_{J^*-1}(d))^\top \in \mathbb{R}^{J^*-d}$.
By the regularity condition that the solution is unique, we have for $1 \leq d < j \leq J^*$
\[
    E[g^{-1}(\underline{\beta}_{0j} + \underline{\beta}_{jZd} + \underline{\bm{\beta}}_X^\top \bm{X}_{.k} + \underline{\alpha})] -  E[g^{-1}(\underline{\beta}_{0j} + \underline{\bm{\beta}}_X^\top \bm{X}_{.k} + \underline{\alpha})] = E[Y_{.jk}(d) - Y_{.jk}(0)] = \Delta_j(d) \,.
\]

Recall that the second entry of the estimating equations (\ref{app.eq:GEE_s}) imply that the g-computation estimators $\underline{\mu}_j^*(\underline{\beta}_{j'Zd}) = E[\mathcal{N}^*]^{-1} E[\bm{S}^{*\top} \underline{\bm{h}}_{.j}^*(\underline{\beta}_{j'Zd})] = E[g^{-1}(\underline{\beta}_{0j} + \underline{\beta}_{jZd} + \underline{\bm{\beta}}_X^\top \bm{X} + \underline{\alpha})]$
and similarly,
$\underline{\mu}_j^*(0) = E[g^{-1}(\underline{\beta}_{0j} + \underline{\bm{\beta}}_X^\top \bm{X} + \underline{\alpha})]$.
Therefore, the first entry of the estimating equations (\ref{app.eq:GEE_s}) imply $\underline{\bm{\Delta}}_{GEE-g}^S = \bm{\Delta}^S$.
In fact, the regularity condition (1) is unnecessary here since the design matrix $E[\{(\tilde{\bm{\Lambda}}_Z^{*} - E[\tilde{\bm{\Lambda}}_Z^{*}] )\otimes \bm{1}_N\}^\top \bm{D}^* \bm{D}^{*\top} (\tilde{\bm{\Lambda}}_Z^{*} \otimes \bm{1}_N)]$ is already invertible (which has a form of a covariance matrix and does not degenerate due to Assumption A3), where $\tilde{\bm{\Lambda}}_Z^{*} = (\tilde{\Lambda}_Z^{1*}, \tilde{\Lambda}_Z^{2*}, ..., \tilde{\Lambda}_Z^{J^*-1*})$.

If the mean model $E[Y_{.jk} | Z, \bm{X}_{.k}] = g^{-1}(\beta_{0j} + TE_{.j} + \bm{\beta}_X^\top \bm{X}_{.k} + \alpha)$ is correctly specified for some parameters $\bm{\beta}^*$, the classical GEE theory \citep{liang_longitudinal_1986} implies that the probability limit of $\hat{\bm{\beta}}$, $\underline{\bm{\beta}}$, is equal to $\bm{\beta}^*$.
Therefore, the probability limit of $\hat{\mu}_j(\hat{b})$, $E[g^{-1}(\underline{\beta}_{0j} + \underline{b} + \underline{\bm{\beta}}_X^\top \bm{X} + \underline{\alpha})]$, is the corresponding expectation of potential outcomes, and the desired results are implied.

\subsubsection{General proof of working constant treatment effect consistency for the P-ATO}

As in Section \ref{app.sect:SW_const=P-ATO}, we can generally prove that the constant treatment effect GEE g-computation difference-in-means estimator targets the period-average treatment effect for the overlap population (P-ATO) estimand.

Building on the proof in Section \ref{app.sect:Theorem_SW_G_Proof_constant}, where there are true underlying period-specific treatment effects  $E[\bm{Y}(1)-\bm{Y}(0)] = \bm{\Delta}^P \otimes \bm{1}_N$ (where we broadly define $\bm{\Delta}^P=(\Delta_j)_{j=1,...,J}$), then with a specified working constant treatment effect in the GEE, we have the P-ATO
\[
\begin{split}
    \left(\sum_{j=1}^J \lambda_{j}\right)^{-1}  \sum_{j=1}^J \lambda_{j} \Delta_j
    & = \left(\sum_{j=1}^J \lambda_{j}\right)^{-1} \sum_{j=1}^J \lambda_{j} E\left[ g^{-1}(\underline{\beta}_{0j} + \underline{\beta}_Z + \underline{\bm{\beta}}_X^\top X\bm{ }+ \underline{\alpha}
    ) - g^{-1}(\underline{\beta}_{0j} + \underline{\bm{\beta}}_X^\top \bm{X} + \underline{\alpha}
    ) \right] \\
    & = \Delta^{P-ATO}
\end{split}
\]

Subsequently, with the G-computation estimator being specified as
\[
    \underline{\Delta}_{GEE-g} = \left(\sum_{j=1}^J \lambda_{j}\right)^{-1} \sum_{j=1}^J \lambda_{j} \left(\mu_j(\underline{\beta}_Z) - \mu_j(0) \right)
\]
The second entry in the estimating equations (\ref{app.eq:GEE_c}) indicates that $E[\underline{\mu}_j(b)] = \underline{\mu}_j(b) = E\left[\mathcal{N} \right]^{-1}E[\bm{S}^\top \bm{h}_{.j}(b)] = E[g^{-1}(\underline{\beta}_{0j} + b + \underline{\bm{\beta}}_X^\top \bm{X} + \underline{\alpha})] \in \mathbb{R}^1$ using Assumption A2 (non-informative enrollment).
Altogether, this completes the proof of consistency 
\[
    \underline{\Delta}_{GEE-g} = \Delta^{P-ATO} \,. 
\]

\noindent $\square$

\subsection{Extension of Theorem \ref{app.Theorem:G} to longitudinal CRT designs}

\begin{appremark}
\label{app.remark:g_PB_XO}
    The consistency results outlined in Theorem \ref{app.Theorem:G} can be similarly demonstrated in other longitudinal CRT designs, including the PB-CRT and CRXO designs. 
\end{appremark}

Remark \ref{app.remark:g_PB_XO} extends directly from the above proof for Theorem \ref{app.Theorem:G}. 
Recall that treatment effect structures can coincide in different longitudinal CRT designs. 
For example, in PB-CRTs, the duration-specific, period-specific, and saturated treatment effect structures coincide.
Whereas in CRXOs, the period-specific and saturated treatment effects coincide as a more general form of a duration-specific treatment effect structure.

The constant treatment effect results hold for other longitudinal CRT designs such as PB-CRTs and CRXOs, 
as in Theorem \ref{app.Theorem:G} and Section \ref{app.sect:Theorem_SW_G_Proof_constant},
by setting the treatment effect indicators $\bm{\Delta}_{Z_i}$ to be $\bm{\Delta}_{Z_i,PB}$ or $\bm{\Delta}_{Z_i,XO}$, respectively. 
The results for different longitudinal CRTs with duration-specific treatment effects hold, as in  Theorem \ref{app.Theorem:G} and Section \ref{app.sect:Theorem_SW_G_Proof_ds}, by setting $\bf{H}_{Z_i}$ and $\bm{\Lambda}_{Z_i}^d$ to match the corresponding CRT design.
The results for different longitudinal CRTs with period-specific treatment effects hold (when they're identifiable), as in  Theorem \ref{app.Theorem:G} and Section \ref{app.sect:Theorem_SW_G_Proof_ps}, by setting $\bm{\Delta}_{Z_i}$ and $\bm{\Lambda}_{Z_i}^d$ to match the corresponding CRT design.
Finally, the results for different longitudinal CRTs with saturated treatment effects hold, as in  Theorem \ref{app.Theorem:G} and Section \ref{app.sect:Theorem_SW_G_Proof_sat}, by setting $\tilde{\bm{\Lambda}}_{Z_i}^{d}$ to match the corresponding CRT design.

\subsection{Model-robust inference in non-randomized quasi-experimental trials}
\label{app.sect:G_CQTs}

In the absence of randomization (Assumption A3) in longitudinal CQTs, consistency of the g-computation estimator can still be demonstrated by
assuming the model (Equation \ref{app.eq:GEE_mean_model}) is correctly specified, as in supplemental condition (iii.) of Theorem \ref{app.Theorem:G}.
Consistency then results from the standard model-based strict exogeneity assumption, which states $E[\bm{Y}(0)-\underline{\bm{\mu}}(0) | Z] = 0$ and
\[
\begin{split}
    \Delta &= E[g^{-1}(\underline{\beta}_{0j} + \underline{\beta}_{Z} + \underline{\bm{\beta}}_X^\top \bm{X}_{.k} + \underline{\alpha}) - g^{-1}(\underline{\beta}_{0j} + \underline{\bm{\beta}}_X^\top \bm{X}_{.k} + \underline{\alpha})] \,, \\
    \Delta(d) &= E[g^{-1}(\underline{\beta}_{0j} + \underline{\beta}_{Zd} + \underline{\bm{\beta}}_X^\top \bm{X}_{.k} + \underline{\alpha}) -  g^{-1}(\underline{\beta}_{0j} + \underline{\bm{\beta}}_X^\top \bm{X}_{.k} + \underline{\alpha})] \,, \\
    \Delta_j &= E[g^{-1}(\underline{\beta}_{0j} + \underline{\beta}_{jZ} + \underline{\bm{\beta}}_X^\top \bm{X}_{.k} + \underline{\alpha}) - g^{-1}(\underline{\beta}_{0j} + \underline{\bm{\beta}}_X^\top \bm{X}_{.k} + \underline{\alpha})] \,, \\
    \Delta_j(d)  &= E[g^{-1}(\underline{\beta}_{0j} + \underline{\beta}_{jZd} + \underline{\bm{\beta}}_X^\top \bm{X}_{.k} + \underline{\alpha}) - g^{-1}(\underline{\beta}_{0j} + \underline{\bm{\beta}}_X^\top \bm{X}_{.k} + \underline{\alpha})]
\end{split}
\]
where recall $\Delta_j(d) = E[Y_{.jk}(j-d+1)] - E[Y_{.jk}(0)] \in \mathbb{R}^{1}$ \citep{wooldridge_econometric_2010}.

By extension, with the g-computation estimator $\mu_j(b)$
\[
\begin{split}
    \Delta &= E[\underline{\mu}_j(\underline{\beta}_Z) - \underline{\mu}_j(0)] \,, \\
    \Delta(d) &= E[\underline{\mu}_j(\underline{\beta}_{Zd}) - \underline{\mu}_j(0)]  \,, \\
    \Delta_j &= E[\underline{\mu}_j(\underline{\beta}_{jZ}) - \underline{\mu}_j(0)]  \,, \\
    \Delta_j(d) &= E[\underline{\mu}_j(\underline{\beta}_{jZd}) - \underline{\mu}_j(0)]  \,.
\end{split}
\]
Notably, when the model is correctly specified, the weights defined for the estimators in the proof of Theorem \ref{app.Theorem:G} ($\lambda_{j}$ for constant treatment and $\bm{\lambda}(d)\textbf{A}_d$ for duration-specific treatment) are all $\propto 1$.

While assuming the mean model to be correctly specified is seemingly contradictory to our ``model-robustness'' aims, such a log-link fixed-effects model can still robustly account for all time-invariant confounding by relying on the previously described assumptions.
By design, $e^{\underline{\alpha}_i}$ adjusts for all cluster-level time-invariant confounding in $\underline{\mu}_{ijk} = e^{\underline{\beta}_{0j} + \underline{TE}_{ij} + \underline{\bm{\beta}}_X^\top \bm{X}_{ik} + \underline{\alpha}_i}$.
Furthermore, as described for the linear fixed-effects model, the described log-link fixed-effects can also adjust for all individual-level time-invariant confounding: with Assumptions A1 (Super-population sampling) and A2 (Non-informative enrollment)
\[
\begin{split}
    E\left[\underline{\mu}_{.jk}\right] &= E\left[e^{\underline{\beta}_{0j} + \underline{TE}_{.j} + \underline{\bm{\beta}}_X^\top \bm{X}_{.k} + \underline{\alpha}}\right] \\
    &= E\left[
    E\left[e^{\underline{\beta}_{0j} + \underline{TE}_{.j}}|\alpha\right]
    E\left[e^{\underline{\bm{\beta}}_X^\top \bm{X}_{.k}}|\alpha\right]
    e^{\underline{\alpha}}
    \right] \,.
\end{split}
\]
Where the second equality results from assuming the mean model is correctly specified and additionally assuming $X_{ik} \perp Z_i|\alpha_i$ (corresponding with the DAG in Section \ref{app.sect:assumptions}).
Finally, Assumption A2 assumes away selection bias and alongside Assumption A1, assumes that $E\left[e^{\underline{\bm{\beta}}_X^\top \bm{X}_{.k}}|\alpha\right] = c$ where $c$ denotes a time-invariant cluster-level variable that is constant within cluster. 

Therefore, the effects of measured and unmeasured time-invariant individual-level confounding can also be automatically adjusted for by the described log-link fixed-effects model.
In contrast, a similar marginal GEE \citep{wang_how_2024} would additionally require the absence of cluster-level time-invariant confounding.

\newpage
\section{Corollary \ref{app.Corollary:nonlinear_estimands}}
\label{app.sect:nonlinear_estimands}

The following corollary largely extends from Wang et al. \citep{wang_how_2024}.
While we have mainly addressed difference estimands, ratio estimands are common for binary and count outcomes.
Next, we extend our methods to accommodate alternative effect measures, focusing on the saturated treatment effect structure (see Remark \ref{app.remark:ratio}).
Generalizing Equation \ref{app.eq:PO_estimands_SW}, we define the saturated treatment effect estimands as
\[
    \Phi_j(d) = f(E[Y_{ijk}(j-d+1)], E[Y_{ijk}(0)])
\]
for $1 \leq d < j < J-1$, for a user-defined function $f$.
For example, $f(x,y) = x/y$ defines the causal risk ratio and $f(x,y) = ln\{x/(1-x)\} - ln\{y/(1-y)\}$ defines the log causal odds ratio.
The vector of estimands is defined as $\bm{\Phi}^S = (\Phi_1(1), ..., \Phi_{J-1}(J-1))^{\top}$.
We can use the GEE (Equation \ref{app.eq:GEE}) with $TE_{ij} = \sum_d^j \beta_{jZd} I\{Z_i = j-d+1\}$ to estimate $\bm{\Phi}^S$, and we define the set of estimators as $\hat{\bm{\Phi}}^S = (f\{\hat{\mu}_{2}(\hat{\beta}_{2Z1}), \hat{\mu}_{2}(0)\} ,..., f\{\hat{\mu}_{J-1}(\hat{\beta}_{J-1,Z,J-1}), \hat{\mu}_{J-1}(0)\})^{\top}$.
Again, observed data from period $J$ are dropped during model-fitting.
In other words, $\Phi_j(d)$ is estimated by $f\{\hat{\mu}_{j}(\hat{\beta}_{jZd}), \hat{\mu}_{j}(0)\}$.
Denoting the sandwich variance estimator for $\hat{\bm{\Phi}}^S$ as $\hat{\bm{V}}_{\Phi}^S$, as we obtain the following result that $\hat{\bm{\Phi}}^S$ is robust to working model misspecification.

\begin{appcorollary}
\label{app.Corollary:nonlinear_estimands}
    Under the conditions in Theorem \ref{app.Theorem:G} for a SW-CRT, $(\hat{\bm{V}}^S_{\Phi})^{-1/2} \left(\hat{\bm{\Phi}}^S - \bm{\Phi}^S\right) \xrightarrow{d} N(0, \textbf{I}_{(J-2)(J-1)/2})$.
\end{appcorollary}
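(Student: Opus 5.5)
The plan is to treat $\hat{\bm{\Phi}}^S$ as a smooth transformation of the M-estimator already analyzed in the saturated-effect part of Theorem \ref{app.Theorem:G}, and then to apply the delta method. Equivalently, we can append the estimating equations $\Phi_j(d) - f\{\mu_j(\beta_{jZd}),\mu_j(0)\}=0$, for $1\le d<j\le J-1$, to the stacked system (\ref{app.eq:GEE_s}) and invoke Lemma \ref{app.lemma:variance} once more. I prefer the stacked version, because it yields the sandwich estimator $\hat{\bm{V}}^S_{\Phi}$ directly in the form stated in the corollary.

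\textbf{Step 1.} From the proof of Theorem \ref{app.Theorem:G}(d) we have the probability limits. Under A1--A3 (or a correct mean model), the limits $\underline{\mu}_j(\underline{\beta}_{jZd})$ and $\underline{\mu}_j(0)$ of the g-computation components equal $E[Y_{.jk}(j-d+1)]$ and $E[Y_{.jk}(0)]$ individually, not only their difference. This is the identity established at the end of Section \ref{app.sect:Theorem_SW_G_Proof_sat}. Its control-arm half also follows from the period-effect score equations combined with randomization. Consequently, the new rows have their unique root at $\underline{\Phi}_j(d)=f(E[Y_{.jk}(j-d+1)],E[Y_{.jk}(0)])=\Phi_j(d)$.

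\textbf{Step 2.} We verify the regularity conditions of Lemma \ref{app.lemma:variance} for the augmented $\bm{\psi}$. We assume $f$ is twice continuously differentiable in a neighborhood of the limit point. For the odds-ratio example this requires the means to lie in $(0,1)$; for the risk ratio it requires $E[Y_{.jk}(0)]>0$. The derivative matrix is block upper-triangular with an identity block for the $\bm{\Phi}$ rows, so it is invertible whenever the block from Theorem \ref{app.Theorem:G} is invertible. The domination condition carries over because $f$ and its derivatives are bounded on a compact neighborhood of the limit.

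\textbf{Step 3.} Lemma \ref{app.lemma:variance} then gives $m^{1/2}(\hat{\bm{\Phi}}^S-\bm{\Phi}^S)\xrightarrow{d}N(0,\bm{V}_\Phi^S)$. The limiting covariance is $\bm{V}_\Phi^S=\bm{G}\bm{V}_\mu\bm{G}^\top$, where $\bm{G}$ collects the partial derivatives of $f$, and $\hat{\bm{V}}^S_\Phi\xrightarrow{P}\bm{V}^S_\Phi$. Slutsky's theorem with the continuous mapping theorem then produces the standardized statement, provided $\bm{V}^S_\Phi$ is positive definite.

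The main obstacle is Step 1. The difference proof only needed the contrast $\underline{\mu}_j(\underline{\beta}_{jZd})-\underline{\mu}_j(0)$ to match the estimand, whereas a nonlinear $f$ needs each arm's mean separately. I would first show $\underline{\mu}_j(0)=E[Y_{.j k}(0)]$ using the $\bm{\beta}_0$ and cluster-intercept score equations under working independence with a canonical log link, where the fitted sums match the observed sums. Randomization is then used to carry the result from the pooled sample to the counterfactual arm, following the argument of \citet{wang_how_2024}. Adding this control-arm identity to the already-proven difference identity gives the treated-arm identity.
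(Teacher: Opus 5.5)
Your architecture matches the paper's. The paper replaces the first row of (\ref{app.eq:GEE_s}) by the $f$-contrasts, which is equivalent to your appended rows. It takes normality and sandwich consistency from Lemma~\ref{app.lemma:variance}, and then simply writes $f\{\underline{\mu}_j^*(\underline{\beta}_{jZd}),\underline{\mu}_j^*(0)\}=f(E[Y_{ijk}(d)],E[Y_{ijk}(0)])$. Your Steps 2--3 supply regularity details the paper leaves implicit. You are right that $f$ needs each arm's limit separately, but your Step 1 does not deliver it. Under A1--A3, Section~\ref{app.sect:Theorem_SW_G_Proof_sat} only produces the contrast $\underline{\mu}_j(\underline{\beta}_{jZd})-\underline{\mu}_j(0)=\Delta_j(d)$. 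The arm-by-arm statement at the end of that section assumes a correctly specified mean model, so it cannot be cited for the randomization branch. Nor does the argument of Wang et al.\ transfer. There the fitted control mean $e^{\beta_{0j}+\bm{\beta}_X^\top\bm{X}}$ does not involve $Z$, so $E[I\{Z>j\}\sum_k S_{.jk}\{Y_{.jk}(0)-\underline{\mu}_{.jk}\}]$ factors by A3. Here the cluster-intercept equations force $e^{\hat{\alpha}_i}=\sum_{l,k}S_{ilk}Y_{ilk}/\sum_{l,k}S_{ilk}e^{\bm{Q}_{ilk}\hat{\bm{\beta}}}$, which depends on $Z_i$ and on the treated-period outcomes. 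Moreover, $\hat{\mu}_j(0)$ averages these intercepts over the treated clusters as well.

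For the log-link estimator this is a genuine obstruction, not just a missing detail. Take $J=3$ (period $3$ dropped) with equal allocation to $Z\in\{2,3\}$ and a different single enrollee in each cluster-period. Let $X\sim\mathrm{Bernoulli}(1/2)$, $Y(0)=h(X)$ and $Y(1)=\theta(X)h(X)$, with $h(0)=\theta(0)=1$ and $h(1)=\theta(1)=3$. Fit the working model $\mu_{ijk}=\exp(\alpha_i+\beta_{0j}+\beta_{2Z1}I\{Z_i=2,j=2\}+\beta_X X_{ik})$. By my calculation, solving the limiting score equations gives $e^{\underline{\beta}_{02}}=1$, $e^{\underline{\beta}_X}\approx 4.05$ and $e^{\underline{\beta}_{02}+\underline{\beta}_{2Z1}}\approx 2.19$. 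Hence $\underline{\mu}_2(0)\approx 2.07\neq 2=E[Y_{.2k}(0)]$, and the risk-ratio limit $e^{\underline{\beta}_{2Z1}}\approx 2.19$ differs from $2.5$. Even the contrast limit is $\approx 2.46$ rather than $3$. So the difference step in Section~\ref{app.sect:Theorem_SW_G_Proof_constant}, which treats $\underline{\alpha}$ as unrelated to $Z$ when invoking A3, appears to have the same hole.

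What can be proved is narrower. Suppose the profiled fitted cell totals are linear in the cluster outcome total, with coefficients depending only on $(Z_i,N_{i1},\dots,N_{iJ})$. This holds for covariate-unadjusted fits, and for closed cohorts ($S_{ijk}\equiv S_{ik}$), where $\sum_k S_{ik}e^{\bm{\beta}_X^\top\bm{X}_{ik}}$ cancels. In that case $e^{\underline{\beta}_{0j}}=E[Y_{.jk}(0)]/E[Y_{.1k}(0)]$ and $e^{\underline{\beta}_{0j}+\underline{\beta}_{jZd}}=E[Y_{.jk}(j-d+1)]/E[Y_{.1k}(0)]$ make every expected cell score vanish. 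Uniqueness identifies these as the limits, and both arm identities follow. Outside such cases you need the correctly specified mean model. Under either restriction, your Steps 2--3 complete the argument.
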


Corollary \ref{app.Corollary:nonlinear_estimands} also implies that linear mixed models can be used to formalize a model-robust estimator for $\Phi^S$ in a similar way.
The asymptotic result can be considered as an application of Corollary \ref{app.Corollary:nonlinear_estimands} under supplemental condition (I) of Theorem \ref{app.Theorem:G}, where we choose the identity-link and a constant variance function.

\noindent \textit{Proof of Corollary \ref{app.Corollary:nonlinear_estimands}.} 

In this proof, we inherit all notation defined in the proof of Theorem \ref{app.Theorem:G}.
For estimating the treatment effect estimands on other scales, the estimating equations become
\begin{equation}
\label{app.eq:GEE_nonlinear}
    \bm{\psi}(\bm{O}_i^*;\bm{\theta}) =  \left(
        \begin{gathered}
           \bm{\Phi}^S_{GEE-g} - \{ f(\mu_{2}^*(\beta_{2Z1}), \mu_{2}^*(0)), ..., f(\mu_{J^*}^*(\beta_{J^*ZJ^*}), \mu_{J^*}^*(0)) \}^\top \\
            \mathcal{N}_{i}^* \mu_j^*(\beta_{j'Zd}) - \bm{S}_i^{*\top} \bm{h}_{ij}^*(\beta_{j'Zd}), 1 \leq d < j \leq J^*, j'=j \text{ or } 0 \\
            \bm{Q}_i^{*\top} \bm{D}_i^* \bm{D}_i^{*\top} (\bm{Y}_i^* - \bm{\mu}_i^*)
        \end{gathered}
        \right)
\end{equation}
which is equal to estimating equations in Equation \ref{app.eq:GEE_s} if $f(x,y)=x-y$ and the difference is only on the first entry of the estimating equations. Therefore we only need to prove consistency.

The proof of consistency then extends exactly as seen in Section \ref{app.sect:Theorem_SW_G_Proof_sat}.
That is except, based on the first entry of the estimating equations (Equation \ref{app.eq:GEE_nonlinear}), we instead have $\Phi^S_{GEE-g,j}(d) = f\{\underline{\mu}_j^*(\underline{\beta}_{jZd}), \underline{\mu}_j^*(0)\} = f(E[Y_{ijk}(d)],E[Y_{ijk}(0)]) = \Phi_j(d)$, which completes the proof for consistency.
\begin{appremark}
    \label{app.remark:ratio}
    Period-specific treatment effects with a general summary measure $f$ can be defined similarly and estimated following a similar procedure. However, there may be conceptual challenges associated with developing model-robust estimators for constant or duration-specific treatment effects defined with a ratio summary measure by simply matching the treatment effect structure.
    This is because a working model typically includes secular trend parameters on the link function scale, which may be incompatible with the constant or duration-specific treatment effect estimands defined under a nonlinear summary measure $f$.
    In other words, the compatibility between the assumed and true treatment effect structures can be distorted by the specification of summary measure in the estimands definition.
    Therefore, to address a general summary measure, we recommend using the saturated treatment effect structure and taking a weighted average of $\Phi_j(d)$ to target the constant or duration-specific treatment effect estimands, and model-robust inference can be based on Corollary \ref{app.Corollary:nonlinear_estimands}.
\end{appremark}

\newpage
\section{Additional Simulation Scenarios}
\label{app.sect:additional_sim_scenarios}

\subsection{Simulation Scenario 3}

In scenario 3, we simulate a CRXO with continuous outcomes and a true saturated/period-specific treatment effect structure with $m=6$ clusters to demonstrate the robustness of linear fixed-effects models under misspecification.

We generate potential outcomes for $b=0,1$ as
\[
\begin{split}
    Y_{ijk}(b) = \mu_0 &+ \beta_{0j} + \beta_{jZd}
    \left(1 + \delta_i + \left(\frac{X_{2,i}-E[X_{2,i}]}{E[X_{2,i}]}\right) \right)b \\
    &+ \beta_{X1} sin\left(jX_{1,ijk}\right) + \beta_{X2.1}\sqrt{X_{2,ijk}} + \beta_{X2.2}cos(X_{2,ijk}) \\
    &+ \alpha_i + \gamma_{ij} + \epsilon_{ijk} 
\end{split}
\]
where $\mu_0 = 1.5$ and $(\beta_{0j})_{j=1,...,J} = (0.2j)^\top_{j=1,...,J}$.
The saturated/period-specific time-varying treatment effects in the CRXO are simulated as $\beta_{jZd}  = \beta_{jZ} = \beta_Z\left(1+0.6\left(j-\frac{\sum_{l=1}^{J} l}{J}\right)\right) \, \forall \, j$, with the saturated/period-average treatment effect being $\Delta^{S-avg} = \beta_Z = 0.7$.
Furthermore, the treatment effect randomly varies between clusters $\delta_i \sim N(0, \beta_Z^2/100)$ and by cluster-level covariate $X_{2,i}$.

An individual-level binary covariate $X_{1,ijk} \sim Bernoulli(X_{1,i})$ is generated where on the cluster-level $X_{1,i} \sim Beta(6,4)$, and $\beta_{X1} = 1.5$.
An individual-level count covariate $X_{2,ijk} \sim Poisson(X_{2,i})$ is generated where on the cluster-level $X_{2,i} \sim Gamma(0.5, 200)$, and $\beta_{X2.1} = 2$, $\beta_{X2.2}=7$. 
Due to randomization, these covariates are independent of the treatment assignment.
Finally, random intercepts are simulated with $\alpha_i \sim N(0, \tau^2)$ where $\tau^2=0.05/(1-0.05)$.
Furthermore,
$(\gamma_{ij})_{j=1,...,6} \sim MVN(0,\Sigma)$
with the correlation between periods $j$ and $l$ being $\Sigma_{jl} = \kappa^2 e^{-\lambda|j-l|}$ given scale parameter $\kappa=\tau/10$ and decay rate parameter $\lambda=0.5$, to create within-cluster correlation that decays over time.
Finally, $\epsilon_{ijk} \sim N(0,1)$.

The simulation results are illustrated in Figure \ref{fig:scenario_CRXO}. The linear fixed-effects model constant treatment effect estimator $\hat{\beta}_Z$ is minimally biased for the saturated/period-average treatment effect $\Delta^{S-avg}=\Delta^{P-avg}$ in the $m=6$ CRXO design. 
The sandwich and jackknife variance estimators slightly underestimate and overestimate the empirical variances, respectively. Accordingly, the sandwich variance estimator with normal approximation has slight under-coverage of the 95\% confidence intervals, whereas the jackknife variance estimator with $t(m-2)$ has closer to nominal coverage, even with such a complicated unadjusted covariate structure and a small number of clusters.

\begin{figure}[H]
    \centering
    \includegraphics[width=\linewidth]{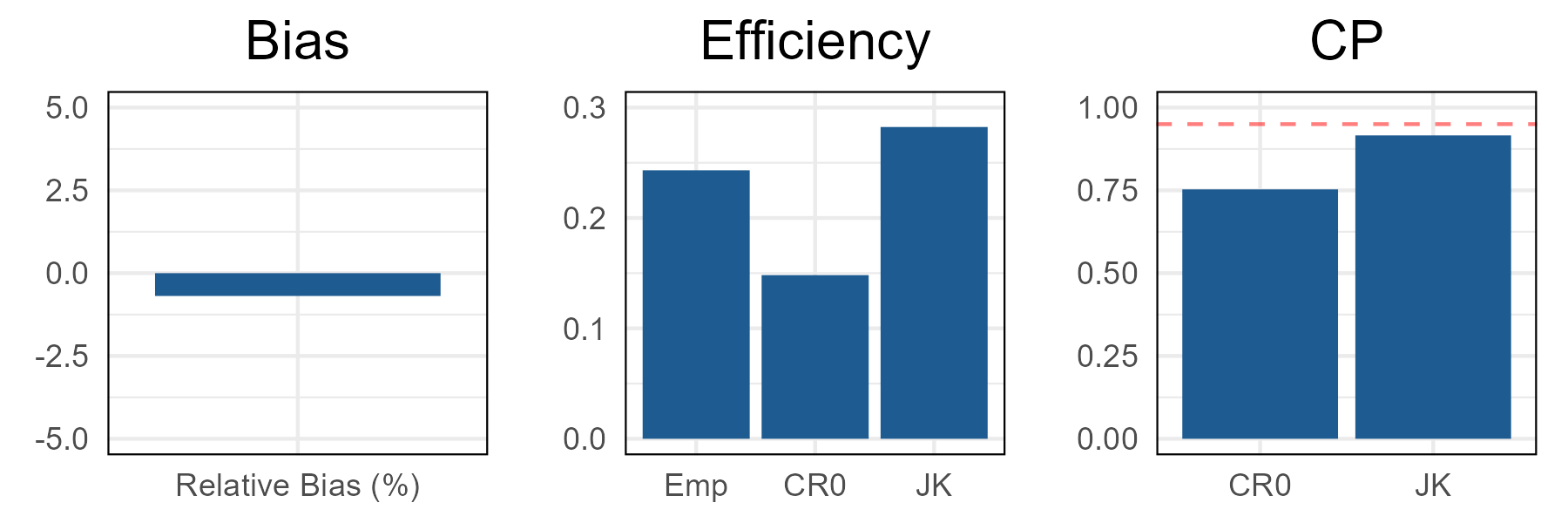}
    \caption{
        Analysis results from simulation scenario 3 of a CRXO with continuous outcomes and a true saturated/period-specific treatment effect structure, using the linear fixed-effects model with a constant treatment effect estimator. 
        Results are reported for (i.) relative bias (\%), (ii.) efficiency in terms of the empirical variance (``Emp'') in comparison to average sandwich (``CR0'') and jackknife (``JK'') variance estimates, and (iii.) coverage probabilities using the sandwich variance with normal approximation (``CR0'') and the jackknife variance with $t(m-2)$ (``JK'').
        The dashed red line denotes a nominal coverage probability of 95\%.
    }
    \label{fig:scenario_CRXO}
\end{figure}

\subsection{Simulation Scenarios 1-3 with $m=100$ clusters}
\label{app.sect:sim_scenarios100}

Results as described for simulation scenarios 1-3, but with $m=100$ clusters are included below.

In scenario 1 with a $m=100$ cluster SW-CRT and $J^*=6$ periods, the period-specific treatment effect estimands for periods $j=2,...,5$ are then computed via numerical integration
\[
\begin{split}
    (v_j(1)-v_j(0))_{j=2,...,5} &\approx (0.0729, 0.0630, 0.0540, 0.0459)^\top \,, \\ 
    \Delta^{P-avg} = \sum_{j=2}^{5} \left(v_j(1)-v_j(0)\right)/4 &\approx 0.0589 \,,
\end{split}
\]
which can be targeted by specifying analyses with a period-specific treatment effect structure.
We can additionally define the P-ATO estimand
\[
    \Delta^{P-ATO}  = \frac{ \sum_{j=1}^J \lambda_{j}[v_j(1)-v_j(0)]}{\sum_{j=1}^J \lambda_{j}} \approx 0.0588 \,, 
\]
which can then be targeted by specifying analyses with a constant treatment effect structure.

Overall, results largely match the results from the corresponding simulation scenarios with fewer clusters.
With more clusters, both the sandwich and jackknife variance estimators more closely approximate the empirical variance.
Alongside the jackknife variance with $t(m-2)$, the sandwich variance with normal approximation now also yields mostly nominal coverage probabilities across the different simulation scenarios with $m=100$ clusters. This corresponds with the consistency results reported in Lemma \ref{app.lemma:variance}.

\begin{figure}[H]
    \centering
    \includegraphics[width=\linewidth]{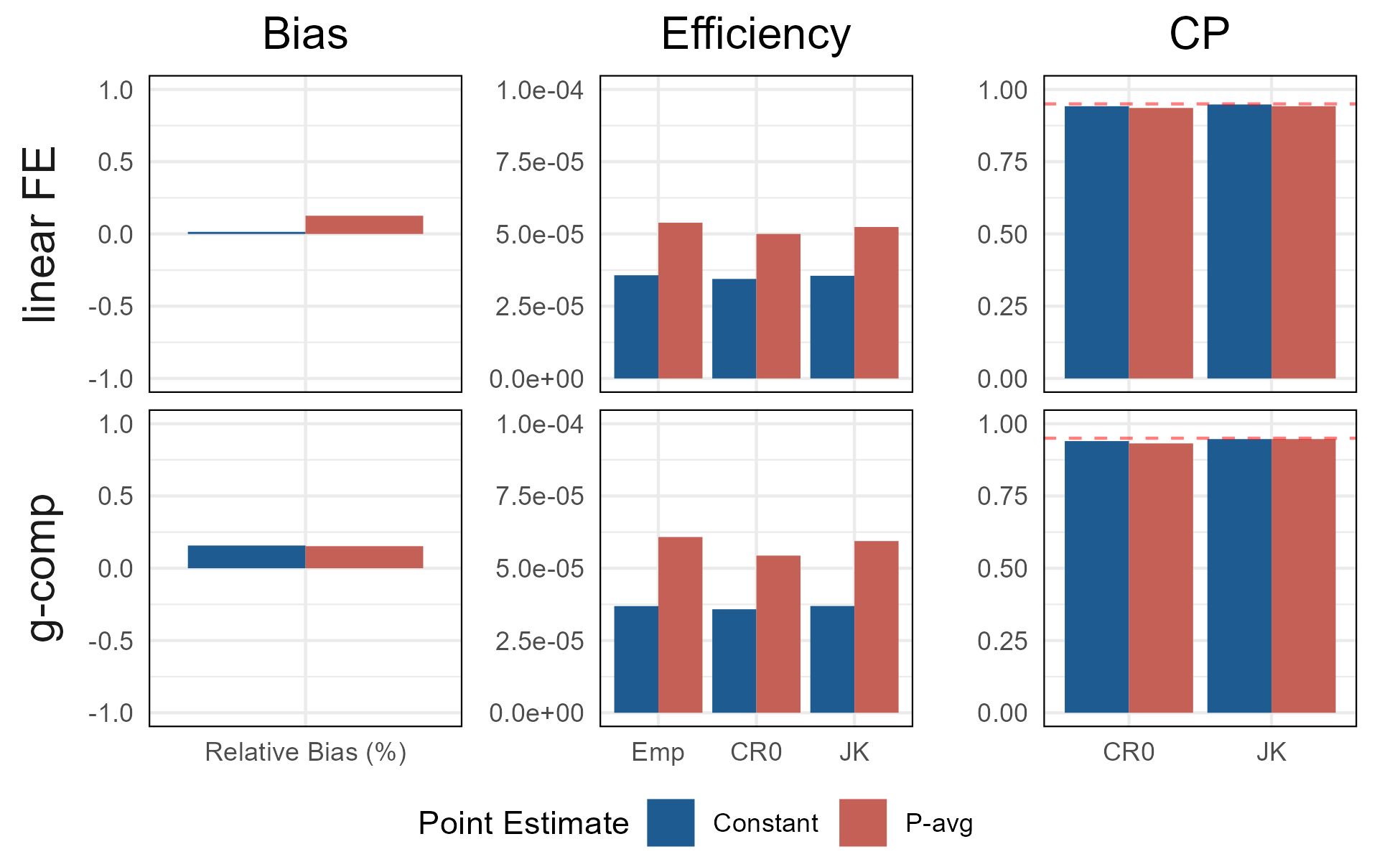}
    \caption{
        Analysis results from simulation scenario 1 of a SW-CRT with $m=100$ clusters, binary outcomes, and a true period-specific treatment effect structure, using the linear fixed-effects model (``linear FE'') and g-computation log-link fixed-effects model (``g-comp'') with constant and period-average (``P-avg'') treatment effect estimators. 
        Results are reported for (i.) relative bias (\%), (ii.) efficiency in terms of the empirical variance (``Emp'') in comparison to average sandwich (``CR0'') and jackknife (``JK'') variance estimates, and (iii.) coverage probabilities using the sandwich variance with normal approximation (``CR0'') and the jackknife variance with $t(m-2)$ (``JK'').
        The dashed red line denotes a nominal coverage probability of 95\%.
    }
    \label{fig:scenario_SWCRT_bin_100}
\end{figure}

\begin{figure}[H]
    \centering
    \includegraphics[width=\linewidth]{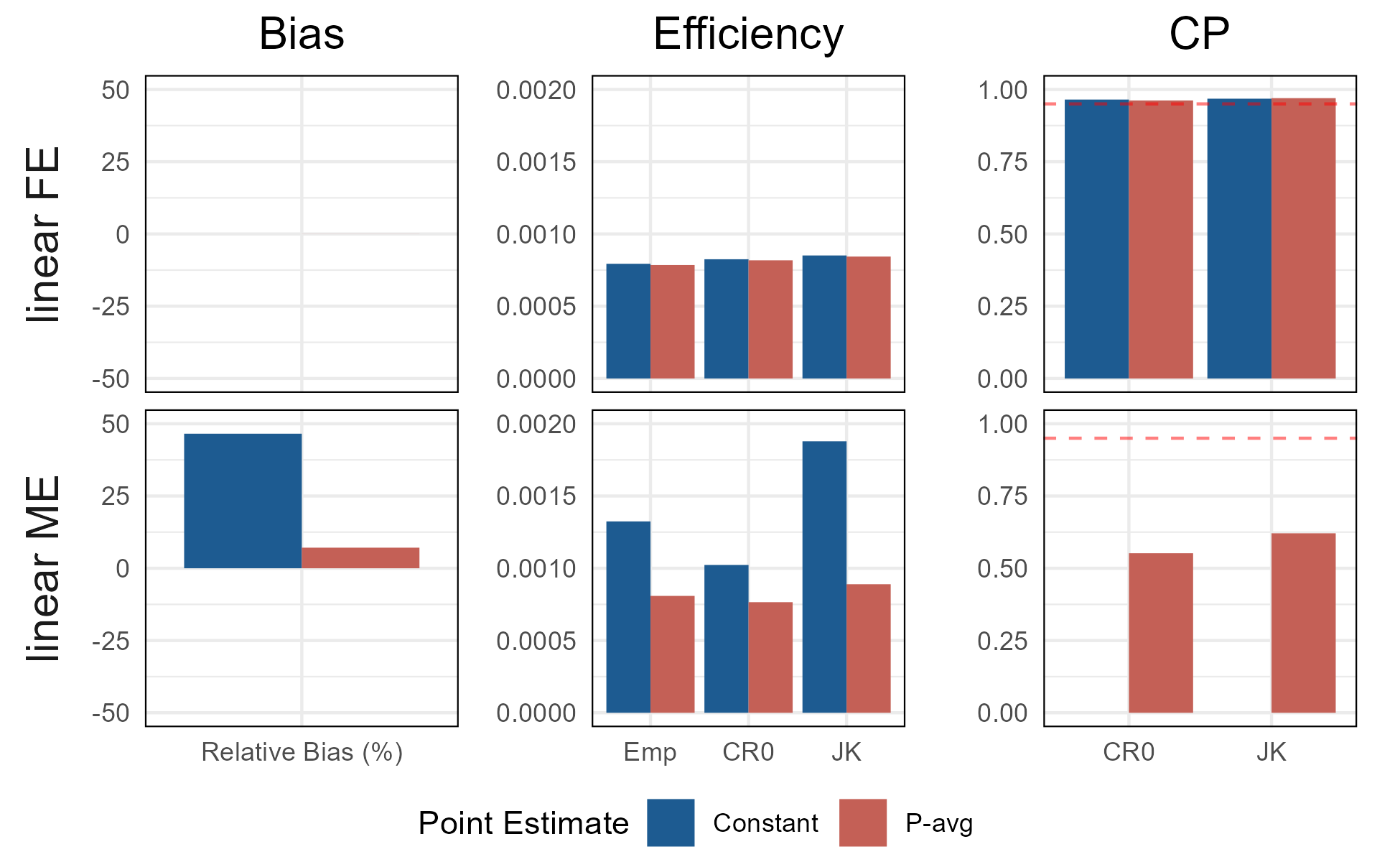}
    \caption{
        Analysis results from simulation scenario 2 of a PB-CQT with $m=100$ clusters, continuous outcomes, and a true time-varying treatment effect structure, using the linear fixed-effects model (``linear FE'') and mixed-effects model (``linear ME'') with constant and period-average (``P-avg'') treatment effect estimators. 
        Results are reported for (i.) relative bias (\%), (ii.) efficiency in terms of the empirical variance (``Emp'') in comparison to average sandwich (``CR0'') and jackknife (``JK'') variance estimates, and (iii.) coverage probabilities using the sandwich variance with normal approximation (``CR0'') and the jackknife variance with $t(m-2)$ (``JK'').
        The dashed red line denotes a nominal coverage probability of 95\%.
    }
    \label{fig:scenario_PBCQT_100}
\end{figure}

\begin{figure}[H]
    \centering
    \includegraphics[width=\linewidth]{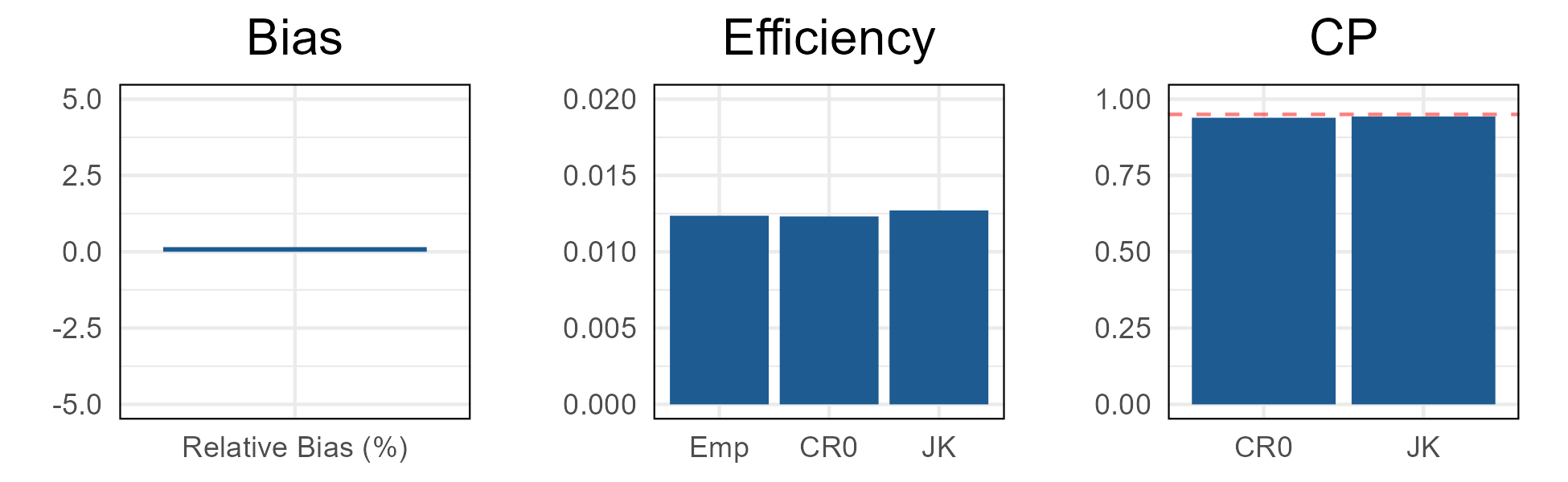}
    \caption{
        Analysis results from simulation scenario 3 of a CRXO with $m=100$ clusters, continuous outcomes, and a true saturated/period-specific treatment effect structure, using the linear fixed-effects model with a constant treatment effect estimator. 
        Results are reported for (i.) relative bias (\%), (ii.) efficiency in terms of the empirical variance (``Emp'') in comparison to average sandwich (``CR0'') and jackknife (``JK'') variance estimates, and (iii.) coverage probabilities using the sandwich variance with normal approximation (``CR0'') and the jackknife variance with $t(m-2)$ (``JK'').
        The dashed red line denotes a nominal coverage probability of 95\%.
    }
    \label{fig:scenario_CRXO_100}
\end{figure}

\subsection{Simulation Scenario 4}
\label{app.sect:sim_scenario4}

We additionally simulate a SW-CRT with count outcomes and a true constant treatment effect structure on the log-scale to demonstrate the robustness of the g-computation log-link fixed-effects model estimator under misspecification.
With the treatment effect being constant on a log scale, the resulting risk difference treatment effect estimand is then period-specific.

In this extra scenario, we generate potential outcomes for $b=0,1$ (to denote receiving control or treatment) as
\[
    ln(E[Y_{ijk}(b) | \delta_i, X_{1,ijk}, \alpha_i, \gamma_{ij}]) = \mu_0 + \beta_{0j} + \beta_{Z}(1+\delta_i)b 
    + \beta_{X1} X_{1,ijk} 
    + \alpha_i + \gamma_{ij}
\]
with $Y_{ijk}(b) \sim NB\left(r=50, p=\frac{r}{E[Y_{ijk}(b) | \delta_i, X_{1,ijk}, \alpha_i, \gamma_{ij}] + r}\right)$ following a negative-binomial distribution.
Again, $\mu_0$, $\beta_Z$ $\delta_i$, $\beta_{X1}$, $X_{1,ijk}$, and $\alpha_i$ are generated as in simulation scenario 2.
However, we now specify $(\beta_{0j})_{j=1,...,6} = (1,2,3,4,5,6)^\top$, $\tau^2 = 0.176$, and more simply generate $\gamma_{ij} \sim N(0,\tau^2/4)$.

With such a data-generating process and Assumption A1 satisfied, we can define the marginal period-specific expected potential outcome $E[Y_{ijk}(b)]=v_j(b) \, \forall \, k$ and
\[
\begin{split}
    v_j(b)
    & = E\left[e^{\mu_0 + \beta_{0j}} e^{\beta_{Z}(1+\delta_i)b} e^{\beta_{X1} X_{1,ijk}} e^{\alpha_i} e^{\gamma_{ij}}\right] \\
    &= e^{\mu_0 + \beta_{0j}} E\left[e^{\beta_{Z}(1+\delta_i)b}\right] E\left[e^{\beta_{X1} X_{1,ijk}}\right] E\left[e^{\alpha_i}\right] E\left[e^{\gamma_{ij}}\right] \\
    &= e^{\mu_0 + \beta_{0j}}e^{\beta_{Z}(1+E[\delta_i] + 0.5Var[\delta_i])b} E\left[(1 - X_{1,i}) + X_{1,i}e^{\beta_{X1}}\right]  e^{E[\alpha_i] + 0.5Var[\alpha_i]} e^{E[\gamma_{ij}] + 0.5Var[\gamma_{ij}]} \\
    &= e^{\mu_0 + \beta_{0j}}e^{\beta_{Z}(1 + 0.5Var[\delta_i])b} \left(1 - E[X_{1,i}] + E[X_{1,i}]e^{\beta_{X1}}\right)  e^{0.5Var[\alpha_i]} e^{0.5Var[\gamma_{ij}]} 
\end{split}
\]
where $E\left[e^{\beta_{X1} X_{1,ijk}}\right] = E\left[\left[e^{\beta_{X1} X_{1,ijk}}|X_{1,i}\right]\right] = E\left[(1 - X_{1,i}) + X_{1,i}e^{\beta_{X1}}\right] = 1 - E[X_{1,i}] + E[X_{1,i}]e^{\beta_{X1}}$, with $X_{1,i} \sim Beta(6,4)$ and $E[X_{1,i}]=0.6$.

The period-specific treatment effect estimands are then 
\[
\begin{split}
    (v_j(1)-v_j(0))_{j=2,...,5} &\approx (116, 316, 859, 2337)^\top \,, \\ 
    \Delta^{P-avg} &\approx 907 \,,
\end{split}
\]
which can be targeted by specifying analyses with a period-specific treatment effect structure.
We can additionally define the P-ATO estimand
\[
    \Delta^{P-ATO}  = \frac{ \sum_{j=1}^J \lambda_{j}[v_j(1)-v_j(0)]}{\sum_{j=1}^J \lambda_{j}} \approx 843 \,,
\]
which can then be targeted by specifying analyses with a constant treatment effect structure.

\begin{figure}[H]
    \centering
    \includegraphics[width=\linewidth]{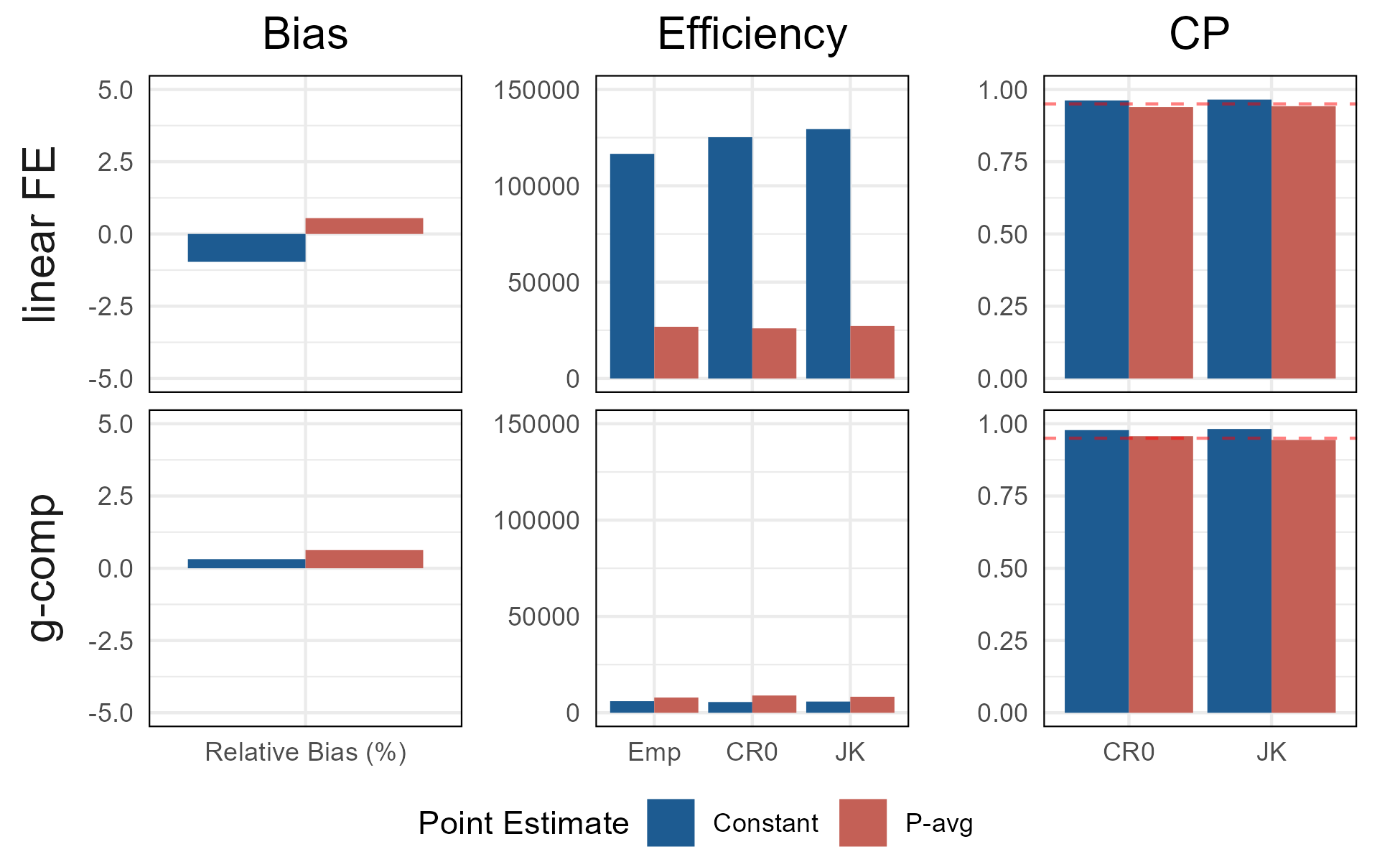}
    \caption{
        Analysis results from simulation scenario 4 of a SW-CRT with $m=100$ clusters, count outcomes, and a true period-specific treatment effect structure, using the linear fixed-effects model (``linear FE'') and g-computation log-link fixed-effects model (``g-comp'') with constant and period-average (``P-avg'') treatment effect estimators. 
        Results are reported for (i.) relative bias (\%), (ii.) efficiency in terms of the empirical variance (``Emp'') in comparison to average sandwich (``CR0'') and jackknife (``JK'') variance estimates, and (iii.) coverage probabilities using the sandwich variance with normal approximation (``CR0'') and the jackknife variance with $t(m-2)$ (``JK'').
        The dashed red line denotes a nominal coverage probability of 95\%.
    }
    \label{fig:scenario_SWCRT_count_100}
\end{figure}

\subsection{Tables of complete simulation results}
\label{app.sect:sim_table}

\begin{table}[htbp]
\centering
\caption{Simulation results ($m=6$ clusters) by scenario, model, estimand, and point estimator (``Point Est.''). Results include \% relative bias (``\% Rel. Bias''), empirical variance (``Emp.Var''), average variance with CR0, average variance with jackknife (``JK''), and corresponding coverage probabilities.}
\label{tab:sim_results}
\resizebox{\textwidth}{!}{%
\begin{tabular}{cllllrrrrr}
\toprule
Scenario & Model & Estimand & Point Est. & \% Rel.\ Bias & Emp.\ Var & CR0 & JK & CP (CR0) & CP (JK-$t$) \\
\midrule
 1 & linear FE     & P-ATO & Constant & $\phantom{-}$0.850 & 0.00101  & 0.000669 & 0.00128 & 0.829 & 0.963 \\
 & linear FE     & P-avg & P-avg    & $\phantom{-}$1.47  & 0.00166  & 0.000959 & 0.00222 & 0.796 & 0.963 \\
 & g-comp        & P-ATO & Constant & $\phantom{-}$1.26  & 0.00104  & 0.000689 & 0.00132 & 0.828 & 0.962 \\
 & g-comp        & P-avg & P-avg    & $\phantom{-}$1.96  & 0.00192  & 0.00112  & 0.00262 & 0.805 & 0.966 \\
 & linear ME EX  & P-ATO & Constant & $\phantom{-}$0.189 & 0.000933 & 0.000586 & 0.00124 & 0.816 & 0.966 \\
 & linear ME EX  & P-avg & P-avg    & $\phantom{-}$0.252 & 0.00127  & 0.000675 & 0.00188 & 0.788 & 0.968 \\
 & linear ME NEX & P-ATO & Constant & $-$0.340          & 0.000930 & 0.000559 & 0.00129 & 0.812 & 0.970 \\
 & linear ME NEX & P-avg & P-avg    & $-$1.11           & 0.00114  & 0.000542 & 0.00194 & 0.766 & 0.964 \\
\midrule
 2 & linear FE     & P-avg & Constant & $-$1.08            & 0.0152 & 0.0103  & 0.0198 & 0.819 & 0.971 \\
 & linear FE     & P-avg & P-avg    & $-$1.09            & 0.0150 & 0.0103  & 0.0197 & 0.816 & 0.971 \\
 & linear ME EX  & P-avg & Constant & $\phantom{-}$5.43  & 0.0157 & 0.00960 & 0.0210 & 0.764 & 0.967 \\
 & linear ME EX  & P-avg & P-avg    & $\phantom{-}$5.32  & 0.0155 & 0.00959 & 0.0209 & 0.777 & 0.972 \\
 & linear ME NEX & P-avg & Constant & $\phantom{-}$23.5  & 0.0224 & 0.00951 & 0.0355 & 0.525 & 0.924 \\
 & linear ME NEX & P-avg & P-avg    & $\phantom{-}$7.14  & 0.0160 & 0.00937 & 0.0227 & 0.755 & 0.973 \\
\midrule
 3 & linear FE     & P-avg & Constant & $-$0.688 & 0.243 & 0.148 & 0.282 & 0.753 & 0.916 \\
 & linear ME EX  & P-avg & Constant & $-$0.687 & 0.243 & 0.149 & 0.282 & 0.756 & 0.916 \\
 & linear ME NEX & P-avg & Constant & $-$0.898 & 0.240 & 0.149 & 0.279 & 0.754 & 0.915 \\
\bottomrule
\end{tabular}%
}
\end{table}

\begin{table}[htbp]
\centering
\caption{Simulation results ($m=100$ clusters) by scenario, model, estimand, and point estimator (``Point Est.''). Results include \% relative bias (``\% Rel.\ Bias''), empirical variance (``Emp.\ Var''), average variance with CR0, average variance with jackknife (``JK''), and corresponding coverage probabilities.}
\label{tab:sim_results}
\resizebox{\textwidth}{!}{%
\begin{tabular}{cllllrrrrr}
\toprule
Scenario & Model & Estimand & Point Est. & \% Rel.\ Bias & Emp.\ Var & CR0 & JK & CP (CR0) & CP (JK-$t$) \\
\midrule
 1 & linear FE     & P-ATO & Constant & $\phantom{-}$0.0144  & $3.57\times10^{-5}$ & $3.44\times10^{-5}$ & $3.55\times10^{-5}$ & 0.942 & 0.948 \\
 & linear FE     & P-avg & P-avg    & $\phantom{-}$0.126   & $5.39\times10^{-5}$ & $5.00\times10^{-5}$ & $5.24\times10^{-5}$ & 0.936 & 0.942 \\
 & g-comp        & P-ATO & Constant & $\phantom{-}$0.157   & $3.69\times10^{-5}$ & $3.58\times10^{-5}$ & $3.69\times10^{-5}$ & 0.940 & 0.947 \\
 & g-comp        & P-avg & P-avg    & $\phantom{-}$0.153   & $6.08\times10^{-5}$ & $5.44\times10^{-5}$ & $5.94\times10^{-5}$ & 0.932 & 0.947 \\
 & linear ME EX  & P-ATO & Constant & $-$0.0176            & $3.15\times10^{-5}$ & $3.06\times10^{-5}$ & $3.17\times10^{-5}$ & 0.941 & 0.949 \\
 & linear ME EX  & P-avg & P-avg    & $\phantom{-}$0.0731  & $4.13\times10^{-5}$ & $3.81\times10^{-5}$ & $4.02\times10^{-5}$ & 0.932 & 0.942 \\
 & linear ME NEX & P-ATO & Constant & $-$0.0578            & $2.95\times10^{-5}$ & $2.91\times10^{-5}$ & $3.02\times10^{-5}$ & 0.944 & 0.952 \\
 & linear ME NEX & P-avg & P-avg    & $\phantom{-}$0.0143  & $3.51\times10^{-5}$ & $3.28\times10^{-5}$ & $3.49\times10^{-5}$ & 0.933 & 0.948 \\
\midrule
 2 & linear FE     & P-avg & Constant & $-$0.000555          & $7.94\times10^{-4}$ & $8.25\times10^{-4}$ & $8.51\times10^{-4}$ & 0.965 & 0.968 \\
 & linear FE     & P-avg & P-avg    & $-$0.0220            & $7.85\times10^{-4}$ & $8.18\times10^{-4}$ & $8.44\times10^{-4}$ & 0.962 & 0.970 \\
 & linear ME EX  & P-avg & Constant & $\phantom{-}$6.38    & $8.11\times10^{-4}$ & $7.77\times10^{-4}$ & $8.81\times10^{-4}$ & 0.630 & 0.680 \\
 & linear ME EX  & P-avg & P-avg    & $\phantom{-}$6.06    & $8.00\times10^{-4}$ & $7.71\times10^{-4}$ & $8.70\times10^{-4}$ & 0.665 & 0.704 \\
 & linear ME NEX & P-avg & Constant & $\phantom{-}$46.6    & $1.32\times10^{-3}$ & $1.02\times10^{-3}$ & $1.88\times10^{-3}$ & 0.000 & 0.000 \\
 & linear ME NEX & P-avg & P-avg    & $\phantom{-}$7.12    & $8.09\times10^{-4}$ & $7.65\times10^{-4}$ & $8.90\times10^{-4}$ & 0.552 & 0.621 \\
\midrule
 3 & linear FE     & P-avg & Constant & $\phantom{-}$0.152   & $1.24\times10^{-2}$ & $1.23\times10^{-2}$ & $1.27\times10^{-2}$ & 0.939 & 0.943 \\
 & linear ME EX  & P-avg & Constant & $\phantom{-}$0.152   & $1.24\times10^{-2}$ & $1.22\times10^{-2}$ & $1.27\times10^{-2}$ & 0.939 & 0.943 \\
 & linear ME NEX & P-avg & Constant & $\phantom{-}$0.171   & $1.22\times10^{-2}$ & $1.22\times10^{-2}$ & $1.25\times10^{-2}$ & 0.934 & 0.940 \\
\midrule
 4 & linear FE     & P-ATO & Constant & $-$0.965             & $1.17\times10^{5}$  & $1.25\times10^{5}$  & $1.29\times10^{5}$  & 0.962 & 0.965 \\
 & linear FE     & P-avg & P-avg    & $\phantom{-}$0.548   & $2.68\times10^{4}$  & $2.60\times10^{4}$  & $2.72\times10^{4}$  & 0.939 & 0.942 \\
 & g-comp        & P-ATO & Constant & $\phantom{-}$0.316   & $6.02\times10^{3}$  & $5.57\times10^{3}$  & $5.82\times10^{3}$  & 0.978 & 0.982 \\
 & g-comp        & P-avg & P-avg    & $\phantom{-}$0.628   & $7.90\times10^{3}$  & $8.95\times10^{3}$  & $8.31\times10^{3}$  & 0.957 & 0.944 \\
  & linear ME EX  & P-ATO & Constant & $-$0.940             & $1.11\times10^{5}$  & $1.20\times10^{5}$  & $1.24\times10^{5}$  & 0.955 & 0.964 \\
 & linear ME EX  & P-avg & P-avg    & $\phantom{-}$0.548   & $2.61\times10^{4}$  & $2.53\times10^{4}$  & $2.65\times10^{4}$  & 0.939 & 0.945 \\
 & linear ME NEX & P-ATO & Constant & $-$0.103             & $1.46\times10^{4}$  & $1.82\times10^{4}$  & $2.06\times10^{4}$  & 0.928 & 0.941 \\
 & linear ME NEX & P-avg & P-avg    & $\phantom{-}$0.560   & $1.53\times10^{4}$  & $1.52\times10^{4}$  & $1.60\times10^{4}$  & 0.947 & 0.951 \\
\bottomrule
\end{tabular}%
}
\end{table}

\newpage
\section{Case study re-analysis with multiple imputation}
\label{app.sect:case_study}

\begin{figure}[H]
    \centering
    \includegraphics[width=0.8\linewidth]{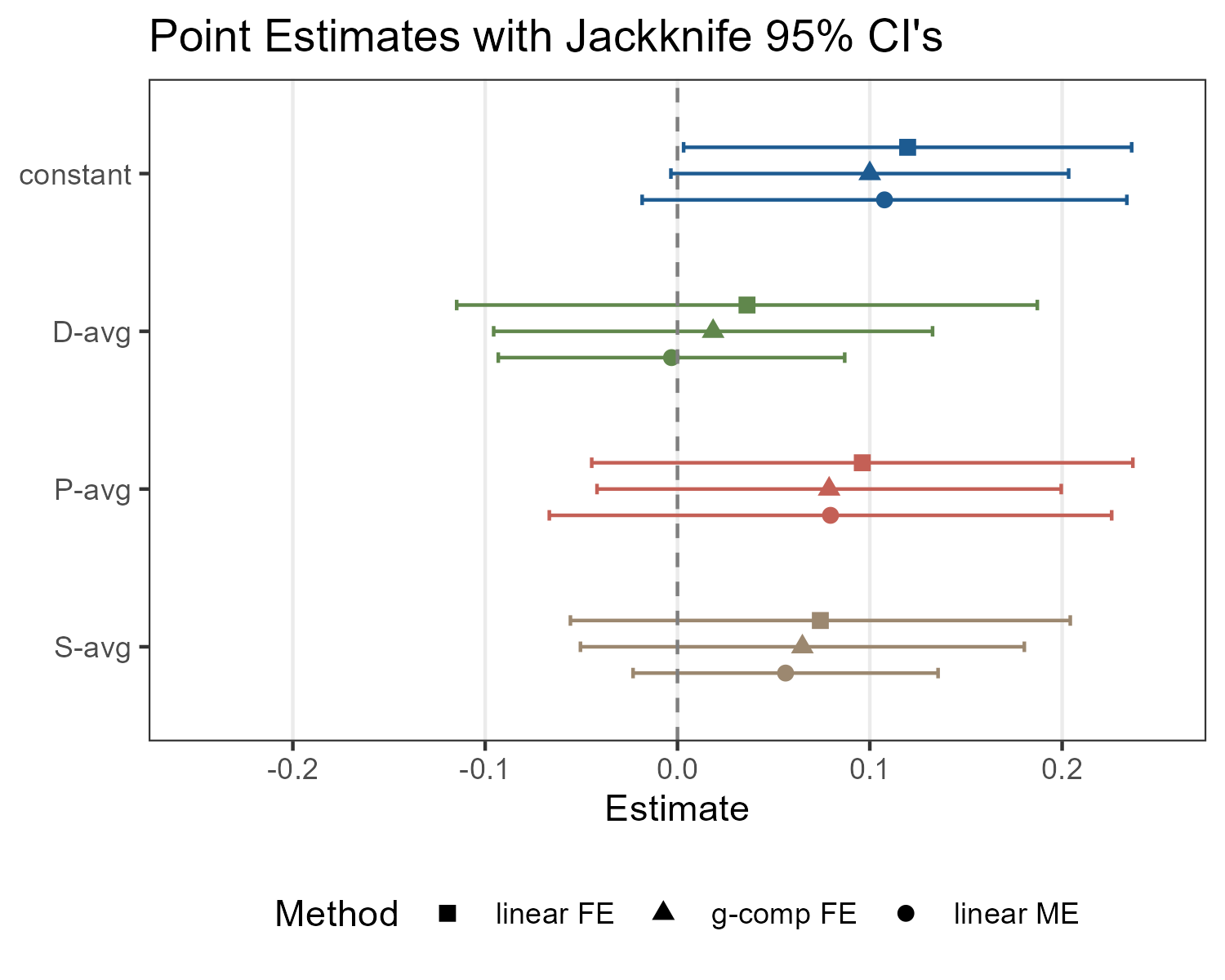}
    \caption{
        Results from the re-analysis of binary outcomes using the linear fixed-effects model (``linear FE''), g-computation log-link fixed-effects model (``g-comp''), and linear mixed-effects model (``linear ME'') with constant, duration-average (``D-avg''), period-average (``P-avg'') and saturated-average (``S-avg'') treatment effect estimators. 
        Missing data were imputed using multiple imputation (MI) for binary outcomes.
        Confidence intervals are formed using the jackknife variance estimator with 
        $t(m-2)$ and following Rubin's rules (RR) for MI.
    }
    \label{fig:case_study_MI}
\end{figure}

\newpage
\putbib
\end{bibunit}

\end{document}